\documentclass[aps,pra,
            twocolumn,10pt,superscriptaddress,
            longbibliography,floatfix]{revtex4-2}
\usepackage[utf8]{inputenc}
\usepackage[american]{babel}
\usepackage[normalem]{ulem}
\usepackage{textcomp}
\usepackage{amsmath}
\usepackage{mathtools}
\usepackage{nicefrac}
\usepackage{graphicx}
\usepackage{physics}
\usepackage{xcolor}
\usepackage[unicode=true,
 bookmarks=false,
 breaklinks=false,pdfborder={0 0 1},backref=false,colorlinks=false]
 {hyperref}
\usepackage{dsfont}
\usepackage{upgreek}
\usepackage{enumitem}
\usepackage[most]{tcolorbox}
\usepackage{cancel}
\usepackage[export]{adjustbox}
\usepackage{amsfonts}
\usepackage{amsthm}
\usepackage{latexsym}
\usepackage{bbm}

\usepackage{indentfirst}
\usepackage{tikz}
\usetikzlibrary{arrows.meta}
\usepackage{multirow}
\usepackage{cellspace}
\makeatother
\makeatletter

\newcommand{\newreptheorem}[2]{\newtheorem*{rep@#1}{\rep@title}\newenvironment{rep#1}[1]{\def\rep@title{#2 \ref*{##1}}\begin{rep@#1}}{\end{rep@#1}}}
\makeatother

\newtheorem{theorem}{Theorem}
\newreptheorem{theorem}{Theorem}
\newtheorem{algo}{Algorithm}
\newtheorem{lemma}[theorem]{Lemma}
\newtheorem{definition}[theorem]{Definition}

\newtheorem{proposition}[theorem]{Proposition}
\newtheorem{corollary}[theorem]{Corollary}

\newcommand{\ii}{\mathrm{i}}
\newcommand{\LD}{\mathcal{L}_\mathcal{D}(\mathcal{H}_d)}

\newcommand{\TII}{\affiliation{Quantum Research Center, Technology Innovation Institute, Abu Dhabi, UAE}}

\begin{document}
\title{Robustness of Quantum Signal Processing and Resource-aware Polynomial Design}
\author{Thiago O. Maciel}
\TII
\thanks{Corresponding author}
\email{thiago.maciel@tii.ae}
\author{Giancarlo Camilo}
\TII
\author{Allan Tosta}
\TII
\author{Daniel Stilck-Fran\c{c}a}
\affiliation{Department of Mathematical Sciences, University of Copenhagen, Universitetsparken 5, 2100 Denmark}
\author{Leandro Aolita}
\TII
\author{Thais de Lima Silva}
\TII


\begin{abstract}
Quantum signal processing (QSP) realizes operator functions through polynomial transformations of a finite degree. In addition to function approximation errors, realistic implementations incur algorithmic errors also when encoding the input operator and in synthesizing gates into a discrete fault-tolerant gate set. We provide an end-to-end analysis of these errors and show that the input error propagation is governed by the operator-Lipschitz constant $L_\star$ of the implemented function.  We then derive an efficient classical algorithm to compute tight lower and upper bounds on $L_\star$, and we show that these bounds induce in turn lower and upper bounds on the total quantum-resource cost (for instance, $T$-gate count) needed for QSP implementations compatible with a target total error.
We further devise a procedure to search for a polynomial with minimum cost based solely on its coefficients, yielding a resource-aware polynomial-design procedure for QSP. We illustrate the resulting method for ground state energy estimation via eigenvalue thresholding for the H$_6$ molecular Hamiltonian. In the tested hardware profiles, the resource-aware family improves the total-error budget and reduces the modeled $T$-gate count by $51.9\%$--$56.7\%$ relative to the standard analytical polynomial obtained from an error-function approximation.
\end{abstract}

\maketitle

\section{Introduction}\label{sec:introduction}

Quantum signal processing (QSP) and quantum singular-value transformation (QSVT) provide a recipe for implementing functions of operators via polynomial transformations encoded in quantum circuits~\cite{Low_Chuang_2019,Gilyen_2019,Martyn_2021}. 
For Hermitian inputs, QSP transforms the eigenvalues of a normalized operator accessed through either a block encoding \cite{Low_Chuang_2019} or the unitary time-evolution it generates \cite{silva2022,Low_2024}, while QSVT extends the construction to singular-value transformations of general matrices \cite{Gilyen_2019}. 
A practical QSP/QSVT implementation accumulates errors at several distinct layers: $(i)$ the polynomial incurs an approximation error; $(ii)$ the encoded operator can differ from the ideal input because of truncation or imperfect block encoding; and $(iii)$ the QSP rotations must be synthesized into a discrete fault-tolerant gate set.

Conventional polynomial design aims at minimizing the uniform approximation error over the domain, which determines how accurately the target function is represented. The resulting degree determines the query complexity under the chosen QSP/QSVT scheme. 
These quantities do not, however, determine how the implemented polynomial amplifies perturbations of the encoded operator. 
Two polynomials with comparable degree and approximation error can require different block-encoding precision. 
The missing quantity is a sensitivity of the implemented operator function to general, and in particular noncommuting, operator perturbations.

Here we use operator-Lipschitz (OL) continuity to make this sensitivity explicit. For a function $f$ which is Lipschitz on the relevant operator domain, its optimal OL constant $L_\star[f]$ is the smallest positive real number for which $\norm{f(A)-f(B)}\leq L_\star[f]\norm{A-B}$ holds for all admissible operators $A$ and $B$ in all dimensions. While not all scalar Lipschitz functions are OL, QSP implements only polynomial functions, which are OL.
This gives a direct robustness bound for QSP/QSVT and separates the scalar approximation error from the amplification of errors in the encoded operator. 
This bound is linear in the input error 
and improves over previous bounds for QSVT, which either carry a logarithmic factor that grows unbounded as the perturbation vanishes~\cite{aleksandrov2010holder,Gilyen_2018_arxiv} or have a square root dependence on the input error~\cite{Gilyen_2018_arxiv}. 
As a result, it tolerates larger input errors in the high-precision regime relevant to fault-tolerant implementations.

The remaining difficulty is that $L_\star[f]$ is generally not available in closed form.
In QSP/QSVT, however, the implemented function is always polynomial. We show that when $f$ is a polynomial, computing operator-Lipschitz bounds reduces to a finite-dimensional active-set optimization directly in terms of its polynomial coefficients and prove that it converges asymptotically to its semi-infinite formulation.
We derive an exact finite coefficient-space formulation for the divided difference of polynomials, an exchange algorithm with certified two-sided bounds, and a sufficient criterion for when the OL constant differs from its scalar counterpart.
Moreover, each iteration of our algorithm requires only $\mathcal{O}(N^2)$ memory, compared with the $\mathcal{O}(N^4)$ memory of a single interior-point step for the equivalent semidefinite program, which already exceeds $16$~GB at $N=123$.
In contrast, our certified operator-Lipschitz bounds reaches polynomial degrees of order $10^4$, which is in line with high-precision QSVT subroutines with degrees scaling as $\mathcal{O}(\Delta^{-1}\log(1/\varepsilon))$, such as eigenvalue thresholding with a small spectral gap $\Delta$ when compiled for fault-tolerant quantum computers~\cite{Gilyen_2018_arxiv,Martyn_2021}.

Once these bounds are available, they can be coupled directly to an end-to-end resource model. 
We develop this connection explicitly for a linear combination of unitaries (LCU) access model \cite{Babbush_2018,Low_2024}, in which the error of the polynomial approximation, the total error of encoding the input matrix as an LCU, and the precision of the parameters of the QSP sequence implementing the polynomial compete for a common total error budget.
For any given resource metric, such as $T$-gate counts, we define the minimum implementation cost compatible with that budget, and propagating the operator-Lipschitz bounds through it yields lower and upper bounds in the resource cost itself.
Polynomial selection can therefore be based on an actual implementation cost, given in terms of physical resources for fault-tolerant quantum computation, rather than on approximation error, degree, or sensitivity alone.
We then place the same objective inside a search over polynomial coefficients, obtaining a resource-aware polynomial-design procedure for QSP/QSVT.

We illustrate the resulting design procedure using eigenvalue thresholding for an H$_6$ molecular Hamiltonian. 
We compare an analytical polynomial family derived from an error-function approximation~\cite{wan2022} with a resource-aware family optimized using the end-to-end criterion.
Across the tested hardware profiles, the resource-aware family achieves a lower operator-sensitivity measure while reducing the modeled $T$-gate count by $51.9\%$--$56.7\%$ relative to the analytical polynomial.

The paper is organized as follows. In Sec.~\ref{sec:robustness}, we formulate the QSP/QSVT error model and show how the operator-Lipschitz constant controls the amplification of perturbations in the encoded operator. 
In Sec.~\ref{sec:end-to-end-budget}, we use this result to develop an end-to-end, fault-tolerant error budget analysis, specialize the input error to an LCU access model, derive the corresponding resource bounds, formulate a framework for resource-aware polynomial design, and apply it to find low-cost polynomials for eigenvalue thresholding for an H$_6$ molecular Hamiltonian.
In Sec.~\ref{sec:computing-Lstar}, we show how $L_\star[f]$ can be computed directly from the coefficients of a polynomial and develop an active-set algorithm that computes certified lower and upper bounds on it. Sec.~\ref{sec:prior} compares our robustness bound with previous robustness bounds.
Finally, Sec.~\ref{sec:conclusion} summarizes the implications for robustness and resource-aware QSP/QSVT design.


\section{Robustness of QSVT}\label{sec:robustness}


In this section we present a detailed robustness analysis for a fault-tolerant quantum implementation of an operator function $f_{\mathrm{T}}(A)$ using QSVT. 
We assume access to a faulty input matrix $B$, and to a function $f$ (typically a polynomial) that approximates $f_{\mathrm{T}}$ in its domain. The robustness of the implemented function $f(B)$ is measured by the spectral norm error with respect to the ideal $f_{\mathrm{T}}(A)$, namely the implementation is $\varepsilon$-robust if
\begin{align}\label{eq:qsvt-robustness-objective}
    \norm{f_\text{T}(A)-f(B)}\leq \varepsilon\,.
\end{align} 
In order to properly discuss the sources of input errors, in
Sec.~\ref{sec:robustness-gen} we first recall the different input access models for QSVT. 
Sec.~\ref{subsec:error-decomposition} then presents a detailed breakdown of the possible sources of error. In particular, in addition to input and function approximation errors, there is a further error contribution that is specific to QSVT: approximate synthesis of single-qubit gates into the discrete fault-tolerant gate set implementable by the hardware. 
Sec.~\ref{sec:robustness-qsvt} then uses the operator-Lipschitz constant to combine these errors and provide a complete error robustness analysis for the QSVT implementation. 
For simplicity, we focus here on normal input matrices, where applying a function to an operator reduces to transforming its eigenvalues. Nevertheless, all of our results extend to singular-value transformations of general operators, as implemented by QSVT (see App.~\ref{app:ext-QSVT} for details).



\subsection{Oracle models}\label{sec:robustness-gen}

QSP and QSVT work under the assumption that the input operator $A$ is encoded into a unitary operator (the oracle). 
Two types of oracles have been considered in the literature: projected unitaries and time-evolution oracles.
For Hermitian operators, the projected unitary oracle reduces to the so-called block-encoding $U_H$, which is a unitary containing the input operator $H\in \mathds{C}^{d\times d}$ as one of its matrix blocks. More specifically, $U_H$ acts on a larger space $\mathds{C}^{d}\otimes \mathds{C}^a$ and satisfies $\norm{\bra{0}U_H\ket{0}-H}\leq \delta$, where $\ket{0}\in \mathds{C}^a$ is a reference state, for some dimension $a$ and some precision $\delta$. Common techniques for building $U_H$ include linear combinations of unitaries (LCU) and sparse access \cite{Low_Chuang_2019,Gilyen_2019,Babbush_2018,Low_2024}.
In turn, the time evolution oracle $U_{H,t}$ is defined as the action of the evolution $e^{-\ii Ht}$ generated by $H$ at time $t$ controlled by an auxiliary qubit. $U_{H,t}$ can be implemented, for instance, using product formulas \cite{dong2022ground}.

The oracle model also determines the method of function approximation to be used. When accessing a block-encoding oracle, Chebyshev series approximation and the QSVT framework \cite{Gilyen_2019} are the natural choice, whereas for a time-evolution oracle it is more natural to work with Fourier expansions \cite{silva2022}. In both cases, the number of calls to the oracle and the function approximation error are determined by the number of terms in the corresponding basis expansion. 

To unify the discussion regardless of the access model, we denote the ideal input by 
\begin{align}
\label{eq:oracles}
    A\coloneq\begin{cases}
        H, &\text{for block-encoding oracle}\\
        e^{-\ii Ht}, & \text{for time-evolution oracle.}
    \end{cases}
\end{align}
On the other hand, the implemented oracle realizes an operator $B$ that is a noisy version of $A$ with (spectral norm) error $\norm{A-B}$. 
Here the spectra of $A$ and $B$ are assumed to lie in the same domain. Although the goal is usually a function of $H$ itself, it is often more convenient to take the unitary evolution as the input for robustness, since in this access model the error is typically controlled at the level of the time-evolution operator.

\subsection{Implementation error breakdown}\label{subsec:error-decomposition}


Fault-tolerant implementations of $f_\text{T}(A)$ are of course only approximate. We identify three independent sources of algorithmic errors (disregarding hardware noise):

\begin{enumerate}
    \item[$(i)$] function approximation error $\varepsilon_{\text{app}}\coloneq\norm{f_\text{T}-f}_{\infty,\mathcal{D}}$;
    \item[$(ii)$] input errors $\varepsilon_{\text{input}}\coloneq\norm{A-B}$;
    \item[$(iii)$] single-qubit gate synthesis errors $\varepsilon_{\rm{synth}}$.
\end{enumerate}

The uniform approximation error $\varepsilon_{\text{app}}$ depends only on the choice of approximating function $f$. 
For the two oracle models of Sec.~\ref{sec:robustness-gen}, $f$ is either a Chebyshev or Fourier series of a fixed degree. If the relevant domain is $\mathcal{D}$, the approximation error is measured by the $\ell_\infty$-norm over $\mathcal{D}$, namely $\norm{f_\text{T}-f}_{\infty,\mathcal{D}} = \sup_{z\in\mathcal{D}}\vert f_\text{T}(z)-f(z)\vert$, and can be reduced by increasing the degree of $f$. 

The input error $\varepsilon_{\text{input}}$ depends on the oracle access model. Although they are intended to perfectly encode $H$, in practice the oracles themselves suffer from imperfections even in the absence of hardware noise. 
For instance, in the time-evolution oracle model, the oracle can be implemented using standard product formulas and the input error receives two contributions,  $\varepsilon_{\text{input}}=\varepsilon_{\text{Trotter}}+\varepsilon_{\text{comp}}$, where $\varepsilon_{\text{Trotter}}$ refers to Trotterization errors and $\varepsilon_{\text{comp}}$ to errors due to compilation of each Trotter term into the discrete fault-tolerant gate set. 
For the block-encoding model based on the linear combination of unitaries (LCU) we have a similar input error decomposition (see App.~\ref{subsec:lcu-input-error} for the complete analysis). Namely, in LCU the input matrix $A$ is of the form $A = \frac{1}{\alpha}\sum_{j\in[M]} a_j U_j$ for unitaries $U_j$, and the block-encoding is built using two subroutines $\textsc{prepare}$ and $\textsc{select}$: the former encodes the coefficients $a_j$ into a quantum register, while the latter applies each $U_j$ controlled by the computational basis state corresponding to $a_j$ in that encoding. In this case, $\varepsilon_{\text{input}}=\varepsilon_{\text{trunc}}+\varepsilon_{\text{fp}}$ where the truncation error $\varepsilon_{\text{trunc}}$ is due to discarding some of the $a_j$ in the LCU while $\varepsilon_{\text{fp}}$ accounts for finite precision errors in the $\textsc{prepare}$ subroutine. 

Finally, the synthesis errors $\varepsilon_{\rm{synth}}$ arise because the QSP/QSVT sequence interleaves calls to the oracle with arbitrary single-qubit rotations which, in a fault-tolerant setting, must be compiled into a discrete gate set (\emph{e.g.}, Clifford$+T$) with finite accuracy. For a polynomial $f$ of degree $N$ and arbitrary parity, there is a sequence of single-qubit rotations of length up to $S=2N+1$ that exactly implements $f(B)$ in the standard QSVT circuit \cite{Gilyen_2019}. Denoting by $\widetilde{f}(B)$ the actual polynomial implemented by the compiled circuit, and assuming that each compiled single-qubit rotation differs from its ideal gate by at most $\varepsilon_{\rm{synth}}$ in operator norm, 
the error after $S$ such steps is $\big\|{f(B)-\widetilde{f}(B)\big\|} \leq S\,\varepsilon_{\rm{synth}}$.

\subsection{Operator Lipschitzness and QSVT robustness}\label{sec:robustness-qsvt}

Here we recall the notion of operator-Lipschitz functions \cite{aleksandrov2016} and use it to state our QSVT robustness bounds. 

Given a $d$-dimensional Hilbert space $\mathcal{H}_d$ and a compact domain $\mathcal{D}\subset\mathbb{C}$, we denote by $\LD$ the space of bounded and normal linear operators acting on $\mathcal{H}_d$ whose eigenvalue spectra lie in the domain $\mathcal{D}$. 
Here we will be interested only in two special cases relevant for Chebyshev- and Fourier-based QSP: respectively, normalized Hermitian operators $A=H$, in which case $\mathcal{D}=[-1,1]$ such that $\norm{H}\le1$; and unitary operators $A=e^{-\ii Ht}$, in which case $\mathcal{D}=\mathbb{T}\coloneq\{z\in\mathbb{C}\text{ s.t. } \vert z\vert=1\}$ (the unit circle). 
An operator function $g$ is operator-Lipschitz (OL) on the domain $\mathcal{D}$ if there exists $L>0$ such that, for all $A,B\in\LD$ and any dimension $d$,
\begin{align}\label{eq:op-lip}
\norm{g(A)-g(B)}\leq L\,\norm{A-B}\,.
\end{align}
The operator-Lipschitz constant of $g$ in $\mathcal{D}$ is the minimal such constant $L$ over all dimensions $d$, 
\begin{align}\label{eq:Lstardef}
 L_\star[g] \coloneq \sup_{d\geq 2}\,\,\,\sup_{\substack{A,B\,\in\LD\\A\ne B}} \frac{\|g(A)-g(B)\|}{\|A-B\|}\,.
\end{align}

For commuting $A,B$, the operator relation in Eq.~\eqref{eq:op-lip} reduces to standard Lipschitz-continuity of the scalar function $g(z)$ for $z\in\mathcal{D}$, 
implying that $L_\star[g] \ge \ell_\star[g]$, where $\ell_\star[g]$ is the scalar Lipschitz constant of $g$ in $\mathcal{D}$. 
While the two constants always coincide in the Frobenius norm \cite{kitaneh1985}, in the operator norm, scalar Lipschitzness does not even imply operator Lipschitzness (\emph{e.g.} $|x|$ is Lipschitz but not OL \cite{Kato1973}). Although the equality holds for many target functions of interest for QSVT applications (see Table~\ref{tab:lipschitz_examples}), this is rather the exception. In fact, even smooth functions like low-degree polynomials can display this difference \cite{Aleksandrov2020}.  In App. \ref{app:criteria} we show an operational sufficient criterion for $g$ to test if $L_\star[g] \neq \ell_\star[g]$. For functions of Hermitian operators, the best-known upper bound relating $\ell_\star[g]$ and perturbations to the operator function is (Ref.~\cite[Thm.~11.4]{aleksandrov2011}) 
\begin{align}\label{eq:op-lip-bound}
\norm{g(A)-g(B)} \le C(1+\log(d))\,\ell_\star[g]\,\norm{A-B}\,,
\end{align}
which depends on the dimension $d$ of the input operator and carries an unknown absolute constant $C>0$. 

The key fact here is that the polynomial functions implemented via Chebyshev- or Fourier-based QSP are always operator-Lipschitz (see App.~\ref{app:trivial_bounds_scalar} for the proof as well as for simple upper bounds on $L_\star$). 
This allows us to derive the following tight robustness bound in terms of the optimal OL constant $L_\star$. 
The proof follows simply from triangle inequalities and is presented in App.~\ref{app:proof_main}.

\begin{proposition}[QSVT robustness bound]\label{prop:robustness}
    Let $\mathcal{D}$ be either $[-1,1]$ or the unit circle $\mathbb{T}$, $\mathcal{H}_{2^n}$ a $n$-qubit Hilbert space, $f_\text{T}$ a target function, $f$ a degree-$N$ polynomial approximation of $f_\text{T}$ on $\mathcal{D}$, and $\widetilde{f}$ the polynomial actually implemented by the QSVT circuit after compilation into a discrete gate set. 
    Moreover, let $\varepsilon_{\rm{app}},\varepsilon_{\rm{input}},\varepsilon_{\rm{synth}}$ be the approximation, input, and synthesis errors as introduced in Sec.~\ref{subsec:error-decomposition}, and define
    \begin{align}
        L_\star \coloneq 
        \begin{cases}
            \min\{L_\star[f],\,L_\star[f_\text{T}]\}, & \text{if } f_\text{T} \text{ is OL on } \mathcal{D}\\
            L_\star[f], & \text{otherwise} \\
        \end{cases}\,,
    \end{align}
    and $S\coloneq2N+1$. 
    Then, for every $A,B\in\mathcal{L}_\mathcal{D}(\mathcal{H}_{2^n})$, 
    \begin{align}\label{eq:robustness_bound}
        \big\|{f_\text{T}(A)-\widetilde{f}(B)\big\|}\le \varepsilon_{\rm{app}} +L_\star\,\varepsilon_{\rm{input}} +S\,\varepsilon_{\rm{synth}}\,.
    \end{align}
\end{proposition}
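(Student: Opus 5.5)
We will prove the bound by inserting intermediate operators and applying the triangle inequality, with a separate split for each of the two cases in the definition of $L_\star$. In each case we bound three differences: the synthesis error $\|f(B)-\widetilde f(B)\|$, a function-approximation error, and an input-propagation error.

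First, the synthesis term. By Sec.~\ref{subsec:error-decomposition}, the compiled QSVT circuit differs from the ideal one in at most $S=2N+1$ single-qubit rotations, each off by at most $\varepsilon_{\rm synth}$ in operator norm. A telescoping (hybrid) argument gives
\[
\|U_{\rm ideal}-U_{\rm comp}\|\le S\varepsilon_{\rm synth},
\]
since all other factors are unitary. Projecting onto the relevant block is a contraction, so $\|f(B)-\widetilde f(B)\|\le S\varepsilon_{\rm synth}$.

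Second, the approximation term. For any normal $X\in\LD$, the spectral theorem gives
\[
\|f_{\rm T}(X)-f(X)\|=\max_{\lambda\in{\rm spec}(X)}|f_{\rm T}(\lambda)-f(\lambda)|\le\varepsilon_{\rm app},
\]
because the spectrum lies in $\mathcal D$.

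Now the two splits, according to which OL constant is used.

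Path 1, valid always, uses $f$:
\[
f_{\rm T}(A)-\widetilde f(B)=[f_{\rm T}(A)-f(A)]+[f(A)-f(B)]+[f(B)-\widetilde f(B)].
\]
The middle term is at most $L_\star[f]\,\|A-B\|$ by definition~\eqref{eq:Lstardef}. This applies because $A,B\in\mathcal L_{\mathcal D}(\mathcal H_{2^n})$ and $f$ is OL, since polynomials are OL (App.~\ref{app:trivial_bounds_scalar}). We also use $\|A-B\|=\varepsilon_{\rm input}$.

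Path 2, valid when $f_{\rm T}$ is OL on $\mathcal D$:
\[
f_{\rm T}(A)-\widetilde f(B)=[f_{\rm T}(A)-f_{\rm T}(B)]+[f_{\rm T}(B)-f(B)]+[f(B)-\widetilde f(B)].
\]
This gives $L_\star[f_{\rm T}]\varepsilon_{\rm input}+\varepsilon_{\rm app}+S\varepsilon_{\rm synth}$, where the approximation error is now evaluated at $B$, whose spectrum is also in $\mathcal D$.

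Both bounds hold simultaneously, so we may take the smaller one. This yields exactly the $\min$ in the definition of $L_\star$ and hence Eq.~\eqref{eq:robustness_bound}.

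The only step needing care is the synthesis bound. It requires the precise statement that the QSVT circuit contains $S$ rotation gates interleaved with unitary oracle calls. It also requires that $\widetilde f(B)$ denotes the projected block of the compiled unitary, so that projection contracts norms. Everything else is routine.
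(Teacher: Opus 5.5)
Your proof is correct and matches the paper's argument in App.~\ref{app:proof_main}: the same triangle-inequality splits through $f(A)$ or through $f_\text{T}(B)$ in the two cases, with the synthesis term bounded by $S\,\varepsilon_{\rm synth}$ via a telescoping over the $S$ compiled rotations. Your remark that the route through $f$ is valid in both cases, so the two bounds combine into the $\min$ when $f_\text{T}$ is OL, makes explicit a step the paper leaves implicit.
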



Although we focus our discussion on eigenvalue transformation, a simple adaptation of the proof in Ref.~\cite[Cor.~21]{Gilyen_2018_arxiv} shows that, for functions implementable by QSVT, $\varepsilon$-robustness in eigenvalue transformation of Hermitian matrices implies $\varepsilon$-robustness also in singular value transformation of arbitrary matrices. In particular, for operator-Lipschitz functions, their induced singular value transformation are also Lipschitz with the same constant, as stated below.  
 \begin{lemma}[Extension to QSVT]\label{lem:ext-QSVT} Let $A,B\in\mathds{C}^{d^\prime \times {d}}$ be rectangular matrices such that $\norm{A},\norm{B}\leq 1$. Given a target function $f_T$ and a polynomial approximation $f$ both with definite parity, denote by $f_T^{(SV)}(A)$ the singular value transformation of $A$ by $f_T$ as implemented by QSVT, and similarly for $f^{(SV)}(A)$. Then, Proposition \ref{prop:robustness} is satisfied for $f_T^{(SV)}$ with the same Lipschitz constant.

\end{lemma}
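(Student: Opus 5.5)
We reduce the singular value transformation to an eigenvalue transformation through the standard Hermitian dilation, following the idea of Ref.~\cite[Cor.~21]{Gilyen_2018_arxiv}. For a rectangular $A\in\mathds{C}^{d'\times d}$ with $\norm{A}\le 1$, define the Hermitian matrix
\begin{align}
\bar A\coloneq\begin{pmatrix}0 & A\\ A^\dagger & 0\end{pmatrix}\in\mathds{C}^{(d+d')\times(d+d')},
\end{align}
and define $\bar B$ analogously.

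\textbf{Step 1: basic properties of the dilation.} The map $A\mapsto\bar A$ is linear and preserves the spectral norm, so $\norm{\bar A-\bar B}=\norm{A-B}$. Its spectrum is $\{\pm\sigma_i\}$ together with zeros, and therefore lies in $[-1,1]$; hence $\bar A,\bar B\in\mathcal{L}_{[-1,1]}(\mathcal{H}_{d+d'})$.

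\textbf{Step 2: express the singular value transform through the dilation.} Write the SVD $A=W\Sigma V^\dagger$. For odd parity $g$, with $g^{(SV)}(A)=W g(\Sigma)V^\dagger$, a direct computation on the dilation gives
\begin{align}
g(\bar A)=\begin{pmatrix}0 & g^{(SV)}(A)\\ g^{(SV)}(A)^\dagger & 0\end{pmatrix}.
\end{align}
For even parity $g$, $g(\bar A)$ is block diagonal with blocks $g^{(SV)}(A)=V g(\Sigma)V^\dagger$ and $W g(\Sigma)W^\dagger$. The subtlety is the zero eigenvalues, which appear when $d\ne d'$; they are consistent because the QSVT convention assigns $g(0)$ to the kernel, and this $g(0)$ vanishes for odd $g$. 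In either case $g^{(SV)}(A)$ is a block of $g(\bar A)$, and the difference $g(\bar A)-g(\bar B)$ has the same block structure. Consequently
\begin{align}
\norm{g^{(SV)}(A)-g^{(SV)}(B)}\le\norm{g(\bar A)-g(\bar B)}.
\end{align}
For odd $g$ this holds with equality, since the norm of the off-diagonal dilation equals the norm of its block.

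\textbf{Step 3: apply the eigenvalue result.} Apply the argument of Proposition~\ref{prop:robustness} to the pair $\bar A,\bar B$ with $g=f_T$ and $g=f$. First, $\norm{f_T(\bar A)-f(\bar A)}\le\varepsilon_{\rm app}$, and this bound passes to the blocks. Second, $\norm{f(\bar A)-f(\bar B)}\le L_\star[f]\norm{A-B}$; the alternative branch with $L_\star[f_T]$, used when $f_T$ is OL, works the same way. The OL constant is dimension-free by definition~\eqref{eq:Lstardef}, so the change of dimension to $d+d'$ costs nothing. Third, the synthesis term $S\varepsilon_{\rm synth}$ does not depend on which transform is applied: the QSVT circuit for the singular value transform uses the same $S=2N+1$ rotations interleaved with the projected unitary, so the telescoping bound carries over unchanged. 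Combining the three terms by the triangle inequality gives Eq.~\eqref{eq:robustness_bound} for $f_T^{(SV)}$ with the same $L_\star$.

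\textbf{Main obstacle.} The only delicate point is the parity and zero-eigenvalue bookkeeping in Step 2. I would handle it by treating the odd and even cases separately, relying on the QSVT definition of $g^{(SV)}$ for each parity to fix how the kernel is treated.
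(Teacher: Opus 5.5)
Your proposal is correct and takes essentially the same route as the paper. Both use the Hermitian dilation $\bar A$, observe that $f^{(SV)}(A)-f^{(SV)}(B)$ is a block of $f(\bar A)-f(\bar B)$ in both the odd and even cases, and rely on $L_\star$ being dimension-free. Beyond the paper's appendix, you also show how the approximation and synthesis terms pass to the blocks, recovering the full bound of Proposition~\ref{prop:robustness}; this step tacitly needs $f_T$ and $f$ to have the same parity, which the statement also assumes implicitly.
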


\noindent The proof appears in App.~\ref{app:ext-QSVT} and is based on constructing a Hermitian dilation whose eigenvalue transformation is related to the singular value transformation of $A$. Lemma~\ref{lem:ext-QSVT} allows us to restrict ourselves to the Hermitian case without loss of generality.

Moreover, Eq.~\eqref{eq:robustness_bound} improves prior QSVT robustness bounds from \cite{Gilyen_2019,Chia_2020}.
The best such bound has been obtained for degree-$N$ polynomial functions, $\norm{f(A)-f(B)} \leq 4N\sqrt{\norm{A-B}}$ \cite[Lemma 22]{Gilyen_2018_arxiv}, which  
has a quadratically worse dependence in the input error $\varepsilon_{\mathrm{input}}=\norm{A-B}$ compared to Eq.~\eqref{eq:robustness_bound}. 
In Sec.~\ref{sec:prior} we present a thorough comparison with other prior bounds.
Notice also that the 
$N$ dependence is replaced by $L_\star$, which typically gives rise to further improvement.

Proposition~\ref{prop:robustness} provides two decompositions of the QSVT approximation error, depending on whether the perturbation is propagated through the target function's operator-Lipschitz constant or through the polynomial approximation actually implemented. An analytical expression for $L_\star[f_\text{T}]$, if it exists, can quantify the robustness of the implementation without a polynomial-specific calculation.
 However, when $L_\star[f_\text{T}]$ is not available, or the target function is not even OL, the polynomial approximation $f$ remains operator-Lipschitz in the bounded domains considered, and characterizing its OL constant $L_\star[f]$ suffices to obtain a tighter input-error bound. We could, in principle, use the upper bound in Eq.~\eqref{eq:op-lip-bound} to obtain the robustness bound 
 $\varepsilon_{\rm{app}} +C(1+\log(d))\ell_\star\,\varepsilon_{\rm{input}} +S\,\varepsilon_{\rm{synth}}$
 that bypasses the need to compute the OL constant $L_\star$ but has the disadvantage of depending on the Hilbert space dimension. To overcome this, in Sec.~\ref{sec:computing-Lstar} we devise an efficient numerical optimization method to compute tight lower and upper bounds on $L_\star[f]$ for arbitrary fixed-degree polynomials.

Notwithstanding, selecting the polynomial implemented by the QSP/QSVT sequence fixes several quantities that determine the final quantum resource budget, as it will become clear in Sec.~\ref{sec:end-to-end-budget} when we analyze a particular oracle construction. Its approximation quality determines $\varepsilon_{\rm{app}}$, while its degree determines the number of queries to the oracle and the number of single-qubit rotations to be compiled in an eventual fault-tolerant setting. The polynomial also carries the OL constant $L_\star[f]$, which governs the amplification of oracle errors. In principle, distinct polynomials can have comparable approximation error and degree while imposing different requirements on the encoded input through different OL constants. This motivates us to build a polynomial approximation method presented in Sec.~\ref{sec:res-aware-poly-sel} that accounts for input-error sensitivity in addition to uniform approximation errors.


\begin{table*}[t]
\centering
\renewcommand{\arraystretch}{1.6}
\setlength{\tabcolsep}{4pt}
\begin{tabular}{c|ccccccccccc}
\hline
$f_\text{T}$ &
$H^m$ &
$T_m(H)$ &
$e^{-itH}$ &
$e^{-\beta H}$ &
$\text{erf}(\beta H)$ &
$(\mu-H)^{-1}$ &
$\log(\mu-H)$ &
$\arcsin H^\prime$ &
$A^{-1}$ &
$U^{\pm m}$ &
$(\nu-U)^{-1}$
\\ \hline
$L_\star=\ell_\star$ &
$m$ &
$m^2$ &
$\vert t\vert$ &
$\beta\,e^\beta$ &
$2\beta/\sqrt{\pi}$ &
$\vert\text{Im}(\mu)\vert^{-2}$ &
$\vert\text{Im}(\mu)\vert^{-1}$ &
$(1-\delta^2)^{-1/2}$ &
$\kappa^2$ &
$m$ &
$(\vert \nu\vert-1)^{-2}$
\\ \hline
\end{tabular}
\caption{{\bf Examples of target functions $f_{\text{T}}$ with $L_\star=\ell_\star$.} Here $H$ and $H^\prime$ denote Hermitian matrices with spectra respectively in $[-1,1]$ and $[-\delta,\delta]$ for $\delta\in(0,1)$; $A\in\mathbb{C}^{d\times d}$ has singular values in $[1/\kappa,1]$ for $\kappa>1$; and $U$ is unitary. The parameter ranges are $m\in\mathbb{Z}^+$
, $t\in\mathbb{R}$, $\beta\in\mathbb{R^+}$, $\mu\in\mathbb{C}\setminus\mathbb{R}$, and $\nu\in\mathbb{C}\setminus\mathbb{T}$. See App.~\ref{app:trivial_bounds_scalar} for details.}
\label{tab:lipschitz_examples}
\end{table*}

\section{End-to-end robustness analysis in a fault-tolerant scenario}\label{sec:end-to-end-budget}


Proposition~\ref{prop:robustness} provides a general robustness statement for a fault-tolerant operator function implementation based on QSVT. The input and synthesis errors, $\varepsilon_{\rm{input}}$ and $\varepsilon_{\rm{synth}}$, can be freely chosen, while the approximation error $\varepsilon_{\rm{app}}$ and QSVT sequence size $S=2N+1$ are readily computed from the polynomial $f$. 
However, as mentioned in Sec.~\ref{sec:robustness-qsvt}, computing the OL constant $L_\star$ of the polynomial $f$ exactly is highly non-trivial and it is often easier to compute an upper bound 
\begin{align}
L_{\mathrm{up}}\geq L_\star[f]    \,.
\end{align}
This will be precisely the focus of Sec.~\ref{sec:computing-Lstar} below, and here we assume such an $L_{\mathrm{up}}$ has already been obtained. 
Consequently, the implementation is guaranteed to achieve total error at most $\varepsilon$ whenever
\begin{align}\label{eq:end-to-end-budget}
    \varepsilon_{\rm{app}} +L_{\mathrm{up}}\,\varepsilon_{\rm{input}} +S\,\varepsilon_{\rm{synth}} \leq \varepsilon.
\end{align}
Importantly, the way in which the total error $\varepsilon$ is allocated among each individual source is a design choice, and different allocations lead to polynomials with different implementation costs. 
For instance, for a fixed degree $N$, a steeper polynomial can reduce $\varepsilon_{\rm{app}}$ at the price of a larger input error amplification due to larger $L_{\mathrm{up}}$. 
Motivated by this, in the following we 
turn this design choice into a well-posed optimization that seeks an error allocation that minimizes the total resource cost (here the $T$ gate count).

\subsection{Fault-tolerant resource bounds}\label{sec:resource-bounds}
Let us assume that the hardware architecture, the oracle access model, and the oracle implementation scheme are fixed. 
The goal is an end-to-end implementation of a polynomial $f$ using the QSVT algorithm. Let 
\begin{align}\label{eq:ft-res}
R \left( f, \varepsilon_{\rm{input}}, \varepsilon_{\rm{synth}} \right)
\end{align}
be the resource cost of an implementation that accepts an 
input error $\varepsilon_{\mathrm{input}}$ 
and per-query synthesis error
$\varepsilon_{\rm{synth}}$. 
We will be mostly interested in $R\left( f, \varepsilon_{\rm{input}}, \varepsilon_{\rm{synth}} \right)$ representing the total number of $T$ gates, but the general construction below extends to any other resource metric (\emph{e.g.}, number of qubits, total runtime, circuit depth, etc.). 
In the following we assume that the functional form of $R \left( f, \varepsilon_{\rm{input}}, \varepsilon_{\rm{synth}} \right)$ is known. 
As an illustration, in App.~\ref{sec:lcu-res-model} we model the full functional form of the $T$-gate count using the LCU access model of Sec.~\ref{sec:end-to-end-budget} with the \textsc{prepare} oracle of Ref.~\cite{Low_2024}. The full expression is presented in App.~\ref{sec:lcu-res-model} and will be the one used for applications in Sec.~\ref{sec:h6-bench}. 

For a polynomial $f$ and any bound $L$ (upper or lower) on its OL constant $L_\star[f]$, we define the minimum resource cost compatible with $L$ and the total error $\varepsilon$ as 
\begin{gather}\label{eq:min-cost}
\begin{aligned}
\mathcal{C}[f,L,\varepsilon]
\coloneq
&\inf_{\substack{ \varepsilon_{\rm{input}}\geq 0,\\ \varepsilon_{\rm{synth}}\geq 0 }}  \quad R \left( f, \varepsilon_{\rm{input}}, \varepsilon_{\rm{synth}} \right) \\
&\textrm{s.t.}  \,\, \varepsilon_{\rm{app}}[f] +
L\,\varepsilon_{\rm{input}}+ S[f]\varepsilon_{\rm{synth}} \leq \varepsilon .
\end{aligned}
\end{gather}
Notice that, although the choice $\varepsilon_{\mathrm{input}}=\varepsilon_{\mathrm{synth}}=0$ trivially satisfies the inequality constraint and therefore is a feasible point, it may not minimize the resource cost. In fact, we show in App.~\ref{sec:lcu-res-model} that the $T$-gate count for a LCU-based implementation scales logarithmically with the inverse precisions $\varepsilon_{\mathrm{input}}$ and $\varepsilon_{\mathrm{synth}}$.

Since increasing $L$ in Eq.~\eqref{eq:min-cost} only makes the constraint stricter without affecting the objective, it follows that $\mathcal{C}$ is a monotonically non-decreasing function of $L$. In particular, if $L_\mathrm{low}[f]\le L_\star[f]\le L_\mathrm{up}[f]$ are lower and upper bounds for the OL constant of $f$, this implies
\begin{align}\label{eq:cost-interval}
\mathcal{C} \big[f,L_{\mathrm{low}}[f],\varepsilon\big]
\leq
\mathcal{C}\big[f,L_\star[f],\varepsilon\big]
\leq
\mathcal{C}\big[f,L_{\mathrm{up}}[f],\varepsilon\big]\,.
\end{align}
Therefore, whenever the exact OL constant $L_\star[f]$ is not known exactly, Eq.~\eqref{eq:cost-interval} provides lower and upper bounds on the total cost of implementing a polynomial $f$ based on computable lower ($L_{\mathrm{low}}[f]$) and upper ($L_{\mathrm{up}}[f]$) bounds for $L_\star[f]$ itself. 
Notice that, since the resource functions may depend nonlinearly and even discontinuously on the precisions, a small gap in the OL bounds does not necessarily imply a small gap in resource cost. 
This motivates selecting the polynomial only after propagating its approximation error, degree, and operator-Lipschitz bound through the complete resource model.


\subsection{Resource-aware polynomial selection and design}\label{sec:res-aware-poly-sel}

The cost interval in Eq.~\eqref{eq:cost-interval} turns the end-to-end budget into a criterion for choosing which polynomial to implement. Standard polynomial constructions aim at minimizing the approximation error at a given degree, as is the case, \emph{e.g.}, for the various approximants of $x^{-1}$ used in matrix inversion \cite{childs2017,gribling2021,sunderhauf2025} or for the error-function-based approximants of the Heaviside function used in eigenvalue thresholding \cite{wan2022}. 
Under a fixed total error $\varepsilon$, however, the cheapest implementation need not be the one with the smallest $\varepsilon_{\rm app}$ or the smallest degree $N$. As it is clear from Eq.~\eqref{eq:end-to-end-budget}, the polynomial also dictates via the OL constant $L_\star[f]$ the amount by which the input error gets amplified. This is particularly relevant when the target function has no known OL constant, or is not OL at all, as is the case for the Heaviside function on a domain containing its discontinuity. 
In the following we show how to use the cost $\mathcal{C}[f,L_{\rm up}[f],\varepsilon]$ to select the cheapest polynomial from a given list of candidates, $\mathcal{P} = \left\{f_1,\ldots,f_J \right\}$, and leverage this to design resource-efficient polynomials of fixed degree by searching directly in coefficient space.


Selecting a resource-optimal polynomial from this list goes as follows: 
we apply Algorithm~\ref{alg:active-set} independently to each $f$ to obtain certified bounds on its OL constant.
Eq.~\eqref{eq:cost-interval} then uses these bounds and associates to every candidate a resource cost interval within the error model of Eq.~\eqref{eq:end-to-end-budget}, where the optimization over $\varepsilon_{\rm{input}}$ and $\varepsilon_{\rm{synth}}$ is done separately for every candidate. Finally, the resource-optimal polynomial compatible with the OL upper bounds in the set of candidates $\mathcal{P}$ is then
\begin{align}\label{eq:opt-poly}
f_\star^{(\mathcal{P})} \coloneq \arg\min_{f\in\mathcal P} \mathcal{C}\big[f,L_{\mathrm{up}}[f],\varepsilon\big].
\end{align}

The selection in Eq.~\eqref{eq:opt-poly} assumes that the list of candidate polynomials $\mathcal{P}$ has already been constructed. 
We now present a similar construction to search for resource-efficient approximating polynomials of a fixed degree $N$ directly in the space of allowed coefficients $\mathbf{a}=\{a_t\}_{t\in[N]}$ in a given basis $\{\Phi_t\}_{t=0}^{N}$ of degree-$N$ polynomials. Any candidate polynomial can be written as
\begin{align}\label{eq:designed-poly}
    f_{\mathbf{a}}(x) = \sum_{t=0}^{N} a_t\,\Phi_t(x), \quad \mathbf{a} \in \mathcal{A}_{N,\varepsilon}\,,
\end{align}
where $\mathcal{A}_{N,\varepsilon}$ denotes the set of all possible coefficient vectors such that $i)$ $f_{\mathbf{a}}$ is a feasible QSP/QSVT polynomial (\emph{i.e.}, it satisfies the necessary parity and normalization conditions); $ii)$ $\norm{f_{\mathrm{T}}-f_{\mathbf{a}}}\le\varepsilon$ for $f_{\mathrm{T}}$ the target function. 
For fixed $N$, the QSP sequence size is constant: $S[f_{\mathbf{a}}]=N$ or $S[f_{\mathbf{a}}]=2N+1$ for $f_{\mathbf{a}}$ of definite parity or not, respectively. 
Therefore, varying the coefficient vector $\mathbf{a} \in \mathcal{A}_{N,\varepsilon}$ changes only the approximation error and the OL contributions to the total error in Eq.~\eqref{eq:min-cost}. 

For each candidate $f_{\mathbf{a}}$, Alg.~\ref{alg:active-set} outputs bounds 
on $L_\star[f_{\mathbf{a}}]$. 
Using the upper bound $L=L_{\mathrm{up}}[f_{\mathbf{a}}]$ in the cost \eqref{eq:min-cost}, one can formalize the search for the cost-optimal approximating polynomial of degree $N$ as 
\begin{align}\label{eq:design-objective}
\min_{ \mathbf{a} \in \mathcal{A}_{N,\varepsilon}}\mathcal{C}\!\big[f_{\mathbf{a}},L_{\mathrm{up}}[f_{\mathbf{a}}],\varepsilon\big].
\end{align}
In Alg.~\ref{alg:res-aware-poly-design} below, we present an heuristic that finds an surrogates solutions to this minimization problem. 
At each step, the algorithm calls a subroutine, here denoted $\mathsf{Search}_N$, that searches a new candidate in the coefficient space $\mathcal A_{N,\varepsilon}$, \emph{i.e.}, it takes a coefficient vector $\mathbf{a}$ and proposes a new one $\mathbf{a}^\prime$. 
In our implementation (see App.~\ref{sec:searchN-details} for details), $\mathsf{Search}_N$ receives a polynomial satisfying the $\varepsilon$-condition $\varepsilon_{\mathrm{app}}[f_{\mathbf{a}}]+L_{\rm{up}}[f_{\mathbf{a}}]\left(\varepsilon_{\rm{trunc}}+\varepsilon_{\rm{fp}}\right)+S[f_{\mathbf{a}}]\varepsilon_{\rm{synth}}\le\varepsilon$ and iteratively modifies its coefficients so as to find a new polynomial $f_{\mathbf{a}^\prime}$ having a smaller OL upper bound, $L_{\mathrm{up}}[f_{\mathbf{a}^\prime}]\le L_{\mathrm{up}}[f_{\mathbf{a}}]$, while still satisfying the $\varepsilon$-condition. 
Alg.~\ref{alg:res-aware-poly-design} then iterates this procedure to produce a list $\mathcal{P}_N$ of candidate degree-$N$ polynomials having different approximations errors and OL constants, from which the optimal cost degree-$N$ polynomial $f_{\star,N}$ is selected by comparing their total costs as in Eq.~\eqref{eq:opt-poly}.

\begin{algo}\label{alg:res-aware-poly-design}
\quad\\
\quad\\
\indent Input: errors $\varepsilon_{\mathrm{input}},\varepsilon_{\mathrm{synth}},\varepsilon>0$, degree $N$, tolerance $\xi>0$, and a coefficient-search subroutine $\mathsf{Search}_N$ over space $\mathcal A_{N,\varepsilon}$ in Eq.~\eqref{eq:designed-poly}
\\
\indent Output: the optimal cost polynomial $f_\star$ and its cost

\begin{enumerate}[leftmargin=*]
    \item Initialize a coefficient vector $\mathbf{a} \in \mathcal{A}_{N,\varepsilon}$ and the list $\mathcal{P}_N$ of candidate degree-$N$ polynomials as $\mathcal{P}_N=\{f_{\mathbf{a}}\}$;

    \item Use $\mathsf{Search}_N$ to propose a new coefficient vector $\mathbf{a}^\prime$, build the corresponding polynomial $f_{\mathbf{a}^\prime}$ in Eq.~\eqref{eq:designed-poly}, and compute $\varepsilon_{\mathrm{app}}[f_{\mathbf{a}^\prime}],L_{\mathrm{up}}[f_{\mathbf{a}^\prime}]$. \\
    If $\varepsilon_{\mathrm{app}}[f_{\mathbf{a}^\prime}]+L_{\mathrm{up}}[f_{\mathbf{a}^\prime}]\varepsilon_{\mathrm{input}}+S[f_{\mathbf{a}}]\varepsilon_{\mathrm{synth}} \le\varepsilon$, add $f_{\mathbf{a}^\prime}$ to the list $\mathcal{P}_N$; 

    \item If $\vert\varepsilon_{\mathrm{app}}[f_{\mathbf{a}^\prime}]-\varepsilon_{\mathrm{app}}[f_{\mathbf{a}}]\vert\le\xi$ and $\vert L_{\mathrm{up}}[f_{\mathbf{a}^\prime}]-L_{\mathrm{up}}[f_{\mathbf{a}}]\vert\le\xi$ (\emph{i.e.}, $f_{\mathbf{a}^\prime}$ is not significantly different from $f_{\mathbf{a}}$), go to Step~4. 
    Else, set $\mathbf{a}=\mathbf{a}^\prime$ and return to Step 2;

    \item Select from $\mathcal{P}_N$ the polynomial with minimal cost, $f_\star^{(\mathcal{P}_N)}$, using Eq.~\eqref{eq:opt-poly}. 
    Return $f_{\star,N}\coloneq f_\star^{(\mathcal{P}_N)}$, its properties $\varepsilon_{\mathrm{app}}[f_{\star,N}],L_{\mathrm{up}}[f_{\star,N}]$, and the cost $\mathcal{C}\big[f_{\star,N},L_\mathrm{up}[f_{\star,N}],\varepsilon\big]$). 
\end{enumerate}
\end{algo}

The objective function $\mathcal{C}$ in Eq.~\eqref{eq:min-cost} need not be smooth. For instance, the $T$-gate count resource model in App.~\ref{sec:lcu-res-model} contains ceilings and other discretizations that are not smooth. Therefore, we place no global-convergence requirement on $\mathsf{Search}_N$. The search is restricted to $\mathcal{A}_{N,\varepsilon}$ and explores only a finite set of fully certified points $\mathcal{P}_N$. Consequently, $f_{\star,N}$ is the lowest-resource polynomial found by the specified search at degree $N$, rather than a global minimizer over $\mathcal{A}_{N,\varepsilon}$. 
Finally, the procedure can be repeated independently for a finite set of degrees $N$, with the global optimal obtained after a final cross-degree selection step at the end. 

Next, we apply this polynomial design to a concrete application.



\subsection{Application: molecular ground state energy estimation}\label{sec:h6-bench}

Here we illustrate our resource-aware polynomial design framework introduced in Sec. \ref{sec:res-aware-poly-sel} for the task of ground state energy estimation via eigenvalue thresholding \cite{Martyn_2021} for a molecular Hamiltonian.

We chose the H$_6$ Hamiltonian on $12$ qubits and $919$ Pauli terms, obtained from PennyLane~\cite{azad2023}. For eigenvalue thresholding, the target transformation is the Heaviside function
\begin{align}\label{eq:heaviside-domain}
    f_{\rm{T}}(x)=\Theta(x), \quad \mathcal{D}=[-1,1].
\end{align}
The polynomial is built to approximate $f_{\rm{T}}(x)$ up to $\varepsilon_{\rm{app}}$ in the gapped domain $D_{\nu}=[-1,-\nu]\cup[\nu,1]$ that excludes the discontinuity at $x=0$. We set $\nu = 0.015$, which is sufficient to resolve the energy gap $3.02\times10^{-2}$ between the ground and first excited states of H$_6$'s Hamiltonian (after normalization to $\norm{H}=1$). Since the eigenvalue thresholding protocol shifts the position of the discontinuity scanning across the whole spectral domain, the OL constant is calculated over the full domain $\mathcal{D}$.

\begin{figure*}[!ht]
    \centering
    \includegraphics[width=.96\textwidth]{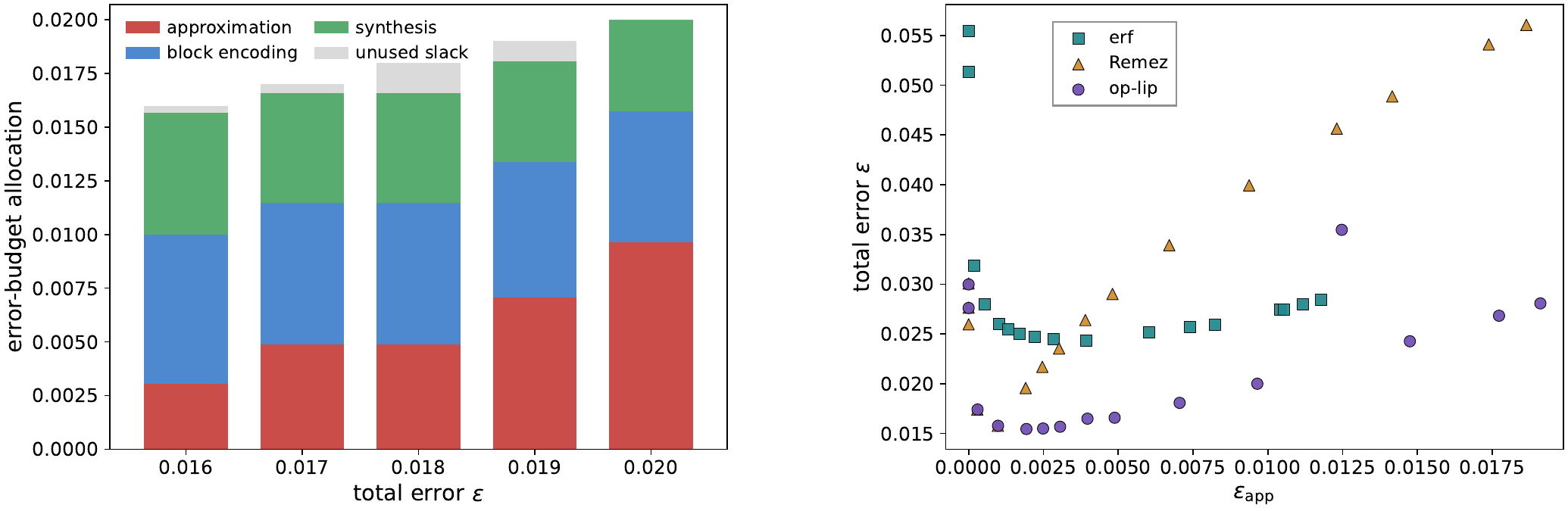}
    \caption{
\textbf{Fixed-hardware profiles at  $\varepsilon_{\rm fp}=2.44\times 10^{-4}$ and $\varepsilon_{\rm synth}=4.15\times 10^{-5}$.}
Left: error-budget decomposition of the minimum-$T$ polynomial over the feasible range $0.016 \le \varepsilon \le 0.020$, splitting the allowed total error into approximation, amplified block-encoding, synthesis, and unused slack.
Right: Pareto frontiers in the $(\varepsilon_{\rm{app}},\varepsilon)$ plane for the same fixed hardware profile, shown for the erf, Remez, and op-lip families.
Taken together, the two panels show why polynomial selection is not controlled by approximation error alone under fixed hardware constraints. A substantial part of the admissible error budget is consumed by implementation terms, and the op-lip family consequently occupies the most favorable approximation--total-error tradeoff region, which is why it supplies the minimum-$T$ feasible winner throughout this feasible range.
}
    \label{fig:h6-fixed-profile-budget-frontier}
\end{figure*}

We compare two polynomial families. First, as a baseline, we chose the analytical polynomial from Ref.~\cite{wan2022} obtained by first approximating $\Theta(x)$ by the error function $\mathrm{erf}(x)$, and then by finding an analytical approximating polynomial for $\mathrm{erf}(x)$. For the second family, we used the Remez exchange algorithm~\cite{remez1934}. The third family was designed following Sec.~\ref{sec:res-aware-poly-sel}, denoted as \emph{op-lip}, therefore it is not a single closed-form approximant and is not chosen by the approximation error alone, it is a family over degrees that explores the tradeoff between both $\varepsilon_{\rm{app}}[f_{\mathbf{a}}]$ and $L_{\mathrm{up}}[f_{\mathbf{a}}]$ while selecting the polynomial with the minimum the resource cost.

For simplicity, instead of allowing $\varepsilon_{\rm{trunc}},\varepsilon_{\rm{fp}}, \varepsilon_{\rm{synth}}$ to be optimized for every polynomial using Eq.~\eqref{eq:min-cost} with $\varepsilon_{\mathrm{input}}=\varepsilon_{\mathrm{trunc}}+\varepsilon_{\mathrm{fp}}$, we consider fixed hardware profiles. Each profile represents a
different machine setting: a limited number of auxiliary qubits constrains the precision attainable by the \textsc{prepare} oracle and therefore fixes $\varepsilon_{\rm{fp}}$, while a finite synthesis-depth budget constrains the precision of the compiled QSP rotations and fixes $\varepsilon_{\rm{synth}}$. In addition, we retain the complete Pauli expansion, so $\varepsilon_{\rm{trunc}}=0$.  
Thus, for a hardware profile $(\varepsilon_{\rm{fp}}, \varepsilon_{\rm{synth}})$, the optimization
of $\mathcal{C}[f,L_{\mathrm{up}}[f],\varepsilon]$ collapses to a feasibility test in Eq.~\eqref{eq:end-to-end-budget}. 

Thus, within a fixed profile, op-lip selects the feasible candidate with the smallest resource cost. Varying the profile then probes how the preferred polynomial changes across machines with different qubit and circuit-depth limitations. For each profile and polynomial family, we can also ask what is the smallest total-error budget $\varepsilon$ for which at least one polynomial satisfies Eq.~\eqref{eq:end-to-end-budget}.

Fig.~\ref{fig:h6-fixed-profile-budget-frontier} makes the fixed-profile mechanism explicit.  The left panel decomposes the error budget of the minimum-$T$ feasible candidate over $0.016 \leq \varepsilon \leq 0.020$, where we can see a substantial fraction of the available tolerance consumed by block-encoding and synthesis errors, so the scalar approximation error $\varepsilon_{\rm{app}}$ cannot by itself determine the preferred polynomial.  The right panel shows the corresponding Pareto frontiers in the $(\varepsilon_{\rm{app}},\varepsilon)$ plane.  Although erf and Remez can attain smaller $\varepsilon_{\rm{app}}$, their larger amplification of the fixed hardware errors shifts the required total error upward, whereas op-lip occupies the lower part of the attainable frontier over the accuracy range.  Consequently, the minimum-$T$ solution can deliberately tolerate a larger approximation error while spending less of the total budget on implementation error.  This illustrates the resource-aware character of the design.

\begin{figure*}[!ht]
    \centering
    \includegraphics[width=\textwidth]{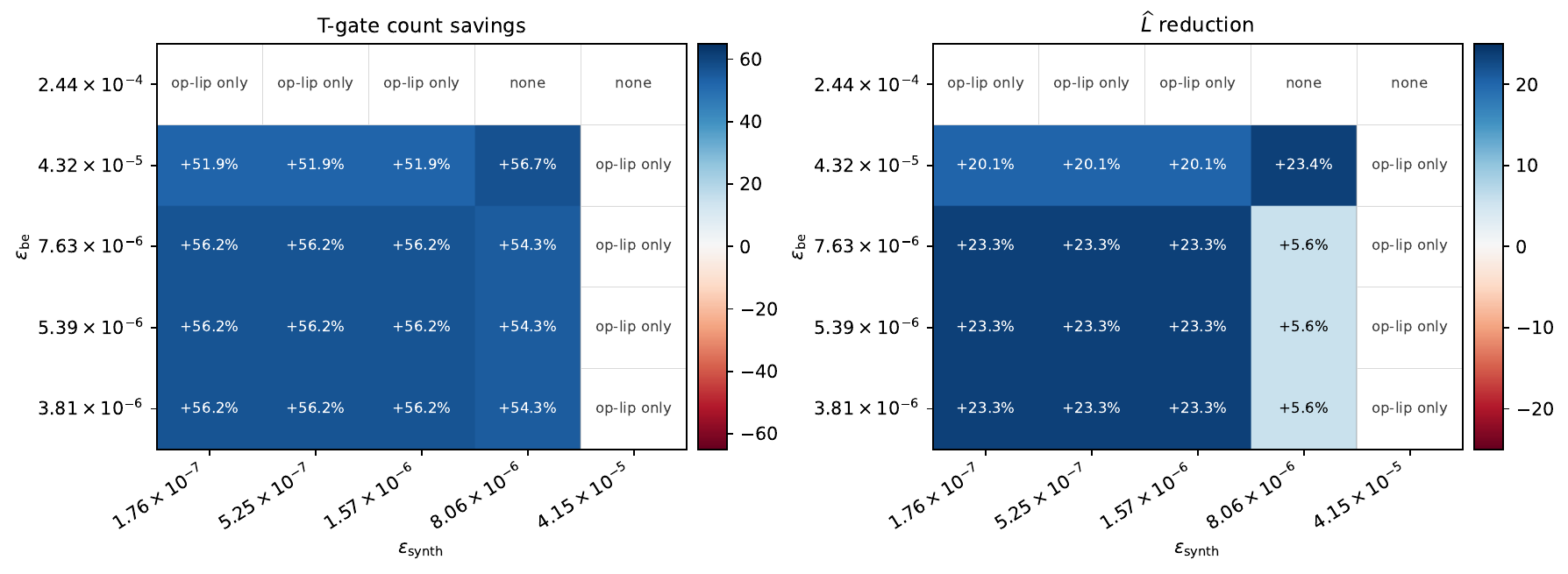}
\caption{
\textbf{Resource advantage and robustness advantage of op-lip over erf at fixed hardware profiles.}
For the H$_6$ Hamiltonian, each entry corresponds to a fixed hardware profile $(\varepsilon_{\rm{fp}},\varepsilon_{\rm{synth}})$ at the fixed total error budget $\varepsilon=0.010$.
The left panel shows the percentage reduction in the $T$-gate count (see Sec.~\ref{sec:lcu-res-model}) achieved by the op-lip family relative to the erf family.
The right panel shows the corresponding percentage reduction in the OL upper bound, computed for the polynomials used in the left panel using Algorithm~\ref{alg:active-set}.
Thus the right panel measures the robustness improvement attached to the very polynomials responsible for the $T$-count advantage on the left.
Entries labeled ``op-lip only'' indicate profiles for which only an op-lip polynomial is feasible, and ``none'' indicates when no family has a feasible polynomial.
}
    \label{fig:h6-erf-oplip-t-lhat-advantage}
\end{figure*}

The improvement can be further explained using Fig.~\ref{fig:h6-erf-oplip-t-lhat-advantage}. At fixed total error budget $\varepsilon=0.01$, the left panel shows the percentage $T$-gate savings, $1-(T_{\rm op\text{-}lip})/T_{\rm erf}$, of the op-lip relative to the corresponding erf  on profiles where both families are feasible, where the resources are counted as in Sec.~\ref{sec:lcu-res-model}. 
In the right panel, we have the corresponding reduction in the OL upper bound, $1-(L_{{\mathrm{up}}_{\rm{op-lip}}})/L_{{\mathrm{up}}_{\rm erf}}$.
Among the $16$ profiles where both families are feasible, op-lip yields  $T$-gate savings ranging from $51.9\%$ to $56.7\%$, and a seven profiles that are only feasible for op-lip. So, the designed polynomial is less sensitive to the implementation errors. Thus, the savings are the direct resource manifestation of the improved robustness seen in Fig.~\ref{fig:h6-erf-oplip-t-lhat-advantage}.

In conclusion, for this H$_6$ instance, the evidence shows that, under fixed hardware constraints, designing a polynomial to reduce the resource cost not only reduced the T-gate count, but also produced robust polynomials with lower OL constants.

\section{Bounding $L_\star[f]$ from polynomial coefficients}\label{sec:computing-Lstar}



In this section, we show our main technical results, namely the translation of results from operator-space theory literature to the QSP setting followed by an operational method to calculate lower and upper bounds on the operator Lipschitz constant $L_\star$ of polynomial functions. 

Although $L_\star[f]$ is formally defined by Eq.~\eqref{eq:Lstardef}, the definition is not straightforward to operationalize since the supremum ranges over a continuum of non-commuting matrix pairs $A$ and $B$. Although a finite $\varepsilon$-net or brute force random search can probe this space and produce \emph{lower} bounds on $L_\star[f]$, no finite collection of tested pairs can by itself certify that a larger ratio does not occur elsewhere. A more challenging direction is therefore the construction of a computable \emph{upper} bound $L_{\mathrm{up}}\geq L_\star[f]$ 
that is valid uniformly over all admissible $A,B\in\LD$ and is sufficiently tight to be useful in the robustness estimates below. Here, we show that when $f$ is a polynomial, this problem reduces to finite-dimensional optimization directly in terms of its polynomial coefficients. Our contribution is an exact finite coefficient-space formulation for polynomial kernels (Lemma~\ref{lem:semiinfinite_reformulation}) and an exchange algorithm with certified two-sided bounds in Sec.~\ref{subsec:active-set-alg}.

\subsection{Haagerup factorization and semi-infinite optimization}

The operator Lipschitz constant of a differentiable function $f$ is related to its divided difference kernel $K_f:\mathcal{D}\times\mathcal{D}\rightarrow\mathbb{C}$ defined as 
\begin{align}\label{eq:df-kernel}
K_f(x,y) \coloneq
\begin{cases}
    \frac{f(x)-f(y)}{x-y}, &(x \neq y ) \\
    f^\prime(x),  &(x = y)
\end{cases}.
\end{align}  
In App.~\ref{app:proof-lem-lips-schur}, we give an alternative proof of the following lemma, which compiles various results in Ref.~\cite{aleksandrov2016}.
\begin{lemma}[$L_\star$ as a Schur multiplier norm] \label{lem:lips-schur}
The operator-Lipschitz constant of a differentiable function $f$ in a domain $\mathcal{D}=[-1,1]$ or $\mathcal{D}=\mathbb{T}$ is equal to the Schur multiplier norm of its divided difference kernel in that domain, i.e
    \begin{align}\label{eq:Schur}
L_\star[f] & =\norm{K_f}_{\rm{Schur}} \coloneq \sup_{d\ge2}\,\sup_{\substack{\mathsf{X},\mathsf{Y}\in\mathcal{D}^d\\ Z\in\mathbb{C}^{d\times d}\\ \norm{Z}=1}}\big\|K_{f,d}(\mathsf{X},\mathsf{Y})\circ Z\big\|\,,
\end{align}
where $\circ$ denotes the Hadamard (element-wise) product, $K_{f,d}(\mathsf{X},\mathsf{Y})$ is the $d\times d$ matrix with components $K_f(x_i,y_j)$.
\end{lemma}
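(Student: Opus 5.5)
We prove the two inequalities $L_\star[f]\le\norm{K_f}_{\rm Schur}$ and $L_\star[f]\ge\norm{K_f}_{\rm Schur}$ separately, working with normal $A,B\in\LD$ and their spectral decompositions.

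\emph{Upper bound.} Take $A,B\in\LD$ and diagonalize them as $A=U\,\mathrm{diag}(\mathsf{X})\,U^\dagger$ and $B=V\,\mathrm{diag}(\mathsf{Y})\,V^\dagger$, with $\mathsf{X},\mathsf{Y}\in\mathcal{D}^d$. Set $W\coloneq U^\dagger(A-B)V$. Then $W_{ij}=(x_i-y_j)\,(U^\dagger V)_{ij}$, while
\begin{align}
U^\dagger\big(f(A)-f(B)\big)V=\big[(f(x_i)-f(y_j))(U^\dagger V)_{ij}\big]_{ij}.
\end{align}
For indices with $x_i\neq y_j$, this entry equals $K_f(x_i,y_j)\,W_{ij}$. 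For indices with $x_i=y_j$, both $W_{ij}$ and $f(x_i)-f(y_j)$ vanish, so the identity holds for any value of the kernel at that entry. Hence
\begin{align}
U^\dagger\big(f(A)-f(B)\big)V=K_{f,d}(\mathsf{X},\mathsf{Y})\circ W .
\end{align}
Since $U,V$ are unitary, $\norm{W}=\norm{A-B}$, and unitary invariance of the spectral norm gives
\begin{align}
\norm{f(A)-f(B)}\le\norm{K_f}_{\rm Schur}\,\norm{A-B}.
\end{align}
Taking the supremum over $A\neq B$ and over $d$ yields $L_\star[f]\le\norm{K_f}_{\rm Schur}$.

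\emph{Lower bound.} Fix $d$, points $\mathsf{X},\mathsf{Y}\in\mathcal{D}^d$, and $Z$ with $\norm{Z}=1$. We realize $K_{f,d}(\mathsf{X},\mathsf{Y})\circ Z$ as a limit of operator differences by the standard dilation trick. Put $X=\mathrm{diag}(\mathsf{X})$, $Y=\mathrm{diag}(\mathsf{Y})$, and on $\mathbb{C}^{2d}$ define
\begin{align}
\mathcal{A}_t=\begin{pmatrix}X&0\\0&Y\end{pmatrix},\qquad \mathcal{B}_t=e^{tG}\mathcal{A}_t e^{-tG},\qquad G=\begin{pmatrix}0&Z\\-Z^\dagger&0\end{pmatrix}.
\end{align}
Both operators are normal with spectrum in $\mathcal{D}$, and they are unitarily equivalent because $G$ is anti-Hermitian. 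To first order, $\mathcal{B}_t-\mathcal{A}_t=t[G,\mathcal{A}_t]+O(t^2)$ and $f(\mathcal{B}_t)-f(\mathcal{A}_t)=t[G,f(\mathcal{A}_t)]+O(t^2)$. The off-diagonal block of $[G,\mathcal{A}_t]$ is $ZY-XZ$, whose entries are $-(x_i-y_j)Z_{ij}$. The corresponding block of $[G,f(\mathcal{A}_t)]$ has entries $-(f(x_i)-f(y_j))Z_{ij}$. This alone gives the Schur product with $(x_i-y_j)Z$ in place of $Z$, which is the wrong object.

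We therefore use the cleaner, exact version of the construction. Set
\begin{align}
\mathcal{A}=\begin{pmatrix}X&0\\0&Y\end{pmatrix},\qquad \mathcal{B}=\mathcal{A}+\epsilon\begin{pmatrix}0&Z'\\Z'^\dagger&0\end{pmatrix}
\end{align}
with $Z'$ chosen so that the first-order Daleckii–Krein formula
\begin{align}
\frac{d}{d\epsilon}f(\mathcal{A}+\epsilon E)\Big|_{0}=K_f(\sigma(\mathcal{A}),\sigma(\mathcal{A}))\circ E
\end{align}
in the eigenbasis of $\mathcal{A}$ produces the off-diagonal block $K_{f,d}(\mathsf{X},\mathsf{Y})\circ Z'$. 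Taking the ratio and letting $\epsilon\to0$ gives $L_\star[f]\ge\norm{K_{f,d}(\mathsf{X},\mathsf{Y})\circ Z'}/\norm{E}$. Here $\norm{E}=\norm{Z'}$, and the $2\times2$ block structure shows that the norm of the off-diagonal block is at most the norm of the full matrix. Choosing $Z'=Z$ therefore gives $L_\star[f]\ge\norm{K_{f,d}(\mathsf{X},\mathsf{Y})\circ Z}$, and taking suprema proves the reverse inequality.

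\emph{Main obstacle.} The hardest step is justifying the Daleckii–Krein derivative together with the admissibility of the perturbed operator. For $\mathcal{D}=[-1,1]$ the perturbation $E$ is Hermitian, but $\mathcal{A}+\epsilon E$ can have spectrum leaving $[-1,1]$ when some $x_i$ or $y_j$ equals $\pm1$. We fix this by first shrinking the points to $(1-\eta)\mathsf{X}$ and $(1-\eta)\mathsf{Y}$ and then using continuity of $K_f$ as $\eta\to0$; continuity holds because $f$ is $C^1$, which covers polynomials. For $\mathcal{D}=\mathbb{T}$, Hermitian perturbations break unitarity, so we perturb multiplicatively with $\mathcal{B}=\mathcal{A}e^{\ii\epsilon E}$. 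The derivative is then $K_f\circ(\ii\,\mathcal{A}E)$, and since $\mathcal{A}$ is unitary and diagonal, the factor $\mathcal{A}$ is absorbed into $Z$ without changing its norm. For polynomial $f$, the derivative formula itself follows by direct expansion from the monomials, because $K_{z^n}(x,y)=\sum_{k}x^k y^{n-1-k}$.
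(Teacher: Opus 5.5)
Your proof is correct and follows the paper's argument. The upper bound is the same diagonalization identity $U^\dagger(f(A)-f(B))V=K_{f,d}(\mathsf{X},\mathsf{Y})\circ U^\dagger(A-B)V$. The lower bound uses the same $2d$-dimensional block dilation, differentiated via the Daleckii--Krein formula (an additive Hermitian perturbation on $[-1,1]$, a multiplicative unitary one on $\mathbb{T}$) and compressed to the off-diagonal block, which is exactly the paper's sandwich $L^{(d)}[f]\le S^{(d)}[K_f]\le L^{(2d)}[f]$.

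Your handling of eigenvalues at $\pm1$, by shrinking $\mathsf{X},\mathsf{Y}$ into the interior and using continuity of $K_f$, is a slightly tidier version of the paper's approximating-sequence argument. The abandoned conjugation attempt can simply be dropped.
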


Lemma~\ref{lem:lips-schur} gives an alternative way of calculating $L_\star$ without scanning pairs of matrices of various dimensions. However, the quantity in Eq.~\eqref{eq:Schur} is still challenging to compute in practice. 
Here we will focus on the particular class of divided difference kernels for which the $x$- and $y$-dependence factorizes in the following sense. 

\begin{definition}[Haagerup-factorizable kernel]
We say that the divided difference kernel \eqref{eq:df-kernel} is \emph{Haagerup-factorizable}~\cite{aleksandrov2016,peller2016} if it admits a separable factorization of the form 
\begin{align}\label{eq:haagerup}
    K_f(x,y) = 
    \mathbf{p}(x)^\top\!\!\cdot\mathbf{q}^*(y) =  
    \sum_{k \geq 0} p_k(x)q^*_k(y)\,, 
\end{align}
where, for any $x,y\in\mathcal{D}$, the vectors 
$\mathbf{p}(x)\coloneq\{p_k(x)\}_{k\ge0}$ and $\mathbf{q}(y)\coloneq\{q_k(y)\}_{k\ge0}$ have bounded $\ell_2$-norms 
$\norm{\mathbf{p}(x)}_2=\big(\sum_{k\geq 0}\vert p_k(x)\vert^2\big)^{1/2}$ and $\norm{\mathbf{q}(y)}_2=\big(\sum_{k\geq 0}\vert q_k(y)\vert^2\big)^{1/2}$. 
\end{definition}

\noindent Notice from Eq.~\eqref{eq:haagerup} that Haagerup decompositions are not unique: any pair of linear transformations $\mathbf{p}\to W\mathbf{p},\mathbf{q}\to V\mathbf{q}$ with $W$ and $V$ satisfying $W^\top V^*=\mathds{1}$ leaves $K_f$ invariant. The operator-Lipschitz constant $L_\star$ of functions that admit Haagerup-factorizable kernels is given by an optimization over all possible such decompositions, as stated in the following theorem (see 
App. \ref{app:proof-main} for the proof):

\begin{theorem}[$L_\star$ as an optimization over Haagerup factorizations] \label{thm:OL-haagerup}
Let 
$\mathcal{D}\subset \mathbb{C}$ be a compact domain and $f:\mathcal{D}\to\mathbb{C}$ be a bounded function whose divided difference 
kernel $K_f$ is Haagerup-factorizable as in Eq.~\eqref{eq:haagerup}. 
Then $f$ is operator-Lipschitz on $\LD$ 
with optimal constant 
\begin{align}\label{eq:Lstar}
L_\star[f] = 
\inf_{\mathbf{p},\mathbf{q}} \left\{\sup_{x\in\mathcal{D}} \big\|\mathbf{p}(x)\big\|_2 \cdot \sup_{y\in\mathcal{D}} \big\|\mathbf{q}(y)\big\|_2 \right\},
\end{align}
where the infimum is over all $\mathbf{p},\mathbf{q}$ such that Eq.~\eqref{eq:haagerup} holds. 
\end{theorem}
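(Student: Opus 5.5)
The proof has two halves: an upper bound for every admissible Haagerup factorization, and a matching lower bound showing the infimum is attained up to arbitrary precision.

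\emph{Upper bound.} Fix any factorization $K_f(x,y)=\sum_k p_k(x)q_k^*(y)$ with
\[
P\coloneq\sup_x\|\mathbf p(x)\|_2<\infty, \qquad Q\coloneq\sup_y\|\mathbf q(y)\|_2<\infty.
\]
Take $A,B\in\LD$ with spectral decompositions $A=\int x\,dE_A(x)$ and $B=\int y\,dE_B(y)$; in finite dimension these are sums over eigenprojections. Write the telescoping identity
\[
f(A)-f(B)=\sum_{i,j}\big(f(x_i)-f(y_j)\big)E_i F_j=\sum_{i,j}K_f(x_i,y_j)\,E_i(A-B)F_j .
\]
This holds because $E_i(A-B)F_j=(x_i-y_j)E_iF_j$. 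When $x_i=y_j$ the term $E_iF_j$ contributes zero on both sides, so the diagonal value $f'$ is irrelevant. Substituting the factorization gives
\[
f(A)-f(B)=\sum_k p_k(A)\,(A-B)\,q_k^*(B),
\]
where $p_k(A)=\sum_i p_k(x_i)E_i$ and $q_k^*(B)=\sum_j q_k^*(y_j)F_j$. Define the column operator $R\coloneq\big(p_k(A)\big)_k$ and the row operator $C\coloneq\big(q_k^*(B)\big)_k$. Then
\[
\Big\|\sum_k X_k Y Z_k\Big\|\le\Big\|\sum_k X_kX_k^*\Big\|^{1/2}\|Y\|\,\Big\|\sum_k Z_k^*Z_k\Big\|^{1/2}.
\]
Normality gives
\[
\sum_k p_k(A)p_k(A)^*=\sum_i\|\mathbf p(x_i)\|_2^2E_i\le P^2\,\mathds 1,
\]
and similarly the $\mathbf q$-sum is at most $Q^2$. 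Hence $\|f(A)-f(B)\|\le PQ\,\|A-B\|$. Taking the infimum over factorizations gives $L_\star[f]\le\inf PQ$. I would handle convergence of the infinite $k$-sum via strong convergence of partial sums, which the $\ell_2$ bounds make routine.

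\emph{Lower bound.} Let $\kappa\coloneq\inf PQ$. By Lemma~\ref{lem:lips-schur}, $L_\star[f]=\|K_f\|_{\rm Schur}$, so it suffices to show $\|K_f\|_{\rm Schur}\ge\kappa$. The plan is to invoke Grothendieck--Haagerup duality: a kernel on $\mathcal D\times\mathcal D$ whose Schur multiplier norm is at most $M$ on all finite subsets $\mathsf X,\mathsf Y$ admits a factorization through a Hilbert space with $\sup\|\mathbf p\|\sup\|\mathbf q\|\le M$. I would prove this in two steps.
\begin{enumerate}
\item \emph{Finite sets.} For finite $\mathsf X,\mathsf Y$, the Schur norm equals the factorization norm $\min\max_i\|u_i\|\max_j\|v_j\|$. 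This is Haagerup's theorem, proved by a Hahn--Banach/SDP duality argument: the dual of the factorization SDP is exactly $\sup_{\|Z\|\le1}\|K\circ Z\|$.
\item \emph{Compactness.} Pass to all of $\mathcal D$ by a Tychonoff/ultraproduct argument. Normalize so that $\|u_x\|\le\sqrt M$ and $\|v_y\|\le\sqrt M$. Realize the Gram-type data $\langle u_x,u_{x'}\rangle$, $\langle v_y,v_{y'}\rangle$, and the cross-products equal to $K_f(x,y)$ as a pointwise limit along a net of finite subsets. The limiting positive semidefinite kernel on $\mathcal D\sqcup\mathcal D$ then yields vectors in a Hilbert space $\ell_2(\mathcal I)$ via the GNS/Kolmogorov construction.
\end{enumerate}
Together these give a factorization with $PQ\le\|K_f\|_{\rm Schur}$, so $\kappa\le L_\star[f]$.

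The main obstacle is the lower bound, specifically Haagerup's finite-dimensional duality and the limit passage. Two points need care there. The Hilbert space produced may be nonseparable, so it must be identified with an $\ell_2$ index set compatible with Eq.~\eqref{eq:haagerup}, using countably many coordinates by separability of compact $\mathcal D$ when $K_f$ is continuous. In addition, the infimum in Eq.~\eqref{eq:Lstar} may not be attained exactly, which is why only inequalities up to $\epsilon$ are claimed in the approximation step. If a full self-contained proof is too long, I would cite Aleksandrov--Peller's characterization~\cite{aleksandrov2016,peller2016} for this direction and give only the upper-bound computation in detail.
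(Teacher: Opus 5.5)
Your proposal is correct and follows the paper's route: you get the upper bound from the spectral identity $f(A)-f(B)=\sum_k p_k(A)(A-B)q_k^*(B)$ plus a Cauchy--Schwarz (row/column) estimate, and the lower bound from Lemma~\ref{lem:lips-schur}, SDP duality on finite point sets (Lemma~\ref{lem:discrete-schur-haagerup}), and a compactness passage to all of $\mathcal D$. Your Tychonoff/Kolmogorov argument for that last step is more explicit than the paper's, which only says the extension ``relies on the compactness of the set of factorizations''; since finite Haagerup norms are attained, your argument even shows the infimum in Eq.~\eqref{eq:Lstar} is attained, so the $\epsilon$-hedge is unnecessary.
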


\noindent Notice that the $\inf_{\mathbf{p},\mathbf{q}}$ in Eq.~\eqref{eq:Lstar} is in general intractable since the sum in Eq.~\eqref{eq:haagerup} can have infinitely many terms and, moreover, there are infinitely many $\mathbf{p},\mathbf{q}$ satisfying Eq. \eqref{eq:haagerup}. Nevertheless, we are interested in the particular case of polynomial functions. As we show next, the feasibility issues in Eq.~\eqref{eq:Lstar} simplify significantly in this case. We start by stating in the next lemma that polynomials are Haagerup factorizable with a finite number of terms (see proof in App.~\ref{app:haagerup-fact}).

\begin{lemma}[Haagerup factorization of polynomials]\label{lem:haagerup_poly}
     Let $\{\Phi_j(x)\}_{j=0}^{N}$ be a basis for the space of polynomials of degree up to $N$ in a domain $\mathcal{D}$, with $\deg(\Phi_j)= j$, and let $f(x) = \sum^{N}_{j=0} a_j \Phi_j(x)$ be an element of this space.
    Then the divided-difference kernel $K_f(x, y)$ is Haagerup factorizable. Moreover, any
    \begin{gather}\label{eq:valid-fact}
        \begin{aligned}
        p_k(x) &= \sum_{t=0}^{N-1} h_{p_k,t}\Phi_t(x) = \mathbf{h}_{p_k}^{\,\top}\mathbf{\Phi}(x), \\
        q_k(y) &= \sum_{t=0}^{N-1} h_{q_k,t}\Phi_t(y) = \mathbf{h}_{q_k}^{\,\top}\mathbf{\Phi}(y),
        \end{aligned}
    \end{gather}
   where
$\mathbf{h}_{p_k},\mathbf{h}_{q_k}\in\mathbb C^{N}$ and $\mathbf{\Phi}(x)\coloneq\left[ \Phi_k(x) \right]_{k=0}^{N-1} $,  define a valid factorization
   $K_f(x,y) = \sum_k p_k(x)q_k(y)^* = \mathbf{\Phi}(x)^\top G \,\mathbf{\Phi}(y)^*$ if $G \coloneq \sum_k \mathbf{h}_{p_k}\mathbf{h}_{q_k}^\dagger $ satisfies
   \begin{align}
       G=(D^{-1})^\top G'D^{-1},
   \end{align}
   with $G,G',D\in\mathbb{C}^{N\times N}$ such that $\mathbf{\Phi}(x)=D\cdot\left[ x^k \right]_{k=0}^{N-1}$ is the change from the basis of monomials of degree at most $N-1$, and $G'_{k,k'}=\sum_{j=0}^N \widehat{D}_{j,k+k'+1}\,a_j$ if $k+k'+1\leq N$ and $G'_{k,k'}=0$ otherwise. Here $\left[\Phi_j(x)\right]_{j=0}^{N}=\widehat{D}\cdot\left[x^s\right]_{s=0}^{N}$, and $D$ is the leading $N\times N$ block of the lower-triangular matrix $\widehat{D}$.
\end{lemma}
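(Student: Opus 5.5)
The plan is to first establish the factorization in the monomial basis and then transport it to the basis $\{\Phi_j\}$ by the triangular change of basis $\widehat D$.

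\emph{Step 1: monomial kernel.} For monomials, the divided difference is explicit: for $x\neq y$,
\[
\frac{x^s-y^s}{x-y}=\sum_{m+n=s-1} x^m y^n ,
\]
and at $x=y$ this sum equals $s x^{s-1}$, which is the derivative. Hence $K_{x^s}(x,y)=\sum_{m+n=s-1}x^my^n$ holds on all of $\mathcal{D}\times\mathcal{D}$ with no case split.

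\emph{Step 2: express $f$ in monomials.} Writing $f=\sum_j a_j\Phi_j$ and using $\Phi_j(x)=\sum_s \widehat D_{j,s}x^s$, we get $f(x)=\sum_s c_s x^s$ with $c_s=\sum_j \widehat D_{j,s}a_j$. The constant term $c_0$ drops out of the divided difference. By linearity of $f\mapsto K_f$,
\[
K_f(x,y)=\sum_{k,k'=0}^{N-1} c_{k+k'+1}\,x^k y^{k'}\,\mathbb 1[k+k'+1\le N]
=\mathbf{m}(x)^\top G'\,\mathbf{m}(y),
\]
where $\mathbf m(x)=[x^k]_{k=0}^{N-1}$. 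This is exactly the stated $G'$, a Hankel matrix.

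\emph{Step 3: change of basis.} Since $\widehat D$ is lower triangular with nonzero diagonal (because $\deg\Phi_j=j$), its leading block $D$ is invertible and $\mathbf\Phi(x)=D\mathbf m(x)$. Substituting $\mathbf m(x)=D^{-1}\mathbf\Phi(x)$ gives
\[
K_f(x,y)=\mathbf\Phi(x)^\top (D^{-1})^\top G' D^{-1}\mathbf\Phi(y).
\]
One point needs care here: the statement writes $\mathbf\Phi(y)^*$. On $[-1,1]$ with a real basis such as Chebyshev, conjugation is harmless, so I will treat the real-domain case directly. For $\mathbb T$, or for complex bases, the identification has to be adjusted, either by absorbing the conjugation into $\mathbf h_q$ or by replacing $\mathbf m(y)$ with the conjugate-monomial vector as appropriate. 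I expect this bookkeeping to be the main nuisance, though it is not a deep obstacle; I would state the convention explicitly.

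\emph{Step 4: vector form and boundedness.} Given any decomposition $G=\sum_k \mathbf h_{p_k}\mathbf h_{q_k}^\dagger$, with $p_k,q_k$ defined as in \eqref{eq:valid-fact}, we have
\[
\sum_k p_k(x)q_k(y)^*=\mathbf\Phi(x)^\top\Big(\sum_k\mathbf h_{p_k}\mathbf h_{q_k}^\dagger\Big)\mathbf\Phi(y)^* .
\]
This equals $K_f$ exactly when $G$ has the required form. Such decompositions always exist; the SVD of $G$ gives one with finitely many terms, at most $N$.

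The $\ell_2$-norms $\|\mathbf p(x)\|_2$ and $\|\mathbf q(y)\|_2$ are finite sums of continuous functions on the compact set $\mathcal D$. They are therefore bounded, which is the boundedness required by the definition of a Haagerup-factorizable kernel.
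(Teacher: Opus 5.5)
Your proposal is correct and follows essentially the same route as the paper: expand $f$ in monomials to obtain the Hankel matrix $G'$, transport it through the triangular change of basis $D$, and take any rank-one decomposition of $G$ (e.g.\ from the SVD), with boundedness following from finiteness and continuity on the compact domain. The paper handles the conjugation the same way you propose, by restricting to real $\mathcal{D}$. It writes $G=(D^{-1})^\top G'\,\overline{D}^{-1}$ so that bases with complex coefficients are also covered, and it treats the unit circle separately via the Laurent-polynomial factorization.
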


Lemma~\ref{lem:haagerup_poly} reduces the problem of finding Haagerup factorizations of polynomials to finding the finitely many coefficients $\mathbf{h}_{p_k},\mathbf{h}_{q_k}$ that hold the correct relation with a matrix $G$ that only depends on the basis and on $f(x)$ itself.
In particular, using the Chebyshev basis $\mathbf{\Phi}(x)\coloneq\left[ T_k(x) \right]_{k=0}^{N-1} \in [-1,1]^{N}$ in the domain $\mathcal{D}=[-1,1]$, we have
\begin{align}\label{eq:Hcheb}
    G_{k,k'} = (2-\delta_{k,0})(2-\delta_{k',0})\!\!\sum_{j=0}^{\lfloor \frac{N-k-k'-1}{2}\rfloor}\!\! a_{k+k'+1+2j}\,,
    \end{align}
for $k,k'\in\{0,\ldots,N-1\}$ with $k+k'+1\leq N$, and $G_{k,k'}=0$ otherwise. Since $K_f$ has degree $N-1$ in each variable, an $N$-dimensional basis suffices, and no smaller one exists when $\deg f=N$: the entries with $k+k'=N-1$ equal $(2-\delta_{k,0})(2-\delta_{k',0})\,a_N$ and all entries with $k+k'>N-1$ vanish, so $G$ is anti-triangular with nonzero anti-diagonal, hence invertible, and $K_f$ has rank exactly $N$.
 A similar statement holds for Laurent polynomials as implemented by QSP with a time-evolution oracle (see App.~\ref{app:haagerup-fact})). In this case, using the basis of monomials $\Phi_k(x)=x^{k-N/2}$, $k=0, \ldots,N-1$ and $N$ even, we have (here $ j=k+k'-N+1$)
(here $ j=k+k'-N+1$)
    \begin{align}\label{eq:Htrig}
        G_{k,k'}&=\begin{cases}
        \phantom{-}a_j, & 0<j\leq N/2,\quad 0\leq k-N/2 \leq j-1 \\
        -a_j, & -N/2\leq j<0,\quad j\leq k-N/2 \leq -1 \\
        \phantom{-}0, & \text{otherwise}
    \end{cases}\!.
    \end{align}
    
Lemma \ref{lem:haagerup_poly} also gives a clear operational way of finding the optimal factorization in the sense of Eq.~\eqref{eq:Lstar} and, consequently, computing the operator Lipschitz constant of a polynomial $f$. Given valid factorization terms as in Eq.~\eqref{eq:valid-fact}, 
if we define the matrices
\begin{align*}
P \coloneq \sum_k \mathbf{h}_{p_k}\,\mathbf{h}_{p_k}^\dagger,
\quad
Q \coloneq \sum_k \mathbf{h}_{q_k}\,\mathbf{h}_{q_k}^\dagger,
\end{align*}
then, by construction, they satisfy the positive-semidefinite constraint 
\begin{align}\label{eq:G-sip}
\Gamma \coloneq \begin{bmatrix}
P & G\\
G^\dagger & Q
\end{bmatrix}\succeq 0.    
\end{align}
In this parametrization, the quantities in Eq.~\eqref{eq:Lstar} are
\begin{align}\label{eq:s1s2}
\norm{\mathbf{p}( x)}_2^2 = \mathbf{\Phi}( x)^\top P\,\mathbf{\Phi}( x)^*,
\,\,
\norm{\mathbf{q}( y)}_2^2 = \mathbf{\Phi}( y)^\top Q\,\mathbf{\Phi}( y)^*.  \end{align}

The apparently unbounded number of terms in the Haagerup decomposition does not introduce an infinite-dimensional optimization for polynomial $f$, but two reductions are needed. First, every Haagerup decomposition of $K_f$ with an arbitrary number of terms, including non-polynomial factors and polynomial factors of degree larger than that of $K_f$ can be reduced, without increasing $\norm{\mathbf{p}(x)}_2$ or $\norm{\mathbf{q}(y)}_2$ at any point, to a decomposition of the form of Eq.~\eqref{eq:valid-fact} with at most $\mathrm{rank}(G)$ terms, whose Gram data $P,Q$ satisfy Eq.~\eqref{eq:G-sip} (Lemma~\ref{lem:compression} in App.~\ref{app:proof-sip}). Second, conversely, every pair $P,Q$ satisfying Eq.~\eqref{eq:G-sip} defines a decomposition: $\Gamma$ has rank $r\leq 2N$ and therefore admits a factorization $\Gamma=WW^\dagger$ with $W \in \mathbb{C}^{2N \times r}$. Writing $W=[A^\top\,B^\top]^\top$, we have $P=AA^\dagger$, $Q=BB^\dagger$, $G=AB^\dagger$, so the columns of $A$ and $B$ define a Haagerup decomposition with $r$ terms that gives the same $P,Q,G$ and the same norms in Eq.~\eqref{eq:s1s2}. Thus the infinite sum in Eq.~\eqref{eq:haagerup} can be replaced, for polynomial $f$, by a finite factorization, and the optimization variables reduce to the two $N\times N$ Hermitian matrices $P$ and $Q$.

This can be summarized in the lemma below.

\begin{lemma}[$L_\star$ as a semi-infinite optimization]\label{lem:semiinfinite_reformulation}
Let $f$ be a polynomial and $\mathcal{D}, N, \mathbf{\Phi}, G$ be as in Lemma \ref{lem:haagerup_poly}. 
Then $L_\star[f]$  is the optimal value of the following semi-infinite convex optimization problem 
\begin{align}\label{eq:Lstar-sip}
    \begin{aligned}
        \min_{L,P,Q} & \quad L\\
        \text{s.t.} &
        \begin{cases}
            \begin{bmatrix}
                P & G\\
                G^\dagger & Q
            \end{bmatrix}\succeq 0,\\
            \mathbf{\Phi}( x)^\top P\,\mathbf{\Phi}( x)^*\le L,\quad \forall \,  x \in \mathcal{D},\\
            \mathbf{\Phi}( y) ^\top Q\,\mathbf{\Phi}( y)^*\le L,\quad \forall \,  y \in \mathcal{D}\\
            P, Q \in \mathrm{Herm}(N).
        \end{cases}
    \end{aligned}
\end{align}
\end{lemma}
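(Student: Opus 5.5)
The plan is to show that the optimal value of \eqref{eq:Lstar-sip} equals the right-hand side of Eq.~\eqref{eq:Lstar} from Theorem~\ref{thm:OL-haagerup}. Lemma~\ref{lem:haagerup_poly} guarantees that $K_f$ is Haagerup-factorizable, so the theorem applies and $L_\star[f]=\inf_{\mathbf p,\mathbf q}\sup_x\|\mathbf p(x)\|_2\sup_y\|\mathbf q(y)\|_2$. I will prove two inequalities between this infimum and the optimal value $L_{\rm SIP}$ of \eqref{eq:Lstar-sip}.

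\emph{Step 1 ($L_{\rm SIP}\le L_\star[f]$).} Take any Haagerup factorization $(\mathbf p,\mathbf q)$ of $K_f$ with $s_p=\sup_x\|\mathbf p(x)\|_2$ and $s_q=\sup_y\|\mathbf q(y)\|_2$ finite.

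1. If $s_p=0$ or $s_q=0$, then $K_f\equiv0$, so $f$ is constant. In that case $G=0$, and $P=Q=0$, $L=0$ is feasible.
2. Otherwise, rescale $\mathbf p\to\lambda\mathbf p$ and $\mathbf q\to\lambda^{-1}\mathbf q$ with $\lambda^2=s_q/s_p$. This leaves $K_f$ unchanged and gives both suprema equal to $\sqrt{s_ps_q}$.
3. Invoke the compression result (Lemma~\ref{lem:compression}). It replaces the factorization by one of the form \eqref{eq:valid-fact}, built from coefficient vectors $\mathbf h_{p_k},\mathbf h_{q_k}$, without increasing the pointwise norms.
4. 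Set $P=\sum_k\mathbf h_{p_k}\mathbf h_{p_k}^\dagger$ and $Q=\sum_k\mathbf h_{q_k}\mathbf h_{q_k}^\dagger$. Then $\Gamma=\sum_k \binom{\mathbf h_{p_k}}{\mathbf h_{q_k}}\binom{\mathbf h_{p_k}}{\mathbf h_{q_k}}^\dagger\succeq0$, and its off-diagonal block is $G$ by Lemma~\ref{lem:haagerup_poly}.
5. By Eq.~\eqref{eq:s1s2}, the squared pointwise norms are bounded by $s_ps_q$, so $(L,P,Q)$ with $L=s_ps_q$ is feasible.

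\emph{The squaring convention needs care.} The constraints in \eqref{eq:Lstar-sip} bound $\|\mathbf p(x)\|_2^2$, not $\|\mathbf p(x)\|_2$. With balanced scaling, $\sup_x\|\mathbf p\|^2=\sup_y\|\mathbf q\|^2=s_ps_q$, which is exactly the product in Eq.~\eqref{eq:Lstar}. So the variable $L$ matches $L_\star$ itself, not its square. Taking the infimum over factorizations gives $L_{\rm SIP}\le L_\star[f]$.

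\emph{Step 2 ($L_\star[f]\le L_{\rm SIP}$).} Take a feasible $(L,P,Q)$.

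1. Factor $\Gamma=WW^\dagger$ with $W=[A^\top\,B^\top]^\top$, as in the discussion before the lemma. This gives $G=AB^\dagger$.
2. Define $p_k(x)=\mathbf\Phi(x)^\top A_{:,k}$ and $q_k(y)=\mathbf\Phi(y)^\top B_{:,k}$. Then $\sum_kp_k(x)q_k(y)^*=\mathbf\Phi(x)^\top G\,\mathbf\Phi(y)^*=K_f(x,y)$, where the last equality is Lemma~\ref{lem:haagerup_poly}.
3. The pointwise norms satisfy $\|\mathbf p(x)\|^2\le L$ and $\|\mathbf q(y)\|^2\le L$. Hence the product of suprema is at most $\sqrt L\sqrt L=L$, and Theorem~\ref{thm:OL-haagerup} gives $L_\star[f]\le L$.

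The identity in step 2 requires $K_f$ to be exactly represented by the $G$ of Lemma~\ref{lem:haagerup_poly}. On $[-1,1]$ this holds by the polynomial identity, including on the diagonal $x=y$, since the divided difference of a polynomial is a polynomial in $(x,y)$.

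\emph{Remaining points.} Convexity of the problem is immediate: the PSD constraint is an LMI, and each pointwise constraint is linear in $(L,P,Q)$.

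Attainment of the minimum follows by compactness. Feasible $P,Q$ with bounded $L$ satisfy $\mathbf\Phi(x)^\top P\mathbf\Phi(x)^*\le L$ on infinitely many points, which bounds their traces, since the evaluations at $N$ distinct points span a basis. They are also PSD, so they range over a compact set, and the feasible set is closed.

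The main obstacle is Step 1's reliance on compression. An arbitrary factorization, possibly infinite or non-polynomial, must be reduced to polynomial factors of degree below $N$ without increasing the pointwise norms. If Lemma~\ref{lem:compression} were not available, I would project $\mathbf p$ orthogonally in $\ell_2$ onto $\operatorname{span}\{\overline{\mathbf q(y)}\}$, which is a space of dimension at most $N$. This preserves all inner products with the $\mathbf q(y)$ and only shrinks norms. I would then do the same for $\mathbf q$, and use the finite rank of $K_f$ to see that the resulting components lie in the span of $\Phi_0,\dots,\Phi_{N-1}$.
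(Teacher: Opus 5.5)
Your argument is correct and is essentially the paper's proof (Proposition~\ref{prop:sip-proof} with $\mathcal{D}_1=\mathcal{D}_2=\mathcal{D}$). The Gram factorization $\Gamma=WW^\dagger$ gives $L_\star[f]\le L_{\rm SIP}$. Compression via Lemma~\ref{lem:compression} plus balanced rescaling gives the reverse inequality, Theorem~\ref{thm:OL-haagerup} identifies the infimum with $L_\star[f]$, and the same trace-bound compactness argument gives attainment. One small citation point: the identity $\sum_k\mathbf h_{p_k}\mathbf h_{q_k}^\dagger=G$ comes from Eq.~\eqref{eq:comp-gram} of the compression lemma, not from Lemma~\ref{lem:haagerup_poly}, which only gives the converse direction.

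The one real slip is in your fallback sketch. There you should project onto the unconjugated $\operatorname{span}\{\mathbf q(y)\}$, since the pairing is $\sum_k p_kq_k^*$. More importantly, that span need not be finite-dimensional for an arbitrary factorization. This is why the paper first projects $\mathbf p$ onto its closure. It then shows that the span of the projected vectors $\Pi_q\mathbf p(x)$ has dimension at most $\mathrm{rank}(G)$, and only then projects $\mathbf q$ onto that finite-dimensional space.
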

\noindent The proof is given in App.~\ref{app:proof-sip}, where we also show that the minimum is attained. This is a semi-infinite problem due to the infinitely many inequality constraints, parameterized by $ x, y\in\mathcal{D}$, mixed with a finite number of variables. 
The positive-semidefinite (PSD) constraint accounts for the $\inf_{\mathbf{p},\mathbf{q}}$ in Eq.~\eqref{eq:Lstar} over the set of all possible decompositions $\mathbf{p},\mathbf{q}$: every valid Haagerup decomposition can be compressed, without increasing either factor norm, to one whose Gram data satisfy Eq.~\eqref{eq:G-sip}, and every PSD block in Eq.~\eqref{eq:G-sip} defines a valid decomposition (App.~\ref{app:proof-sip}). The correspondence between decompositions and Gram matrices is many-to-one, not one-to-one. 

The infinitely many inequality constraints in Eq.~\eqref{eq:Lstar-sip} are amenable to exact reformulation via sum-of-squares (SOS) and Semidefinite Programming (SDP) (see App.~\ref{app:sip-to-sos} for the Chebyshev derivation). For a degree-$N$ Chebyshev
polynomial, this reformulation
contains $\mathcal{O}(N^2)$ scalar decision variables and positive-semidefinite
blocks of order $\mathcal{O}(N)$. A standard dense primal--dual interior-point
implementation therefore requires $\mathcal{O}(N^6)$ arithmetic operations per Newton
iteration and $\mathcal{O}(N^4)$ memory in the worst case
\cite{zhang2019conic}. Together with the standard
$\mathcal{O}(\sqrt{N}\log(1/\varepsilon))$ iteration bound for self-concordant
interior-point methods \cite{nesterov1994interior,vandenberghe1996semidefinite},
this gives a worst-case runtime of
$\mathcal{O}(N^{13/2}\log(1/\varepsilon))$. Even before runtime becomes an issue, the memory scaling is prohibitive. For real coefficients (the Chebyshev case, for which real symmetric variables suffice), the reformulation of App.~\ref{app:sip-to-sos} has $m=3N^2+N+1$ real decision variables ($P,Q,X_{P,1},X_{Q,1}\in\mathrm{Herm}(N)$, $X_{P,2},X_{Q,2}\in\mathrm{Herm}(N-1)$, and $L$). At $N=123$, the dense $m\times m$ Schur-complement matrix of a single interior-point step therefore occupies $m^2\times8\,\mathrm{B}\approx 16.6$~GB in double precision, before factorization and all other solver data. Complex Hermitian variables would double $m$ and quadruple this.

In the next section, we present an active-set method for Eq.~\eqref{eq:Lstar-sip} whose inner first-order iterations cost $\mathcal{O}(rN^2+N^3)$ arithmetic operations and $\mathcal{O}(rN+N^2)$ memory under standard dense-linear-algebra implementations, where $r$ is the current number of active points, which grows as points are inserted (Lemma~\ref{lem:sdp}). These per-iteration figures should be compared with the $\mathcal{O}(N^6)$ operations and $\mathcal{O}(N^4)$ memory of a single interior-point Newton step. Total runtimes are not directly comparable. The interior-point method needs $\mathcal{O}(\sqrt{N}\log(1/\varepsilon))$ Newton steps. In the active-set method, the finite dual is solved to relative accuracy $\varepsilon_{\rm rel}$ within $\mathcal{O}(r\log r/\varepsilon_{\rm rel}^3)$ iterations, typically far fewer in practice. A certified upper bound is available at every outer iteration without any recovery sweep, and the recovery iterations only tighten it. The number of outer iterations is not bounded a priori, and the accuracy of the final bounds is certified a posteriori. In terms of memory alone, the largest objects are two $2N\times2N$ matrices (a Gram matrix and its eigenvectors) and a constant number of $N\times N$ matrices, with real symmetric data and the evaluation vectors regenerated on demand (App.~\ref{sec:recov-haagerup}), a $2N\times2N$ matrix occupies $3.2$~GB at $N=10^4$, so degrees $N\sim10^4$ fit within the same $16$ GB.

\subsection{Active-set algorithm for approximating $L_\star$}
\label{subsec:active-set-alg}

This section turns the continuous problem in Eq.~\eqref{eq:Lstar-sip} into a finite active-set computation. This means that, at each step, we change the infinitely many inequality constraints on the continuous domain $\mathcal{D}$ to a carefully chosen finite set of points $\mathsf{X}=\{ x_i\}_{i=1}^{r_X},\mathsf{Y}=\{ y_j\}_{j=1}^{r_Y}$ that will asymptotically make the problem converge to $L_\star$.

Because it is a discretized formulation, given candidate solutions $P$ and $Q$, the maxima of $\mathbf{\Phi}( x)^\top P\,\mathbf{\Phi}( x)^*$ and $\mathbf{\Phi}( y) ^\top Q\,\mathbf{\Phi}( y)^*$ on $\mathsf{X},\mathsf{Y}$ is always smaller or equal to that of the semi-infinite problem. Therefore, the solution of this formulation yields a lower bound on $L_\star$. Moreover, given the optima $P, Q$ on the discrete set, the corresponding Haagerup factorization can be recovered, and, since it might not be the global optimal decomposition, it gives an upper bound on $L_\star$. 
These properties are formalized below, and their proof is presented in App.~\ref{sec:proof-lem-discr}.
\begin{lemma}[Bounding $L_\star$ as a finite optimization]\label{lem:sdp}
In the setup of Lemma \ref{lem:semiinfinite_reformulation}, let $\mathsf{X}=\{ x_i\}_{i=1}^{r_X},\mathsf{Y}=\{ y_j\}_{j=1}^{r_Y}$ be finite sets of points 
in the domain $\mathcal{D}$.
Then
\begin{enumerate}[leftmargin=*,label=(\roman*)]
\item substituting the infinitely many constraints in Eq. \eqref{eq:Lstar-sip} by their restriction to the finite number of points $ x\in\mathsf{X}$ and $ y\in\mathsf{Y}$ defines a SDP;
\item  the SDP in $(i)$ has a dual form without any semi-definite constraint and reads
\begin{align}\label{eq:dual-nuclear}
    L_{\mathrm{low}}=\max_{\substack{\mathbf{u}\in\mathbb{R}_+^{r_X},\,\mathbf{v}\in\mathbb{R}_+^{r_Y}\\
 \sum_i \! u_i+\sum_j \! v_j=2}}  \norm{\mathrm{diag}\big(\sqrt{\mathbf{u}}\big) \!\cdot\! E_\mathsf{X}^{\phantom{\dagger}} \!\cdot\! G \!\cdot\! E_\mathsf{Y}^\dagger \!\cdot\! \mathrm{diag}\big(\sqrt{\mathbf{v}}\big)}_*,
\end{align}
where $E_\mathsf{X}$ is the $r_X \times N$ matrix with rows given by $\mathbf{\Phi}( x_1)^\top,\ldots,\mathbf{\Phi}( x_{r_X})^\top$ (and similarly for $E_\mathsf{Y}$), and $\norm{\cdot}_*$ is the nuclear norm $\norm{A}_*\coloneq\trace\left(\sqrt{A^\dagger A} \right)$;
\item the SDP in $(i)$ has strong duality;
\item the solution of the dual problem \eqref{eq:dual-nuclear} gives the lower bound $L_{\mathrm{low}} \leq L_\star$;
\item if $E_\mathsf{X}$ and $E_\mathsf{Y}$ have full column rank, the finite primal optimum is attained, and, given dual variables $\mathbf{u},\mathbf{v}$ from $(ii)$, Haagerup polynomials $p_k,q_k$ in Eq.~\eqref{eq:haagerup} can be recovered by an iteration in which each sweep costs $\mathcal{O}(N^3+r N^2)$
arithmetic operations and $\mathcal{O}\!\left(rN+N^2\right)$ memory, with $r\coloneq r_X+r_Y$. The number of sweeps is not bounded a priori, and acceptance is certified a posteriori. A certified, possibly looser, upper bound in $(vi)$ is also available without any sweep;
\item with $p_k,\, q_k$ from $(v)$, a certified upper bound $L_{\mathrm{up}}\geq L_\star$ is obtained from the values of $\norm{\mathbf{p}(x)}_2^2$ and $\norm{\mathbf{q}(y)}_2^2$ on a grid of $M=\mathcal{O}(N)$ points, at a cost of $\mathcal{O}(N^2)$ arithmetic operations per point;
\item each first-order iteration for the problem in $(ii)$ costs $\mathcal{O}\!\left( I_{\rm{obj}}\big[(r_X+r_Y)N^2+N^3\big] \right)$
arithmetic operations and $\mathcal{O}\!\left((r_X+r_Y)N+N^2\right)$ memory, where $I_{\rm{obj}}$ is the number of objective evaluations in a line search. The number of iterations is bounded a priori by $\mathcal{O}(m\log m/\varepsilon_{\rm rel}^{3})$, with $m=r_X+r_Y$, for a relative accuracy $\varepsilon_{\rm rel}$ (App.~\ref{sec:dual-nuclear-complexity}), and is typically far smaller in practice.
\end{enumerate}
\end{lemma}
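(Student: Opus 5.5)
Our plan is to treat the discretized problem as a standard conic program and derive each item from Lagrangian duality, using Lemma~\ref{lem:semiinfinite_reformulation} as the reference semi-infinite problem. For (i), restricting the constraints to $\mathsf{X},\mathsf{Y}$ leaves finitely many linear inequalities $\mathbf{\Phi}(x_i)^\top P\mathbf{\Phi}(x_i)^*\le L$, $\mathbf{\Phi}(y_j)^\top Q\mathbf{\Phi}(y_j)^*\le L$ in the Hermitian variables $P,Q,L$, together with one PSD block constraint $\Gamma\succeq0$. This is an SDP by definition.

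For (ii)–(iii), we introduce multipliers $u_i,v_j\ge0$ for the point constraints and a PSD multiplier $\begin{bmatrix}U & -Z\\ -Z^\dagger & V\end{bmatrix}\succeq0$ for $\Gamma$. Minimizing the Lagrangian over $L$ forces $\sum_iu_i+\sum_jv_j=1$. Minimizing over $P,Q$ forces $U=E_\mathsf{X}^\dagger\mathrm{diag}(\mathbf{u})E_\mathsf{X}$ and $V=E_\mathsf{Y}^\dagger\mathrm{diag}(\mathbf{v})E_\mathsf{Y}$ (up to conjugation conventions), which leaves the dual objective $2\,\mathrm{Re}\,\mathrm{tr}(Z^\dagger G)$. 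The PSD condition on the block matrix is equivalent to $Z=U^{1/2}CV^{1/2}$ with $\|C\|\le1$. Maximizing over $C$ then gives $\|U^{1/2}GV^{1/2}\|_*$, and we identify this with $\|\mathrm{diag}(\sqrt{\mathbf{u}})E_\mathsf{X}GE_\mathsf{Y}^\dagger\mathrm{diag}(\sqrt{\mathbf{v}})\|_*$ using $U^{1/2}=W\,\mathrm{diag}(\sqrt{\mathbf u})E_\mathsf{X}$ for a partial isometry $W$ (polar decomposition). Rescaling $(\mathbf u,\mathbf v)$ by $2$ absorbs the factor $2$, giving the normalization $\sum u+\sum v=2$ via AM–GM homogeneity.

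Strong duality (iii) follows from Slater's condition: take $P=Q=t\mathds{1}$ with $t$ large and $L$ larger still, which makes every constraint strict. This gives zero gap and dual attainment; the dual feasible set is compact and the objective continuous. Item (iv) is then immediate, since the discretized primal is a relaxation of \eqref{eq:Lstar-sip}, so its value, which equals $L_{\mathrm{low}}$, is at most $L_\star$.

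For (v), full column rank of $E_\mathsf{X},E_\mathsf{Y}$ makes $\{P:\max_i\mathbf{\Phi}(x_i)^\top P\mathbf{\Phi}(x_i)^*\le L,\ P\succeq0\}$ bounded. The primal feasible set is therefore compact after bounding $L$, so the optimum is attained. We recover $P,Q$ from complementary slackness (the optimal $C$ from the SVD of the dual matrix fixes $\Gamma$ on the range of the multiplier). We would then run a fixed-point/alternating sweep, each step being an $N\times N$ eigendecomposition plus $r$ evaluations of length-$N$ vectors, which gives the stated $\mathcal{O}(N^3+rN^2)$ cost. A factorization $\Gamma=WW^\dagger$ then gives $p_k,q_k$ as in the discussion preceding Lemma~\ref{lem:semiinfinite_reformulation}. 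A sweep-free fallback is the feasible choice $P=\lambda GG^\dagger$-type scalings, e.g. $P=\lambda\mathds 1$, $Q=\lambda^{-1}G^\dagger G$, which is always PSD-compatible and certified.

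For (vi), any feasible $(P,Q)$ satisfying $\Gamma\succeq0$ yields, by Theorem~\ref{thm:OL-haagerup}, $L_\star\le\sup_x\|\mathbf p(x)\|_2\sup_y\|\mathbf q(y)\|_2$. The suprema of these degree-$2(N-1)$ polynomials are certified from $M=\mathcal O(N)$ grid values using a Bernstein/Markov-type inequality: on a Chebyshev grid, $\sup\le(1-c)^{-1}\max_{\text{grid}}$. Each evaluation costs $\mathcal{O}(N^2)$. Item (vii) follows from counting: evaluating the nuclear-norm objective requires forming $E_\mathsf{X}G E_\mathsf{Y}^\dagger$ in low-rank form through the $N\times N$ Gram matrices and then an SVD, and the iteration bound follows from a Frank–Wolfe/mirror-descent analysis on the simplex of the smoothed concave objective $\sqrt{\cdot}$-composition. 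We expect the main obstacle to be (v), specifically making the recovery of a Haagerup factorization from dual variables rigorous when the dual optimizer is degenerate, and controlling the convergence of the sweeps; this is why the statement allows only a posteriori certification.
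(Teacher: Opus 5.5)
Items (i)--(vi) follow essentially the paper's route:
\begin{itemize}
\item stationarity of the Lagrangian in $L,P,Q$;
\item the contraction form $Z=U^{1/2}CV^{1/2}$, $\norm{C}\le 1$, of the dual block constraint, followed by operator/nuclear-norm duality;
\item the Slater point $P=Q=t\mathds{1}$, and the relaxation argument for (iv);
\item boundedness of $P,Q$ from full column rank, giving primal attainment;
\item a complementary-slackness candidate refined by projection sweeps;
\item a Bernstein-type grid certificate for (vi).
\end{itemize}
Your polar-decomposition identification of $\norm{U^{1/2}GV^{1/2}}_*$ with the sampled form is correct; the paper does the equivalent step by QR in (vii). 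The factor-two rescaling is just degree-one homogeneity of the objective, so no AM--GM is needed. Your sweep-free certificate $P=\lambda\mathds{1}$, $Q=\lambda^{-1}G^\dagger G$ is a valid alternative to the paper's spectral candidate.

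One step of (v) is missing. The PSD projection in a sweep changes the off-diagonal block, so a sweep iterate is not a factorization of $K_f$ and certifies nothing. Before factoring, you must reset the cross block to exactly $G$ and add $\eta\mathds{1}$ with $\eta\ge-\lambda_{\min}$, as the paper does. That projection acts on the $2N\times 2N$ matrix $\Gamma$, which is still $\mathcal{O}(N^3)$. The paper also proves (iv) a second way: Hölder's inequality applied to an arbitrary Haagerup factorization shows that \emph{every} feasible dual value is at most $L_\star$. Your argument covers only the exact optimum, which is all (iv) asks, but not what an inexact inner solve needs.

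The genuine gap is the a priori iteration bound in (vii). The objective is $F(\mathbf{w})=2\norm{D_{\bar u}^{1/2}KD_{\bar v}^{1/2}}_*$ with $K=E_{\mathsf X}GE_{\mathsf Y}^\dagger$. It is concave but nonsmooth, and its supergradient components can grow like $\bar u_i^{-1/2}$ and $\bar v_j^{-1/2}$. They are therefore unbounded at the boundary of the simplex, which is where sparse optimal weights typically sit. So the Frank--Wolfe curvature constant and the Lipschitz constant required by mirror ascent are both infinite, and your ``smoothed $\sqrt{\cdot}$-composition'' does not by itself produce any rate.

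The missing idea is an interior safeguard: run entropic mirror ascent on $F_\delta(\mathbf{w})=F((1-\delta)\mathbf{w}+\delta\mathbf{1}/m)$.
\begin{itemize}
\item All evaluated weights are then at least $\delta/m$, so $\norm{\mathbf{g}}_\infty\le C\sqrt{m/\delta}$ with $C=\max_{i,j}\vert K_{ij}\vert$.
\item Concavity together with $F\ge 0$ gives $\max F_\delta\ge(1-\delta)L_{\mathrm{low}}$.
\item Setting $u_i=v_j=1$ and all other weights to zero in the dual shows $C\le L_{\mathrm{low}}$. This turns the additive mirror-ascent error $C\sqrt{2m\log m/(\delta T)}$ into a relative one.
\end{itemize}
Take $\delta=\varepsilon_{\rm rel}/2$ and require that error to be at most $\varepsilon_{\rm rel}L_{\mathrm{low}}/2$. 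This forces $T\ge 16\,m\log m/\varepsilon_{\rm rel}^3$, which is exactly where the $m\log m$ factor and the cubic exponent in the statement come from.
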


The gap between the upper and lower bounds given by Lem. \ref{lem:sdp} can be asymptotically closed to obtain the solution of the semi-infinite problem in Eq.~\eqref{eq:Lstar-sip}, and consequently $L_\star$. The idea is to solve the finite problem above and iteratively update the sets $\mathsf{X}$ and $\mathsf{Y}$.
At each step, we solve an instance of Lem.~\ref{lem:sdp}. 
We then use the solution to reconstruct the Haagerup decomposition polynomials, from which we derive the grid update rule. 
Algorithm \ref{alg:active-set} describes the procedure below. 
\begin{algo}\label{alg:active-set}
\quad\\
\indent Input: $\mathcal{D}$, $N$, $\mathbf{\Phi}$, $G$ as in Lem.~\ref{lem:semiinfinite_reformulation}, $t_{\max}\in \mathbb{Z}^+$, $\varepsilon_{\text{tol}}\ge 0$. \\
\indent Output: $L_{\mathrm{low}}$ and $L_{\mathrm{up}}^{\mathrm{best}}$ such that $L_{\mathrm{low}}\le L_\star\le L_{\mathrm{up}}^{\mathrm{best}}$ 
\begin{enumerate}[leftmargin=*]
    \item Choose two initial finite sets of distinct points
    $\mathsf{X}^{(0)}=\{x_i\}_{i=1}^{2N-1}$ and
    $\mathsf{Y}^{(0)}=\{y_j\}_{j=1}^{2N-1}$ in $\mathcal{D}$, \emph{e.g.}, Chebyshev--Lobatto nodes, and initialize $t=0$. The only property used below is that the evaluation matrices $E_{\mathsf{X}^{(0)}}$ and $E_{\mathsf{Y}^{(0)}}$ have full column rank, which holds for any $N$ or more distinct points (see the proof of Corollary~\ref{cor:finite-section}(b) in App.~\ref{app:proof-sip}). With $2N-1$ points, the squared factor norms, which are polynomials of degree at most $2N-2$ on $[-1,1]$ (or trigonometric polynomials of degree at most $N-1$ on $\mathbb{T}$), are determined by their values on the initial grid;

    \item Solve the dual problem
    in Eq.~\eqref{eq:dual-nuclear} for the current active sets
    $\mathsf{X}^{(t)}$ and $\mathsf{Y}^{(t)}$, obtaining
    $L_{\mathrm{low}}^{(t)}$ and optimal weights
    $\mathbf{u}^{\,(t)}_\star,\mathbf{v}^{\,(t)}_\star$.
    Recover from this solution the coefficient vectors of the
    Haagerup factors $p_k^{(t)}$ and $q_k^{(t)}$, as described in
    Sec.~\ref{sec:recov-haagerup};

    \item From the recovered factorization, compute certified upper bounds $\bar{a}_t\geq\max_{x\in\mathcal{D}}\norm{\mathbf{p}^{\,(t)}(x)}_2^2$ and $\bar{b}_t\geq\max_{y\in\mathcal{D}}\norm{\mathbf{q}^{\,(t)}(y)}_2^2$ with Lemma~\ref{lem:certified-max}, and set
    \begin{align*}
    L_{\mathrm{up}}^{(t)}
    &\coloneq
        \sqrt{\bar{a}_t\,\bar{b}_t}
        \;\geq\;
        \max_{x\in\mathcal{D}}
        \norm{\mathbf{p}^{\,(t)}(x)}_2 \cdot \max_{y\in\mathcal{D}}
        \norm{\mathbf{q}^{\,(t)}(y)}_2,
    \end{align*}
    and
    \begin{align}\label{eq:Lup-best}
        L_{\mathrm{up}}^{\mathrm{best},(t)} &\coloneq \min_{s\leq t} L_{\mathrm{up}}^{(s)}.
    \end{align}
    Both are upper bounds on $L_\star$ by Lemma~\ref{lem:sdp}(vi).
    To refine the active sets, locate the critical points of
    $\norm{\mathbf{p}^{\,(t)}(x)}_2^2$ and
    $\norm{\mathbf{q}^{\,(t)}(y)}_2^2$ in the interior of $\mathcal{D}$ (the zeros of the first derivative), together with the endpoints $\pm1$ when $\mathcal{D}=[-1,1]$ ($\mathbb{T}$ has no endpoints).
    Denote these sets by
    $\mathsf{X}^{(t)}_{\mathrm c}$ and
    $\mathsf{Y}^{(t)}_{\mathrm c}$, respectively, and
    extract from them the points at which the squared factor norms exceed the current lower bound $L_{\mathrm{low}}^{(t)}$, \emph{i.e.}, the violated constraints of Eq.~\eqref{eq:Lstar-sip} at level $L_{\mathrm{low}}^{(t)}$:
    \begin{align}\label{eq:active-violations}
    \begin{aligned}
        \widehat{\mathsf{X}}^{(t)}
        &\coloneq \left\{ x_{\mathrm c}\in\mathsf{X}^{(t)}_{\mathrm c} : \norm{\mathbf{p}^{\,(t)}(x_{\mathrm c})}_2^2 > L_{\mathrm{low}}^{(t)} \right\},
        \\
        \widehat{\mathsf{Y}}^{(t)}
        &\coloneq \left\{ y_{\mathrm c}\in\mathsf{Y}^{(t)}_{\mathrm c} : \norm{\mathbf{q}^{\,(t)}(y_{\mathrm c})}_2^2 > L_{\mathrm{low}}^{(t)} \right\}.
    \end{aligned}
    \end{align}
    The critical points are used only to select points to insert; the validity of $L_{\mathrm{up}}^{(t)}$ does not depend on them.

    \item If
    $\widehat{\mathsf{X}}^{(t)}=\emptyset$ and
    $\widehat{\mathsf{Y}}^{(t)}=\emptyset$, or $L_{\mathrm{up}}^{\mathrm{best},(t)} - L_{\mathrm{low}}^{(t)} \leq \varepsilon_{\mathrm{tol}}$, or
    $t=t_{\max}$,
    stop and return
    $L_{\mathrm{low}}^{(t)}$ and
    $L_{\mathrm{up}}^{\mathrm{best},(t)}$.
    Otherwise, augment the active sets according to
    \begin{align}\label{eq:update-active-sets}
    \begin{aligned}
        \mathsf{X}^{(t+1)}
        &= \mathsf{X}^{(t)} \cup \widehat{\mathsf{X}}^{(t)},
        \\
        \mathsf{Y}^{(t+1)}
        &= \mathsf{Y}^{(t)} \cup \widehat{\mathsf{Y}}^{(t)},
    \end{aligned}
    \end{align}
    set $t\leftarrow t+1$, and return to Step~2.
\end{enumerate}
\end{algo}
\noindent Notice that each iteration only augments the grid by new points, hence the optimal value $L_{\mathrm{low}}^{(t)}$ after each step can never decrease. The individual upper bounds $L_{\mathrm{up}}^{(t)}$ come from different factorizations and need not be monotone, but their running minimum $L_{\mathrm{up}}^{\mathrm{best},(t)}$ is nonincreasing.
In App.~\ref{sec:proof-alg-conv}, we prove that, in exact arithmetic, both $L_{\mathrm{low}}^{(t)}$ and $L_{\mathrm{up}}^{\mathrm{best},(t)}$ converge to the optimal value $L_\star$ of the semi-infinite problem in Eq.~\eqref{eq:Lstar-sip}.
In the numeric investigation in Sec.\ref{sec:h6-bench}, Algorithm~\ref{alg:active-set} with initial grids $\mathsf{X}^{(0)},\mathsf{Y}^{(0)}$ chosen as Chebyshev--Lobatto nodes terminated after a few iterations.

Algorithm~\ref{alg:active-set} therefore associates to each polynomial $f$
certified lower and upper bounds to its optimal OL constant $L_\star[f]$, 
\begin{align}\label{eq:lip-chain}
L_{\mathrm{low}}[f] \le L_\star[f] \le L_{\mathrm{up}}[f].
\end{align}
where $L_{\mathrm{up}}[f]$ denotes the returned value $L_{\mathrm{up}}^{\mathrm{best}}$. The lower bound is certified by weak duality (Corollary~\ref{cor:finite-section}(a) in App.~\ref{app:proof-sip}).This provide the connection between polynomial approximation and implementation cost, see Sec.~\ref{sec:end-to-end-budget}, where we illustrate the impact of these quantities taking a resource model for QSVT using LCU to construct the input oracle, and we use the resulting cost to compare and design polynomial approximations.

\section{Relation to Prior work}\label{sec:prior}
We now compare Eq.~\eqref{eq:robustness_bound} with the robustness bounds available for QSVT.
For a function $f$ that is Lipschitz with scalar constant $\ell_\star[f]$ on $[-1,1]$, the modulus-of-continuity estimate of Ref.~\cite[Cor.~7.5]{aleksandrov2010holder} gives, for Hermitian $A,B$ with spectra in $[-1,1]$, $\norm{f(A)-f(B)} \le c\,\ell_\star[f]\,\big(\log(2/\norm{A-B})+1\big)\norm{A-B}$ with an unspecified absolute constant $c>0$. Ref.~\cite[Thm.~20 and Cor.~21]{Gilyen_2018_arxiv} uses the analogous estimate of Ref.~\cite{FarforovskayaNikolskaya2009}, with a squared logarithm, for QSVT. This estimate requires only scalar Lipschitz continuity, but its logarithmic factor diverges as $\norm{A-B}\to0$. Hence, for a fixed OL function $f$, Eq.~\eqref{eq:op-lip} is tighter by an unbounded factor as $\norm{A-B}\to 0$, for a given $\norm{A-B}$, the comparison depends on $c$ and on the ratio $L_\star[f]/\ell_\star[f]$.

For degree-$N$ polynomials that satisfy the requirements of Ref.~\cite[Cor.~8]{Gilyen_2018_arxiv} (definite parity, $\vert f\vert\leq1$ on $[-1,1]$, and the conditions outside $[-1,1]$), Ref.~\cite[Lemma~22]{Gilyen_2018_arxiv} gives $\norm{f(A)-f(B)} \leq 4N\sqrt{\norm{A-B}}$ for all $\norm{A},\norm{B}\leq1$. Write $\varepsilon^\prime\coloneq\varepsilon-\varepsilon_{\rm{app}}-S\,\varepsilon_{\rm{synth}}$ for the part of a total error budget $\varepsilon$ that remains for the input error. This bound then admits $\varepsilon_{\rm{input}}\le\varepsilon^{\prime 2}/(16N^2)$, whereas Eq.~\eqref{eq:robustness_bound} admits $\varepsilon_{\rm{input}}\le\varepsilon^\prime/L_\star$. Pointwise, the linear bound is the tighter of the two exactly when $L_\star\varepsilon_{\rm{input}}<4N\sqrt{\varepsilon_{\rm{input}}}$, \emph{i.e.}, when $\varepsilon_{\rm{input}}<16N^2/L_\star^2$. For the admissible budgets, the linear route admits the larger input error exactly when $\varepsilon^\prime/L_\star>\varepsilon^{\prime 2}/(16N^2)$, i.e., when $\varepsilon^\prime<16N^2/L_\star$, and then by the factor $16N^2/(\varepsilon^\prime L_\star)$. For example, $L_\star[T_N]=N^2$, so for $f=T_N$ the factor is $16/\varepsilon^\prime$.
If, in addition, the spectra stay away from $\pm1$, Ref.~\cite[Lemma~23]{Gilyen_2018_arxiv} gives the linear bound $\norm{f(A)-f(B)}\leq N\sqrt{2/(1-\norm{(A+B)/2}^2)}\,\norm{A-B}$ whenever $\norm{A-B}+\norm{(A+B)/2}^2\leq1$. Its constant can be smaller than $L_\star$, whereas Eq.~\eqref{eq:robustness_bound} requires no gap condition and uses the optimal constant of the implemented polynomial. Finally, a dimension-dependent bound of the type of Eq.~\eqref{eq:op-lip-bound}, with the factor $\log(\min(\mathrm{rank}(A)\mathrm{rank}(B)))$ and for functions with $f(0)=0$, is stated for singular value transformations in the full version of Ref.~\cite{Chia_2020}.

\section{Conclusion}\label{sec:conclusion}

Here we have shown that the sensitivity of a QSP/QSVT implementation with respect to the input matrix is controlled by the operator-Lipschitz constant, $L_\star$, of the target function $f_{\mathrm{T}}$ or of its approximating polynomial $f$.
We then cataloged a list of operator-Lipschitz constants of known functions of practical interest, developed a test to discern the cases when it is not the same as the scalar constant, and developed an efficient classical algorithm that outputs certified upper and lower bounds on $L_\star$ for arbitrary polynomials on a given domain. The latter algorithm is specially relevant for functions without a known operator-Lipschitz constant.

From this characterization we devised a framework that allows one to think of an QSP/QSVT implementation with two layers.
The top layer is an algorithm that relates the target function $f_{\mathrm{T}}$ to its $\varepsilon_{\mathrm{app}}$-polynomial approximation $f$, which is in principle independent of the QSP protocol.  
The bottom layer deals with how this approximation is actually implemented physically on a fault-tolerant quantum computer, and it can itself be divided also in two levels.
The top level involves an algorithm for choosing the QSP sequence parameters, which are subject to finite precision errors $\varepsilon_{\mathrm{synth}}$;
the bottom level involves the algorithm used to build the approximate QSP oracle for a given precision $\varepsilon_{\mathrm{input}}$. The latter is in turn further specified by an access model to the input matrix itself.

Given a cost metric for physical implementation, such as for instance the $T$-gate count, our algorithm for bounding the operator-Lipschitz constant allows one to optimize the two layers jointly.
This is because, instead of using the standard approach of searching for a good polynomial approximation of $f_{\mathrm{T}}$ alone, one can now select the cheapest polynomial $f$ in terms of the chosen cost metric subject to the constraint that the total accumulated error is bounded by a user-defined threshold $\varepsilon$.
Moreover, we have applied the framework to the explicit example of ground-state optimization via eigenvalue thresholding for a molecular H$_6$ Hamiltonian. There, we found a polynomial that reduces the total $T$-gate counts by nearly a factor of $2$ compared to the standard polynomial described in the literature for that same task. 

The framework developed offers a toolbox to quantitatively introduce robustness considerations actively in the design of polynomials for QSP/QSVT applications.
Moreover, our findings may enable further optimization of QSP/QSVT algorithms across different access models, noise assumptions, and fault-tolerant cost metrics, helping to translate their asymptotic guarantees into practical advantages.

\section{Acknowledgements}

D.S.F.'s work was funded by the Novo Nordisk Foundation (Grant No. NNF20OC0059939 Quantum for Life). 
ChatGPT and Claude models have contributed to numerical implementations, refining mathematical proofs, and revising the text. The authors have manually checked all of them and take full responsibility for the correctness of the results.

\bibliography{arXiv-v1/references}

\appendix


\section{Input error breakdown using the LCU access model}\label{subsec:lcu-input-error}

Here we analyze in detail the input error $\varepsilon_{\mathrm{input}}$ using the LCU access model~\cite{Babbush_2018,Low_2024}. The ideal input operator is written as
\begin{align}\label{eq:lcu-normalized-input}
    A = \frac{1}{\alpha}\sum_{j\in[M]} a_j U_j,
    \quad \alpha \coloneq \sum_{j\in[M]}\vert a_j\vert,
\end{align}
with unitaries $U_j$ and coefficients $a_j\in\mathbb{C}$. 
A relevant example is that of molecular Hamiltonians, for which the $U_j=P_j$ are Pauli strings. A block encoding $U_A$ of $A$ is obtained from two subroutines~\cite{Babbush_2018,Gilyen_2019}: \textsc{prepare}, which loads the normalized coefficients into amplitudes of an auxiliary register, $\textsc{prepare}\ket{0}=\sum_j\sqrt{\vert a_j\vert/\alpha}\,\ket{j}$, and \textsc{select}$\,=\sum_j\ketbra{j}{j}\otimes U_j$, which applies the corresponding unitary $U_j$ controlled on that register. Then $U_A=\textsc{prepare}^\dagger\cdot\textsc{select}\cdot\textsc{prepare}$ satisfies $\bra{0}U_A\ket{0}=A$.

In practice, the implemented \textsc{prepare} only approximates the ideal one. Suppose that it outputs $\sum_j\widetilde{c}_j\ket{j}\ket{\gamma_j}$, where the normalized states $\ket{\gamma_j}$ account for possible garbage registers. If $\textsc{prepare}^\dagger$ is implemented as the exact inverse of the implemented \textsc{prepare}, and \textsc{select} is exact (as it is, \emph{e.g.}, for Pauli strings), then the operator actually block-encoded is
\begin{align}\label{eq:lcu-implemented-input}
    B = \sum_{j\in[M]} \beta_j\, U_j,
    \quad \beta_j\coloneq\vert\widetilde{c}_j\vert^2,
    \quad \sum_{j\in[M]} \beta_j=1.
\end{align}
Hence, $B$ has the same LCU structure as $A$, with the ideal probabilities $a_j/\alpha$ replaced by the implemented ones $\beta_j$. In particular, if the $U_j$ are Hermitian, as for Pauli strings, $B$ is Hermitian with $\norm{B}\leq1$, so that $A,B\in\mathcal{L}_{[-1,1]}(\mathcal{H})$. This structure yields the following simple and classically computable certificate for the input error.

\begin{corollary}[LCU input error bound]\label{coro:pauli-access-robustness}
    Let $A=\sum_{j\in[M]}\alpha_jU_j$ and $B=\sum_{j\in[M]}\beta_jU_j$ be linear combinations of the same unitaries $U_j$. Then\
    \begin{align}\label{eq:coefficient-l1-bound}
        \norm{A-B}\leq\norm{\mathbf{\alpha}-\mathbf{\beta}}_1\coloneq\sum_{j\in[M]}\vert\alpha_j-\beta_j\vert,\
    \end{align}
    so one may take $\varepsilon_{\rm{input}}=\norm{\mathbf{\alpha}-\mathbf{\beta}}_1$ in Eq.~\eqref{eq:end-to-end-budget}. 
\end{corollary}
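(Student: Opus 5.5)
The bound follows from linearity of the difference in the coefficients combined with the triangle inequality and unitary invariance of the spectral norm. First, since $A$ and $B$ are combinations of the same unitaries $U_j$, we write
\begin{align*}
A-B=\sum_{j\in[M]}(\alpha_j-\beta_j)\,U_j .
\end{align*}
Second, we apply the triangle inequality for the operator norm together with homogeneity, which gives $\norm{A-B}\le\sum_j\vert\alpha_j-\beta_j\vert\,\norm{U_j}$. Third, we use $\norm{U_j}=1$ for every unitary, so that $\norm{A-B}\le\norm{\boldsymbol{\alpha}-\boldsymbol{\beta}}_1$, which is Eq.~\eqref{eq:coefficient-l1-bound}.

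For the second claim, namely that one may take $\varepsilon_{\rm input}=\norm{\boldsymbol{\alpha}-\boldsymbol{\beta}}_1$ in Eq.~\eqref{eq:end-to-end-budget}, we observe the following. The budget in Eq.~\eqref{eq:end-to-end-budget} is increasing in $\varepsilon_{\rm input}$, because the coefficient $L_{\rm up}$ multiplying it is nonnegative. Hence replacing the true input error $\norm{A-B}$ by any upper bound on it yields a sufficient condition. By Proposition~\ref{prop:robustness}, that sufficient condition then still certifies total error at most $\varepsilon$.

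To apply this to the setting of Eqs.~\eqref{eq:lcu-normalized-input}--\eqref{eq:lcu-implemented-input}, we identify $\alpha_j=a_j/\alpha$ and take $\beta_j=\vert\widetilde c_j\vert^2$. Both $A$ and $B$ lie in $\mathcal{L}_{[-1,1]}$ for Hermitian $U_j$, as already argued, so the hypotheses of Proposition~\ref{prop:robustness} hold. There is essentially no obstacle here. The only point needing care is that the two LCUs use the identical unitary family, including the phases of the coefficients. If some $U_j$ were truncated from one sum, we would set $\beta_j=0$ for those $j$, and the same bound then automatically absorbs the truncation error $\varepsilon_{\rm trunc}$.
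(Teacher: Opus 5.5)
Your argument is correct and is exactly the reasoning the paper takes for granted. The paper states the corollary without a written proof, as an immediate consequence of $A-B=\sum_j(\alpha_j-\beta_j)U_j$, the triangle inequality, and $\norm{U_j}=1$, and your observation that the budget in Eq.~\eqref{eq:end-to-end-budget} is monotone in $\varepsilon_{\rm input}$ (since $L_{\rm up}\ge 0$) correctly justifies replacing $\norm{A-B}$ by the coefficient bound.
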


\noindent The bound in Eq.~\eqref{eq:coefficient-l1-bound} does not require diagonalizing $A$ or $B$ and can be evaluated from the coefficients alone, before any quantum resource is spent. It can, however, be loose compared to the spectral norm, since it ignores cancellations between different terms.

Corollary~\ref{coro:pauli-access-robustness} suggests splitting $\varepsilon_{\rm{input}}$ according to the two mechanisms that make the coefficients of $B$ differ from those of $A$. The first is classical: to reduce the cost of \textsc{prepare} and \textsc{select}, which grows with the number of terms, one may discard the least important terms of Eq.~\eqref{eq:lcu-normalized-input} and retain only a subset $S\subseteq[M]$,
\begin{align}
    A_{\rm{trunc}}\coloneq\frac{1}{\alpha}\sum_{j\in S}a_jU_j.
\end{align}
We keep the normalization $\alpha$ fixed, so that the retained operator is not rescaled and $\norm{A_{\rm{trunc}}}\leq1$ still holds; in \textsc{prepare}, the discarded weight $1-\sum_{j\in S}\vert a_j\vert/\alpha$ can be assigned to a pair of terms $\pm\mathds{1}$ with equal weights, whose contributions cancel in the block. We also assume that $S$ is chosen so that $A_{\rm{trunc}}$ remains self-adjoint. The second mechanism is quantum: the implemented \textsc{prepare} loads the retained coefficients only to finite precision, giving the operator $B$ of Eq.~\eqref{eq:lcu-implemented-input}. By the triangle inequality,
\begin{align}\label{eq:encoded-input-error-budget}
    \norm{A-B}
    &\leq\varepsilon_{\rm{trunc}}+\varepsilon_{\rm{fp}}\coloneq\varepsilon_{\rm{input}},
\end{align}
where Corollary~\ref{coro:pauli-access-robustness} gives the truncation error as the discarded coefficient weight, $\norm{A-A_{\rm{trunc}}}\leq
\alpha^{-1}\sum_{j\notin S}\vert a_j\vert \coloneq \varepsilon_{\rm{trunc}}$, 
and $\norm{A_{\rm{trunc}}-B}\le \varepsilon_{\rm{fp}}$ is the block-encoding error. The latter can again be certified in coefficient space: denoting by $c_j$ the ideal amplitudes loaded by the truncated \textsc{prepare} (including those of the cancelling pair $\pm\mathds{1}$, so that $A_{\rm{trunc}}$ and $B$ are expanded in the same set of unitaries), Corollary~\ref{coro:pauli-access-robustness} and the Cauchy--Schwarz inequality give
\begin{align}
    \norm{A_{\rm{trunc}}-B}\leq\sum_j\Big\vert\vert c_j\vert^2-\vert\widetilde{c}_j\vert^2\Big\vert\leq2\,\norm{\mathbf{c}-\widetilde{\mathbf{c}}}_2,
\end{align}
so a coefficient-loading routine with amplitude error $\norm{\mathbf{c}-\widetilde{\mathbf{c}}}_2\leq\varepsilon_{\rm{fp}}/2$ suffices.

The two input contributions have different natures and costs. The truncation error $\varepsilon_{\rm{trunc}}$ is fixed classically, before the circuit is built, and reduces the number of LCU terms, on which the cost of \textsc{select} and \textsc{prepare} depends roughly linearly. The block-encoding error $\varepsilon_{\rm{fp}}$ is fixed by the precision of the coefficient loading, \emph{i.e.}, by the number of bits and auxiliary qubits used in \textsc{prepare}, and its associated precision $1/\varepsilon_{\rm{fp}}$ enters the cost only logarithmically (see App.~\ref{sec:lcu-res-model}). Both errors, however, are amplified by the same factor $L_{\mathrm{up}}$, so a smaller OL constant can be traded either for a more aggressive truncation or for a cheaper \textsc{prepare}. 

\section{Sufficient criterion for $L_\star \ne \ell_\star$}\label{app:criteria}
 
\begin{lemma}
    Let $g$ be a real differentiable function in $\mathcal{D}=[-1,1]$ with scalar Lipschitz constant $\ell_\star=\text{sup}_{x\in\mathcal{D}}|g'(x)|$, and let $x_\star$ be such that $|g'(x_\star)|=\ell_\star$. Define $\Delta g(y)=(g(y)-g(x_\star))/(y-x_\star)$. If there are $y\in\mathcal{D}$ and $\theta\in(0,2\pi)$ with $y\ne x_\star$ and $\theta\ne k\pi/2$ for $k\in\mathbb{Z}$ such that 
    \begin{align}\label{eq:cond_L_l}
        \frac{g'(y)}{\ell_\star} < \frac{1+\cos\theta}{\cos\theta}\left(\frac{\Delta g(y)}{\ell_\star}\right)^2-\frac{1}{\cos\theta},
    \end{align}    
   then the operator Lipschitz constant $L_\star$ of $g$ is strictly larger than $\ell_\star$.
\end{lemma}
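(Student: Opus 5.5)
The plan is to lower-bound $L_\star[g]$ using Lemma~\ref{lem:lips-schur} with the smallest nontrivial test: $d=2$, $\mathsf{X}=\mathsf{Y}=(x_\star,y)$, and a single well-chosen norm-one matrix $Z$. Then it suffices to show that $\norm{K_{g,2}(\mathsf{X},\mathsf{X})\circ Z}>\ell_\star$.

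First I reduce to $g'(x_\star)=+\ell_\star$. If instead $g'(x_\star)=-\ell_\star$, I replace $g$ by $-g$. This leaves $L_\star$ and $\ell_\star$ unchanged, but flips the signs of $g'(y)$ and $\Delta g(y)$. I will have to check that the hypothesis~\eqref{eq:cond_L_l} is meant in the normalized form, i.e.\ with $g'(y)/g'(x_\star)$ and $\Delta g(y)/g'(x_\star)$ in place of $g'(y)/\ell_\star$ and $\Delta g(y)/\ell_\star$. The $\Delta g$ term is squared, so only the sign convention on $g'(y)$ matters. The divided-difference matrix is then the real symmetric matrix
\begin{align*}
K=\begin{bmatrix}\ell_\star & \Delta g(y)\\ \Delta g(y) & g'(y)\end{bmatrix}.
\end{align*}

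For $Z$ I take the real symmetric orthogonal reflection
\begin{align*}
Z=\begin{bmatrix}\cos\theta & \sin\theta\\ \sin\theta & -\cos\theta\end{bmatrix}.
\end{align*}
It has $\norm{Z}=1$, so $Z$ is admissible in Eq.~\eqref{eq:Schur}. The Hadamard product is then the Hermitian matrix
\begin{align*}
M=K\circ Z=\begin{bmatrix}\ell_\star\cos\theta & \Delta g(y)\sin\theta\\ \Delta g(y)\sin\theta & -g'(y)\cos\theta\end{bmatrix}.
\end{align*}
Its operator norm is $\max(\lambda_{\max},-\lambda_{\min})$. For a real symmetric $2\times2$ matrix $[[a,b],[b,c]]$ with $a,c<\ell_\star$, one has $\lambda_{\max}>\ell_\star$ if and only if $(\ell_\star-a)(\ell_\star-c)<b^2$. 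Here $\ell_\star-a=\ell_\star(1-\cos\theta)>0$, since $\theta\notin 2\pi\mathbb{Z}$. If instead $\ell_\star-c=\ell_\star+g'(y)\cos\theta\le 0$, then $c\ge\ell_\star$ and we are done immediately (with strict inequality, or in the equality case via the nonzero off-diagonal term, since $\sin\theta\ne0$).

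Otherwise I divide the condition by $\ell_\star^2(1-\cos\theta)$ and use $\sin^2\theta/(1-\cos\theta)=1+\cos\theta$. This gives
\begin{align*}
1+\frac{g'(y)}{\ell_\star}\cos\theta<\Big(\frac{\Delta g(y)}{\ell_\star}\Big)^2(1+\cos\theta).
\end{align*}
For $\cos\theta>0$, dividing by $\cos\theta$ gives exactly~\eqref{eq:cond_L_l}. The excluded values $\theta=k\pi/2$ are precisely those where this division, or the off-diagonal term, degenerates. Lemma~\ref{lem:lips-schur} then yields $L_\star[g]\ge\norm{M}>\ell_\star$.

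The main obstacle is the sign bookkeeping. When $\cos\theta<0$, dividing by $\cos\theta$ reverses the inequality, so~\eqref{eq:cond_L_l} does not immediately give $\lambda_{\max}>\ell_\star$. In that case I would try $\lambda_{\min}<-\ell_\star$ instead, equivalently replacing $\theta$ by $\theta+\pi$ together with the substitution $Z\to -Z$. I would then verify that~\eqref{eq:cond_L_l} for $\cos\theta<0$ produces the analogous inequality $(\ell_\star+a)(\ell_\star+c)<b^2$. If that matching fails, I would state the restriction $\cos\theta>0$ explicitly, since that case already suffices for the criterion as used.
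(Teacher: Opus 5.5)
Your route is the same as the paper's: $d=2$, $\mathsf{X}=\mathsf{Y}=(x_\star,y)$, the reflection $Z$, and a $2\times2$ eigenvalue test. Write $u\coloneq g'(y)/\ell_\star$ and $w\coloneq\Delta g(y)/\ell_\star$. Your determinant criterion and your reduction to $1+u\cos\theta<w^2(1+\cos\theta)$ are correct. For $\cos\theta>0$ this is exactly the stated inequality, so that case is complete. Your side case $\ell_\star+g'(y)\cos\theta\le0$ never occurs, because $|g'(y)\cos\theta|<\ell_\star$. That is fortunate, since the off-diagonal argument you gave there would also need $\Delta g(y)\neq0$.

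The two points you left open cannot be resolved by more algebra, because the statement as written is false in both.

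\emph{The case $\cos\theta<0$.} Multiplying the stated inequality by $\cos\theta<0$ gives $1+u\cos\theta>w^2(1+\cos\theta)$, which is equivalent to $\lambda_{\max}(M)<\ell_\star$. Your $\lambda_{\min}<-\ell_\star$ route instead requires $1-u\cos\theta<w^2(1-\cos\theta)$, which is the $\cos\theta>0$ condition at the angle $\pi-\theta$. A concrete counterexample: take $g(x)=x^2$, $x_\star=1$ (so $g'(x_\star)=+\ell_\star$), $y=0$ and $\cos\theta=-1/2$. The hypothesis reads $0<7/4$. Yet $L_\star[x^2]=2=\ell_\star$ by Lemma~\ref{lem:OLbound1}, and indeed $M$ has eigenvalues $1/2$ and $-3/2$.

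\emph{The sign normalization.} Your suspicion is justified. Take $g(x)=-x^2$ with $x_\star=1$ (so $g'(x_\star)=-\ell_\star$), $y=0.9$ and $\cos\theta=1/2$. The literal hypothesis reads $-0.9<0.7075$, but again $L_\star=\ell_\star$.

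The correct statement therefore needs $\cos\theta\in(0,1)$ and $g'(x_\star)=+\ell_\star$ (equivalently, $g'(y)/g'(x_\star)$ in place of $g'(y)/\ell_\star$). Under these hypotheses your proof is complete, and you should state them as required rather than as a fallback. The paper's proof imposes both restrictions silently. Its closed form $c(\ell_\star-g'(y))+\sqrt{\cdots}$ equals $2\lambda_{\max}$, which is $2\|K_{g,2}\circ Z\|$ only when $c\ge0$. Its final ``algebraic manipulation'' divides by $\cos\theta$. Its ``without loss of generality $g'(x_\star)>0$'' holds only for the normalized form.
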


\begin{proof}
     For any $\mathsf{X},\mathsf{Y}\in\mathcal{D}^2$, and $Z\in\mathbb{C}^{2\times 2}$ with $\norm{Z}=1$,  Eq. \eqref{eq:Schur} implies that
    \begin{align}\label{eq:lip_reduc}
       L_\star \geq \big\|K_{g,2}(\mathsf{X},\mathsf{Y})\circ Z\big\|\,.
    \end{align}
    In particular, Eq. \eqref{eq:lip_reduc} is satisfied for $\mathsf{X}=\mathsf{Y}=\{x_\star,\,y\}$ and $Z=\begin{pmatrix}
        \cos(\theta)&\sin(\theta)\\
        \sin(\theta)& -\cos(\theta)
    \end{pmatrix}$ for some $y\in \mathcal{D}$ and $\theta\in[0,2\pi)$. If there is a choice of $y,\,\theta$ such that the right-hand side of Eq. \eqref{eq:lip_reduc} is larger than $\ell_\star$, then $L_\star>\ell_\star$. 

   Denoting $c\coloneq \cos(\theta)$ and $s\coloneq \sin(\theta)$,  we can write explicitly $\big\|K_{g,2}(\mathsf{X},\mathsf{Y})\circ Z\big\|=\norm{\begin{matrix}
        c\,g'(x_\star)&s\,\Delta g(y)\\
        s\,\Delta g(y) & -c\,g'(y)
    \end{matrix}}$.
Without loss of generality, we can assume $g'(x_\star)>0$. Since by definition $g'(x_\star)=\ell_\star\geq |g'(y)|$, we get $2\big\|K_{g,2}(\mathsf{X},\mathsf{Y})\circ Z\big\|=c(\ell_\star -g'(y))+\sqrt{c^2[\ell_\star+g'(y)]^2+4s^2\Delta g(y)^2}$. Now, Eq. \eqref{eq:cond_L_l} follows from imposing $\big\|K_{g,2}(\mathsf{X},\mathsf{Y})\circ Z\big\|>\ell_\star$ and straightforward algebraic manipulations. 
\end{proof}

\begin{figure}[t]
    \centering
    \includegraphics[width=0.8\linewidth]{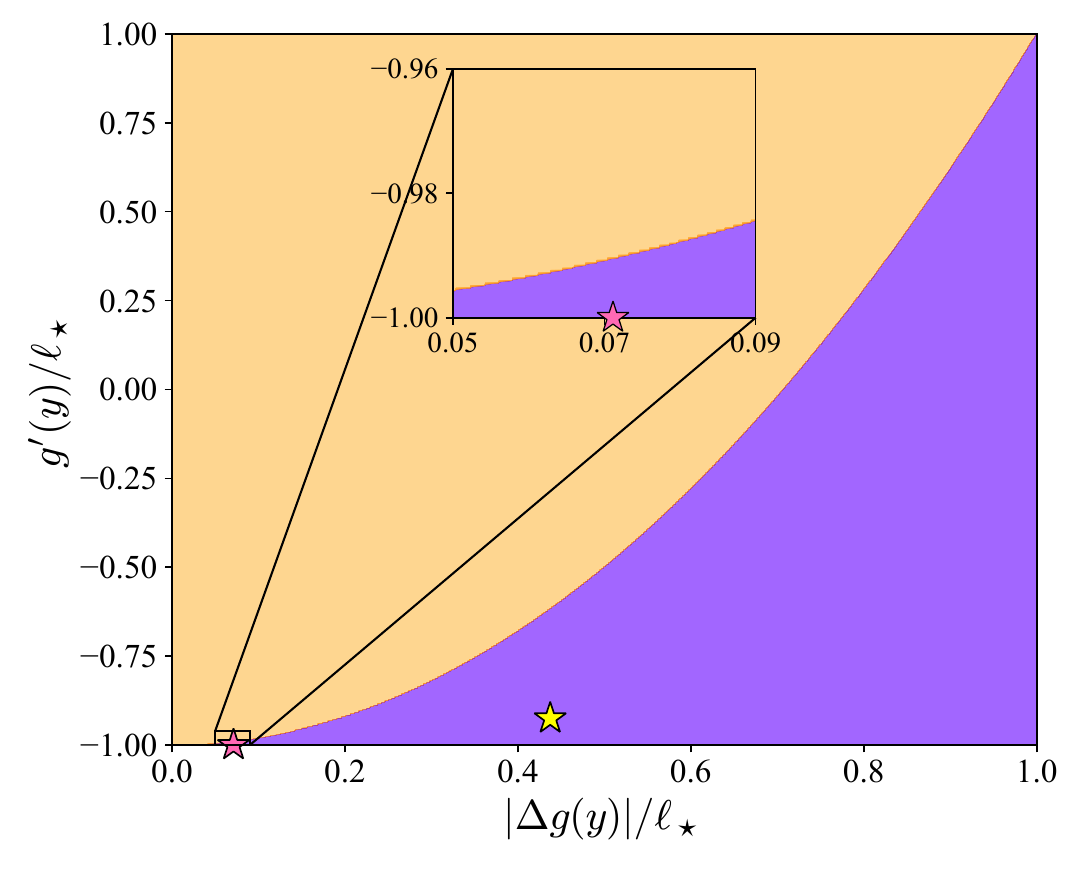}
    \caption{Region in which  Eq.~\eqref{eq:cond_L_l} holds for some admissible $\theta$. If a function $g(x)$ has at least one point $y$ falling inside the purple region, then it satisfies Eq.~\eqref{eq:cond_L_l} and therefore $L_\star>\ell_\star$. Outside of this region, we cannot make any claim. The stars correspond to the polynomial $g(x)=3x-2x^3$ (yellow, example from \cite{Aleksandrov2020}) and a polynomial generated by Sec.~\ref{sec:res-aware-poly-sel} (pink) to approximate the Heaviside function with $\nu=0.09$ and $\varepsilon=0.01$.}
    \label{fig:criteria}
\end{figure}

\section{Examples of operator-Lipschitz functions}
\label{app:trivial_bounds_scalar}

Here we list some known operator-Lipschitz (OL) functions and their optimal constants $L_\star$, based on Ref.~\cite{aleksandrov2016}. 
Moreover, we prove that polynomial functions on a bounded domain and continuous functions on the unit circle are OL, deriving upper bounds for their optimal constant $L_\star$. 
We also present trivial bounds for the corresponding scalar Lipschitz constants $\ell_\star$. 

Let $\mathcal{H}$ be a Hilbert space and consider functions on the set $\LD$ of linear operators acting on $\mathcal{H}$ whose spectra lie in $\mathcal{D}$. 
The following is a non-exhaustive list of  functions with known $L_\star$ (see \cite{aleksandrov2016}):
\begin{itemize}[leftmargin=*]
    \item $\forall\,\mu\in\mathbb{C}\setminus\mathbb{R}$, $f(x)=(\mu-x)^{-1}$ is OL on $\mathcal{L}_\mathbb{R}(\mathcal{H})$ with $L_\star=\vert\text{Im}\,\mu\vert^{-2}$;
    \item $\forall\,t\in\mathbb{R}$, $f(x)=e^{itx}$ is OL on $\mathcal{L}_\mathbb{R}(\mathcal{H})$ with $L_\star=\vert t\vert$;
    \item Any differentiable function $f$ on $\mathbb{R}$ with 
    $f^\prime$ positive-definite (meaning the matrix $\{f^\prime(x_i-x_j)\}_{i,j\in[r]}$ is positive-definite for any set of points $x_1,\ldots,x_r\in\mathbb{R}$) is OL on $\mathcal{L}_\mathbb{R}(\mathcal{H})$ with $L_\star=f^\prime(0)$; 
    \item $\forall\,\beta>0$ and $a,b\in\mathbb{R}$ with $a<b$, $f(x)=e^{-\beta x}$ is OL on $\mathcal{L}_{[a,b]}(\mathcal{H})$ with $L_\star=\beta e^{-\beta a}$; 
    \begin{proof}
    Using the Duhamel formula $e^{-\beta A}-e^{-\beta B}=-\beta\int_0^1e^{-\beta(1-s)A}(A-B)e^{-\beta s B}ds$ we get $\norm{e^{-\beta A}-e^{-\beta B}}\leq \beta e^{-\beta a}$ for $\mathcal{D}=[a,b]$. The conclusion comes from $L_\star\leq \ell_\star$ and $\ell_\star=\sup_{x\in\mathcal{D}}|f'(x)|=\beta e^{-\beta a}$. 
    \end{proof}
    \item $\forall\,\beta>0$,  $f(x)=\operatorname{erf}(\beta x)$ is OL on $\mathcal{L}_{[-1,1]}(\mathcal{H})$ with $L_\star=2 \beta /\sqrt{\pi}$;
    \item $\forall\,\mu\in\mathbb{C}\setminus\mathbb{T}$, $f(z)=(\mu-z)^{-1}$ is OL on $\mathcal{L}_\mathbb{T}(\mathcal{H})$ with $L_\star=\big(\vert\mu\vert-1\big)^{-2}$.    
\end{itemize}

The following Lemma proves that polynomials of bounded operators are operator-Lipschitz and provides an upper bound for their optimal constant $L_\star$. 

\begin{lemma}[Operator-Lipschitzness of polynomials]\label{lem:OLbound1}
Let $\mathcal{D}\subset\mathbb{R}$ be a bounded set, $R=\sup_{x\in\mathcal{D}}\vert x\vert$, $\mathcal{H}$ a finite-dimensional Hilbert space, and $f(x)=\sum_{k=0}^N c_kx^k$ a polynomial on $\mathcal{D}$. 
Then $f$ is operator-Lipschitz on $\LD$ with optimal constant $L_\star[f]\le\sum_{k=0}^N \vert c_k\vert\,k\,R^{k-1}$. 
\end{lemma}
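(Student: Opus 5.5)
The plan is to prove the bound monomial by monomial and then combine the pieces with the triangle inequality. Since $f(A)-f(B)=\sum_{k=0}^N c_k\,(A^k-B^k)$, we have
\begin{align}
\norm{f(A)-f(B)}\le\sum_{k=0}^N \vert c_k\vert\,\norm{A^k-B^k}.
\end{align}
It therefore suffices to show $\norm{A^k-B^k}\le k R^{k-1}\norm{A-B}$ for all $A,B\in\LD$. The $k=0$ term vanishes identically, so the constant term $c_0$ contributes nothing, consistent with the factor $k$.

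For the monomials we use the telescoping identity
\begin{align}
A^k-B^k=\sum_{j=0}^{k-1} A^{j}\,(A-B)\,B^{k-1-j}.
\end{align}
It holds for noncommuting $A,B$, and can be checked by expanding: consecutive terms $A^{j+1}B^{k-1-j}$ cancel pairwise, leaving $A^k-B^k$. Submultiplicativity of the operator norm then gives
\begin{align}
\norm{A^k-B^k}\le \sum_{j=0}^{k-1}\norm{A}^{j}\,\norm{B}^{k-1-j}\,\norm{A-B}.
\end{align}

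The remaining step is to replace $\norm{A}$ and $\norm{B}$ by $R$, and this is the only place the hypotheses on $\LD$ are used. Operators in $\LD$ are normal, so the spectral norm equals the spectral radius. Their spectra lie in $\mathcal{D}\subset\mathbb{R}$, and in fact in the closure of $\mathcal{D}$ if $\mathcal{D}$ is not closed, which changes nothing since $R$ is a supremum. Hence $\norm{A},\norm{B}\le\sup_{x\in\mathcal{D}}\vert x\vert=R$. Each of the $k$ summands is then at most $R^{k-1}\norm{A-B}$, which gives $\norm{A^k-B^k}\le kR^{k-1}\norm{A-B}$.

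Summing over $k$ yields $\norm{f(A)-f(B)}\le\big(\sum_{k}\vert c_k\vert kR^{k-1}\big)\norm{A-B}$. This bound is uniform in the dimension of $\mathcal{H}$ and in the choice of $A\neq B$, so taking the supremum in Eq.~\eqref{eq:Lstardef} shows that $f$ is OL with $L_\star[f]\le\sum_k\vert c_k\vert kR^{k-1}$. There is no real obstacle: the only care needed is the normality and spectral-radius step, and keeping the telescoping identity in the noncommuting order.
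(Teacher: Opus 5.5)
Your proposal is correct and follows essentially the same route as the paper: the noncommutative telescoping identity for $A^k-B^k$, submultiplicativity with $\norm{A},\norm{B}\le R$, and the triangle inequality over the coefficients. Your explicit justification of $\norm{A},\norm{B}\le R$ (normal operators with real spectrum in $\mathcal{D}$ have norm equal to their spectral radius) fills in a step the paper only asserts.
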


\begin{proof}
We first prove it for the monomial $x^k$. For bounded operators $A,B\in\LD$, we have the telescopic identity
\begin{align*}
A^k - B^k = \sum_{j=0}^{k-1} A^j(A-B)B^{k-1-j}\,.
\end{align*}
Taking the spectral norm and using the triangle inequality we get
\begin{align*}
\norm{A^k - B^k} &\le \left(\sum_{j=0}^{k-1} \norm{A}^j\norm{B}^{k-1-j}\right)\norm{A-B}\\ 
&\le \left(k\,R^{k-1}\right)\norm{A-B}\,,
\end{align*}
where in the second line we have used $\norm{A}\le \sup_{x\in\mathcal{D}}\vert x\vert \coloneq R$ (and similarly for $\norm{B}$). 
It follows that
\begin{align*}
\norm{f(A) - f(B)} &\le \sum_{k=0}^{N} \vert c_k\vert\,\norm{A^k-B^k} \notag\\
&\le \left(\sum_{k=0}^{N} k\,R^{k-1}\vert c_k\vert\right)\norm{A-B}\,,
\end{align*}
which is the operator-Lipschitz condition with constant $L_{\rm up}=\sum_{k=0}^{N} k\,R^{k-1}\vert c_k\vert$. Since $L_\star$ is by the definition the minimum $L_{\rm up}$, the claimed inequality follows.
\end{proof}

\noindent Notice that if $f$ is provided in a different basis one can simply work out the bound by changing to the monomial basis. \emph{E.g.}, for a Chebyshev-basis polynomial $f(x)=\sum_{k=0}^N a_k\,T_k(x)$ the coefficients are related as
$$
a_j=\sum_{k=j}^N c_k\,\alpha_{j}^{(k)}\!,\quad \alpha_{j}^{(k)} \!=\! 
\begin{cases}
2^{1-k-\delta_{j,0}} \dbinom{k}{\tfrac{k-j}{2}} & k-j \text{ even}\\
0 & k-j \text{ odd}\\
\end{cases}\!.
$$

The following Lemma proves that continuous functions on the unit circle are operator-Lipschitz and provide an upper bound for their optimal constant $L_\star$. Notice that this includes as particular case the family of trigonometric polynomials $f(z)=\sum_{k=-N}^N c_kz^k$ of a given degree $N$. 

\begin{lemma}[Operator-Lipschitzness of continuous functions on the unit circle $\mathbb{T}$]\label{lem:OLbound2}
Let $\mathcal{H}$ be a finite-dimensional Hilbert space and $f(z)=\sum_{k\in\mathbb{Z}} c_kz^k$ a continuous function on the unit circle $\mathbb{T}\coloneq \{z\in\mathbb{C}\,:\,\vert z\vert=1\}$ such that $\sum_{k\in\mathbb{Z}}\vert k\vert\,\vert c_k\vert<\infty$. 
Then $f$ is operator-Lipschitz on $\mathcal{L}_\mathbb{T}(\mathcal{H})$ with optimal constant $L_\star\le\sum_{k\in\mathbb{Z}}\vert k\vert\,\vert c_k\vert$. 
\end{lemma}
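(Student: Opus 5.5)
The plan is to mirror the proof of Lemma~\ref{lem:OLbound1}, working on $\mathcal{D}=\mathbb{T}$, and then pass to the infinite series through absolute convergence. For $A,B\in\mathcal{L}_\mathbb{T}(\mathcal{H})$ both operators are normal with spectrum on the unit circle, so they are unitary. Hence $\norm{A^k}=\norm{B^k}=1$ for every $k\in\mathbb{Z}$, and $A^{-1}=A^\dagger$ is well defined, which lets negative powers be handled just like positive ones.

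\textbf{Step 1: monomials.} For $k\ge1$ I will use the telescopic identity $A^k-B^k=\sum_{j=0}^{k-1}A^j(A-B)B^{k-1-j}$. Taking norms and using unitarity of $A$ and $B$ gives $\norm{A^k-B^k}\le k\norm{A-B}$.

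For negative exponents I will write
\begin{align*}
A^{-k}-B^{-k}=A^{-k}\big(B^k-A^k\big)B^{-k}.
\end{align*}
Since $A^{-k}$ and $B^{-k}$ are unitary, this yields $\norm{A^{-k}-B^{-k}}=\norm{A^k-B^k}\le k\norm{A-B}$. The case $k=0$ contributes nothing. Combining the three cases, $\norm{A^k-B^k}\le|k|\,\norm{A-B}$ for all $k\in\mathbb{Z}$.

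\textbf{Step 2: sums.} I need to justify that $f(A)=\sum_k c_kA^k$ as an operator identity, where $f(A)$ is defined by functional calculus. The condition $\sum_k|k||c_k|<\infty$ implies $\sum_k|c_k|<\infty$, up to the single $k=0$ term. Therefore the Fourier series converges absolutely and uniformly on $\mathbb{T}$ to $f$.

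In finite dimensions, diagonalize $A=U\,\mathrm{diag}(z_i)\,U^\dagger$. Then $f(A)=U\,\mathrm{diag}(f(z_i))\,U^\dagger$, and the series $\sum_k c_kA^k$ converges in norm to exactly this operator.

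With that identity in hand, the triangle inequality applied to norm-convergent series gives
\begin{align*}
\norm{f(A)-f(B)}\le\sum_{k\in\mathbb{Z}}|c_k|\,\norm{A^k-B^k}\le\Big(\sum_{k\in\mathbb{Z}}|k|\,|c_k|\Big)\norm{A-B}.
\end{align*}
This is the operator-Lipschitz condition with constant $\sum_k|k||c_k|$, independent of the dimension of $\mathcal{H}$. Since $L_\star$ is the smallest such constant over all dimensions, the claimed bound follows.

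\textbf{Main obstacle.} There is no real difficulty here. The only point needing care is the convergence justification in Step 2, namely identifying the functional calculus $f(A)$ with the norm-convergent Laurent series. The summability hypothesis is exactly what makes both this identification and the final bound finite.
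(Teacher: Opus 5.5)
Your proof is correct and follows essentially the same route as the paper: telescoping for positive powers with $R=1$, reducing negative powers to positive ones via unitarity, and then summing term by term. Your identity $A^{-k}-B^{-k}=A^{-k}(B^k-A^k)B^{-k}$ spells out a step the paper leaves implicit when it rewrites $A^k-B^k=(A^{-1})^{-k}-(B^{-1})^{-k}$, namely that $\norm{A^{-1}-B^{-1}}=\norm{A-B}$ for unitaries, and your norm-convergence justification of $f(A)=\sum_k c_kA^k$ supplies a detail the paper omits.
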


\begin{proof}
First notice that
\begin{align*}
\|&f(A)-f(B)\| \le \sum_{k\in\mathbb{Z}} \vert c_k\vert\,\norm{A^k-B^k}\notag\\ 
&= \sum_{k>0} \vert c_k\vert\,\norm{A^k-B^k} + \sum_{k<0} \vert c_k\vert\,\norm{A^k-B^k}\,.
\end{align*}
For the $k>0$ terms one can proceed identically to the proof of Lemma \ref{lem:OLbound1}, noticing that for the unit circle we have $R=\sup_{z\in\mathbb{T}}\vert z\vert=1$. 
For the $k<0$ terms, one can also proceed similarly noticing that now the telescopic identity is to be interpreted as
\begin{align*}
A^k - B^k = (A^{-1})^{-k} - (B^{-1})^{-k}\,.
\end{align*} 
This leads to
\begin{align*}
\frac{\norm{f(A)-f(B)}}{\norm{A-B}} \le \sum_{k>0} k\,\vert c_k\vert + \sum_{k<0} (-k)\,\vert c_k\vert = \sum_{k\in\mathbb{Z}} \vert k\vert\,\vert c_k\vert.
\end{align*}
\end{proof}

\section{Proof of Prop.~\ref{prop:robustness}}\label{app:proof_main}

\begin{proof}
First notice that
\begin{align*}
\|{f_\text{T}(A)-\widetilde{f}(B)\|}\le \norm{f_\text{T}(A)-f(B)}+\|{f(B)-\widetilde{f}(B)\|}\,.
\end{align*}
For the second term, we use the fact the QSVT sequence has $S$ single-qubit rotations, each of which has a synthesis error $\varepsilon_{\mathrm{synth}}$, so that $\|{f(B)-\widetilde{f}(B)\|}\le S\,\varepsilon_{\mathrm{synth}}$ after a sequence of triangle inequalities. For the first term, we break the analysis in two cases depending on whether or not the target function is OL. 
If $f_\text{T}$ is OL, we have
\begin{align*}
    \norm{f_\text{T}(A)-f(B)}
    &\leq \norm{f_\text{T}(A)-f_\text{T}(B)}+\norm{f_\text{T}(B)-f(B)}\\
    &\leq L_\star[f_\text{T}]\cdot\norm{A-B} + \varepsilon_{\rm{app}}.
\end{align*}
Otherwise, we use the fact that $f$ is always OL to get
\begin{align*}
    \norm{f_\text{T}(A)-f(B)}
    &\leq \norm{f_\text{T}(A)-f(A)}+\norm{f(A)-f(B)}\\
    &\leq \varepsilon_{\rm{app}}+L_\star[f]\cdot\norm{A-B}.
\end{align*}
Putting all together, the claim follows.
\end{proof}



\section{Lipschitz continuity for singular value transformations}\label{app:ext-QSVT}

In this appendix, we state and prove how the robustness discussed in the main text for quantum eigenvalue transformation extends to quantum singular value transformations. This proof is adapted from \cite{Gilyen_2018_arxiv}[Lem. 22] and included here for completeness.

\begin{lemma}[Extension to QSVT] Let $A\in\mathds{C}^{\tilde{d}\times {d}}$ be a matrix with $\norm{A}\leq 1$ and singular value decomposition $A=\sum_{j=1}^{\text{min}\{d,\tilde{d}\}}\sigma_j\,\ketbra{\tilde{u}_j}{{u}_j}$, where $\sigma_j\in \Sigma(A)$ and $\ket{u_j}$ ($\ket{\tilde{u}_j}$) are the right (left) singular vectors of $A$. Let $f:\mathds{R}\rightarrow\mathds{C}$ be a function with defined parity and operator Lipschitz constant $L_\star$. Define the singular value transformation of $A$ by $f$ as $f^{(SV)}(A):=\sum_{j=1}^{\text{min}\{d,\tilde{d}\}}f(\sigma_j)\,\ketbra{\tilde{u}_j}{{u}_j}$ if $f$ is odd, and $f^{(SV)}(A):=\sum_{j=1}^{{d}}f(\sigma_j)\ketbra{u_j}{{u}_j}$ if $f$ is even, with $\sigma_j=0$ if $j>\text{min}\{d,\tilde{d}\}$. Then, for $B\in\mathds{C}^{\tilde{d}\times {d}}$ with $\norm{B}\leq 1$, it holds that 
\begin{align}
    \norm{f^{(SV)}(A)-f^{(SV)}(B)}\leq L_\star\norm{A-B}.
\end{align}
\end{lemma}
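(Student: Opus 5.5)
The plan is to reduce the singular value transformation to an eigenvalue transformation of a Hermitian dilation, and then invoke the operator-Lipschitz property of $f$ on $\mathcal{L}_{[-1,1]}$ in dimension $d+\tilde d$. For $A\in\mathds{C}^{\tilde d\times d}$ with $\norm{A}\le 1$, define the Hermitian matrix on $\mathds{C}^{\tilde d}\oplus\mathds{C}^{d}$
\begin{align*}
H_A \coloneq \begin{bmatrix} 0 & A\\ A^\dagger & 0\end{bmatrix},
\end{align*}
and define $H_B$ in the same way. Two facts follow at once. First, $H_A-H_B$ is the dilation of $A-B$, so $\norm{H_A-H_B}=\norm{A-B}$, because the singular values of a dilation are the singular values of its off-diagonal block, with signs $\pm$. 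Second, $\norm{H_A}=\norm{A}\le1$, so $H_A,H_B\in\mathcal{L}_{[-1,1]}(\mathds{C}^{\tilde d+d})$. The definition of $L_\star$ in Eq.~\eqref{eq:Lstardef} takes a supremum over all dimensions, so
\begin{align*}
\norm{f(H_A)-f(H_B)}\le L_\star\,\norm{A-B}.
\end{align*}

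Next I will compute $f(H_A)$ from the spectral decomposition of $H_A$. For $j\le\min\{d,\tilde d\}$, the vectors $(\ket{\tilde u_j},\pm\ket{u_j})/\sqrt2$ are eigenvectors with eigenvalues $\pm\sigma_j$. The remaining eigenvectors have eigenvalue $0$: they are the vectors $(0,\ket{u_j})$ for $j>\min\{d,\tilde d\}$, together with the analogous left singular vectors $(\ket{\tilde u_j},0)$ that complete a basis of $\mathds{C}^{\tilde d}$.

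If $f$ is odd, then $f(-\sigma)=-f(\sigma)$ and $f(0)=0$. Summing over the pairs $\pm\sigma_j$ then gives
\begin{align*}
f(H_A)=\begin{bmatrix}0 & f^{(SV)}(A)\\ f^{(SV)}(A)^\dagger & 0\end{bmatrix}.
\end{align*}
Hence $f(H_A)-f(H_B)$ is the dilation of $f^{(SV)}(A)-f^{(SV)}(B)$, and the two have equal norm. This gives the claim directly.

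If $f$ is even, the cross terms cancel and $f(H_A)$ is block diagonal. Its lower-right block on $\mathds{C}^d$ is $\sum_{j=1}^{d} f(\sigma_j)\ketbra{u_j}{u_j}=f^{(SV)}(A)$, where the kernel directions correctly receive $f(0)$ under the convention $\sigma_j=0$ for $j>\min\{d,\tilde d\}$. Let $\Pi$ be the isometry embedding $\mathds{C}^d$ into the second summand. Then $\Pi^\dagger\big(f(H_A)-f(H_B)\big)\Pi=f^{(SV)}(A)-f^{(SV)}(B)$. A compression by an isometry cannot increase the spectral norm, so the same bound $L_\star\norm{A-B}$ follows.

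The main obstacle is the bookkeeping in the spectral calculus rather than any analytic estimate. I have to check that the decompositions above are valid when singular values are degenerate or zero; I will handle this by working with spectral projectors instead of individual vectors, so that the result is independent of the choice of singular vectors. In the even case I also have to check that the upper-left block plays no role, which is clear because only the compression onto $\mathds{C}^d$ is used. Finally, because $L_\star$ is defined as a supremum over all dimensions of normal operators with spectrum in $[-1,1]$, the same constant applies in the dilated dimension $d+\tilde d$. This is the point where the dimension-free nature of $L_\star$, as opposed to the dimension-dependent bound in Eq.~\eqref{eq:op-lip-bound}, is essential. Combining the bound $\norm{f_T^{(SV)}(A)-f^{(SV)}(B)}$ with the triangle-inequality argument in the proof of Proposition~\ref{prop:robustness} then yields Lemma~\ref{lem:ext-QSVT}.
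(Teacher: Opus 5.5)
Your proposal is correct and is essentially the paper's proof: the Hermitian dilation of $A$, the odd/even computation of $f$ applied to the dilation, and the observation that a block (equivalently, an isometric compression) of $f(H_A)-f(H_B)$ has norm at most $L_\star\norm{A-B}$, by the dimension-free definition of $L_\star$ applied in dimension $d+\tilde d$. One minor point, which the paper shares: $f$ may be complex-valued, so in the odd case the lower-left block of $f(H_A)$ is $\sum_j f(\sigma_j)\,|u_j\rangle\langle\tilde u_j|$ rather than $f^{(SV)}(A)^\dagger$, and you should finish by block domination (an inequality, which is all you need) instead of claiming that the two norms are equal.
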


\begin{proof}
Let $\bar{A}=\begin{psmallmatrix}0&A\\A^\dagger&0\end{psmallmatrix}$, which is Hermitian with spectrum $\{\pm\sigma_j\}\cup\{0\}\subset[-1,1]$ and $\norm{\bar{A}-\bar{B}}=\norm{A-B}$. If $f$ is odd, $f(\bar{A})=\begin{psmallmatrix}0&f^{(SV)}(A)\\f^{(SV)}(A)^\dagger&0\end{psmallmatrix}$. If $f$ is even, $f(\bar{A})=\mathrm{diag}\big(f(\sqrt{AA^\dagger}),f(\sqrt{A^\dagger A})\big)$ and $f^{(SV)}(A)=f(\sqrt{A^\dagger A})$. In both cases $f^{(SV)}(A)-f^{(SV)}(B)$ is a block of $f(\bar{A})-f(\bar{B})$, so its norm is at most $L_\star[f]\norm{\bar{A}-\bar{B}}=L_\star[f]\norm{A-B}$.
\end{proof}

\section{Haagerup factorizations of $K_f$}\label{app:haagerup-fact}

Here we prove Lemma \ref{lem:haagerup_poly}, which states that when $f$ is a polynomial or Laurent polynomial, the divided difference kernel \eqref{eq:df-kernel} admits a finite-dimensional Haagerup factorization \eqref{eq:haagerup}. We also state and prove two lemmas concerning Eqs. \eqref{eq:Hcheb} and \eqref{eq:Htrig} with the particular cases of polynomials in the Chebyshev basis and Laurent polynomials in the basis of monomials.

 \begin{proof}[Proof of Lem. \ref{lem:haagerup_poly}]

Let $\widehat{\mathbf{\Phi}}(x)=[\Phi_0(x),\ldots,\Phi_N(x)]^\top$ and
$\widehat{\mathbf{x}}=[1,x,\ldots,x^N]^\top$. By definition,
$\widehat{\mathbf{\Phi}}(x)=\widehat D\,\widehat{\mathbf{x}}$, where
$\widehat D$ is lower triangular and invertible since
$\deg(\Phi_j)=j$. Let $D$ be its leading $N\times N$ block, so that
$\mathbf{\Phi}(x)=D\mathbf{x}$, with
$\mathbf{x}=[1,x,\ldots,x^{N-1}]^\top$.

Writing $f(x)=\sum_{s=0}^N c_sx^s$, its monomial coefficients are
$c_s=\sum_{j=0}^N a_j\widehat D_{j,s}$. Hence
\begin{gather}
\begin{aligned}
K_f(x,y)
&=\frac{f(x)-f(y)}{x-y}
 =\sum_{s=1}^N c_s\sum_{\ell=0}^{s-1}x^{s-1-\ell}y^\ell \\
&=\sum_{k,k'=0}^{N-1}G'_{k,k'}x^ky^{k'}
 =\mathbf{x}^\top G'\mathbf{y},
\end{aligned}
\end{gather}
where
\begin{align*}
    G'_{k,k'}=
\begin{cases}
c_{k+k'+1}
=\displaystyle\sum_{j=0}^N
\widehat D_{j,k+k'+1}a_j,
& k+k'+1\leq N,\\[2mm]
0,
& k+k'+1>N.
\end{cases}
\end{align*}

Since $\mathbf{x}=D^{-1}\mathbf{\Phi}(x)$ and, for
$y\in\mathcal D\subseteq\mathbb R$,
$\mathbf{y}=\overline D^{-1}\mathbf{\Phi}(y)^*$, it follows that
\begin{gather}
\begin{aligned}
K_f(x,y)
&=\mathbf{\Phi}(x)^\top
(D^{-1})^\top G'\overline D^{-1}\mathbf{\Phi}(y)^* \\
&=\mathbf{\Phi}(x)^\top G\mathbf{\Phi}(y)^*,
\end{aligned}
\end{gather}
with $G=(D^{-1})^\top G'\overline D^{-1}$.

Now take any rank-one decomposition
$G=\sum_k\mathbf h_{p_k}\mathbf h_{q_k}^\dagger$ and define
$p_k(x)=\mathbf h_{p_k}^\top\mathbf{\Phi}(x)$ and
$q_k(y)=\mathbf h_{q_k}^\top\mathbf{\Phi}(y)$. Then
\begin{gather}
\begin{aligned}
\sum_k p_k(x)q_k(y)^*
&=\mathbf{\Phi}(x)^\top
\left(\sum_k\mathbf h_{p_k}\mathbf h_{q_k}^\dagger\right)
\mathbf{\Phi}(y)^* \\
&=\mathbf{\Phi}(x)^\top G\mathbf{\Phi}(y)^*
=K_f(x,y).
\end{aligned}
\end{gather}
Such a decomposition always exists, for instance from the singular
value decomposition of $G$. Since it is finite and the basis functions
are bounded on $\mathcal D$, the corresponding square sums are bounded.
Therefore $K_f$ is Haagerup factorizable.
\end{proof}

\begin{lemma}[Haagerup factorization in the Chebyshev basis]\label{thm:haagerup-cheb}
    Let $f(x) = \sum^{N}_{j=0} a_j T_j(x)$ be a polynomial in the Chebyshev basis with domain $\mathcal{D}=[-1,1]$.
    Then its divided difference kernel $K_f( x, y)$ is Haagerup factorizable. Moreover, it admits the factorization
    given by $\mathbf{p}( x)=\mathbf{\Phi}( x)$, $\mathbf{q}( y) \coloneq G \cdot \mathbf{\Phi}( y)$,
    where $\mathbf{\Phi}(x)\coloneq\left[ T_k(x) \right]_{k=0}^{N-1} \in [-1,1]^{N}$ and $G \in \mathbb{C}^{N \times N}$ is a symmetric matrix with components ($\delta_{i,j}$ is the Kronecker delta) 
    \begin{align}\label{eq:Hcheb-repeat}
    G_{k,k'} = (2-\delta_{k,0})(2-\delta_{k',0})\sum_{j=0}^{\lfloor \frac{N-k-k'-1}{2}\rfloor} \hspace{-0.1cm}a_{k+k'+1+2j}, 
    \end{align}
    for $k+k'+1+2j\leq N$, and $G_{k,k'}=0$ otherwise.
\end{lemma}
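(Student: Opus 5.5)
The plan is to reduce everything to the single-Chebyshev-polynomial identity
\begin{align*}
K_{T_n}(x,y)=\sum_{\substack{k,k'\ge0\\ k+k'=n-1-2j,\; j\ge0}} w_k w_{k'}\,T_k(x)T_{k'}(y),\qquad w_0=1,\; w_k=2\ (k\ge1),
\end{align*}
for every $n\ge1$ (with $K_{T_0}\equiv0$). The general statement then follows by linearity. The divided difference is linear in $f$, so $K_f=\sum_{n=1}^N a_n K_{T_n}$. Collecting the coefficient of $T_k(x)T_{k'}(y)$ gives exactly $(2-\delta_{k,0})(2-\delta_{k',0})\sum_{j\ge0} a_{k+k'+1+2j}$, summed over those $j$ with $k+k'+1+2j\le N$. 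This is Eq.~\eqref{eq:Hcheb-repeat}. Symmetry of $G$ is manifest, since the entry depends only on $k+k'$ and the weights are symmetric. For real $x,y\in[-1,1]$ everything is real, so $K_f(x,y)=\mathbf{\Phi}(x)^\top G\,\mathbf{\Phi}(y)=\mathbf{p}(x)^\top\mathbf{q}^*(y)$ with $\mathbf{p}=\mathbf{\Phi}$ and $\mathbf{q}=G\mathbf{\Phi}$, whenever the $a_j$ are real. For complex $a_j$ I would instead take $\mathbf{q}(y)=\overline{G}\,\mathbf{\Phi}(y)$, equivalently absorb the conjugation into the convention of Lemma~\ref{lem:haagerup_poly}. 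The $\ell_2$-norms are bounded because $|T_k|\le1$ on $[-1,1]$: $\|\mathbf{p}(x)\|_2\le\sqrt N$ and $\|\mathbf{q}(y)\|_2\le\|G\|\sqrt N$. This gives Haagerup factorizability, and it is consistent with Lemma~\ref{lem:haagerup_poly}, whose $G$ is unique once the basis is fixed.

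To prove the identity I would use induction on $n$. The base cases $n=1$ (value $1$) and $n=2$ (value $2(x+y)=2T_1(x)+2T_1(y)$) are direct. From $T_{n+1}=2xT_n-T_{n-1}$ one gets
\begin{align*}
T_{n+1}(x)-T_{n+1}(y)=2x\big(T_n(x)-T_n(y)\big)+2(x-y)T_n(y)-\big(T_{n-1}(x)-T_{n-1}(y)\big),
\end{align*}
which after dividing by $x-y$ becomes
\begin{align*}
K_{T_{n+1}}=2x\,K_{T_n}+2T_n(y)-K_{T_{n-1}}.
\end{align*}
Substituting the inductive forms, I would expand $2xT_k(x)=T_{k+1}(x)+T_{k-1}(x)$ for $k\ge1$ and $2xT_0(x)=2T_1(x)$.

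The main obstacle is the bookkeeping at the $k=0$ boundary. The weight changes from $1$ to $2$ there, and the term $T_{k-1}$ with $k=1$ lands on $T_0$ with coefficient $1$ rather than $2$. I would check three things. First, the shifted-up terms $T_{k+1}$ reproduce all terms of $K_{T_{n+1}}$ with $k\ge1$ and $k+k'=n$. Second, the shifted-down terms $T_{k-1}$ exactly cancel $K_{T_{n-1}}$, except for a leftover at $k=0$. Third, that leftover is supplied by $2T_n(y)$, which is the $(k,k')=(0,n)$ term with weight $w_0w_n=2$.

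As a fallback, if the induction becomes messy, I would verify the identity trigonometrically. Set $x=\cos\theta$ and $y=\cos\varphi$, and write $T_n(x)-T_n(y)=-2\sin\frac{n(\theta+\varphi)}{2}\sin\frac{n(\theta-\varphi)}{2}$ together with $x-y=-2\sin\frac{\theta+\varphi}{2}\sin\frac{\theta-\varphi}{2}$. The ratio is a product of two Dirichlet-type sums in $\theta\pm\varphi$. Expanding that product with product-to-sum formulas should yield the same parity-restricted double sum.
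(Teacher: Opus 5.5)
Your proposal is correct. It has the same skeleton as the paper's proof: a single-$T_n$ identity, then linearity, then collecting the coefficient of $T_k(x)T_{k'}(y)$. The difference is how you obtain the key identity $K_{T_n}(x,y)=\sum_{k+k'=n-1-2j,\,j\ge0}w_kw_{k'}T_k(x)T_{k'}(y)$.

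\textbf{Comparison with the paper.} The paper does not prove this identity from scratch. It cites the formula $K_{T_s}(x,y)=\sum_{k=0}^{s-1}(2-\delta_{k,0})T_k(x)U_{s-k-1}(y)$. It then substitutes the classical expansion of $U_m$ into $T_{k'}$ with $k'\le m$ and $k'\equiv m \pmod 2$, and collects terms. Your induction on $K_{T_{n+1}}=2xK_{T_n}+2T_n(y)-K_{T_{n-1}}$ is self-contained and lands directly on the symmetric $T\otimes T$ form. The paper's two-step route gives, in exchange, the intermediate asymmetric factorization with factors $T_k(x)$ and $U_{k'}(y)$. The paper reuses that factorization right afterwards to illustrate non-uniqueness of Haagerup factorizations and to derive the crude bound $L_\star\le\sqrt{8/3}\,N^2\|a\|_1$; your route does not produce it. Your remark that complex $a_j$ require $\mathbf{q}(y)=\overline{G}\,\mathbf{\Phi}(y)$ is also right. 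The paper's proof actually sets $\mathbf{q}^*(y)=G\,\mathbf{\Phi}(y)$, which agrees with you. It differs from the lemma statement only when $G$ is not real.

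\textbf{A fix to your induction bookkeeping.} As you list them, the three checks do not add up. The correct accounting is:
\begin{itemize}
\item The up-shifted terms, including $2xT_0=2T_1$ with $w_1/w_0=2$, reproduce every $k\ge1$ term of $K_{T_{n+1}}$. This covers all diagonals $k+k'=n,n-2,\dots$, not only the top one.
\item The down-shifted terms equal $\sum 2w_{k'}T_k(x)T_{k'}(y)$ over the index set of $K_{T_{n-1}}$. Subtracting $K_{T_{n-1}}$ leaves $\sum_{k'\le n-2,\ k'\equiv n \pmod 2}w_{k'}T_{k'}(y)$. This fills the $k=0$ column below the top diagonal.
\item The term $2T_n(y)$ fills only the single slot $(0,n)$, whose weight is $w_0w_n=2$.
\end{itemize}
With this accounting the induction closes. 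The identity also holds at $x=y$, because both sides are polynomials that agree off the diagonal.
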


\begin{proof}
For a fixed Chebyshev polynomial $T_s(x)$, the divided difference can be written as \cite[Thm. 5]{kim_difference_2016}
\begin{align*}
    \frac{T_{s}( x)-T_{s}( y)}{ x- y}= \sum^{s-1 }_{k=0}(2-\delta_{k,0})\,  T_{k}( x)U_{s-k-1}( y).
\end{align*}
Notice that this also includes $ x= y$ through the limit $ y\to x$, in which case by definition we have the derivative $T_s^\prime( x)$. This immediately gives rise to a Haagerup decomposition of the form \eqref{eq:haagerup}, namely
\begin{align}\label{eq:cheb_decomp1}
\nonumber \frac{f( x)-f( y)}{ x- y}&= \sum^{N}_{s=1} a_s \frac{T_{s}( x)-T_{s}( y)}{ x- y}\\
   \nonumber & =\sum^{N}_{s=1} a_s \sum^{s-1 }_{k=0}(2-\delta_{k,0})\,  T_{k}( x)U_{s-k-1}( y)\\
    &= \sum^{N-1 }_{k=0}\sum^{N-1 }_{k'=0} \tilde{G}_{k,k'}\, T_{k}( x)\,U_{k'}( y)
\end{align}
where $\tilde{G}_{k,k'}=(2-\delta_{k,0})\, a_{k+k'+1} \, \Theta(N-(k+k'+1))$ with $\Theta(x)=1$ for $x\geq0$ and $\Theta(x)=0$ otherwise. We can get another decomposition using only the Chebyshev of the first kind as the basis for the polynomial functions of both variables, \emph{i.e.}, we can use the identity
\begin{align*}
U_k( y)=\sum_{\substack
{k'=0 \\ k'\equiv k\,\text{mod}\,2}}^k (2-\delta_{k',0})\,T_{k'}( y),
\end{align*}
from which we obtain
\begin{align} \label{eq:cheb_decomp2}
\nonumber \frac{f( x)-f( y)}{ x- y}&=  \sum^{N-1 }_{k'=0}\sum^{N-1 }_{k=0} \tilde{G}_{k,k'}\, T_{k}( x)\,U_{k'}( y)\\
\nonumber &=  \sum^{N-1 }_{k=0}\sum^{N-1 }_{k'=0} \tilde{G}_{k,k'}\, T_{k}( x)\hspace{-13pt} \sum_{\substack
{k''=0 \\ k''\equiv k'\,\text{mod}\,2}}^{k'} \hspace{-13pt} (2-\delta_{k'',0})\,T_{k''}\!( y)\\
&= \sum^{N-1 }_{k=0}\sum^{N-1 }_{k'=0} G_{k,k'} T_{k}( x) T_{k'}( y)
\end{align}
with $G_{k,k'}=(2-\delta_{k',0}) \sum_{\substack
{k''=k' \\ k''\equiv k'\,\text{mod}\,2}}^{N-1}  \tilde{G}_{k,k''}=(2-\delta_{k,0})(2-\delta_{k',0})\sum_{s=0}^{\lfloor \frac{N-k-k'-1}{2}\rfloor} a_{k+k'+1+2s}$.
Finally, the claim follows after introducing $\mathbf{p}( x)=\left[ T_k( x) \right]_{k=0}^{N-1}$, $\mathbf{q}^*( y)=\left[\sum^{N-1 }_{k'=0} G_{k,k'} \, T_{k'}( y)\right]_{k=0}^{N-1}=G\cdot \left[ T_{k'}( y) \right]_{k'=0}^{N-1}$, and $G=\big[G_{k,k'}\big]_{k,k'=0}^{N-1}$.
\end{proof}

 The proof above highlights the non-uniqueness of Haagerup decompositions and the role of finding the optimal decomposition in determining the operator Lipschitz constant $L_\star$. Both Eqs. \eqref{eq:cheb_decomp1} and \eqref{eq:cheb_decomp2} give valid decompositions for $K_f$. While motivated by symmetry we chose to present the form \eqref{eq:cheb_decomp2} in the theorem, Eq. \eqref{eq:cheb_decomp1} is also a valid decomposition for $K_f$
 with the same $p_k( x)=T_k( x)$ and $q^*_{k}( y)=(2-\delta_{k,0})\sum_{k'=0}^{N-k-1}a_{k'+k+1} U_{k'}( y)$. Both decompositions give explicit, if crude, upper bounds on $L_\star$ in terms of $\norm{a}_1\coloneq\sum_{s=0}^N|a_s|$. In both, $\norm{\mathbf{p}(x)}_2^2=\sum_{k=0}^{N-1}T_k(x)^2\le N$. For Eq.~\eqref{eq:cheb_decomp1}, $\vert U_{k'}(y)\vert\le k'+1$ on $[-1,1]$ gives $\vert q_0(y)\vert\le\sum_{k'=0}^{N-1}(k'+1)\vert a_{k'+1}\vert\le N\norm{a}_1$ and $\vert q_k(y)\vert\le 2(N-k)\norm{a}_1$ for $k\ge1$, so that
\begin{align*}
N\sup_y\norm{\mathbf{q}(y)}_2^2
&\le\norm{a}_1^2\Big[N^3+4N\sum_{k=1}^{N-1}(N-k)^2\Big]\\
&=\norm{a}_1^2\Big[N^3+\tfrac{2}{3}N^2(N-1)(2N-1)\Big]\\
&\le\tfrac{8}{3}N^4\norm{a}_1^2,
\end{align*}
because the difference of the two sides equals $\big(\tfrac43N^4+N^3-\tfrac23N^2\big)\norm{a}_1^2\ge0$. Hence $L_\star\le\sqrt{8/3}\,N^2\norm{a}_1$. For Eq.~\eqref{eq:cheb_decomp2}, $\vert G_{k,k'}\vert\le4\norm{a}_1$ gives $\vert q_k(y)\vert\le\sum_{k'}\vert G_{k,k'}\vert\le4N\norm{a}_1$, so $\sup_y\norm{\mathbf{q}(y)}_2\le4N^{3/2}\norm{a}_1$ and $L_\star\le4N^2\norm{a}_1$. Both bounds are far from tight, they only show that $L_\star=\mathcal{O}(N^2\norm{a}_1)$.

\begin{lemma}[Haagerup factorization of trigonometric polynomials]\label{thm:haagerup-trig-poly}
    Let $f(z) = \sum_{j=-N/2}^{N/2} c_j z^j$, with $N$ even, be a finite trigonometric polynomial with domain $\mathcal{D}=\mathbb{T}\coloneq \{z\in\mathbb{C}\text{ s.t. } \vert z \vert = 1\}$. Then its divided difference kernel $K_f(\zeta,\eta)$ is Haagerup factorizable. Moreover, it admits the
     factorization given by $\mathbf{p}(\zeta)=\mathbf{\Phi}(\zeta)$, $\mathbf{q}^*(\eta)= G\cdot\mathbf{\Phi}(\eta)$, where $\mathbf{\Phi}(z)\coloneq \left[ z^{k-N/2} \right]_{k=0}^{N-1} \in \mathbb{T}^{N}$ and $G \in \mathbb{C}^{N \times N}$ is a block-diagonal matrix with two Hankel blocks, with components (here $ j=k+k'-N+1$)
    \begin{align}\label{eq:Htrig-repeat}
        G_{k,k'}&=\begin{cases}
        \phantom{-}c_j, & 0<j\leq N/2,\quad 0\leq k-N/2 \leq j-1 \\
        -c_j, & -N/2\leq j<0,\quad j\leq k-N/2 \leq -1 \\
        \phantom{-}0, & \text{otherwise}
    \end{cases}\!.
    \end{align}
\end{lemma}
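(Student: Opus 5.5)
The plan is to compute the divided difference monomial by monomial, exactly as in the Chebyshev case (Lemma~\ref{thm:haagerup-cheb}), and then read off the coefficient matrix in the basis $\mathbf{\Phi}(z)=[z^{k-N/2}]_{k=0}^{N-1}$. As in the Chebyshev lemma, the factorization is taken in the unconjugated form $K_f(\zeta,\eta)=\mathbf{\Phi}(\zeta)^\top G\,\mathbf{\Phi}(\eta)$, so that $\mathbf{p}(\zeta)=\mathbf{\Phi}(\zeta)$ and $\mathbf{q}^*(\eta)=G\mathbf{\Phi}(\eta)$. Since $K_f$ is linear in $f$, it suffices to treat the monomials $z^j$ with $-N/2\le j\le N/2$. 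The constant term $c_0$ drops out of the divided difference.

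For $j>0$, the geometric-sum identity gives
\begin{align*}
\frac{\zeta^j-\eta^j}{\zeta-\eta}=\sum_{m=0}^{j-1}\zeta^{m}\eta^{j-1-m}.
\end{align*}
Set $k-N/2=m$ and $k'-N/2=j-1-m$. Then $k+k'=N+j-1$, which is precisely $j=k+k'-N+1$, and $m$ ranges over $0\le k-N/2\le j-1$. This reproduces the first case of Eq.~\eqref{eq:Htrig-repeat}. The indices stay in range because $k\le N/2+j-1\le N-1$, and similarly for $k'$. Moreover, both $k$ and $k'$ are $\ge N/2$.

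For $j=-n<0$, multiplying numerator and denominator appropriately gives
\begin{align*}
\frac{\zeta^{-n}-\eta^{-n}}{\zeta-\eta}=-\frac{\zeta^{n}-\eta^{n}}{(\zeta-\eta)\zeta^n\eta^n}=-\sum_{m=0}^{n-1}\zeta^{m-n}\eta^{-1-m}.
\end{align*}
Set $k-N/2=m+j\in[j,-1]$ and $k'-N/2=-1-m$. Again $k+k'=N+j-1$, now with coefficient $-c_j$, and both $k,k'<N/2$. This is the second case of Eq.~\eqref{eq:Htrig-repeat}.

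Summing over $j$ gives $K_f(\zeta,\eta)=\mathbf{\Phi}(\zeta)^\top G\,\mathbf{\Phi}(\eta)$. For $\zeta=\eta$, the same identity holds by continuity, since the right-hand side is the derivative. The main bookkeeping point is to check that no pair $(k,k')$ receives two contributions. This holds because $j$ is determined by $k+k'$, and $m$ is determined by $k$ once $j$ is fixed. Each entry of $G$ is therefore a single $\pm c_j$, constant along anti-diagonals. The $j>0$ entries live in the block $k,k'\ge N/2$ and the $j<0$ entries in the block $k,k'<N/2$. This yields the claimed block-diagonal structure with two Hankel blocks.

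Finally, the factorization is Haagerup in the sense of Eq.~\eqref{eq:haagerup}. On $\mathbb{T}$ we have $\norm{\mathbf{\Phi}(z)}_2^2=N$, and $\norm{G\mathbf{\Phi}(\eta)}_2\le\norm{G}\sqrt N$, so both factor norms are uniformly bounded. I expect no genuine obstacle here; the only delicate step is the index bookkeeping above, namely matching the conventions $j=k+k'-N+1$ and the ranges of $k-N/2$.
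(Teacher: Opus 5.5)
Your proposal is correct and follows essentially the same route as the paper. The paper also expands each monomial $z^s$ (separately for $s>0$ and $s<0$) with the telescoping/geometric-sum identity, reindexes the exponents as $\alpha=k-N/2$, $\beta=k'-N/2$, and reads off $G_{k,k'}$ from $\alpha+\beta+1=j$ with either $\alpha,\beta\ge0$ or $\alpha,\beta\le-1$. Your explicit checks of the index ranges, the absence of double contributions, the diagonal $\zeta=\eta$, the block-Hankel structure and the boundedness of the factors fill in details the paper leaves implicit.
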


\begin{proof}
From the telescopic sum of commuting variables, we have
\begin{align*}
    \frac{\zeta^s-\eta^s}{\zeta-\eta} &=\begin{cases}\sum_{r=0}^{s-1}\zeta^{s-1-r}\eta^r\quad &s>0, \\
    -\sum_{r=0}^{-s-1}\zeta^{s+r}\eta^{-r-1} \quad &s<0.\end{cases}
\end{align*}
Then, for $K_f$,
\begin{gather}\label{eq:fact_kernel_fourier}
    \begin{aligned}
        K_f(\zeta,\eta)&=\frac{f(\zeta)-f(\eta)}{\zeta-\eta} \\
        &= \sum_{s=1}^{N/2} c_s \sum_{r=0}^{s-1}\zeta^{s-1-r}\eta^r - \sum_{s=-1}^{-N/2} c_{s}\sum_{r=0}^{-s-1}\zeta^{s+r}\eta^{-1-r}.
    \end{aligned}
\end{gather}
Notice that, for every $s$ in the external sum, if we add the exponents $\alpha$ of $\zeta$ and $\beta$ of $\eta$ we get $\alpha+\beta=s-1$, an integer in $[-N/2-1,N/2-1]$ different from $-1$. Moreover, each exponent takes values in $[-N/2,N/2-1]$; for $s>0$ both exponents are nonnegative, and for $s<0$ both are negative. Hence, by defining $\tilde{c}_{\alpha,\beta}=c_{\alpha+\beta+1}$ if $0<\alpha+\beta+1\leq N/2$ and $\alpha,\beta\geq0$, $\tilde{c}_{\alpha,\beta}=-c_{\alpha+\beta+1}$ if $-N/2\leq \alpha+\beta+1<0$ and $\alpha,\beta\leq-1$, and $\tilde{c}_{\alpha,\beta}=0$ otherwise, we can write
\[
       K_f(\zeta,\eta) = \sum_{\alpha=-N/2}^{N/2-1}\sum_{\beta=-N/2}^{N/2-1} \tilde{c}_{\alpha,\beta}\,\zeta^{\alpha}\,\eta^{\beta}.
\]
With the re-indexing $\alpha=k-N/2$, $\beta=k'-N/2$, the double sum becomes
\[
       K_f(\zeta,\eta) = \sum_{k=0}^{N-1}\sum_{k'=0}^{N-1} G_{k,k'}\zeta^{k-N/2}\eta^{k'-N/2},
\]
with $G_{k,k'}$ as defined in Eq. \eqref{eq:Htrig-repeat}.
The factorization claimed then follows after introducing $p_k(\zeta)= \zeta^{k-N/2} $, $q_k^*(\eta)= \sum_{k'=0}^{N-1}G_{k,k'}\,\eta^{k'-N/2} $, for ${k=0,\ldots,N-1} $.
\end{proof}

\section{Proof of Lemma \ref{lem:lips-schur}}
\label{app:proof-lem-lips-schur}

Before proceeding with the proof, let us define the fixed-dimension OL constant
\begin{equation}
L^{(d)}[f]\coloneq\sup_{\substack{A,B\in\mathcal{L}_{\mathcal{D}}(\mathcal{H}_d)\\A\neq B}}\frac{\norm{f(A)-f(B)}}{\norm{A-B}}
\end{equation}
such that
\begin{equation}
L_\star[f]=\sup_{d\in\mathbb{N}}L^{(d)}[f].
\end{equation}
The proof of Lemma \ref{lem:lips-schur} is based on identifying an analogous fixed-dimension Schur multiplier norm of $K_f$ in Eq.~\eqref{eq:df-kernel} as both a lower and an upper bound for $L^{(d)}[f]$ for different values of $d$. This implies that when we take the supremum over the dimension, the bounds in opposite directions force the equality $L_\star[f]=\norm{K_f}_{\rm{Schur}}$. We state and prove the fixed-dimension result next, followed by the proof of Lemma \ref{lem:lips-schur}.

 We present below the notation, definitions, and previous results needed for the proof.
 
\paragraph{Notation:} 
Let $\mathcal{I}$ denote a general open interval in $\mathbb{R}$.
For every $n\in\mathbb{N}$, let  $\mathcal{U}_{n}(\mathbb{C})$ and $\mathcal{H}_{n}(\mathbb{C})$ be the sets of $n$-dimensional  unitary and Hermitian matrices over $\mathbb{C}^{n}$, respectively.
Moreover, for any set $\mathsf{Z}\subset\mathbb{C}$ let $\mathcal{N}_{n}(\mathsf{Z})$ denote the set of all normal operators over $\mathbb{C}^{n}$ whose spectrum is contained in $\mathsf{Z}$, notice that $\mathcal{N}_{n}(\mathbb{R})=\mathcal{H}_{n}(\mathbb{R})$ and $\mathcal{N}_{n}(\mathbb{T})=\mathcal{U}_{n}(\mathbb{C})$ by definition.
The symbol $\circ$ is used only for the Schur product (also called Hadamard, or point-wise product) of matrices.

\begin{definition}
Let $f\in C^{1}(\mathcal{I})$ and $A,H\in\mathcal{N}_{n}(\mathcal{I})$.
The \textbf{Frechet derivative of $f$ at $A$} is the linear map $\mathfrak{D}f(A):\mathcal{N}_{n}(\mathcal{I})\rightarrow\mathcal{N}_{n}(\mathcal{I})$ such that
\begin{equation}
\lim_{H\rightarrow0}\frac{\norm{f(A+H)-f(A)-\mathfrak{D}f(A)[H]}}{\norm{H}}=0.
\end{equation}
The value of $\mathfrak{D}f(A)$ at the matrix $H$ is called the \textit{Gateaux derivative of $f$ at $A$ in the direction $H$} and can also be written as the limit
\begin{equation}
\mathfrak{D}f(A)[H]=\lim_{t\rightarrow0}\frac{f(A+tH)-f(A)}{t}=\frac{d}{dt}f(A+tH)\Big|_{t=0}.
\end{equation}
A analogous definitions of both derivatives exist for functions $f\in C^{1}(\mathbb{T})$ in \cite{chattopadhyay_higher_2024}.
\end{definition}

Since all functions here are $C^{1}$ on the spectrum of the operators they are applied over, it turns out that they are always Frechet differentiable \cite{bhatia_positive_2007,chattopadhyay_higher_2024}.
Therefore, for such functions, all Gateaux derivatives exist where they are well-defined, and they can be computed using the formulas for Frechet derivatives.

\begin{theorem}\label{thm:Daleckii-Krein_open}
Let $f\in C^{1}(\mathcal{I})$ and $A,H\in\mathcal{N}_{n}(\mathcal{I})$.
Also, let $(\lambda^{A}_{i},\Pi^{A}_{i})_{i\in[n]}$ be the spectral decomposition for $A$.
\begin{equation}
\mathfrak{D}f(A)[H]=\sum_{i,j\in[n]}K_{f}(\lambda^{A}_{i},\lambda^{A}_{j})\,(\Pi^{A}_{i}H\,\Pi^{A}_{j}).
\end{equation}
Or yet, if we denote $H^{A}$ to be the matrix with $H$ in the eigenbasis of $A$ and $K^{\mathcal{A},\mathcal{A}}_{f}$ to be the matrix with entries $[K^{\mathcal{A},\mathcal{A}}_{f}]_{i,j}=K_{f}(\lambda^{A}_{i},\lambda^{A}_{j})$,
\begin{equation}
\mathfrak{D}f(A)[H]=K^{\mathcal{A},\mathcal{A}}_{f}\circ H^{A}.
\end{equation}
\end{theorem}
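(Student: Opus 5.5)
The plan is to prove the formula first for polynomials by a direct algebraic computation, and then pass to general $f\in C^{1}(\mathcal{I})$ by uniform polynomial approximation of $f$ together with $f'$. Throughout, $A$ and $H$ are Hermitian, since $\mathcal{N}_n(\mathcal{I})\subset\mathcal{H}_n(\mathbb{R})$. Fix a compact interval $J\subset\mathcal{I}$ whose interior contains the spectrum of $A$. For $|t|$ small, $A+tH$ is Hermitian with spectrum in $J$ by Weyl's inequality. It therefore suffices to identify the Gateaux derivative $\frac{d}{dt}f(A+tH)|_{t=0}$; Fréchet differentiability is taken as given, as stated before the theorem.

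\emph{Step 1 (monomials and polynomials).} For $f(x)=x^k$, expand $(A+tH)^k$ to first order in $t$, exactly as in the telescopic identity used in Lemma~\ref{lem:OLbound1}. This gives $\mathfrak{D}f(A)[H]=\sum_{j=0}^{k-1}A^jHA^{k-1-j}$. Sandwiching with the spectral projections, $\Pi_i^A A^j H A^{k-1-j}\Pi_l^A=(\lambda_i^A)^j(\lambda_l^A)^{k-1-j}\,\Pi_i^AH\Pi_l^A$. Summing over $j$ gives $\sum_{j}x^jy^{k-1-j}$ evaluated at $(x,y)=(\lambda_i^A,\lambda_l^A)$. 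This equals $K_{x^k}(\lambda_i^A,\lambda_l^A)$ both for $\lambda_i^A\neq\lambda_l^A$ and, when $\lambda_i^A=\lambda_l^A$, equals the derivative $k\lambda^{k-1}$. Both sides are linear in $f$, so the formula $\mathfrak{D}p(A)[H]=K_p^{\mathcal{A},\mathcal{A}}\circ H^A$ holds for every polynomial $p$.

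\emph{Step 2 (approximation).} By Weierstrass, choose polynomials $r_m\to f'$ uniformly on $J$. Set $p_m(x)=f(x_0)+\int_{x_0}^x r_m$ for a fixed $x_0\in J$. Then $\delta_m\coloneq\sup_J|f'-p_m'|\to0$ and $p_m\to f$ uniformly on $J$. The representation $K_g(x,y)=\int_0^1 g'(sx+(1-s)y)\,ds$, valid for $g\in C^1$ on the convex set $J$, gives $\sup_{J\times J}|K_f-K_{p_m}|\le\delta_m$. Consequently
\[
\big\|K^{\mathcal{A},\mathcal{A}}_{f}\circ H^A-K^{\mathcal{A},\mathcal{A}}_{p_m}\circ H^A\big\|\le n\,\delta_m\norm{H},
\]
using the crude bound of a Hadamard product by the Frobenius norm.

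\emph{Step 3 (exchange of limits; main obstacle).} The difficulty is controlling $\frac1t\big[(f-p_m)(A+tH)-(f-p_m)(A)\big]$ uniformly in $t$. Here I would use a fixed-dimension Lipschitz estimate for $g=f-p_m$. By the Frobenius-norm result of \cite{kitaneh1985} quoted in Sec.~\ref{sec:robustness-qsvt}, $\norm{g(X)-g(Y)}_F\le\sup_J|g'|\,\norm{X-Y}_F$ for Hermitian $X,Y$ with spectra in $J$. This yields
\[
\norm{g(A+tH)-g(A)}\le\sqrt{n}\,\delta_m\,|t|\,\norm{H}_F\le n\,\delta_m|t|\norm{H}.
\]
Write $\frac{f(A+tH)-f(A)}{t}-K_f^{\mathcal{A},\mathcal{A}}\circ H^A$ as the sum of three terms: the $g$-difference quotient, the polynomial difference quotient minus its derivative, and $K_{p_m}^{\mathcal{A},\mathcal{A}}\circ H^A-K_f^{\mathcal{A},\mathcal{A}}\circ H^A$. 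The first and third terms are at most $n\delta_m\norm{H}$ uniformly in small $t$. The middle term tends to $0$ as $t\to0$ by Step~1. Choosing $m$ large and then $t$ small gives the limit and proves the first displayed formula. The second form is the same identity rewritten in the eigenbasis of $A$, where $\Pi_i^AH\Pi_l^A$ has the single entry $H^A_{il}$.
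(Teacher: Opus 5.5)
Your proof is correct. The polynomial case via the first-order expansion of $(A+tH)^k$, the construction $p_m(x)=f(x_0)+\int_{x_0}^x r_m$ with $p_m'\to f'$ uniformly on a compact $J$ containing the spectrum in its interior, and the exchange of limits via the Frobenius-norm Lipschitz bound $\|g(X)-g(Y)\|_F\le \sup_J|g'|\,\|X-Y\|_F$ all go through. Your constants $n\delta_m\|H\|$ are loose but valid.

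The paper gives no argument of its own and only cites Bhatia's Theorem 5.3.1, and your scheme (polynomials first, then approximation of $f$ and $f'$) is essentially the standard textbook proof behind that citation. One bonus: your Step~3 estimates are uniform in the direction for small perturbations. So the same argument yields Fr\'echet differentiability directly, and you need not assume it.
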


\begin{proof}
The proof of this statement is given bellow \cite{bhatia_positive_2007}[Theorem 5.3.1]
\end{proof}

\begin{corollary}\label{cor:Daleckii-Krein_closed}
Let $f\in C^{1}([a,b])$, $A\in\mathcal{N}_{n}([a,b])$ and $H\in\mathcal{N}_{n}(\mathbb{R})$.
Also, let $\tilde{f}:\mathbb{R}\rightarrow\mathbb{C}$ be the canonical linear extension of $f$.
Then $\tilde{f}$ is Frechet differentiable on every $A\in\mathcal{N}_{n}([a,b])$ and its derivative is also
\begin{equation}
\mathfrak{D}\tilde{f}(A)[H]=K^{\mathcal{A},\mathcal{A}}_{f}\circ H^{A},
\end{equation}
but for all $H\in\mathcal{N}_{n}(\mathbb{R})$.
\end{corollary}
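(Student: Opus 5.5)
The plan is to reduce the corollary to Theorem~\ref{thm:Daleckii-Krein_open} by passing to an open interval on which $\tilde f$ is $C^1$. Here I read the canonical linear extension as
\begin{align*}
\tilde f(x)=\begin{cases} f(a)+f'(a)(x-a), & x<a,\\ f(x), & x\in[a,b],\\ f(b)+f'(b)(x-b), & x>b,\end{cases}
\end{align*}
where $f'(a)$ and $f'(b)$ are the one-sided derivatives, which exist because $f\in C^1([a,b])$.

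The first step is to check that $\tilde f\in C^1(\mathbb{R})$. Away from $a$ and $b$ this is immediate. At $x=a$, $\tilde f$ is continuous, its left derivative is $f'(a)$ by construction, and its right derivative is $f'(a)$ by assumption. Moreover $\tilde f'(x)\to f'(a)$ from both sides, since it is constant on the left and $f'$ is continuous on $[a,b]$. The endpoint $b$ is handled the same way.

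Next I would fix any open interval $\mathcal{I}\supset[a,b]$, for instance $\mathcal{I}=\mathbb{R}$ or $(a-1,b+1)$. The spectrum of $A$ lies in $[a,b]\subset\mathcal{I}$. By Weyl's inequality, $A+tH$ has spectrum in $\mathcal{I}$ for all sufficiently small $|t|$ and any $H$ in a small norm ball. So Theorem~\ref{thm:Daleckii-Krein_open}, applied to $\tilde f\in C^1(\mathcal{I})$, gives Fréchet differentiability of $\tilde f$ at $A$ with
\begin{align*}
\mathfrak{D}\tilde f(A)[H]=K^{\mathcal{A},\mathcal{A}}_{\tilde f}\circ H^{A}.
\end{align*}

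The remaining point is to identify $K_{\tilde f}$ with $K_f$ at the eigenvalues of $A$. For $\lambda_i\neq\lambda_j$ in $[a,b]$ the two divided differences agree, because $\tilde f=f$ there. For $\lambda_i=\lambda_j$ the diagonal entry is $\tilde f'(\lambda_i)=f'(\lambda_i)$, using the one-sided derivative when $\lambda_i$ is an endpoint, which the first step shows coincides with $\tilde f'$. Hence $K^{\mathcal{A},\mathcal{A}}_{\tilde f}=K^{\mathcal{A},\mathcal{A}}_{f}$.

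The main obstacle is purely formal: Theorem~\ref{thm:Daleckii-Krein_open} is phrased with directions $H\in\mathcal{N}_n(\mathcal{I})$, whereas the corollary allows every Hermitian $H$. I would handle this in one of two ways. The first is to take $\mathcal{I}=\mathbb{R}$ directly, which is allowed since $\tilde f\in C^1(\mathbb{R})$. The second is to note that the Fréchet derivative is a linear map, so the identity for small $H$ extends to all $H\in\mathcal{N}_n(\mathbb{R})$ by homogeneity, writing $H=s\,(H/s)$ with $s$ large. Either route yields the stated formula for all Hermitian $H$.
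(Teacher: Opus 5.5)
Your proposal is correct and follows the paper's proof, which consists of the single line ``apply Theorem~\ref{thm:Daleckii-Krein_open} to $\tilde f\in C^1(\mathbb{R})$,'' i.e.\ exactly your first route with $\mathcal{I}=\mathbb{R}$. Your checks that the affine extension is $C^1$ and that $K_{\tilde f}$ coincides with $K_f$ on $[a,b]^2$ (with one-sided derivatives at the endpoints) spell out steps the paper leaves implicit, and the bounded-interval/homogeneity alternative is unnecessary.
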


\begin{proof}
Just apply the previous theorem to $\tilde{f}\in C^{1}(\mathbb{R})$.
\end{proof}

\begin{theorem}\label{thm:Daleckii-Krein_torus}
Let $f\in C^{1}(\mathbb{T})$, $U\in\mathcal{N}_{n}(\mathbb{T})$ and $H\in\mathcal{N}_{n}(\mathbb{R})$.
Also, let $U_{t}=e^{itH}U$ with $U^{\prime}_{0}=iHU$ and let $(\zeta_{i},\Pi^{U}_{i})_{i\in[n]}$ be the spectral decomposition for $U$.
Then
\begin{equation}
\mathfrak{D}f(U)[U^{\prime}_{0}]=\sum_{i,j\in[n]}K_{f}(\zeta_{i},\zeta_{j})\,(\Pi_{i}(iHU)\,\Pi_{j}).
\end{equation}
Or yet, if we denote $W$ to be the matrix $U^{\prime}_{0}$ in the eigenbasis of $U$ and $K_{f,n}(\mathsf{U},\mathsf{U})$ to be the matrix with entries $K_{f}(\zeta_{i},\zeta_{j})$,
\begin{equation}
\mathfrak{D}f(U)[U^{\prime}_{0}]=K_{f,n}(\mathsf{U},\mathsf{U})\circ W.
\end{equation}
\end{theorem}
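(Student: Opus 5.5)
The plan is to prove the formula first for Laurent monomials $z^s$, extend it by linearity to trigonometric polynomials, and then pass to general $f\in C^1(\mathbb{T})$ by a density argument. Throughout, $U_t=e^{itH}U$ is a smooth curve of unitaries with $U_0=U$ and $U^\prime_0=iHU$. Both sides of the claimed identity are linear in $f$, so for Laurent polynomials it suffices to treat a single monomial.

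\textbf{Monomials.} For $s=0$ both sides vanish, since $K_1\equiv 0$.

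For $s>0$, the product rule gives
\[
\frac{d}{dt}U_t^s\Big|_{t=0}=\sum_{r=0}^{s-1}U^{s-1-r}\,U^\prime_0\,U^{r}.
\]
Sandwiching between $\Pi_i$ and $\Pi_j$ and using $\Pi_iU^m=\zeta_i^m\Pi_i$ and $U^m\Pi_j=\zeta_j^m\Pi_j$ gives
\[
\Pi_i\Big(\frac{d}{dt}U_t^s\Big|_{t=0}\Big)\Pi_j=\Big(\sum_{r=0}^{s-1}\zeta_i^{s-1-r}\zeta_j^{r}\Big)\,\Pi_iU^\prime_0\Pi_j .
\]
The scalar factor is exactly the telescopic expression for $K_{z^s}(\zeta_i,\zeta_j)$ used in the proof of Lemma~\ref{thm:haagerup-trig-poly}. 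It equals $s\zeta_i^{s-1}$ when $\zeta_i=\zeta_j$, matching the diagonal convention $K_f(x,x)=f^\prime(x)$. Summing over $i,j$ and using $\sum_i\Pi_i=\mathds{1}$ gives the claim.

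For $s<0$, write $U_t^s=(U_t^{-1})^{|s|}$ and use $\frac{d}{dt}U_t^{-1}=-U^{-1}U^\prime_0U^{-1}$. The same sandwiching then produces the negative-exponent telescopic sum $-\sum_{r=0}^{-s-1}\zeta_i^{s+r}\zeta_j^{-r-1}$, which is again $K_{z^s}(\zeta_i,\zeta_j)$. Writing the result in the eigenbasis of $U$ gives the Schur-product form $K_{f,n}(\mathsf{U},\mathsf{U})\circ W$.

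\textbf{Density step.} For general $f\in C^1(\mathbb{T})$, I would take trigonometric polynomials $f_m$ such that $f_m\to f$ and $\partial_\theta f_m\to\partial_\theta f$ uniformly on $\mathbb{T}$. Fejér means do this, since they commute with $\partial_\theta$ and $\partial_\theta f$ is continuous. Writing $F(\theta)=f(e^{i\theta})$, the divided difference can be expressed as an integral of $\partial_\theta F$ along the arc joining the two points, divided by the chord. This shows that $K_{f_m}\to K_f$ uniformly on $\mathbb{T}\times\mathbb{T}$, with the diagonal included because $f^\prime(z)=-iz^{-1}\partial_\theta F$. By the monomial case, for every $t$,
\[
\frac{d}{dt}f_m(U_t)=K_{f_m}(\mathsf{U}_t,\mathsf{U}_t)\circ W_t
\]
in the eigenbasis of $U_t$, where $W_t$ is $U^\prime_t=iHU_t$ written in the eigenbasis of $U_t$.

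\textbf{Main obstacle.} The hardest step is justifying the exchange of the limit $m\to\infty$ with $d/dt$ at $t=0$, given that the eigenbasis of $U_t$ may be discontinuous in $t$. I would avoid eigenvectors altogether by bounding
\[
\big\|K_{f_m}\circ W_t-K_{f}\circ W_t\big\|\le \big\|K_{f_m}-K_f\big\|_{\rm Schur,\,n}\,\|H\|.
\]
The fixed-dimension Schur norm is at most $n\sup|K_{f_m}-K_f|$, which tends to zero uniformly in $t$. Hence $\frac{d}{dt}f_m(U_t)$ converges uniformly in $t$ near $0$. Since also $f_m(U_t)\to f(U_t)$ uniformly, the standard theorem on uniform convergence of derivatives shows that $t\mapsto f(U_t)$ is differentiable at $0$. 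Its derivative is the limit, namely $K_{f,n}(\mathsf{U},\mathsf{U})\circ W$, which is the asserted formula.
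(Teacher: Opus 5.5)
Your proof is correct. It differs from the paper mainly in that the paper gives no proof of Theorem~\ref{thm:Daleckii-Krein_torus}. It states the formula and defers to the literature on derivatives of operator functions on the circle, as it does for the interval version, Theorem~\ref{thm:Daleckii-Krein_open}, which is cited to Bhatia. It then uses the formula only in sub-case (i) of Lemma~\ref{lem:fund_lemma}, where $\mathfrak{D}f(U_0)[U_0^\prime]$ is read as $\lim_{t\to0}(f(U_t)-f(U_0))/t$. That is exactly the quantity you show exists and compute.

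Your route is self-contained and elementary:
\begin{itemize}
\item The Laurent-monomial case, via the product rule and $\frac{d}{dt}U_t^{-1}=-U^{-1}U_0^\prime U^{-1}$, reproduces the telescoping identities of Lemma~\ref{thm:haagerup-trig-poly}, including the diagonal value $s\zeta^{s-1}$.
\item Fej\'er means give $C^1$ approximation by Laurent polynomials.
\item For the kernel convergence you left implicit, the chord--arc comparison $|e^{\ii\alpha}-e^{\ii\beta}|\ge\frac{2}{\pi}|\alpha-\beta|$ for $|\alpha-\beta|\le\pi$ gives $\sup|K_{f_m}-K_f|\le\frac{\pi}{2}\|\partial_\theta(f_m-f)\|_\infty$.
\item The possible discontinuity of the eigenbasis of $U_t$ is handled by a basis-free estimate taken pointwise in $t$. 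Your constant can be sharpened, since $\|M\circ Z\|\le\|M\circ Z\|_{F}\le\sqrt{n}\,\max_{ij}|M_{ij}|\,\|Z\|$, but any $n$-dependent constant suffices.
\item The uniform-convergence-of-derivatives theorem then closes the argument. It holds for matrix-valued functions via the mean-value inequality.
\end{itemize}

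The cited literature buys generality: Fr\'echet differentiability, higher-order derivatives, and statements not tied to a fixed dimension. Your argument, because of the $n$-dependent Schur bound, yields only the derivative along the specific curve $e^{\ii tH}U$ in fixed dimension $n$. That is precisely what the paper needs, so your proof would make this part of the appendix self-contained.
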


\begin{lemma}[Fixed dimension OL and Schur multiplier norm]\label{lem:fund_lemma}
Let $f\in C^{1}(\mathcal{D})$, and let $\mathsf{X},\mathsf{Y}\in\mathcal{D}^d$ for any $d\in\mathbb{N}$.
Also, for all $d\in\mathbb{N}$, let
\begin{equation}\label{eq:discrete-schur}
S^{(d)}[K_f]\coloneq\sup_{\mathsf{X},\mathsf{Y}\in\mathcal{D}^d}\underset{\norm{Z}=1}{\sup_{Z\in \mathbb{C}^{d\times d}}}\norm{K_{f,d}(\mathsf{X},\mathsf{Y})\circ Z},
\end{equation}
where $K_{f,d}(\mathsf{X},\mathsf{Y})$ is the $d$-dimensional matrix with coefficients $K_{f}(x_{i},y_{j})$.
Then the following holds
\begin{enumerate}[label=(\arabic*)]
    \item $L^{(d)}[f]\leq S^{(d)}[K_f]$ for all $d\in\mathbb{N}$.
    \item $S^{(d)}[K_f]\leq L^{(2d)}[f]$ for all $d\in\mathbb{N}$.
\end{enumerate}
\end{lemma}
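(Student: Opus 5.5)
For (1), I will write $f(A)-f(B)$ as a "double operator integral" in the eigenbases of $A$ and $B$, which reduces it to a Schur product with $K_f$. Let $A=\sum_i x_i\,\Pi^A_i$ and $B=\sum_j y_j\,\Pi^B_j$ be spectral decompositions with $\mathsf{X}=(x_i)$, $\mathsf{Y}=(y_j)\in\mathcal{D}^d$, with eigenvectors forming orthonormal bases $\{\ket{a_i}\}$ and $\{\ket{b_j}\}$. Then
\begin{align*}
\bra{a_i}\big(f(A)-f(B)\big)\ket{b_j}=\big(f(x_i)-f(y_j)\big)\braket{a_i}{b_j}=K_f(x_i,y_j)\,\bra{a_i}(A-B)\ket{b_j}.
\end{align*}
This uses $f(x)-f(y)=K_f(x,y)(x-y)$, which also holds when $x=y$ since both sides vanish. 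Hence, in the mixed basis (rows indexed by $\ket{a_i}$, columns by $\ket{b_j}$), the matrix of $f(A)-f(B)$ is $K_{f,d}(\mathsf{X},\mathsf{Y})\circ Z_0$, where $Z_0$ is the matrix of $A-B$. The spectral norm is invariant under these unitary changes of basis. Normalizing $Z=Z_0/\norm{A-B}$ therefore gives $\norm{f(A)-f(B)}\le S^{(d)}[K_f]\,\norm{A-B}$, and taking the supremum yields $L^{(d)}[f]\le S^{(d)}[K_f]$. This argument covers both $[-1,1]$ and $\mathbb{T}$ at once, since only normality and spectra in $\mathcal{D}$ are used.

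For (2), I will use the standard $2\times2$ block trick combined with the Daleckii--Krein formula (Theorem~\ref{thm:Daleckii-Krein_open}, Corollary~\ref{cor:Daleckii-Krein_closed}, and Theorem~\ref{thm:Daleckii-Krein_torus}). Fix $\mathsf{X},\mathsf{Y}\in\mathcal{D}^d$ and $Z$ with $\norm{Z}=1$, and set $C=\mathrm{diag}(\mathsf{X})\oplus\mathrm{diag}(\mathsf{Y})\in\mathcal{L}_\mathcal{D}(\mathcal{H}_{2d})$.

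\textbf{Interval case.} Take the Hermitian direction $H=\begin{psmallmatrix}0&Z\\Z^\dagger&0\end{psmallmatrix}$, which has $\norm{H}=1$. The Fréchet derivative $\mathfrak{D}f(C)[H]$ is the Schur product of $K_{f,2d}$ on the spectrum of $C$ with $H$. Its off-diagonal block is exactly $K_{f,d}(\mathsf{X},\mathsf{Y})\circ Z$, so $\norm{K_{f,d}(\mathsf{X},\mathsf{Y})\circ Z}\le\norm{\mathfrak{D}f(C)[H]}$. Writing $\mathfrak{D}f(C)[H]=\lim_{t\to0}\big(f(C+tH)-f(C)\big)/t$ then bounds this by $L^{(2d)}[f]$, provided $C+tH$ stays admissible.

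\textbf{Obstacle: admissibility at the endpoints.} The main obstacle is that $C+tH$ may leave $\mathcal{D}$ when some $x_i$ or $y_j$ equals $\pm1$. I will handle this in one of two ways. The first is to replace $C$ by $(1-\eta)C$, which pushes the spectrum into the interior; continuity of $K_f$ (since $f\in C^1$) and a limit $\eta\to0$ then recover the bound. The second is to use the curve $t\mapsto\cos(tH')$-type perturbations that stay inside $[-1,1]$. I prefer the first since it is simpler.

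\textbf{Circle case.} I will use Theorem~\ref{thm:Daleckii-Krein_torus} with the curve $U_t=e^{itH}C$, which stays unitary, so admissibility is automatic. The derivative direction is $iHC$, and $\norm{U_t-C}=|t|\,\norm{H}+o(t)$. Choosing $H=\begin{psmallmatrix}0&Z\,\mathrm{diag}(\bar{\mathsf{Y}})\\ \cdot&0\end{psmallmatrix}$, made Hermitian by filling the lower block with the adjoint, makes the off-diagonal block of $W$ equal to $iZ$ (up to the unimodular phases of $\mathsf{Y}$). This is the same conclusion as in the interval case. Taking suprema over $\mathsf{X}$, $\mathsf{Y}$, and $Z$ then gives $S^{(d)}\le L^{(2d)}$.
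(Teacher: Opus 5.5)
Your proposal is correct and follows essentially the same route as the paper. Part (1) is the identity $U^\dagger\big(f(A)-f(B)\big)V=K_{f,d}(\mathsf{X},\mathsf{Y})\circ U^\dagger(A-B)V$ in the mixed eigenbasis. Part (2) uses the $2d$-dimensional block dilation with the Daleckii--Krein formula, via $G+tH$ on the interval and $e^{itH}G$ on the circle. Your endpoint fix, rescaling $C\mapsto(1-\eta)C$ and using continuity of $K_f$ on $[-1,1]^2$, is a cleaner concrete instance of the paper's approximation of the spectra by interior points.
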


\begin{proof}
To prove the lemma, we prove points $(1)$ and $(2)$ separately.

\begin{paragraph}{\textbf{Proof of point $(1)$:}}
For a given $d\in\mathbb{N}$, let $A=UD_{\mathsf{X}}U^{\dagger}$ and $B=VD_{\mathsf{Y}}V^{\dagger}$ where $D_{\mathsf{S}}=diag(s_{1},\dots,s_{d})$ for any $\mathsf{S}\in\mathcal{D}^{d}$, and also let $W=U^{\dagger}V$.
For $A,B$ as above, we have that
$$U^{\dagger}(A-B)V=D_{\mathsf{X}}W-WD_{\mathsf{Y}}.$$
Also, by the spectral theorem
$$U^{\dagger}(f(A)-f(B))V=f(D_{\mathsf{X}})W-Wf(D_{\mathsf{Y}}).$$
Whenever $x_{i}\neq y_{j}$ we have that,
\begin{equation*}
\begin{split}
[f(D_{\mathsf{X}})W-Wf(D_{\mathsf{Y}})]_{i,j}&=(f(x_{i})-f(y_{j}))W_{i,j}\\
=&K_{f}(x_{i},y_{j})(x_{i}-y_{j})W_{i,j}\\
=&[K_{f,d}(\mathsf{X},\mathsf{Y})\circ(D_{\mathsf{X}}W-WD_{\mathsf{Y}})]_{i,j}.
\end{split}
\end{equation*}
Notice that when $x_{i}=y_{j}$, the first and last expressions in the previous equation are $0$ regardless of the value of $K_{f}(x_{i},y_{j})$, making it true also in this case, and allowing us to write
$$f(D_{\mathsf{X}})W-Wf(D_{\mathsf{Y}})=K_{f,d}(\mathsf{X},\mathsf{Y})\circ(D_{\mathsf{X}}W-WD_{\mathsf{Y}}).$$
Then, from the multiplicative property of the operator norm and the fact that unitary matrices have unit norm, we get
\begin{equation*}
\begin{split}
\frac{\norm{f(A)-f(B)}}{\norm{A-B}}=\norm{K_{f,d}(\mathsf{X},\mathsf{Y})\circ\frac{D_{\mathsf{X}}W-WD_{\mathsf{Y}}}{\norm{D_{\mathsf{X}}W-WD_{\mathsf{Y}}}}}.
\end{split}
\end{equation*}
Now, notice that taking the supremum of the right-hand side over the set of all $A,B\in\mathcal{L}_{\mathcal{D}}(\mathcal{H}_d)$ with $A\neq B$ is the same as taking the supremum of the left-hand side over all of $\mathcal{D}^{d}$ and all of the set $\mathcal{R}(\mathsf{X},\mathsf{Y})$, where for every $\mathsf{X},\mathsf{Y}\in\mathcal{D}$
\begin{small}
$$\mathcal{R}(\mathsf{X},\mathsf{Y})\coloneq\left\{\left.\frac{D_{\mathsf{X}}W-WD_{\mathsf{Y}}}{\norm{D_{\mathsf{X}}W-WD_{\mathsf{Y}}}}\right|W\in\mathcal{U}_{d}(\mathbb{C}),D_{\mathsf{X}}W\neq WD_{\mathsf{Y}}\right\}$$
\end{small}
is some subset of the set $\{Z\in \mathbb{C}^{d\times d}|\norm{Z}=1\}$.
Therefore, since the supremum defining $L^{(d)}[f]$ is over a restricted subset of the set of $\{Z\in \mathbb{C}^{d\times d}|\norm{Z}=1\}$, we have $L^{(d)}[f]\leq S^{(d)}[K_f]$ as we wanted to show.
\end{paragraph}

\begin{paragraph}{\textbf{Proof of point $(2)$:}}
For a given $d\in\mathbb{N}$ consider diagonal matrices $D_{\mathsf{X}}$ and $D_{\mathsf{Y}}$, with $\mathsf{X},\mathsf{Y}\in\mathcal{D}^d$.
To prove $(2)$ we will need to consider two cases:
\begin{enumerate}[label=(\roman*)]
    \item When $\mathcal{D}=\mathbb{T}$.
    \item When $\mathcal{D}=[a,b]$ for $a,b\in\mathbb{R}$ with $a<b$.
\end{enumerate}

\begin{subparagraph}{\textbf{Proof of sub-case $(i)$}}:
Let $\tilde{Z}\in \mathbb{C}^{d\times d}$ be arbitrary, and $G,H\in\mathcal{L}_{\mathcal{D}}(\mathcal{H}_{2d})$ be given by
$$G=
\begin{bmatrix}
D_{\mathsf{X}} & 0\\
0 & D_{\mathsf{Y}}
\end{bmatrix}
\text{ and }
H=
\begin{bmatrix}
0 & -i\tilde{Z}D_{\mathsf{Y}}^{\dagger}\\
iD_{\mathsf{Y}}\tilde{Z}^{\dagger} & 0
\end{bmatrix}
$$
Notice that $D_{\mathsf{X}}$, $D_{\mathsf{Y}}$ and $G$ are all unitary, while $H$ is Hermitian.
The spectrum of $G$ is $\mathsf{G}=\mathsf{X}\sqcup\mathsf{Y}$ and $\norm{H}=\norm{\tilde{Z}}$ as it is an isometric, Hermitian dilation of $-i\tilde{Z}D_{\mathsf{Y}}^{\dagger}$.
Now, let $U_{t}\coloneq e^{itH}G$.
Then, by the definition of $L^{(2d)}[f]$, we have in particular that for all $t\in\mathbb{R}$,
$$\norm{f(U_{t})-f(U_{0})}\leq L^{(2d)}[f]\norm{U_{t}-U_{0}}.$$
Therefore, by taking the limit $t\rightarrow0$ on both sides we get
$$\lim_{t\rightarrow0}\norm{\frac{f(U_{t})-f(U_{0})}{t}}\leq L^{(2d)}[f]\lim_{t\rightarrow0}\,\norm{\frac{U_{t}-U_{0}}{t}}.$$
The left-hand side is by definition $U^{\prime}_{0}$, and a simple computation give us
$$U^{\prime}_{0}=iHG=
\begin{bmatrix}
0 & \tilde{Z}\\
-D_{\mathsf{Y}}\tilde{Z}^{\dagger}D_{\mathsf{X}} & 0
\end{bmatrix}
,$$
from which we also get $\norm{U^{\prime}_{0}}=\norm{\tilde{Z}}$.
The limit on the right-hand side is by definition the {Gateaux derivative of $f$} at $U_{0}$ in the direction of $U^{\prime}_{0}$, and since $U_{0}=G$ is diagonal, Thm.\ref{thm:Daleckii-Krein_torus} gives us
\begin{align}
\mathfrak{D}f(U_{0})[U^{\prime}_{0}]&=K_{f,2d}(\mathsf{G},\mathsf{G})\circ U^{\prime}_{0}\notag\\
&=
\begin{bmatrix}
0 & K_{f,d}(\mathsf{X},\mathsf{Y})\circ \tilde{Z}\\
-K_{f,d}(\mathsf{X},\mathsf{Y})\circ(D_{\mathsf{Y}}\tilde{Z}^{\dagger}D_{\mathsf{X}}) & 0
\end{bmatrix}    
\end{align}
Using the dominance of the operator norm of a matrix over its sub-matrices we get
$$\norm{K_{f,d}(\mathsf{X},\mathsf{Y})\circ \tilde{Z}}\leq\norm{\mathfrak{D}f(U_{0})[U^{\prime}_{0}]}\leq L^{(2d)}[f]\norm{\tilde{Z}}.$$
for all $\tilde{Z}\in \mathbb{C}^{d\times d}$ and $\mathsf{X},\mathsf{Y}\in\mathcal{D}$.
Then, by redefining $Z=\tilde{Z}/\norm{\tilde{Z}}$ and taking the supremum of over all $\mathsf{X},\mathsf{Y}\in\mathcal{D}$ and all $Z\in \mathbb{C}^{d\times d}$ with $\norm{Z}=1$ we obtain $S^{(d)}[K_f]\leq L^{(2d)}[f]$. 
\end{subparagraph}

\begin{subparagraph}{\textbf{Proof of sub-case $(ii)$}}
Now, let $\tilde{Z}\in \mathbb{C}^{d\times d}$ be arbitrary and let $G,H\in\mathcal{L}_{\mathcal{D}}(\mathcal{H}_{2d})$ be given by
$$G=\begin{bmatrix}
D_{\mathsf{X}} & 0\\
0 & D_{\mathsf{Y}}
\end{bmatrix}\text{ and }
H=\begin{bmatrix}
0 & \tilde{Z}\\
\tilde{Z}^{\dagger} & 0
\end{bmatrix}$$ 
Again, let the spectrum of $G$ be denote by $\mathsf{G}=\mathsf{X}\sqcup\mathsf{Y}$ and notice that $\norm{H}=\norm{\tilde{Z}}$, as it is an isometric Hermitian dilation of $\tilde{Z}$.
For now, assume that every element of $\mathsf{X}$ and $\mathsf{Y}$ are in $(a,b)$, in other words, there is a $\varepsilon>0$ distance from the set union of all eigenvalues of $A$ and $B$ to the set $\{a,b\}$.
Then, consider the operator $G+tH$.
By Weyl's spectral perturbation theorem, the set-distance between the spectrum of $G$ and the spectrum of $G+tH$ is $\abs{t}\norm{H}=\abs{t}\norm{\tilde{Z}}$ for all $t\in\mathbb{R}$.
Therefore, if for every $\tilde{Z}\in \mathbb{C}^{d\times d}$ we impose $\abs{t}<\varepsilon/\norm{\tilde{Z}}$, we have $(G+tH)\in\mathcal{L}_{n}((a,b))$ and the definition of $L^{(2n)}_{f}((a,b))$ implies the particular inequality
$$\norm{f(G+tH)-f(G)}\leq\abs{t}L^{(2d)}_{f}((a,b))\norm{\tilde{Z}}.$$
By passing to the limit $t\rightarrow0$ we get 
$$\norm{\lim_{t\rightarrow0}\frac{f(G+tH)-f(G)}{t}}\leq L^{(2d)}_{f}((a,b))\norm{\tilde{Z}}.$$
Again, since $f$ is $C^{1}((a,b))$, Thm.\ref{thm:Daleckii-Krein_open} gives
\begin{align*}
    \mathfrak{D}f(G)[H]&=K_{f,2d}(\mathsf{G},\mathsf{G})\circ H\\
    &=
\begin{bmatrix}
0 & K_{f,d}(\mathsf{X},\mathsf{Y})\circ \tilde{Z}\\
K_{f,d}(\mathsf{X},\mathsf{Y})\circ \tilde{Z}^{\dagger} & 0
\end{bmatrix}
.
\end{align*}
As in the previous case, the dominance of the operator norm of a matrix over its sub-matrices implies that $$\norm{K_{f,d}(\mathsf{X},\mathsf{Y})\circ \tilde{Z}}\leq\norm{\mathfrak{D}f(G)[H]}\leq L^{(2d)}[f]\norm{\tilde{Z}},$$
for every $\tilde{Z}\in \mathbb{C}^{d\times d}$ and for all $\mathsf{X},\mathsf{Y}\in(a,b)^{d}$.
To cover the endpoints, notice that by definition $L^{(2n)}_{f}((a,b))\leq L^{(2n)}_{f}([a,b])$.
Also, since $f\in C^{1}([a,b])$, its linear canonical extension $\tilde{f}$ must coincide with $f$ in all points on $[a,b]$, and we have that $K_{f}=K_{\tilde{f}}|_{[a,b]^{2}}$, and in particular that $L^{(2n)}_{f}((a,b))=L^{(2n)}_{\tilde{f}}((a,b))$.
Therefore, we must also have
$$\norm{\tilde{f}(G+tH)-\tilde{f}(G)}\leq\abs{t}L^{(2n)}_{f}((a,b))\norm{\tilde{Z}}$$
Now, if the spectrum of either $A$ or $B$ contains one or both of the endpoints $a,b\in\mathbb{R}$, we can construct sequences of matrices; $\{A_{k}\}_{k\in\mathbb{N}}$ converging to $A$ and $\{B_{l}\}_{l\in\mathbb{N}}$ converging to $B$, both in operator norm, such that, for all $k,l\in\mathbb{N}$, their spectra lies entirely in $(a,b)$.
Using the previous conclusion, this implies that
$$\norm{K^{\mathsf{X}_{k},\mathsf{Y}_{l}}_{f}\circ \tilde{Z}}\leq L^{(2n)}_{f}((a,b))\norm{\tilde{Z}}\leq L^{(2n)}_{f}([a,b])\norm{\tilde{Z}}$$
for all $k,l\in\mathbb{N}$.
Since by Cor.\ref{cor:Daleckii-Krein_closed} the canonical linear extension $\tilde{f}$ is also Frechet differentiable, the limits of all such sequences must coincide with the Frechet derivative of $\tilde{f}$ 
$$\norm{K_{f,d}(\mathsf{X},\mathsf{Y})\circ \tilde{Z}}=\lim_{k,l\rightarrow\infty}\norm{K^{\mathsf{X}_{k},\mathsf{Y}_{l}}_{f}\circ \tilde{Z}}\leq L^{(2n)}_{f}([a,b])\norm{\tilde{Z}},$$
for every $\tilde{Z}\in \mathbb{C}^{d\times d}$ and for all $\mathsf{X},\mathsf{Y}\subset[a,b]$.
Then, by the change-of-variables $Z=\tilde{Z}/\norm{\tilde{Z}}$, and taking the supremum of over all $\mathsf{X},\mathsf{Y}\in(a,b)^{d}$ and all $Z\in \mathbb{C}^{d\times d}$ with $\norm{Z}=1$ we obtain $S^{(d)}[K_{f}]\leq L^{(2d)}[f]$. 
\end{subparagraph}
\end{paragraph}
\end{proof}

\begin{proof}[Proof of Lemma \ref{lem:lips-schur}]
Using Lemma \ref{lem:fund_lemma} we have for every $d\in\mathbb{N}$
\begin{equation*}\label{eq:Lipschitz_sandwich}
L^{(d)}[f]\leq S^{(d)}[K_f]\leq L^{(2d)}[f].
\end{equation*}
Taking the supremum over $\mathbb{N}$ gives us
\begin{equation*}\label{eq:supreme_Lipschitz_sandwich}
L_\star[f]=\sup_{d\in\mathbb{N}}L^{(d)}[f]\leq\sup_{d\in\mathbb{N}} S^{(d)}[K_f]\leq\sup_{d\in\mathbb{N}} L^{(2d)}[f].
\end{equation*}
Now, since $\{L^{(2d)}[f]\}_{d\in\mathbb{N}}\subseteq\{L^{(d)}[f]\}_{d\in\mathbb{N}}$ we have that
\begin{equation*}
\sup_{d\in\mathbb{N}}L^{(2d)}[f]\leq\sup_{d\in\mathbb{N}} L^{(d)}[f]=L_\star[f],
\end{equation*}
forcing equality of all four quantities, and concluding the proof.


\end{proof}

\section{Proof of the Theorem \ref{thm:OL-haagerup}}
\label{app:proof-main}

To prove Thm.~\ref{thm:OL-haagerup}, which connects $L_\star[f]$ with the Haagerup norm of $K_f$, we begin by stating the equality between the Schur multiplier norm of $K_f$ and its Haagerup norm on finite sets of points in the domain (Lemma~\ref{lem:discrete-schur-haagerup} below,  and \cite{Pisier2001}[Thm. 5.1]). Then we use the compactness arguments and the equivalence between $L_\star$ and the Schur multiplier norm from Lemma~\ref{lem:lips-schur} to conclude the proof. Because for the cases of interest Lemma~\ref{lem:haagerup_poly} holds, we assume that the divided-difference kernel admits a Haagerup
factorization. More generally, every scalar kernel in two variables
with finite Schur multiplier norm admits such a factorization (see  \cite[Theorem~3.1]{aleksandrov2024}). 

Before proceeding with the proof, we build intuition for the relation between the OL constant of a function and the Haagerup factorization of its divided difference kernel. In the next section, we prove that the right-hand side of Eq.~\eqref{eq:Lstar} is an upper bound on $L_\star[f]$ using only familiar linear algebra tools.

\subsection{Upper bound on $L_\star$ using Haagerup factorization}

Using the definition of the OL constant in Eq. \eqref{eq:op-lip} directly, here we state and prove the following particular case of Thm. \ref{thm:OL-haagerup}:

\begin{proposition}
    Let 
$\mathcal{D}\subset \mathbb{C}$ be a compact domain and $f:\mathcal{D}\to\mathbb{C}$ be a bounded function whose divided difference 
kernel $K_f$ is Haagerup-factorizable as in Eq.~\eqref{eq:haagerup}. 
Then $f$ is operator-Lipschitz on $\LD$ 
with optimal constant 
\begin{align}\label{eq:Lstar}
L_\star[f] \leq 
\inf_{\mathbf{p},\mathbf{q}} \left\{\sup_{x\in\mathcal{D}} \big\|\mathbf{p}(x)\big\|_2 \cdot \sup_{y\in\mathcal{D}} \big\|\mathbf{q}(y)\big\|_2 \right\},
\end{align}
where the infimum is over all $\mathbf{p},\mathbf{q}$ such that Eq.~\eqref{eq:haagerup} holds. 
\end{proposition}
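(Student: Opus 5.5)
We prove the upper bound directly from the definition in Eq.~\eqref{eq:Lstardef}, without invoking Lemma~\ref{lem:lips-schur}. Fix a dimension $d$, operators $A,B\in\LD$ with $A\neq B$, and one admissible factorization $K_f(x,y)=\sum_k p_k(x)q_k^*(y)$. Set $a\coloneq\sup_{x}\norm{\mathbf{p}(x)}_2$ and $b\coloneq\sup_{y}\norm{\mathbf{q}(y)}_2$. It suffices to show $\norm{f(A)-f(B)}\le ab\,\norm{A-B}$. Taking the supremum over $A,B,d$ and then the infimum over factorizations gives the claim, and also shows that $f$ is operator-Lipschitz.

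\textbf{Step 1: reduce to a Schur product.} Diagonalize $A=\sum_i x_i\Pi_i$ and $B=\sum_j y_j\Pi'_j$, using rank-one spectral projectors $\Pi_i=\ket{u_i}\!\bra{u_i}$ and $\Pi'_j=\ket{v_j}\!\bra{v_j}$ (this is possible because both operators are normal). We use the identity
\begin{align*}
f(A)-f(B)=\sum_{i,j}K_f(x_i,y_j)\,\Pi_i(A-B)\Pi'_j .
\end{align*}
To verify it, note that $\Pi_i(A-B)\Pi'_j=(x_i-y_j)\Pi_i\Pi'_j$ and $\Pi_i(f(A)-f(B))\Pi'_j=(f(x_i)-f(y_j))\Pi_i\Pi'_j$. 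When $x_i=y_j$, both sides of the $(i,j)$ term vanish whatever value $K_f$ takes. This is the same computation as point (1) of Lemma~\ref{lem:fund_lemma}.

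\textbf{Step 2: insert the factorization.} Define $P_k\coloneq\sum_i p_k(x_i)\Pi_i=p_k(A)$ and $Q_k\coloneq\sum_j q_k^*(y_j)\Pi'_j$. Then Step 1 becomes
\begin{align*}
f(A)-f(B)=\sum_k P_k\,(A-B)\,Q_k .
\end{align*}
Because the sum over $k$ may be infinite, we first work with a finite truncation over $k$. The full statement then follows by a limiting argument. In finite dimension $d$ the relevant sums converge absolutely in operator norm, since $\sum_k\norm{\mathbf{p}}\norm{\mathbf{q}}$-type bounds control the tails entrywise on the finite spectra.

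\textbf{Step 3: bound the sum.} This is the key step and the main obstacle: we need the inequality $\norm{\sum_k P_k X Q_k}\le\norm{\sum_k P_kP_k^\dagger}^{1/2}\norm{X}\norm{\sum_k Q_k^\dagger Q_k}^{1/2}$. To prove it, form the row operator $R=[P_1,P_2,\dots]$ and the column operator $C=[Q_1;Q_2;\dots]$. Then $\sum_k P_kXQ_k=R\,(\mathds{1}\otimes X)\,C$, which gives the inequality immediately. Next we evaluate the two factors:
\begin{align*}
\sum_k P_kP_k^\dagger&=\sum_i\norm{\mathbf{p}(x_i)}_2^2\,\Pi_i\preceq a^2\mathds{1},\\
\sum_k Q_k^\dagger Q_k&=\sum_j\norm{\mathbf{q}(y_j)}_2^2\,\Pi'_j\preceq b^2\mathds{1}.
\end{align*}
Both computations use orthogonality of the spectral projectors. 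Hence $\norm{f(A)-f(B)}\le ab\norm{A-B}$. The remaining care points are the conjugation conventions ($q_k^*$ versus $q_k$, which do not affect the norms) and the truncation limit in Step 2; compactness of $\mathcal{D}$ is not needed for this direction.
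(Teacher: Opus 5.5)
Your proof is correct and follows essentially the same route as the paper: spectral decomposition of the normal operators, insertion of the Haagerup factorization to get $f(A)-f(B)=\sum_k P_k(A-B)Q_k$, and a Cauchy--Schwarz estimate (which you phrase as the row/column operator inequality) that uses orthogonality of the spectral projectors to reduce the factor norms to $\max_i\|\mathbf{p}(x_i)\|_2$ and $\max_j\|\mathbf{q}(y_j)\|_2$. Your explicit handling of the truncation of the $k$-sum is slightly more careful than the paper's presentation. The convergence step is best justified by Cauchy--Schwarz in $\ell^2$, namely $\sum_k|p_k(x_i)q_k(y_j)|\le\|\mathbf{p}(x_i)\|_2\|\mathbf{q}(y_j)\|_2$ for each of the finitely many pairs $(i,j)$. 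Your use of $P_kP_k^\dagger$ rather than $P_k$ alone is also slightly more careful.
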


\begin{proof}
Let $A = \sum_{i\in[d]} x_i\ketbra{x_i}{x_i}$ and $B = \sum_{j\in[d]} y_j\ketbra{y_j}{y_j}$ be the spectral decompositions of $A$ and $B$, with $x_i,y_j\in\mathcal{D}$ for all $i,j\in[d]$. 
Using resolutions of the identity in terms of the eigenvectors of $A$ and $B$ one can write 
\begin{align}\label{eq:fAfBkernel}
f(A)-&f(B) =
\sum_{i\in[d]} f(x_i)\ketbra{x_i}{x_i}\cdot\sum_{j\in[d]} \ketbra{y_j}{y_j} \notag\\
&\qquad- \sum_{i\in[d]} \ketbra{x_i}{x_i}\cdot\sum_{j\in[d]} f(y_j)\ketbra{y_j}{y_j} \notag\\
&= \sum_{i,j\in[d]} \big[f(x_i)-f(y_j)\big]\ketbra{x_i}{x_i}\!y_j\rangle\!\!\bra{y_j} \notag\\
&= \sum_{i,j\in[d]} K_f(x_i,y_j)(x_i-y_j)\ketbra{x_i}{x_i}\!y_j\rangle\!\!\bra{y_j},
\end{align}
with $K_f$ given by Eq.~\eqref{eq:df-kernel}.
Using the Haagerup factorization condition in Eq.~\eqref{eq:haagerup}, this becomes
\begin{align}
f(A)-f(B) 
&= \sum_{k \geq 0} \sum_{i\in[d]} p_k(x_i)x_i\ketbra{x_i}{x_i}\sum_{j\in[d]} q^*_k(y_j)\ketbra{y_j}{y_j}  \notag\\
&- \sum_{k \geq 0} \sum_{i\in[d]} p_k(x_i)\ketbra{x_i}{x_i}\sum_{j\in[d]} y_j q^*_k(y_j)\ketbra{y_j}{y_j} \notag\\
&= \sum_{k \geq 0} P_k(A)(A-B)Q_k(B)\,,
\end{align}
where we have introduced the operators
\begin{gather}\label{eq:PkQk}
   \begin{aligned}
    P_k(A) &= \sum_{i\in[d]} p_k(x_i)\ketbra{x_i}{x_i},\\
    Q_k(B) &= \sum_{j\in[d]} q^*_k(y_j)\ketbra{y_j}{y_j}\,.
    \end{aligned} 
\end{gather}
Taking the spectral norm, we obtain
\begin{align}\label{eq:linear-algebra-OL}
\|f(A)&-f(B)\| 
 \coloneq \sup_{\ket{\phi},\ket{\psi}} \big\vert\!\bra{\phi}f(A)-f(B)\ket{\psi}\!\big\vert \notag\\
 & = \sup_{\ket{\phi},\ket{\psi}} \left\vert\sum_{k \geq 0}\bra{\phi} P_k(A)(A-B)Q_k(B)\ket{\psi}\right\vert \,,
\end{align}
where the supremum is taken over $d$-dimensional $\ell_2$-normalized states $\ket{\phi}$ and $\ket{\psi}$, with $\norm{\ket{\phi}}_2=\norm{\ket{\psi}}_2=1$.

Using the Cauchy-Schwarz inequality, the absolute value can be bounded as 
\begin{align}
\Bigg\vert&\sum_{k\ge0}\bra{\phi} P_k(A)(A-B)Q_k(B)\ket{\psi}\!\Bigg\vert \notag\\
&\le\norm{A-B}\sqrt{\sum_{k\ge0}\big\|P_k(A)\ket{\phi}\!\big\|_2^2}\sqrt{\sum_{k\ge0}\big\|Q_k(B)\ket{\psi}\!\big\|_2^2}. 
\end{align}
Now, it follows from \eqref{eq:PkQk} that
\begin{align}
\sup_{\ket{\phi}}\sqrt{\sum_{k\ge0}\big\|P_k(A)\ket{\phi}\!\big\|_2^2} &= \sup_{\ket{\phi}} \sqrt{\sum_{k\ge0}\sum_{i\in[d]} \big\vert p_k(x_i)\big\vert^2 \big\vert\!\braket{\phi} {x_i}\!\big\vert^2} \notag\\
&=\sup_{\ket{\phi}} \sqrt{\sum_{i\in[d]}\big\|\mathbf{p}(x_i)\big\|_2^2\,\big\vert\!\braket{\phi}{x_i}\!\big\vert^2} \notag\\
&=\sup_{i\in[d]} \big\|\mathbf{p}(x_i)\big\|_2\,,
\end{align}
where the last step uses the orthonormality of the eigenstates $\ket{x_i}$ to select the maximal value of $\norm{\mathbf{p}(x_i)}_2$ over all the $x_i$. 
Similarly, $\sqrt{\sum_{k \geq 0}\big\|Q_k(B)\big\|^2}=\sup_{j\in[d]} \big\|\mathbf{q}(y_j)\big\|_2$. 

Putting all together, we arrive at the operator-Lipschitz condition
\begin{align}\label{eq:beforeinf}
\|f(A)&-f(B)\| \le \norm{A-B}\cdot\sup_{i\in[d]} \big\|\mathbf{p}(x_i)\big\|_2 \cdot \sup_{j\in[d]} \big\|\mathbf{q}(y_j)\big\|_2 .
\end{align}
Since the Haagerup decomposition \eqref{eq:haagerup} is not unique, this gives actually a family of operator-Lipschitz conditions labeled by $\mathbf{p}$ and $\mathbf{q}$.  
The tightest one corresponds to the choice of $\mathbf{p},\mathbf{q}$ that minimizes the constant 
in the right hand side of Eq. \eqref{eq:beforeinf}. 
For this choice, Eq. \eqref{eq:beforeinf} implies
\begin{align}
L_\star \le \inf_{\substack{\mathbf{p},\mathbf{q}\,\,\text{ s.t.}\\K_f(x,y)=\mathbf{p}(x)\cdot\mathbf{q}(y)}} \left\{\sup_{x\in\mathcal{D}} \big\|\mathbf{p}(x)\big\|_2 \cdot \sup_{y\in\mathcal{D}} \big\|\mathbf{q}(y)\big\|_2 \right\}\,.
\end{align}
\end{proof}
\subsection{Schur and Haagerup norms on finite sets of points}

\begin{lemma}[Equality of the discrete Schur and Haagerup norms]
\label{lem:discrete-schur-haagerup}
Let $\mathcal{D}$ be either $[-1,1]$ or the unit circle $\mathbb{T}$.
Let $K:\mathcal D\times\mathcal D\to\mathbb C$ be a 
Haagerup factorizable function such that $K(x,y)=\boldsymbol{p}(x)^\top\boldsymbol{q}(y)^*,$
for  factors $\boldsymbol{p}(x),\boldsymbol{q}(y)$ with uniformly bounded vector norms in $\mathcal{D}$.
Fix finite sets of points
$\mathbf x=(x_1,\ldots,x_d)$ and
$\mathbf y=(y_1,\ldots,y_d)$ in $\mathcal D$, and let
$K_{d}(\mathbf x,\mathbf y)$ be the $d\times d$ matrix with entries
$K(x_i,y_j)$.
Define  the Schur multiplier and Haagerup norms of $K_{d}(\mathbf x,\mathbf y)$ as
\begin{align}
\|K_d(\mathbf x, \mathbf y)\|_{s}
&:=
\sup_{\substack{Z\in\mathbb C^{d\times d}\\ \|Z\|\le1}}
\|K_d(\mathsf{X},\mathsf{Y})\circ Z\|,\\
\|K_{d}(\mathbf x, \mathbf y)\|_h
&:=
\inf_{\boldsymbol{p},\boldsymbol{q}}
\left(\max_{1\le i\le d}\|\boldsymbol{p}(x_i)\|_2\right)
\left(\max_{1\le j\le d}\|\boldsymbol{q}(y_j)\|_2\right),
\end{align}
respectively.
Then
\begin{align}\label{eq:disc-schur-hag}
\norm{K_{d}(\mathsf{X},\mathsf{Y})}_s = \norm{K_{d}(\mathsf{X},\mathsf{Y})}_h.
\end{align}
\end{lemma}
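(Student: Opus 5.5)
The plan is to establish the two inequalities separately. Throughout, the infimum defining $\norm{K_d(\mathsf{X},\mathsf{Y})}_h$ is read as an infimum over all vectors $\boldsymbol p(x_i),\boldsymbol q(y_j)$ in some $\ell_2$ space with $K(x_i,y_j)=\boldsymbol p(x_i)^\top\boldsymbol q(y_j)^*$. Only the values at the finitely many nodes matter. The hypothesis that $K$ is Haagerup factorizable guarantees that this set is nonempty.

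\textbf{Step 1: $\norm{K_d}_s\le\norm{K_d}_h$.} This repeats the Cauchy--Schwarz argument of the Proposition in the previous subsection, with the matrix $Z$ in place of $A-B$. Write $P_k=\mathrm{diag}(p_k(x_1),\ldots,p_k(x_d))$ and $Q_k=\mathrm{diag}(q_k^*(y_1),\ldots,q_k^*(y_d))$. Then $K_d\circ Z=\sum_k P_kZQ_k$. For unit vectors $\phi,\psi$, Cauchy--Schwarz gives
\[
|\langle\phi,(K_d\circ Z)\psi\rangle|\le\norm{Z}\Big(\sum_k\norm{P_k^\dagger\phi}^2\Big)^{1/2}\Big(\sum_k\norm{Q_k\psi}^2\Big)^{1/2}.
\]
The right-hand side is at most $\norm{Z}\max_i\norm{\boldsymbol p(x_i)}_2\max_j\norm{\boldsymbol q(y_j)}_2$. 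Taking the infimum over factorizations gives the claim.

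\textbf{Step 2: $\norm{K_d}_s\ge\norm{K_d}_h$.} I would go through a dual formula, as in Lemma~\ref{lem:sdp}(ii).

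First, a lower bound on the Schur norm. Take probability vectors $\mathbf u,\mathbf v\in\mathbb R_+^d$ and put $\xi=\sqrt{\mathbf u}$, $\eta=\sqrt{\mathbf v}$. Then
\[
\norm{K_d\circ Z}\ge|\xi^\top(K_d\circ Z)\eta|=\Big|\sum_{ij}Z_{ij}\,[\mathrm{diag}(\xi)K_d\,\mathrm{diag}(\eta)]_{ij}\Big|.
\]
The supremum of the right-hand side over $\norm{Z}\le1$ equals the nuclear norm of $\mathrm{diag}(\xi)K_d\,\mathrm{diag}(\eta)$, by trace duality. It is attained by $Z=\overline{UV^\dagger}$, where $UV^\dagger$ is the polar factor from the SVD. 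Hence $\norm{K_d}_s\ge\sup_{\mathbf u,\mathbf v}\norm{\mathrm{diag}(\sqrt{\mathbf u})K_d\,\mathrm{diag}(\sqrt{\mathbf v})}_*$.

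Second, I would identify this supremum with $\norm{K_d}_h$. Stacking the factor vectors as Gram data gives
\[
\norm{K_d}_h^2=\min\Big\{L:\ \begin{bmatrix}P&K_d\\K_d^\dagger&Q\end{bmatrix}\succeq0,\ P_{ii}\le L,\ Q_{jj}\le L\Big\}.
\]
This uses the Gram/factorization correspondence explained before Lemma~\ref{lem:semiinfinite_reformulation}, now with the evaluation matrices equal to the identity: every PSD block matrix comes from vectors, and vice versa. After rescaling $\boldsymbol p\to c\boldsymbol p$, $\boldsymbol q\to\boldsymbol q/c$, the max-of-norms product becomes $L$.

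\begin{itemize}
\item \emph{Strong duality.} This is an SDP with a Slater point: take $P=Q=t\mathds 1$ with $t$ large and $L>t$.
\item \emph{Form of the dual.} Eliminating the PSD multiplier gives $\max\big\{\norm{\mathrm{diag}(\sqrt{\mathbf u})K_d\,\mathrm{diag}(\sqrt{\mathbf v})}_*:\ \mathbf u,\mathbf v\ge0,\ \sum u_i+\sum v_j=2\big\}$. This is the same computation as Lemma~\ref{lem:sdp}(ii) with $E_\mathsf{X}=E_\mathsf{Y}=\mathds 1$, and I will assume it as the stated earlier result.
\end{itemize}

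The normalization $\sum u+\sum v=2$ must be converted to $\sum u=\sum v=1$ to match the previous paragraph. If $\sum u=s$ and $\sum v=2-s$, the objective scales as $\sqrt{s(2-s)}\le1$ times its value at the normalized pair. So the maximum is attained with $\sum u=\sum v=1$, and the two suprema coincide.

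\textbf{Main obstacle.} I expect the dual derivation and this normalization bookkeeping to be the delicate part. If invoking Lemma~\ref{lem:sdp}(ii) is deemed circular, I would instead cite the classical Grothendieck--Haagerup characterization of Schur multipliers \cite{Pisier2001}[Thm.~5.1], which states exactly Eq.~\eqref{eq:disc-schur-hag}.
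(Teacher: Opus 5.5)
Your proposal is correct and follows essentially the same route as the paper. For $\norm{K_d}_s\le\norm{K_d}_h$ both use Cauchy--Schwarz on $K_d\circ Z=\sum_k P_kZQ_k$. For the reverse both use the Gram-matrix SDP for $\norm{K_d}_h$, Slater's condition, the nuclear-norm dual, the $\sqrt{s(2-s)}\le 1$ rebalancing, and trace duality against the Schur norm (you rightly note only the inequality $\ge$ is needed there). The one procedural difference is that the paper rederives the dual directly through the Schur-complement parametrization of the dual block, whereas you quote Lemma~\ref{lem:sdp}(ii) with identity evaluation matrices; that citation is not circular.

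One slip to fix: the SDP value equals $\norm{K_d}_h$, not $\norm{K_d}_h^2$. The constraint $P_{ii}\le L$ bounds squared norms, and the rescaling balances both factors at $\sqrt{L}$, exactly as your following sentence says. With the square, the final comparison would not even be homogeneous in $K_d$.
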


\begin{proof}
For simplicity, we denote $K_{d}(\mathsf{X},\mathsf{Y})$ simply as $K_d$ throughout the proof.
If $K_d=0$, both norms vanish. Assume henceforth that $K_d\ne0$.

For any factorization vectors $\boldsymbol{p}(x_i),\boldsymbol{q}(y_j)$ of the sampled entries, define the operators $P_k=\operatorname{diag}(p_k(x_1),\cdots,p_k(x_d))$ and $Q_k=\operatorname{diag}(q_k(y_1),\cdots,q_k(y_d))$. Then for any $Z\in\mathbb{C}^{d\times d}$ we can write
\begin{gather}
    \begin{aligned}
        K_d\circ Z&= \sum_{i,j=1} K(x_i,y_j)\,Z_{i,j}\ketbra{i}{j}\\
        &=\sum_{i,j=1}^d \sum_k p_k(x_i)\,q_k(y_j)^* \,Z_{i,j}\ketbra{i}{j}\\
        &=\sum_k P_k\,Z\,Q_k.
    \end{aligned}
\end{gather} 
Consequently, similar to Eq. \eqref{eq:linear-algebra-OL} we get
\begin{align}
\|K_{d}\circ Z\|
\le
\left(\max_i\|\boldsymbol{p}(x_i)\|_2\right)
\left(\max_j\|\boldsymbol{q}(y_j)\|_2\right)\|Z\|.
\end{align}
Taking the supremum over $Z$ and the infimum over
factorizations proves
\begin{align}
\|K_{d}\|_s\le\|K_{d}\|_h.
\end{align}

For the reverse inequality, Gram factorization gives the
semidefinite representation
\begin{align}
\|K_{d}\|_h
 =
\min_{\substack{t>0\\P,Q}}&
 \quad t \\
 \text{s.t.} &
        \begin{cases}
            \begin{bmatrix}
                P&K_{d}\\
                K_{d}^\dagger&Q
            \end{bmatrix}\succeq0\\
            P_{ii}\le t,\quad Q_{jj}\le t,
        \end{cases}
 \label{eq:finite-haagerup-primal}
\end{align}
where $P,Q$ are $d\times d$ Hermitian matrices.
Indeed, any Haagerup factorization defines a positive semidefinite matrix as in the constraint with 
\begin{align}
P_{ii'}
&=\boldsymbol{p}(x_i)^\top\boldsymbol{p}(x_{i'})^*,
&
Q_{jj'}
&=\boldsymbol{q}(y_j)^\top\boldsymbol{q}(y_{j'})^*,
\end{align}
and the rescaling $\boldsymbol{p}\mapsto c\,\boldsymbol{p},\quad\boldsymbol{q}\mapsto \boldsymbol{q}/c$
preserves $K_{d}$, such that the symmetric constraint $P_{ii}\le t,\quad Q_{jj}\le t$ is not a loss of generality.
Conversely, factoring any feasible positive block matrix as
$W\,W^\dagger$ supplies vectors representing its entries whose
squared norms are bounded by $t$.

The minimum in \eqref{eq:finite-haagerup-primal} is attained because
bounded $t$ bounds every entry of the positive block matrix,
so bounded sublevel sets are compact.
The program is also strictly feasible, since one may take
$P=Q=\lambda I$ and $t>\lambda>\|K_{d}\|$.
Strong semidefinite duality therefore applies.

For a vector $\mathbf{r}$, write $D_{\mathbf{r}}=\operatorname{diag}(\mathbf{r})$.
Introducing nonnegative multipliers $\mathbf{r},\mathbf{s}$ for the diagonal
constraints yields the dual
\begin{gather}
\begin{aligned}
    \|K_{d}\|_h=
\max_{\substack{\mathbf{r},\mathbf{s}\in\mathbb R_+^d,\;
B\in\mathbb C^{d\times d}\\
\sum_i r_i+\sum_j s_j=1}}
2\operatorname{Re}\operatorname{Tr}(B^\dagger K_{d})\\
\text{s.t.}\quad\begin{pmatrix}
D_r&-B\\
-B^\dagger&D_s
\end{pmatrix}\succeq0.
\label{eq:finite-haagerup-dual}
\end{aligned}
\end{gather}
The block positivity condition is equivalent to
\begin{align}
B=D_{\sqrt r}\,U\,D_{\sqrt s},
\qquad \|U\|\le1.
\end{align}
This follows from the Schur complement on the supports of
$r$ and $s$; zero diagonal entries force the corresponding
rows or columns of $B$ to vanish.

Maximizing over $U$ and using operator/trace-norm duality gives
\begin{align}
\|K_{d}\|_h
=
\max_{\substack{r,s\ge0\\
\sum_i r_i+\sum_j s_j=1}}
2\|D_{\sqrt r}K_{d} D_{\sqrt s}\|_1,
\end{align}
where $\|\cdot\|_1$ denotes the trace norm.
Let
\begin{align}
\Delta_n
:=
\left\{\alpha\in\mathbb R_+^n:\sum_i\alpha_i=1\right\}.
\end{align}
Writing $r=a\alpha$ and $s=(1-a)\beta$, with
$\alpha,\beta\in\Delta_n$, yields
\begin{align}
2\|D_{\sqrt r}K_{d} D_{\sqrt s}\|_1
=
2\sqrt{a(1-a)}
\|D_{\sqrt\alpha}K_{d} D_{\sqrt\beta}\|_1.
\end{align}
The factor $2\sqrt{a(1-a)}$ is at most one and equals one
when $a=1/2$. Hence
\begin{align}
\|K_{d}\|_h
=
\max_{\alpha,\beta\in\Delta_n}
\|D_{\sqrt\alpha}K_{d} D_{\sqrt\beta}\|_1.
\label{eq:finite-haagerup-trace}
\end{align}

Finally, the Schur multiplier norm admits the same expression:
\begin{align}
\|K_d\|_s
&=
\max_{\|\mathbf u\|_2=\|\mathbf v\|_2=1}
\sup_{\|Z\|\le1}
\left|
\mathbf u^\dagger(K_{d}\circ Z)\mathbf v
\right|
\notag\\
&=
\max_{\|\mathbf u\|_2=\|\mathbf v\|_2=1}
\sup_{\|Z\|\le1}
\left|
\operatorname{Tr}\!\left[
(D_{\mathbf u^*}K_{d} D_{\mathbf v})^\top Z
\right]
\right|
\notag\\
&=
\max_{\|\mathbf u\|_2=\|\mathbf v\|_2=1}
\|D_{\mathbf u^*}K_{d} D_{\mathbf v}\|_1
\notag\\
&=
\max_{\alpha,\beta\in\Delta_n}
\|D_{\sqrt\alpha}K_{d} D_{\sqrt\beta}\|_1.
\end{align}
The third equality uses operator/trace-norm duality and
invariance of the trace norm under transposition.
For the last equality, absorb the phases of $\mathbf u$
and $\mathbf v$ into diagonal unitaries and set
$\alpha_i=|u_i|^2$ and $\beta_j=|v_j|^2$.
Comparison with \eqref{eq:finite-haagerup-trace} proves
$\|K_{d}\|_s=\|K_{d}\|_h$.
\end{proof}

\subsection{$L_\star$-Haagerup norm equivalence}

We now extend the result of Lemma \ref{lem:discrete-schur-haagerup} to any dimension and to continuous sets of points in the domain.  Intuitively, one can see this extension as simply taking the supremum over the set of points $\mathsf{X},\mathsf{Y}$ and over the dimension $d$ on both sides of Eq. \eqref{eq:disc-schur-hag}. For the left-hand side, we obtain precisely the definition of the Schur multiplier norm in Eq. \eqref{eq:Schur}. For the right-hand side, due to the continuity of $K$, if the supremum over $d$ and $\mathsf{X},\mathsf{Y}$ can commute with the infimum over decompositions, one can expect that the expression converges to a global decomposition of the kernel and the supremum of its vector norms over the whole domain. The formal proof, however, is not so straightforward and relies on the compactness of the set of factorizations. The equivalence between the Schur multiplier norm and the Haagerup norm of $K_f$, together with Lemma \ref{lem:lips-schur}, proves Thm.\ref{thm:OL-haagerup}.

\section{Proof of Lemma~\ref{lem:semiinfinite_reformulation}: compression to the coefficient span}\label{app:proof-sip}

Here we prove Lemma~\ref{lem:semiinfinite_reformulation}.  First, notice that every feasible point of the semi-infinite problem in Eq.~\eqref{eq:Lstar-sip} defines a Haagerup factorization of $K_f$ through the factorization $\Gamma=WW^\dagger$ discussed before the lemma, so the optimal value of Eq.~\eqref{eq:Lstar-sip} is at least $L_\star[f]$. The converse is not immediate, the infimum in Eq.~\eqref{eq:Lstar} runs over all factorizations of $K_f$, including the ones with non-polynomial factors and factorizations with polynomial factors of degree larger than that of $K_f$, whose excess terms cancel in $\sum_k p_k(x)q_k(y)^*$. We must show in Lemma~\ref{lem:compression} that none of them achieves a smaller value than the factorizations parametrized by $(P,Q)$. 

Our Lemma~\ref{lem:compression} is an elementary instance of the injectivity of the Haagerup tensor product~\cite{paulsen_smith_1987,pisier2003}, and Ref.~\cite{aleksandrov2024} for the relation between Haagerup tensor products and Schur multipliers. We include a direct proof because the infimum in Eq.~\eqref{eq:Lstar} is defined pointwise, over factorizations as in Eq.~\eqref{eq:haagerup}, rather than as an operator-space tensor norm.

Throughout this appendix, valid factorizations are written as $K(x,y)=\sum_{k\in I}p_k(x)q_k(y)^*$, with an arbitrary index set $I$ that may be finite, countable or uncountable. Here $\mathbf{p}(x)=(p_k(x))_{k\in I}$ and $\mathbf{q}(y)=(q_k(y))_{k\in I}$ belong to $\ell^2(I)$. 
In addition, let $\mathbf{\Phi}=(\Phi_0,\ldots,\Phi_{n-1})^\top$ be the basis of Lemma~\ref{lem:haagerup_poly} on $\mathcal{D}\in\{[-1,1],\mathbb{T}\}$, and let $\mathcal{D}_1,\mathcal{D}_2\subseteq\mathcal{D}$. 
In the main text $\mathcal{D}_1=\mathcal{D}_2=\mathcal{D}$, we distinguish them because we  analyze the case of two finite sample sets $\mathcal{D}_1=\mathsf{X}$, $\mathcal{D}_2=\mathsf{Y}$ later in Corollary~\ref{cor:finite-section}. Assume that $\mathbf{\Phi}$ is linearly independent as a family of functions on $\mathcal{D}_1$, and also on $\mathcal{D}_2$. If $\mathbf{\Phi}$ is linearly independent on a set $\mathcal{S}$, then the vectors $\{\mathbf{\Phi}(x)\}_{x\in\mathcal{S}}$ and $\{\mathbf{\Phi}(x)^*\}_{x\in\mathcal{S}}$ span $\mathbb{C}^{n}$, otherwise some $\mathbf{c}\neq0$ satisfies $\mathbf{c}^\top\mathbf{\Phi}(x)=0$ for all $x\in\mathcal{S}$. In particular there are $x_1,\ldots,x_n\in\mathcal{S}$ whose evaluation matrix, with rows $\mathbf{\Phi}(x_i)^\top$, is invertible. We will recall these facts recurrently in what follows below.

Also, assume the divided difference kernel is of the form
\begin{align}\label{eq:comp-kernel}
K(x,y)=\mathbf{\Phi}(x)^\top G\,\mathbf{\Phi}(y)^*,
\qquad G\in\mathbb C^{n\times n}.
\end{align}
For a kernel $K$ on $\mathcal{D}_1\times\mathcal{D}_2$ we write
\begin{align}\label{eq:comp-haagerup-norm}
\norm{K}_h\coloneq\inf_{I,\,\mathbf{p},\,\mathbf{q}}\;
\sup_{x\in\mathcal{D}_1}\norm{\mathbf{p}(x)}_2\,
\sup_{y\in\mathcal{D}_2}\norm{\mathbf{q}(y)}_2.
\end{align}
The infimum runs over all index sets $I$ and all maps $\mathbf{p}:\mathcal{D}_1\to\ell^2(I)$, $\mathbf{q}:\mathcal{D}_2\to\ell^2(I)$ with $K(x,y)=\sum_{k\in I}p_k(x)q_k(y)^*$ for all $x,y$. For $\mathcal{D}_1=\mathcal{D}_2=\mathcal{D}$ and $K=K_f$, this is the infimum in Eq.~\eqref{eq:Lstar}. Suprema may equal $+\infty$, and we set $0\cdot\infty\coloneq 0$.

In the lemma below, we prove that every factorization $K(x,y)=\langle\mathbf{p}(x),\mathbf{q}(y)\rangle_{\mathcal{H}}$ through an abstract Hilbert space $\mathcal{H}$ is of the form of Eq.~\eqref{eq:haagerup}, with the same pointwise norms. Indeed, we can expand $\mathbf{p}(x)$ and $\mathbf{q}(y)$ in an orthonormal basis $\{e_k\}_{k\in I}$ of $\mathcal{H}$. In particular, this is why Eq.~\eqref{eq:haagerup} allows an arbitrary index set $I$. For kernels of the form of Eq.~\eqref{eq:comp-kernel}, Lemma~\ref{lem:compression} shows that restricting to finite index sets, or to countable ones, does not change the infimum in Eq.~\eqref{eq:comp-haagerup-norm}.

\begin{lemma}[Reduction to the coefficient span]\label{lem:compression}
Let $\mathbf{\Phi}$, $G$ and $K$ be as in Eq.~\eqref{eq:comp-kernel}, with $\mathbf{\Phi}$ linearly independent on $\mathcal{D}_1$ and on $\mathcal{D}_2$. Let $I$ be any index set and $\mathbf{p}:\mathcal{D}_1\to\ell^2(I)$, $\mathbf{q}:\mathcal{D}_2\to\ell^2(I)$ any maps with $K(x,y)=\sum_{k\in I}p_k(x)q_k(y)^*$ for all $(x,y)\in\mathcal{D}_1\times\mathcal{D}_2$. Then there exist  vectors $\mathbf{h}_{p_k},\mathbf{h}_{q_k}\in\mathbb{C}^{n}$, for $k=1,\ldots,m$ and $m\le\mathrm{rank}(G)$, such that 
\begin{align}\label{eq:comp-factors}
\hat{p}_k\coloneq\mathbf{h}_{p_k}^\top\mathbf{\Phi},\qquad
\hat{q}_k\coloneq\mathbf{h}_{q_k}^\top\mathbf{\Phi}
\end{align}
satisfy
\begin{align}
&\sum_{k=1}^{m}\mathbf{h}_{p_k}\mathbf{h}_{q_k}^\dagger=G,
\label{eq:comp-gram}\\
&\sum_{k=1}^{m}\vert\hat{p}_k(x)\vert^2\le\norm{\mathbf{p}(x)}_2^2
\quad\forall \, x\in\mathcal{D}_1,
\label{eq:comp-pnorm}\\
&\sum_{k=1}^{m}\vert\hat{q}_k(y)\vert^2\le\norm{\mathbf{q}(y)}_2^2
\quad\forall \, y\in\mathcal{D}_2.
\label{eq:comp-qnorm}
\end{align}
In particular, $K(x,y)=\sum_{k=1}^{m}\hat{p}_k(x)\hat{q}_k(y)^*$ is a factorization of the form of Eq.~\eqref{eq:valid-fact}. We have $m=0$ if and only if $G=0$.
\end{lemma}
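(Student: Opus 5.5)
The plan is to work in the Hilbert space $\ell^2(I)$ and orthogonally project the given factors onto finite-dimensional subspaces that the kernel actually ``sees''. Projection can only decrease pointwise norms, which is exactly the content of Eqs.~\eqref{eq:comp-pnorm}--\eqref{eq:comp-qnorm}.

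First, use linear independence of $\mathbf{\Phi}$ on $\mathcal{D}_2$ to pick $y_1,\ldots,y_n\in\mathcal{D}_2$ with invertible evaluation matrix $E_Y$ (rows $\mathbf{\Phi}(y_j)^\top$), and similarly pick $x_1,\ldots,x_n\in\mathcal{D}_1$ with invertible $E_X$. Define the finite-dimensional subspaces
\[
V\coloneq\mathrm{span}\{\mathbf{q}(y):y\in\mathcal{D}_2\},\qquad U\coloneq\mathrm{span}\{\mathbf{p}(x):x\in\mathcal{D}_1\}\subseteq\ell^2(I),
\]
and let $\Pi_V,\Pi_U$ be the orthogonal projections onto them. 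Since $K(x,y)=\langle\mathbf{p}(x),\mathbf{q}(y)\rangle$, the inner products are unchanged if either argument is replaced by its projection onto the span of the other family. I will therefore replace $\mathbf{p}$ by $\Pi_V\mathbf{p}$ and $\mathbf{q}$ by $\Pi_U\mathbf{q}$, and by symmetry further project so that both live in a common small space.

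The cleaner route is to define linear maps directly. Write $\mathbf{s}(x)\coloneq(\langle\mathbf{p}(x),\mathbf{q}(y_j)\rangle)_j=(K(x,y_j))_j=E_Y^*G^\top\mathbf{\Phi}(x)$ (up to the transpose and conjugation convention). Thus $\Pi_V\mathbf{p}(x)$ depends linearly on $\mathbf{\Phi}(x)$, because $\Pi_V\mathbf{p}(x)$ is determined by its inner products with a spanning set of $V$. Here I should first show that $V$ is spanned by $\mathbf{q}(y_1),\ldots,\mathbf{q}(y_n)$ modulo vectors orthogonal to all $\mathbf{p}(x)$; any component of $\mathbf{q}$ orthogonal to $U$ can be dropped. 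Concretely, I restrict to the space $W\coloneq\Pi_U V$. This space has dimension at most $\mathrm{rank}(G)$, because the Gram pairing $U\times V\to\mathbb{C}$ is represented, via the evaluation points, by $E_XGE_Y^*$, whose rank equals $\mathrm{rank}(G)$ since $E_X,E_Y$ are invertible. The pairing is nondegenerate on $\Pi_U V\times\Pi_V U$, so both of these spaces have dimension $m=\mathrm{rank}(G)$.

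Set $\hat{\mathbf{p}}(x)\coloneq\Pi_{W'}\mathbf{p}(x)$ with $W'=\Pi_V U$ and $\hat{\mathbf{q}}(y)\coloneq\Pi_{W}\mathbf{q}(y)$. The step to check is that
\[
K(x,y)=\langle\Pi_{W'}\mathbf{p}(x),\Pi_W\mathbf{q}(y)\rangle.
\]
I expect this to be the main obstacle, since $W\neq W'$ in general. I would try to handle it by noting $\langle\mathbf{p},\mathbf{q}\rangle=\langle\Pi_V\mathbf{p},\Pi_U\mathbf{q}\rangle$ is false in general. So instead I will use a single space: take $\mathbf{p}\mapsto\Pi_V\mathbf{p}$ (this preserves $K$ because $\mathbf{q}\in V$), and then $\mathbf{q}\mapsto\Pi_{\Pi_V U}\mathbf{q}$ (this preserves $K$ because the new $\mathbf{p}$ lies in $\Pi_VU$). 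Both factors now lie in $\Pi_V U$, whose dimension is $m\le\mathrm{rank}(G)$ by the rank computation above, and both projections are norm non-increasing pointwise.

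Finally, choose an orthonormal basis $e_1,\ldots,e_m$ of $\Pi_VU$ and set $\hat p_k(x)=\langle\hat{\mathbf{p}}(x),e_k\rangle$ and $\hat q_k(y)=\langle\hat{\mathbf{q}}(y),e_k\rangle$. Using the $n$ evaluation points, $\hat p_k(x)$ is a linear functional of $(K(x,y_j))_j$, which is linear in $\mathbf{\Phi}(x)$. Hence $\hat p_k=\mathbf{h}_{p_k}^\top\mathbf{\Phi}$, and likewise $\hat q_k=\mathbf{h}_{q_k}^\top\mathbf{\Phi}$. Note that $\hat{\mathbf{q}}$ is determined by its pairings with $\hat{\mathbf{p}}(x_i)$, which span $\Pi_VU$. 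The identity $\mathbf{\Phi}(x)^\top(\sum_k\mathbf{h}_{p_k}\mathbf{h}_{q_k}^\dagger-G)\mathbf{\Phi}(y)^*=0$ holds on $\mathcal{D}_1\times\mathcal{D}_2$, and invertibility of $E_X,E_Y$ then gives Eq.~\eqref{eq:comp-gram}. Finally, $m=0$ holds iff $\Pi_VU=0$, which holds iff $K\equiv0$, which holds iff $G=0$.
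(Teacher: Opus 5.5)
Your final ``single space'' route is the paper's proof. You project $\mathbf{p}$ onto the span of the $\mathbf{q}(y)$ (Step~1), then project $\mathbf{q}$ onto $E=\mathrm{span}\{\Pi\mathbf{p}(x)\}$ (Step~4), and take coordinates in an orthonormal basis of $E$. Writing each $e_k$ through finitely many $\Pi\mathbf{p}(x_i)$ gives $\hat q_k\in\mathrm{span}\{\Phi_t\}$ (Step~5), and linear independence identifies $G$ (Step~6). Two points need repair. First, $U$ and $V$ are \emph{not} finite-dimensional in general. The index set $I$ is arbitrary, so you can replace $\mathbf{p}(x)$ by $(\mathbf{p}(x),0)$ and $\mathbf{q}(y)$ by $(\mathbf{q}(y),e_y)$ with orthonormal $e_y$. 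This leaves $K$ unchanged but makes $V$ infinite-dimensional and not closed, so $\Pi_V$ is not defined. Project onto $\overline{V}$ instead (the paper's $\mathcal{H}_q$). This is harmless, since preserving $K$ only uses $\mathbf{q}(y)\in\overline{V}$, and $\overline{V}^{\perp}=V^{\perp}$.

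Second, the bound $\dim\Pi_{\overline V}U\le\mathrm{rank}(G)$ is the real content of the lemma, and what you wrote does not establish it. For $u\in U$ and $v\in V$, the inner product of $\Pi_Vu$ with $\Pi_Uv$ is $\langle\Pi_U\Pi_Vu,v\rangle$, not $\langle u,v\rangle$. So nondegeneracy on $\Pi_UV\times\Pi_VU$ does not tie these dimensions to $G$. Also, representing the pairing by $E_XGE_Y^{*}$ presupposes that $n$ sample points capture all of $U$ and $V$; you flag this but do not prove it. The fix is your own remark that $\Pi_{\overline V}\mathbf{p}(x)$ is determined by its pairings with the $\mathbf{q}(y)$. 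On $U$, both $\Pi_{\overline V}$ and the map $u\mapsto\langle u,\mathbf{q}(\cdot)\rangle$ have kernel $U\cap V^{\perp}$. The latter maps $\sum_ic_i\mathbf{p}(x'_i)$ to $\mathbf{w}^\top G\,\mathbf{\Phi}(\cdot)^*$ with $\mathbf{w}=\sum_ic_i\mathbf{\Phi}(x'_i)$. Its range therefore has dimension $\mathrm{rank}(G)$ by linear independence on $\mathcal{D}_2$, and hence $\dim\Pi_{\overline V}U=\mathrm{rank}(G)$. Equivalently, $\Pi_{\overline V}\mathbf{p}(x)=\sum_i[\mathbf{\Phi}(x)^\top E_X^{-1}]_i\,\Pi_{\overline V}\mathbf{p}(x_i)$, because both sides lie in $\overline V$ and pair identically with every $\mathbf{q}(y)$; this also justifies $\hat p_k\in\mathrm{span}\{\Phi_t\}$. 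With these repairs your argument is complete, and the dimension count is leaner than the paper's Steps~2--3. Testing elements of $U$ only against the vectors $\mathbf{q}(y)$ themselves avoids the conjugate-linear map $J$ on all of $\mathcal{H}_q$. It also avoids the density argument the paper needs to show $J(\mathcal{H}_q)\subseteq\mathcal{V}_1$.
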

\begin{proof}
\emph{Step 1 (project $\mathbf{p}$ onto the span of $\mathbf{q}$).}
Let $\mathcal{H}_q$ be the smallest closed subspace of $\ell^2(I)$ that contains every vector $\mathbf{q}(y)$, $y\in\mathcal{D}_2$, where the closure is taken with respect to the norm of $\ell^2(I)$. This definition gives two properties that will be used below. First, since $\mathcal{H}_q$ is a closed subspace of the Hilbert space $\ell^2(I)$, there exists an orthogonal projection $ \Pi_q:\ell^2(I)\rightarrow\mathcal{H}_q$.
Second, the span of the vectors $\mathbf{q}(y)$ is dense in $\mathcal{H}_q$. Thus, for every $\mathbf{v}\in\mathcal{H}_q$, there exists a sequence
\begin{align}\label{eq:comp-Hq-density}
\mathbf{v}_j \in \mathrm{span}\{ \mathbf{q}(y):y\in\mathcal{D}_2 \}, \quad \text{s.t.}, \quad \norm{\mathbf{v}_j-\mathbf{v}}_{2}\rightarrow 0.
\end{align}
The approximation property in Eq.~\eqref{eq:comp-Hq-density} will be used in Step~2 to extend an identity proved first for finite linear combinations of the vectors $\mathbf{q}(y)$ to all of $\mathcal{H}_q$.

We now remove from $\mathbf{p}(x)$ the component orthogonal to all of the vectors $\mathbf{q}(y)$. Define
\begin{align}
\tilde{\mathbf{p}}(x)
\coloneq
\Pi_q\mathbf{p}(x),
\qquad x\in\mathcal{D}_1.
\label{eq:comp-p-projection}
\end{align}
Since every $\mathbf{q}(y)$ belongs to $\mathcal{H}_q$, the projection acts trivially on these vectors, $\Pi_q\mathbf{q}(y)=\mathbf{q}(y)$.
Moreover, $\Pi_q$ is self-adjoint. Hence, for every
$x\in\mathcal{D}_1$ and $y\in\mathcal{D}_2$,
\begin{align}\label{eq:comp-step1}
\langle \tilde{\mathbf{p}}(x), \mathbf{q}(y) \rangle
&= \langle \Pi_q\mathbf{p}(x), \mathbf{q}(y) \rangle \nonumber\\
&= \langle \mathbf{p}(x), \Pi_q\mathbf{q}(y) \rangle \nonumber\\
&= \langle \mathbf{p}(x), \mathbf{q}(y) \rangle
= K(x,y),
\end{align}
where we used the shorthand notation $\langle\mathbf{u},\mathbf{v}\rangle\coloneq\sum_{k\in I}u_kv_k^*$.
Thus projecting $\mathbf{p}(x)$ onto $\mathcal{H}_q$ leaves the kernel unchanged.

Finally, orthogonal projections are contractions, so
\begin{align}\label{eq:comp-step1-norm}
\norm{\tilde{\mathbf{p}}(x)}_2 = \norm{\Pi_q\mathbf{p}(x)}_2
\le \norm{\mathbf{p}(x)}_2,
\quad x\in\mathcal{D}_1.
\end{align}
Therefore, after this projection we may restrict attention to
$\tilde{\mathbf{p}}(x)\in\mathcal{H}_q$ without changing $K$ and without increasing the pointwise norm of the first factor.

\emph{Step 2 (the image of $\mathcal{H}_q$ is finite-dimensional).}
We first identify the finite-dimensional space of functions generated by the kernel sections $K(\cdot,y)$. Define $\mathcal{V}_1 \coloneq \mathrm{span} \left\{ K(\cdot,y):y\in\mathcal{D}_2 \right\}$.
Using Eq.~\eqref{eq:comp-kernel}, every kernel section can be written as
\begin{align}
K(\cdot,y) = \mathbf{\Phi}(\cdot)^\top G\, \mathbf{\Phi}(y)^*.
\end{align}
Since the functions forming $\mathbf{\Phi}$ are linearly independent on $\mathcal{D}_2$,  the vectors $ \left\{ \mathbf{\Phi}(y)^*:y\in\mathcal{D}_2 \right\}$
span $\mathbb{C}^{n_2}$. Therefore,
\begin{align}\label{eq:comp-V1-coeff}
\mathcal{V}_1
= \left\{ \mathbf{\Phi}(\cdot)^\top G\,\mathbf{w}: \mathbf{w}\in\mathbb{C}^{n_2}
\right\}.
\end{align}
Moreover, linear independence of $\mathbf{\Phi}$ implies that $\mathbf{\Phi}(\cdot)^\top\mathbf{z}=0 \Longrightarrow
\mathbf{z}=0$.
Hence the map $ \mathbf{z} \longmapsto \mathbf{\Phi}(\cdot)^\top\mathbf{z} $
is injective on $\mathrm{span} \left\{ \text{columns of }G \right\}$. It follows that
\begin{align}\label{eq:comp-V1-dim}
\dim(\mathcal{V}_1) = \mathrm{rank}(G).
\end{align}

We now associate to each vector in $\mathcal{H}_q$ the function obtained by taking its inner product with $\tilde{\mathbf{p}}(x)$. Define
\begin{align}\label{eq:comp-J}
J:\mathcal{H}_q \rightarrow \mathbb{C}^{\mathcal{D}_1}, \quad 
J(\mathbf{v})(x) \coloneq \left\langle \tilde{\mathbf{p}}(x), \mathbf{v} \right\rangle ,
\end{align}
where $\mathbb{C}^{\mathcal{D}_1}$ denotes the space of all complex-valued functions on $\mathcal{D}_1$. Under our convention, the map $J$ is conjugate-linear in its argument, $J(\lambda\mathbf{v}) = \lambda^*\,J(\mathbf{v})$. Thus, for each $y\in\mathcal{D}_2$, Eq.~\eqref{eq:comp-step1} gives
\begin{align}
J\!\left(\mathbf{q}(y)\right)(x)
&= \left\langle \tilde{\mathbf{p}}(x), \mathbf{q}(y) \right\rangle \nonumber\\
&= K(x,y),
\end{align}
and $J\!\left(\mathbf{q}(y)\right) = K(\cdot,y) \in \mathcal{V}_1$.
By conjugate-linearity, if $\mathbf{v} = \sum_{i=1}^{s} \lambda_i\mathbf{q}(y_i)$ is any finite linear combination of the vectors $\mathbf{q}(y)$, then
\begin{align}\label{eq:comp-J-span}
J(\mathbf{v}) = \sum_{i=1}^{s} \lambda_i^* K(\cdot,y_i) \in \mathcal{V}_1.
\end{align}
Therefore,
\begin{align}\label{eq:comp-J-span-inclusion}
J\!\left( \mathrm{span} \left\{ \mathbf{q}(y):y\in\mathcal{D}_2 \right\} \right) \subseteq \mathcal{V}_1.
\end{align}

We next extend this inclusion from finite linear combinations to all of $\mathcal{H}_q$. Let $\mathbf{v}\in\mathcal{H}_q$ be arbitrary. By Eq.~\eqref{eq:comp-Hq-density}, there exists a sequence $ \mathbf{v}_j \in \mathrm{span} \left\{ \mathbf{q}(y):y\in\mathcal{D}_2 \right\} $ such that $ \norm{\mathbf{v}_j-\mathbf{v}}_2 \rightarrow 0$.

For every $x\in\mathcal{D}_1$, the Cauchy--Schwarz inequality gives
\begin{align}\label{eq:comp-J-pointwise}
\left| J(\mathbf{v}_j)(x) - J(\mathbf{v})(x) \right|
&= \left| \left\langle \tilde{\mathbf{p}}(x), \mathbf{v}_j-\mathbf{v} \right\rangle \right| \nonumber\\
&\leq \norm{\tilde{\mathbf{p}}(x)}_2 \norm{\mathbf{v}_j-\mathbf{v}}_2 \rightarrow 0.
\end{align}
Hence $J(\mathbf{v}_j)(x) \rightarrow J(\mathbf{v})(x)$ for every $x\in\mathcal{D}_1$. In other words, $J(\mathbf{v}_j)$ converges pointwise to $J(\mathbf{v})$.

It remains to verify that this pointwise limit still belongs to $\mathcal{V}_1$. Let $\phi_1,\ldots,\phi_r$ be a basis of $\mathcal{V}_1$, where $ r = \dim(\mathcal{V}_1) = \mathrm{rank}(G)$.
Since the functions $\phi_1,\ldots,\phi_r$ are linearly independent,there exist points $z_1,\ldots,z_r \in \mathcal{D}_1$
such that the evaluation matrix 
\[
F \coloneq \left[ \phi_{\ell}(z_i) \right]_{i,\ell=1}^{r}
\]
is invertible.

Since $J(\mathbf{v}_j)\in\mathcal{V}_1$ by Eq.~\eqref{eq:comp-J-span}, there are coefficients $c_{j1},\ldots,c_{jr}$ such that $J(\mathbf{v}_j) = \sum_{\ell=1}^{r} c_{j\ell}\phi_{\ell}$.
Writing
\[
\mathbf{c}_j \coloneq
\begin{bmatrix}
c_{j1}&\cdots&c_{jr}
\end{bmatrix}^{\top},
\]
evaluation at $z_1,\ldots,z_r$ gives
\begin{align}
\mathbf{c}_j = F^{-1}
\begin{bmatrix}
J(\mathbf{v}_j)(z_1)\\
\vdots\\
J(\mathbf{v}_j)(z_r)
\end{bmatrix}
\rightarrow \mathbf{c} \coloneq F^{-1}
\begin{bmatrix}
J(\mathbf{v})(z_1)\\
\vdots\\
J(\mathbf{v})(z_r)
\end{bmatrix}
\end{align}
by the pointwise convergence in Eq.~\eqref{eq:comp-J-pointwise}.
Therefore, for every $x\in\mathcal{D}_1$,
\begin{align}
J(\mathbf{v})(x)
&= \lim_{j\rightarrow\infty}J(\mathbf{v}_j)(x)
= \lim_{j\rightarrow\infty} \sum_{\ell=1}^{r} c_{j\ell}\phi_{\ell}(x) \nonumber\\
&= \sum_{\ell=1}^{r} c_{\ell}\phi_{\ell}(x).
\end{align}
Thus $J(\mathbf{v}) \in \mathcal{V}_1$.
Since $\mathbf{v}\in\mathcal{H}_q$ was arbitrary, we conclude that $J(\mathcal{H}_q) \subseteq \mathcal{V}_1$. Combining this with Eq.~\eqref{eq:comp-V1-dim}, the image of $J$ is finite-dimensional and satisfies
\begin{align}\label{eq:comp-J-rank}
\dim (J(\mathcal{H}_q)) \leq \dim(\mathcal{V}_1) = \mathrm{rank}(G).
\end{align}

\emph{Step 3 (the vectors $\tilde{\mathbf{p}}(x)$ span a finite-dimensional subspace).}
Define $E \coloneq \mathrm{span} \left\{ \tilde{\mathbf{p}}(x):x\in\mathcal{D}_1 \right\}$.
By construction in Step~1, $\tilde{\mathbf{p}}(x)\in\mathcal{H}_q$ for every $x\in\mathcal{D}_1$. Hence
\[
E\subseteq\mathcal{H}_q.
\]
Since $\mathcal{H}_q$ is closed in $\ell^{2}(I)$, the closure $\overline{E}$ of $E$ also satisfies $\overline{E}\subseteq\mathcal{H}_q$. We next characterize the kernel of the map $J$ defined in Eq.~\eqref{eq:comp-J}. For $\mathbf{v}\in\mathcal{H}_q$,
\begin{align}
\mathbf{v}\in\mathrm{ker}(J)
&\Longleftrightarrow J(\mathbf{v})(x)=0 \quad \text{for every }x\in\mathcal{D}_1 \nonumber\\
&\Longleftrightarrow \left\langle \tilde{\mathbf{p}}(x),\mathbf{v} \right\rangle =0 \quad \text{for every }x\in\mathcal{D}_1 \nonumber\\
&\Longleftrightarrow \mathbf{v}\in E^{\perp}.
\end{align}
Therefore,
\begin{align}\label{eq:comp-ker-J}
\mathrm{ker}(J) = \mathcal{H}_q\cap E^{\perp}.
\end{align}
Since taking an orthogonal complement is insensitive to closure, $E^{\perp} = \overline{E}^{\,\perp}$. Thus Eq.~\eqref{eq:comp-ker-J} can equivalently be written as
\begin{align}\label{eq:comp-ker-J-closure}
\mathrm{ker}(J) = \mathcal{H}_q\cap\overline{E}^{\,\perp} = \overline{E} \oplus \mathrm{ker}(J),
\end{align}
where the last equality stands since $\overline{E}$ is a closed subspace of the Hilbert space $\mathcal{H}_q$, every vector in $\mathcal{H}_q$ admits a unique orthogonal decomposition into a component in $\overline{E}$ and a component orthogonal to $\overline{E}$. 

Equation~\eqref{eq:comp-ker-J-closure} implies that the restriction of $J$ to $\overline{E}$ is injective. Indeed, if $\mathbf{v}\in\overline{E}$ and $J(\mathbf{v})=0$, then
\[
\mathbf{v}
\in
\overline{E}\cap\mathrm{ker}(J).
\]
The sum in Eq.~\eqref{eq:comp-ker-J-closure} is direct, so
\[
\overline{E}\cap\mathrm{ker}(J)
=
\left\{\mathbf{0}\right\},
\]
and therefore $\mathbf{v}=\mathbf{0}$.

Although $J$ is conjugate-linear rather than linear, its injectivity still preserves linear independence, so the dimension of $\overline{E}$ cannot exceed the dimension of its image under $J$, \emph{i.e.}, $\dim\left(\overline{E}\right) = \dim\left(J(\overline{E})\right)$.
Moreover,
\[
J(\overline{E})
\subseteq
J(\mathcal{H}_q),
\]
and Step~2 established in Eq.~\eqref{eq:comp-J-rank} that
\[
\dim\left(J(\mathcal{H}_q)\right)
\leq
\mathrm{rank}(G).
\]
Therefore,
\begin{align}\label{eq:comp-E-rank}
\dim\left(\overline{E}\right)
\leq
\dim\left(J(\mathcal{H}_q)\right)
\leq
\mathrm{rank}(G).
\end{align}

Thus $\overline{E}$ is finite-dimensional. Since $E\subseteq\overline{E}$, the space $E$ is also finite-dimensional. Every finite-dimensional subspace of a Hilbert space is closed, so $E=\overline{E}$.
Consequently,
\begin{align}\label{eq:comp-E-dim}
m
\coloneq
\dim(E)
\leq
\mathrm{rank}(G).
\end{align}
Therefore the entire family
$\{\tilde{\mathbf{p}}(x):x\in\mathcal{D}_1\}$ lies in an $m$-dimensional closed subspace of $\mathcal{H}_q$, with $m\leq\mathrm{rank}(G)$.

\emph{Step 4 (project $\mathbf{q}$ onto $E$).}
As in Step~1, let $\Pi_E$ be the orthogonal projection onto $E$ and $\tilde{\mathbf{q}}(y)\coloneq\Pi_E\mathbf{q}(y)$. Since $\tilde{\mathbf{p}}(x)\in E$, Eq.~\eqref{eq:comp-step1} gives
\begin{align}\label{eq:comp-step4}
\langle\tilde{\mathbf{p}}(x),\tilde{\mathbf{q}}(y)\rangle
=\langle\Pi_E\tilde{\mathbf{p}}(x),\mathbf{q}(y)\rangle=K(x,y),
\end{align}
and $\norm{\tilde{\mathbf{q}}(y)}_2\le\norm{\mathbf{q}(y)}_2$.

\emph{Step 5 (coordinates lie in the polynomial span).}
Fix an orthonormal basis $e_1,\ldots,e_m$ of $E$ and define
\begin{align*}
\hat{p}_k(x)\coloneq\langle\tilde{\mathbf{p}}(x),e_k\rangle,\qquad
\hat{q}_k(y)\coloneq\langle\tilde{\mathbf{q}}(y),e_k\rangle.
\end{align*}
First, $\hat{p}_k=J(e_k)\in\mathcal{V}_1\subseteq\mathrm{span}\{\Phi_t\}$, since $e_k\in E\subseteq\mathcal{H}_q$.

Second, $E$ is finite-dimensional and spanned by the vectors $\tilde{\mathbf{p}}(x)$, so there are finitely many points $x_1,\ldots,x_s$ and coefficients $c_{ki}$ with $e_k=\sum_i c_{ki}\tilde{\mathbf{p}}(x_i)$. Using $\Pi_E e_k=e_k$ and Eq.~\eqref{eq:comp-step1},
\begin{align*}
\hat{q}_k(y)
&=\langle\mathbf{q}(y),e_k\rangle
=\sum_i c^*_{ki}\,\langle\mathbf{q}(y),\tilde{\mathbf{p}}(x_i)\rangle\\
&=\Big(\sum_ic_{ki}K(x_i,y)\Big)^{*}
=\big(\mathbf{w}_k^\top\mathbf{\Phi}(y)^*\big)^{*}
=\mathbf{w}_k^{\dagger}\,\mathbf{\Phi}(y),
\end{align*}
with $\mathbf{w}_k^\top\coloneq\sum_ic_{ki}\mathbf{\Phi}(x_i)^\top G$. Hence $\hat{q}_k\in\mathrm{span}\{\Phi_t\}$. Writing $\hat{p}_k=\mathbf{h}_{p_k}^\top\mathbf{\Phi}$ and $\hat{q}_k=\mathbf{h}_{q_k}^\top\mathbf{\Phi}$ defines the vectors in Eq.~\eqref{eq:comp-factors}; they are unique by linear independence.

\emph{Step 6 (Parseval and uniqueness of $G$).}
Since $\tilde{\mathbf{p}}(x),\tilde{\mathbf{q}}(y)\in E$ and $\{e_k\}$ is an orthonormal basis of $E$, Eq.~\eqref{eq:comp-step4} and Parseval's identity give
\begin{align}\label{eq:comp-parseval}
K(x,y)=\sum_{k=1}^m\hat{p}_k(x)\hat{q}_k(y)^*,\qquad
\sum_{k=1}^m\vert\hat{p}_k(x)\vert^2=\norm{\tilde{\mathbf{p}}(x)}_2^2,
\end{align}
and similarly $\sum_k\vert\hat{q}_k(y)\vert^2=\norm{\tilde{\mathbf{q}}(y)}_2^2$. Together with Steps 1 and 4 this proves Eqs.~\eqref{eq:comp-pnorm} and \eqref{eq:comp-qnorm}.

Next, $\hat{p}_k(x)\hat{q}_k(y)^*=\mathbf{\Phi}(x)^\top\mathbf{h}_{p_k}\mathbf{h}_{q_k}^\dagger\mathbf{\Phi}(y)^*$. Put $M\coloneq\sum_k\mathbf{h}_{p_k}\mathbf{h}_{q_k}^\dagger$. By Eqs.~\eqref{eq:comp-kernel} and \eqref{eq:comp-parseval}, $\mathbf{\Phi}(x)^\top(M-G)\mathbf{\Phi}(y)^*=0$ for all $(x,y)\in\mathcal{D}_1\times\mathcal{D}_2$. For fixed $y$, linear independence on $\mathcal{D}_1$ gives $(M-G)\mathbf{\Phi}(y)^*=0$. By the linear independence of $\mathbf{\Phi}$ on $\mathcal{D}_2$, these vectors span $\mathbb{C}^n$, so $M=G$, which is Eq.~\eqref{eq:comp-gram}.

Finally, if $G=0$ then $\mathcal{V}_1=\{0\}$ and $m\le\mathrm{rank}(G)=0$. Conversely, if $m=0$ then Eq.~\eqref{eq:comp-gram} gives $G=0$.

\end{proof}

We now consider the semi-infinite program
\begin{align}\label{eq:comp-sip}
\begin{aligned}
L_{\rm SIP}\coloneq\inf_{L,P,Q}\ & L\\
\text{s.t.}\ &
\Gamma\coloneq\begin{bmatrix}P&G\\G^\dagger&Q\end{bmatrix}\succeq0,\\
&\mathbf{\Phi}(x)^\top P\,\mathbf{\Phi}(x)^*\le L\quad\forall x\in\mathcal{D}_1,\\
&\mathbf{\Phi}(y)^\top Q\,\mathbf{\Phi}(y)^*\le L\quad\forall y\in\mathcal{D}_2,\\
&P,Q\in\mathrm{Herm}(n).
\end{aligned}
\end{align}
For $\mathcal{D}_1=\mathcal{D}_2=\mathcal{D}$ and $n=N$, this is Eq.~\eqref{eq:Lstar-sip}. Feasibility forces $L\ge0$, because $P\succeq0$.

\begin{proposition}[Exactness and attainment of the semi-infinite program]\label{prop:sip-proof}
Let $\mathcal{D}_1,\mathcal{D}_2\subseteq\mathcal{D}$ be compact, and let $\mathbf{\Phi}$ be continuous and linearly independent on $\mathcal{D}_1$ and on $\mathcal{D}_2$. Let $K$ be given by Eq.~\eqref{eq:comp-kernel}. Then:
\begin{enumerate}[label=(\roman*)]
\item $L_{\rm{SIP}}=\norm{K}_h$;
\item the infimum in Eq.~\eqref{eq:comp-sip} is attained;
\item the infimum in Eq.~\eqref{eq:comp-haagerup-norm} is attained by a factorization with $m\le\mathrm{rank}(G)$ terms of the form of Eq.~\eqref{eq:comp-factors}.
\end{enumerate}
\end{proposition}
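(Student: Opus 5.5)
We prove (i) by establishing two inequalities, and we get (ii) and (iii) from a compactness argument on the set of feasible Gram matrices.

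\emph{Step 1: $L_{\rm SIP}\ge\norm{K}_h$.} Take any feasible $(L,P,Q)$ for Eq.~\eqref{eq:comp-sip}. Factor $\Gamma=WW^\dagger$ with $W=[A^\top\,B^\top]^\top\in\mathbb{C}^{2n\times r}$. The columns of $A$ and $B$ then define polynomial factors $p_k=\mathbf{h}_{p_k}^\top\mathbf{\Phi}$ and $q_k=\mathbf{h}_{q_k}^\top\mathbf{\Phi}$ with $\sum_k\mathbf{h}_{p_k}\mathbf{h}_{q_k}^\dagger=G$, exactly as in the discussion preceding Lemma~\ref{lem:semiinfinite_reformulation}. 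These factors satisfy $\norm{\mathbf{p}(x)}_2^2=\mathbf{\Phi}(x)^\top P\mathbf{\Phi}(x)^*\le L$, and the analogous bound holds for $\mathbf{q}$. This is a valid factorization of $K$, so $\norm{K}_h\le\sqrt{L}\sqrt{L}=L$.

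\emph{Step 2: $L_{\rm SIP}\le\norm{K}_h$.} Take any factorization $(I,\mathbf{p},\mathbf{q})$ with $a\coloneq\sup\norm{\mathbf{p}}_2$ and $b\coloneq\sup\norm{\mathbf{q}}_2$ both finite. If either is zero, then $G=0$ and both sides vanish, using $0\cdot\infty=0$ together with Lemma~\ref{lem:compression}. Otherwise apply Lemma~\ref{lem:compression} to obtain $\mathbf{h}_{p_k},\mathbf{h}_{q_k}$ with $\sum_k\mathbf{h}_{p_k}\mathbf{h}_{q_k}^\dagger=G$ and pointwise norms not increased. Rescale $\mathbf{h}_{p_k}\mapsto c\,\mathbf{h}_{p_k}$ and $\mathbf{h}_{q_k}\mapsto \mathbf{h}_{q_k}/c$ with $c=\sqrt{b/a}$. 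Set $P=\sum_k\mathbf{h}_{p_k}\mathbf{h}_{p_k}^\dagger$ and $Q=\sum_k\mathbf{h}_{q_k}\mathbf{h}_{q_k}^\dagger$. Then $\Gamma$ is a Gram matrix, hence $\Gamma\succeq0$, and both constraint functions are bounded by $ab$. So $(ab,P,Q)$ is feasible, and taking the infimum over factorizations gives $L_{\rm SIP}\le\norm{K}_h$.

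\emph{Step 3: attainment of the infimum in Eq.~\eqref{eq:comp-sip}.} This is the main obstacle, because the feasible set is not obviously bounded in $(P,Q)$. Linear independence of $\mathbf{\Phi}$ on $\mathcal{D}_1$ provides points $x_1,\dots,x_n\in\mathcal{D}_1$ whose evaluation matrix $E$ is invertible. For feasible $P\succeq0$ we then have $\operatorname{Tr}(EPE^\dagger)=\sum_i\mathbf{\Phi}(x_i)^\top P\mathbf{\Phi}(x_i)^*\le nL$. Since $EPE^\dagger\succeq \sigma_{\min}(E)^2$-scaled copy of $P$ in trace, this gives $\operatorname{Tr}P\le nL/\sigma_{\min}(E)^2$, and so $\norm{P}\le nL/\sigma_{\min}(E)^2$. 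The same argument on $\mathcal{D}_2$ bounds $Q$.

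Now take a minimizing sequence with $L_j\downarrow L_{\rm SIP}$. The corresponding $(P_j,Q_j)$ lie in a compact set, so a subsequence converges. The PSD cone is closed, and each pointwise constraint $\mathbf{\Phi}(x)^\top P\mathbf{\Phi}(x)^*\le L$ is closed in $(P,L)$ for fixed $x$; an intersection of closed sets is closed. The limit $(L_{\rm SIP},P,Q)$ is therefore feasible, which proves (ii).

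\emph{Step 4: attainment in Eq.~\eqref{eq:comp-haagerup-norm}.} Apply the construction of Step 1 to the optimal triple from Step 3. This produces a factorization achieving exactly $\norm{K}_h$, with factors in the polynomial span. To reach at most $\mathrm{rank}(G)$ terms, feed this factorization back into Lemma~\ref{lem:compression}. The lemma returns $m\le\mathrm{rank}(G)$ terms of the form of Eq.~\eqref{eq:comp-factors}, with pointwise norms not increased. The value is therefore still $\le\norm{K}_h$, hence equal to it, which proves (iii).
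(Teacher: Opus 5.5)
Your proposal is correct and follows essentially the same route as the paper's proof: Gram factorization $\Gamma=WW^\dagger$ for $\norm{K}_h\le L_{\rm SIP}$, compression by Lemma~\ref{lem:compression} plus rescaling for the converse, a trace bound through an invertible evaluation matrix to get compact sublevel sets, and re-compression of the optimal factorization for (iii). The one step to add is an explicit feasible point before you take a minimizing sequence in Step~3, which guarantees $L_{\rm SIP}<\infty$; the paper uses $P=Q=\norm{G}\mathds{1}$ with $L=\norm{G}\sup_{x\in\mathcal{D}_1\cup\mathcal{D}_2}\norm{\mathbf{\Phi}(x)}_2^2$.
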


\begin{proof}
\emph{Equality (i).} First, let us prove that $\norm{K}_h\le L_{\rm{SIP}}$.
Let $(L,P,Q)$ be feasible and let $r=\mathrm{rank}(\Gamma)$. Factor $\Gamma=WW^\dagger$ with $W\in\mathbb{C}^{2n\times r}$, and write $W=[A^\top\,B^\top]^\top$ with columns $\mathbf a_k$ of $A$ and $\mathbf b_k$ of $B$. Then $P=AA^\dagger$, $Q=BB^\dagger$ and $G=AB^\dagger$. Define $p_k\coloneq\mathbf a_k^\top\mathbf{\Phi}$ and $q_k\coloneq\mathbf b_k^\top\mathbf{\Phi}$. Then
\begin{align*}
\sum_{k=1}^rp_k(x)q_k(y)^*&=\mathbf{\Phi}(x)^\top AB^\dagger\,\mathbf{\Phi}(y)^*=K(x,y),\\
\sum_{k=1}^r\vert p_k(x)\vert^2&=\mathbf{\Phi}(x)^\top P\,\mathbf{\Phi}(x)^*\le L,
\end{align*}
and likewise $\sum_k\vert q_k(y)\vert^2\le L$. This is a factorization with $I=\{1,\ldots,r\}$ and $\sup\norm{\mathbf{p}}_2\cdot\sup\norm{\mathbf{q}}_2\le L$. (If $r=0$ then $G=0$ and $K\equiv0$.) Taking the infimum over feasible points gives the claim.

Now, let us prove the converse, $L_{\rm{SIP}}\le\norm{K}_h$.
If $K=0$, then $G=0$ by the linear independence of $\mathbf{\Phi}$, and $(L,P,Q)=(0,0,0)$ is feasible, so $L_{\rm{SIP}}=0$. Assume now $K\neq 0$. Take any factorization $(I,\mathbf{p},\mathbf{q})$ and set $a\coloneq\sup_x\norm{\mathbf{p}(x)}_2$ and $b\coloneq\sup_y\norm{\mathbf{q}(y)}_2$. Both are positive; if either is infinite there is nothing to prove.

Lemma~\ref{lem:compression} yields $\mathbf{h}_{p_k},\mathbf{h}_{q_k}$ with $\hat{a}\coloneq\sup_x(\sum_k\vert\hat{p}_k(x)\vert^2)^{1/2}\le a$ and $\hat{b}\le \coloneq\sup_x(\sum_k\vert\hat{q}_k(x)\vert^2)^{1/2} \le b$. Both are positive, since $\hat{a}=0$ would give $K=0$ by Eq.~\eqref{eq:comp-parseval}. With $t\coloneq(\hat{b}/\hat{a})^{1/2}$ set
\begin{align*}
P\coloneq t^2\sum_k\mathbf{h}_{p_k}\mathbf{h}_{p_k}^\dagger,\qquad
Q\coloneq t^{-2}\sum_k\mathbf{h}_{q_k}\mathbf{h}_{q_k}^\dagger.
\end{align*}
Then
\begin{align*}
\Gamma=\sum_k\begin{bmatrix}t\,\mathbf{h}_{p_k}\\ t^{-1}\mathbf{h}_{q_k}\end{bmatrix}
\begin{bmatrix}t\,\mathbf{h}_{p_k}\\ t^{-1}\mathbf{h}_{q_k}\end{bmatrix}^\dagger\succeq0,
\end{align*}
and its off-diagonal block is $G$ by Eq.~\eqref{eq:comp-gram}. Moreover $\mathbf{\Phi}(x)^\top P\,\mathbf{\Phi}(x)^*=t^2\sum_k\vert\hat{p}_k(x)\vert^2\le t^2\hat{a}^2=\hat{a}\hat{b}$, and similarly $\mathbf{\Phi}(y)^\top Q\,\mathbf{\Phi}(y)^*\le\hat{a}\hat{b}$. Hence $(\hat{a}\hat{b},P,Q)$ is feasible, and
\begin{align*}
L_{\rm{SIP}}\le\hat{a}\hat{b}\le ab.
\end{align*}
Taking the infimum over factorizations proves (i).

\emph{Attainment (ii).}
The feasible set is nonempty. Take $P=\norm{G}\mathds 1$, $Q=\norm{G}\mathds 1$ and
\begin{align*}
L=\norm{G}\sup_{x\in\mathcal{D}_1\cup\mathcal{D}_2}\norm{\mathbf{\Phi}(x)}_2^2,
\end{align*}
which is finite by continuity on compact sets. This choice satisfies $\Gamma\succeq 0$, with $c\coloneq\norm{G}$, for all $\mathbf{u},\mathbf{v}\in\mathbb{C}^{n}$,
\begin{align*}
\begin{bmatrix}\mathbf{u}\\\mathbf{v}\end{bmatrix}^{\dagger}\!\Gamma\begin{bmatrix}\mathbf{u}\\\mathbf{v}\end{bmatrix}
&=c\norm{\mathbf{u}}^2+c\norm{\mathbf{v}}^2+2\mathrm{Re}\left(\mathbf{u}^\dagger G\mathbf{v}\right)\\
&\ge c\big(\norm{\mathbf{u}}-\norm{\mathbf{v}}\big)^2\ge 0,
\end{align*}
since $\vert\mathbf{u}^\dagger G\mathbf{v}\vert\le c\norm{\mathbf{u}}\norm{\mathbf{v}}$. This includes the case $G=0$.

Next we show that the sets $\mathcal{D}_1,\mathcal{D}_2$ are bounded. By the linear independence of $\mathbf{\Phi}$, choose $x_1,\ldots,x_{n}\in\mathcal{D}_1$ whose evaluation matrix $E_1$ is invertible. For feasible $(L,P,Q)$ with $L\le\lambda$, we have $P\succeq 0$ and
\begin{align*}
\mathrm{tr}(E_1PE_1^\dagger)=\sum_{i}\mathbf{\Phi}(x_i)^\top P\,\mathbf{\Phi}(x_i)^*\le n\lambda,
\end{align*}
so $\norm{P}\le\norm{E_1^{-1}}^2\mathrm{tr}(E_1PE_1^\dagger)\le n\lambda\norm{E_1^{-1}}^2$. The same argument on $\mathcal{D}_2$ bounds $Q$. The sets $\{(L,P,Q)\ \text{feasible}:L\le\lambda\}$ are closed, being intersections of the closed PSD cone with closed half-spaces, one for each $x$ and $y$. They are therefore compact. Since the objective is continuous, the infimum is attained.

\emph{Attainment (iii).}
Let $(L_{\rm{SIP}},P,Q)$ be optimal. The construction in the first part produces a factorization with $\sup\norm{\mathbf{p}}_2\cdot\sup\norm{\mathbf{q}}_2\le L_{\rm{SIP}}=\norm{K}_h$, so this factorization is optimal. Lemma~\ref{lem:compression} does not increase either supremum, so compressing it yields an optimal factorization with at most $\mathrm{rank}(G)$ terms of the form of Eq.~\eqref{eq:comp-factors}.

\end{proof}

Now, we are finally in place to prove Lemma~\ref{lem:semiinfinite_reformulation}.
\begin{proof}[Proof of Lemma~\ref{lem:semiinfinite_reformulation}]
Let $\mathcal{D}\in\{[-1,1],\mathbb{T}\}$ and let $f$ be a polynomial as in Lemma~\ref{lem:haagerup_poly}. Its basis $\mathbf{\Phi}$ is continuous and linearly independent on the compact set $\mathcal{D}$. Its divided-difference kernel has the form of Eq.~\eqref{eq:comp-kernel}, directly for $\mathcal{D}=[-1,1]$ and, for $\mathcal{D}=\mathbb{T}$, in one of the two forms described in the conventions above.

Proposition~\ref{prop:sip-proof} with $\mathcal{D}_1=\mathcal{D}_2=\mathcal{D}$ shows that the program of Eq.~\eqref{eq:Lstar-sip} attains its minimum and that the minimum equals the infimum in Eq.~\eqref{eq:Lstar}.

A polynomial is $C^1$ on $\mathcal{D}$, and its kernel is factorizable by Lemma~\ref{lem:haagerup_poly}. Theorem~\ref{thm:OL-haagerup} identifies this infimum with $L_\star[f]$. Hence the optimal value of Eq.~\eqref{eq:Lstar-sip} is $L_\star[f]$, and it is attained. By Proposition~\ref{prop:sip-proof}(iii), $L_\star[f]$ is moreover attained by a Haagerup factorization with at most $\mathrm{rank}(G)$ terms whose factors lie in the coefficient span. In the convention of the main text $G\in\mathbb{C}^{N\times N}$, so at most $N$ terms are needed, and for a Chebyshev polynomial of exact degree $N$ one has $\mathrm{rank}(G)=N$ (see the discussion after Eq.~\eqref{eq:Hcheb}).
\end{proof}

\begin{corollary}[Sampled lower bound]\label{cor:finite-section}
In the setting of Lemma~\ref{lem:sdp}, let $K$ be the kernel of Eq.~\eqref{eq:comp-kernel} on $\mathcal{D}\times\mathcal{D}$, and $\mathsf{X},\mathsf{Y}\subset\mathcal{D}$ be finite sets.
\begin{enumerate}[label=(\alph*)]
\item For every $\mathbf{u}\in\mathbb R_+^{r_X}$ and $\mathbf{v}\in\mathbb R_+^{r_Y}$ with $\sum_iu_i+\sum_jv_j=2$,
\begin{align}\label{eq:comp-dual-bound}
\Big\Vert\mathrm{diag}\big(\sqrt{\mathbf{u}}\big)\,E_{\mathsf{X}}GE_{\mathsf{Y}}^\dagger\,\mathrm{diag}\big(\sqrt{\mathbf{v}}\big)\Big\Vert_*\le\norm{K}_h.
\end{align}
Consequently $L_{\mathrm{low}}\le L_\star[f]$ for arbitrary finite $\mathsf{X},\mathsf{Y}$.
\item If, in addition, $E_{\mathsf{X}}$ and $E_{\mathsf{Y}}$ have full column rank, then the optimal value of Eq.~\eqref{eq:sdp} is $L_{\mathsf{X},\mathsf{Y}}=\norm{K|_{\mathsf{X}\times\mathsf{Y}}}_h$.
\end{enumerate}
\end{corollary}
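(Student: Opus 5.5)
For part (a), fix admissible weights $\mathbf{u},\mathbf{v}$ and compare against an arbitrary Haagerup factorization of $K$ on $\mathcal{D}\times\mathcal{D}$. Restricting such a factorization to $\mathsf{X}\times\mathsf{Y}$ gives a factorization of the sampled matrix $K_{\mathsf{X},\mathsf{Y}}=E_{\mathsf{X}}GE_{\mathsf{Y}}^\dagger$, whose entries are $K(x_i,y_j)=\mathbf{\Phi}(x_i)^\top G\,\mathbf{\Phi}(y_j)^*$. Its pointwise norms satisfy $\max_i\norm{\mathbf{p}(x_i)}_2\le\sup_x\norm{\mathbf{p}(x)}_2$, and similarly for $\mathbf{q}$. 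Hence the finite Haagerup norm $\norm{K_{\mathsf{X},\mathsf{Y}}}_h$ is at most $\norm{K}_h$.

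It then remains to show that the left-hand side of Eq.~\eqref{eq:comp-dual-bound} is at most $\norm{K_{\mathsf{X},\mathsf{Y}}}_h$. I would do this by the weak-duality step already present in the proof of Lemma~\ref{lem:discrete-schur-haagerup}, adapted to rectangular sizes $r_X\times r_Y$ (that proof never uses squareness). Write $\sum_i u_i=2a$ and $\sum_j v_j=2(1-a)$. Setting $\alpha=\mathbf{u}/(2a)$ and $\beta=\mathbf{v}/(2(1-a))$ gives
\[
\norm{D_{\sqrt{\mathbf{u}}}K_{\mathsf{X},\mathsf{Y}}D_{\sqrt{\mathbf{v}}}}_*=2\sqrt{a(1-a)}\,\norm{D_{\sqrt\alpha}K_{\mathsf{X},\mathsf{Y}}D_{\sqrt\beta}}_*\le\norm{D_{\sqrt\alpha}K_{\mathsf{X},\mathsf{Y}}D_{\sqrt\beta}}_*,
\]
with the degenerate cases $a\in\{0,1\}$ trivially giving $0$. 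The right-hand side is bounded by Eq.~\eqref{eq:finite-haagerup-trace}, or directly: for any factorization, $D_{\sqrt\alpha}KD_{\sqrt\beta}=\sum_k D_{\sqrt\alpha}P_kZ_0Q_kD_{\sqrt\beta}$. Pairing with an optimal contraction and applying Cauchy--Schwarz as in Eq.~\eqref{eq:linear-algebra-OL} bounds it by $\max_i\norm{\mathbf{p}(x_i)}\max_j\norm{\mathbf{q}(y_j)}$, using $\sum\alpha_i=\sum\beta_j=1$.

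Combining these bounds and taking the maximum over $(\mathbf{u},\mathbf{v})$ gives $L_{\rm low}\le\norm{K}_h$. The lemma preceding this corollary, Lemma~\ref{lem:semiinfinite_reformulation} via Proposition~\ref{prop:sip-proof}(i) and Theorem~\ref{thm:OL-haagerup}, identifies $\norm{K}_h=L_\star[f]$, which completes (a).

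For part (b), apply Proposition~\ref{prop:sip-proof}(i) with $\mathcal{D}_1=\mathsf{X}$ and $\mathcal{D}_2=\mathsf{Y}$. These sets are finite, hence compact, and $\mathbf{\Phi}$ is continuous on them. The only hypothesis to verify is linear independence of $\mathbf{\Phi}$ on $\mathsf{X}$ and on $\mathsf{Y}$. This is exactly full column rank of $E_{\mathsf{X}}$ and $E_{\mathsf{Y}}$: a relation $\mathbf{c}^\top\mathbf{\Phi}(x_i)=0$ for all $i$ means $E_{\mathsf{X}}\mathbf{c}=0$. The semi-infinite program of Eq.~\eqref{eq:comp-sip} restricted to these sets is literally the finite SDP of Lemma~\ref{lem:sdp}(i), labelled Eq.~\eqref{eq:sdp}. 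Hence its value equals $\norm{K|_{\mathsf{X}\times\mathsf{Y}}}_h$, where $K$ is restricted to the sample points so the Haagerup norm is the one over factorizations defined only there.

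I also want to remark, as in Algorithm~\ref{alg:active-set}, that any $N$ distinct points make $E$ full rank. For Chebyshev bases this is Vandermonde invertibility after the triangular change of basis $D$. For the Laurent basis on $\mathbb{T}$ one multiplies by $z^{N/2}$ to obtain a Vandermonde matrix.

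The main obstacle is bookkeeping rather than depth. I must check that the nuclear-norm expression with normalization $\sum u+\sum v=2$ matches the finite Haagerup dual exactly, including the factor $2\sqrt{a(1-a)}$ and the vanishing weights. I also need to be careful that in (a) no rank assumption is required, so I use only weak duality there and not the SDP strong-duality machinery of (b).
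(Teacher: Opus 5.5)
Your proposal is correct and is essentially the paper's proof. For (a), the paper takes an arbitrary global factorization, forms the maps $A,B$ with $A^\dagger\mathbf{e}_i=\sqrt{u_i}\,\mathbf{p}(x_i)$ and $B^\dagger\mathbf{e}_j=\sqrt{v_j}\,\mathbf{q}(y_j)$, and uses $\|BA^\dagger\|_*\le\|A\|_{\mathrm F}\|B\|_{\mathrm F}$ with $\sqrt{UV}\le(U+V)/2=1$; this is your contraction-pairing/Cauchy--Schwarz step plus the $2\sqrt{a(1-a)}\le1$ rescaling, your passage through the sampled Haagerup norm is a harmless detour, and both then invoke $\|K\|_h=L_\star[f]$. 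Part (b) is the same application of Proposition~\ref{prop:sip-proof} with $\mathcal{D}_1=\mathsf{X}$, $\mathcal{D}_2=\mathsf{Y}$, where full column rank is linear independence and the extra constraint $L\ge0$ in Eq.~\eqref{eq:sdp} is implied by $P\succeq0$.
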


\begin{proof}
(a) Part~(a) bounds every feasible value of the nuclear-norm dual, not only its maximum. Hence the lower bound $L_{\mathrm{low}}\le L_\star[f]$ remains valid for inexact dual solves, which is the situation in Algorithm~\ref{alg:dual-nuclear-inner}. Let $K(x,y)=\langle\mathbf{p}(x),\mathbf{q}(y)\rangle$ be any factorization through $\ell^2(I)$, with finite suprema $a\coloneq\sup_x\norm{\mathbf{p}(x)}_2$ and $b\coloneq\sup_y\norm{\mathbf{q}(y)}_2$. The case of infinite suprema is trivial. Define operators $A:\ell^2(I)\to\mathbb{C}^{r_X}$ and $B:\ell^2(I)\to\mathbb{C}^{r_Y}$ by
\begin{align*}
(A\mathbf{h})_i\coloneq\sqrt{u_i}\,\langle\mathbf{h},\mathbf{p}(x_i)\rangle,\qquad
(B\mathbf{h})_j\coloneq\sqrt{v_j}\,\langle\mathbf{h},\mathbf{q}(y_j)\rangle.
\end{align*}
Then $A^\dagger\mathbf e_i=\sqrt{u_i}\,\mathbf{p}(x_i)$ and $B^\dagger\mathbf e_j=\sqrt{v_j}\,\mathbf{q}(y_j)$. Since $K(x_i,y_j)=[E_{\mathsf{X}}GE_{\mathsf{Y}}^\dagger]_{ij}$,
\begin{align*}
(BA^\dagger)_{ji}=\sqrt{u_iv_j}\,\langle\mathbf{p}(x_i),\mathbf{q}(y_j)\rangle
=\big[D_u^{1/2}E_{\mathsf{X}}GE_{\mathsf{Y}}^\dagger D_v^{1/2}\big]_{ij},
\end{align*}
with $D_u\coloneq\mathrm{diag}(\mathbf{u})$ and $D_v\coloneq\mathrm{diag}(\mathbf{v})$. Thus the matrix in Eq.~\eqref{eq:comp-dual-bound} is the transpose of $BA^\dagger$, and it has the same nuclear norm.

Hölder's inequality for Schatten norms gives $\norm{BA^\dagger}_*\le\norm{B}_{\rm{F}}\norm{A}_{\rm{F}}$, where $\norm{\cdot}_{\rm{F}}$ is the Frobenius norm. Moreover $\norm{A}_{\rm{F}}^2=\sum_i\norm{A^\dagger\mathbf e_i}_2^2=\sum_iu_i\norm{\mathbf{p}(x_i)}_2^2\le Ua^2$ and $\norm{B}_{\rm{F}}^2\le Vb^2$, with $U\coloneq\sum_iu_i$ and $V\coloneq\sum_jv_j$. Since $\sqrt{UV}\le(U+V)/2=1$,
\begin{align*}
\Big\Vert D_u^{1/2}E_{\mathsf{X}}GE_{\mathsf{Y}}^\dagger D_v^{1/2}\Big\Vert_*\le\sqrt{UV}\,ab\le ab.
\end{align*}
Taking the infimum over factorizations gives Eq.~\eqref{eq:comp-dual-bound}. Maximizing over $(\mathbf{u},\mathbf{v})$, Eq.~\eqref{eq:dual-nuclear} gives $L_{\mathrm{low}}\le\norm{K}_h=L_\star[f]$.

(b) Part~(b) is not needed for the lower bound. It identifies the finite program with the Haagerup norm of the sampled kernel. Its hypothesis holds throughout Algorithm~\ref{alg:active-set}, \emph{i.e.}, the initial sets contain $2N+1\ge N$ distinct points, and a nonzero element of the $N$-dimensional span of $\mathbf{\Phi}$ has at most $N-1$ zeros in $\mathcal{D}$ (for $\mathbb{T}$ after multiplication by $z^{N/2}$). Since the active sets are nested, the hypothesis persists at every iteration. Full column rank of $E_{\mathsf{X}}$ (resp.\ $E_{\mathsf{Y}}$) is exactly linear independence of $\mathbf{\Phi}$ on $\mathsf{X}$ (resp.\ $\mathsf{Y}$). Finite sets are compact, and every function on them is continuous. Proposition~\ref{prop:sip-proof} with $\mathcal{D}_1=\mathsf{X}$ and $\mathcal{D}_2=\mathsf{Y}$ therefore applies; the constraint $L\ge0$ in Eq.~\eqref{eq:sdp} is implied by $P\succeq 0$.
\end{proof}

 
\section{Exact sum-of-squares reformulation of the semi-infinite problem}\label{app:sip-to-sos}

The continuous optimization problem in Eq.~\eqref{eq:Lstar-sip} is a convex semi-infinite program with finite-dimensional variables $(P,Q,L)$, and inequalities
\[
\mathbf{\Phi}( x)^\top P\,\mathbf{\Phi}( x)^*\leq L,
\quad
\mathbf{\Phi}( y)^\dagger Q\,\mathbf{\Phi}( y)\leq L
\]
that must hold for all points in the domain $\mathcal{D}$. Let
\[
g_P(x)\coloneq L-\mathbf{\Phi}(x)^\dagger P\,\mathbf{\Phi}(x),
\,\,
g_Q(x)\coloneq L-\mathbf{\Phi}(x)^\dagger Q\,\mathbf{\Phi}(x).
\]
Then, the problem can be written equivalently as
\begin{gather}\label{eq:Lstar-sip-gen}
    \begin{aligned}
        \min_{L,P,Q} &\quad L\\
        \text{s.t.} & \quad
        \begin{cases}
            \begin{bmatrix}
                P & G\\
                G^\dagger & Q
            \end{bmatrix}\succeq 0,\\
            g_P(x)\ge 0,\quad \forall \, x \in \mathcal{D},\\
            g_Q(x)\ge 0,\quad \forall \, x \in \mathcal{D},
        \end{cases}
    \end{aligned}
\end{gather}
for Hermitian matrices $P,Q$ of appropriate sizes.
Thus, the only infinite-dimensional component of the problem is the requirement that the scalar functions $g_P$ and $g_Q$ be nonnegative on $\mathcal{D}$. For Chebyshev polynomials of fixed degree, $g_P$ and $g_Q$ belong to a finite-dimensional space of univariate polynomials. Their global nonnegativity can therefore be characterized exactly using sum-of-squares (SOS) representations. In both cases, global nonnegativity is equivalent to membership in a convex cone that admits an exact finite-dimensional representation \cite{roh2006}. This allows us to replace the semi-infinite constraints in \eqref{eq:Lstar-sip-gen} by conic constraints.

\subsection{Chebyshev case}
For algebraic polynomials on $\mathcal{D}=[-1,1]$, the relevant result is the Markov-Lukács theorem \cite[Sec.~1.21]{szeg1939}.  In this case, for Chebyshev polynomials,
\[
\mathbf{\Phi}(x)=(T_0(x),\dots,T_{N-1}(x))^\top,
\]
and $g_P,g_Q$ are polynomials with degree at most $2N-2$. 

Using $T_\ell(\cos\theta)=\cos(\ell\theta)$ for $\theta\in[0,\pi]$, both $g_P$ and $g_Q$ can be written as cosine polynomials
\[
g(\theta)=\sum_{\ell=0}^{M} c_\ell \cos(\ell\theta),
\qquad M=2N-2.
\]
Therefore,
\[
g(x)\ge 0 \, \forall \,\, x\in[-1,1]
\quad \Longleftrightarrow \quad
g(\theta)\ge 0 \, \forall \,\, \theta\in[0,\pi].
\]
Let
\[
\theta_k=\frac{k\pi}{M},\qquad k=0,\dots,M,
\]
be the Chebyshev--Lobatto nodes, and define the evaluation matrix
\[
U\in\mathbb{R}^{(M+1)\times N},
\quad
U_{k,\ell}=T_{\ell-1}(\cos\theta_k).
\]
Then the sampled values of $g_P$ and $g_Q$ in these nodes are
\begin{align*}
y^{(P)} &\coloneq \big(g_P(\cos\theta_0),\dots,g_P(\cos\theta_M)\big)^\top
\\ &= L\mathds{1}-\mathrm{diag}(UPU^\top),
\end{align*}
and, similarly, $y^{(Q)} = L\mathds{1}-\mathrm{diag}(UQU^\top)$. Since $g_P,g_Q$ are cosine polynomials of degree at most $M$, their coefficients are uniquely determined by these samples.

Now we are in place to state the exact nonnegativity via discrete cosine transforms. Define the discrete cosine transform matrix
\[
[V_{\mathrm{DCT}}]_{k,\ell}=\cos(\ell\theta_k),
\qquad k,\ell=0,\dots,M,
\]
and
\[
W_{\mathrm{DCT}} \coloneq \frac{2}{M} D V_{\mathrm{DCT}} D,
\quad
D=\mathrm{diag}\!\left(\tfrac{1}{2},1,\dots,1,\tfrac{1}{2}\right),
\]
so that $W_{\mathrm{DCT}}V_{\mathrm{DCT}}=\mathds{1}$. Let $V_1$ be the first $N$ columns of $V_{\mathrm{DCT}}$, and let $V_2$ be the matrix
\[
[V_2]_{k,\ell}=\sin(\ell\theta_k),
\quad
k=0,\dots,M,\ \ell=1,\dots,N-1.
\]

By \cite[Thm.~4]{roh2006}, a cosine polynomial of degree $M=2N-2$ is nonnegative on $[0,\pi]$ if and only if there exist
\[
X_1\in\mathrm{Herm}(N),\,\,
X_2\in\mathrm{Herm}(N-1),
\quad
X_1,X_2\succeq 0,
\]
such that its sampled values satisfy
\[
y = \mathrm{diag}(V_1 X_1 V_1^\top) + \mathrm{diag}(V_2 X_2 V_2^\top).
\]

Finally, let $G \in \mathbb{R}^{N \times N}$ be fixed, then the semi-infinite problem in \eqref{eq:Lstar-sip} is equivalent to the following finite-dimensional SDP:
\begin{gather}\label{eq:cheb-rv-sdp}
    \begin{aligned}
        \min_{\substack{L,P,Q,\\ X_{P,1},X_{P,2},\\X_{Q,1},X_{Q,2}}} & \quad L\\
        \text{s.t.} & 
        \begin{cases}
            \begin{bmatrix}
                P & G\\
                G^\dagger & Q
            \end{bmatrix}\succeq 0,\\
            L\mathds{1}-\mathrm{diag}(UPU^\top)
            = \mathrm{diag}(V_1X_{P,1}V_1^\top) +\\
            \qquad\qquad +\mathrm{diag}(V_2X_{P,2}V_2^\top),\\
            L\mathds{1}-\mathrm{diag}(UQU^\top)
            = \mathrm{diag}(V_1X_{Q,1}V_1^\top) + \\
            \qquad\qquad +\mathrm{diag}(V_2X_{Q,2}V_2^\top),\\
            P,Q,X_{P,1},X_{Q,1}\in\mathrm{Herm}(N),\\
            X_{P,2},X_{Q,2}\in\mathrm{Herm}(N-1),\\
            X_{P,1},X_{P,2},X_{Q,1},X_{Q,2}\succeq 0.
        \end{cases}
    \end{aligned}
\end{gather}


\section{Proof of Lemma~\ref{lem:sdp}}\label{sec:proof-lem-discr}
Here we prove Lemma~\ref{lem:sdp}, which we restate for the convenience of the reader.

\begin{lemma}[Bounding $L_\star$ as a finite optimization]
In the setup of Lemma \ref{lem:semiinfinite_reformulation}, let $\mathsf{X}=\{ x_i\}_{i=1}^{r_X},\mathsf{Y}=\{ y_j\}_{j=1}^{r_Y}$ be finite sets of points 
in the domain $\mathcal{D}$.
Then
\begin{enumerate}[leftmargin=*,label=(\roman*)]
\item substituting the infinitely many constraints in Eq. \eqref{eq:Lstar-sip} by their restriction to the finite number of points $ x\in\mathsf{X}$ and $ y\in\mathsf{Y}$ defines a SDP;
\item  the SDP in $(i)$ has a dual form without any semi-definite constraint and reads
\begin{align}\label{eq:dual-nuclear}
    L_{\mathrm{low}}=\max_{\substack{\mathbf{u}\in\mathbb{R}_+^{r_X},\,\mathbf{v}\in\mathbb{R}_+^{r_Y}\\
 \sum_i \! u_i+\sum_j \! v_j=2}}  \norm{\mathrm{diag}\big(\sqrt{\mathbf{u}}\big) \!\cdot\! E_\mathsf{X}^{\phantom{\dagger}} \!\cdot\! G \!\cdot\! E_\mathsf{Y}^\dagger \!\cdot\! \mathrm{diag}\big(\sqrt{\mathbf{v}}\big)}_*,
\end{align}
where $E_\mathsf{X}$ is the $r_X \times N$ matrix with rows given by $\mathbf{\Phi}( x_1)^\top,\ldots,\mathbf{\Phi}( x_{r_X})^\top$ (and similarly for $E_\mathsf{Y}$), and $\norm{\cdot}_*$ is the nuclear norm $\norm{A}_*\coloneq\trace\left(\sqrt{A^\dagger A} \right)$;
\item the SDP in $(i)$ has strong duality;
\item the solution of the dual problem \eqref{eq:dual-nuclear} gives the lower bound $L_{\mathrm{low}} \leq L_\star$;
\item if $E_\mathsf{X}$ and $E_\mathsf{Y}$ have full column rank, the finite primal optimum is attained, and, given dual variables $\mathbf{u},\mathbf{v}$ from $(ii)$, Haagerup polynomials $p_k,q_k$ in Eq.~\eqref{eq:haagerup} can be recovered by an iteration in which each sweep costs $\mathcal{O}(N^3+r N^2)$
arithmetic operations and $\mathcal{O}\!\left(rN+N^2\right)$ memory, with $r\coloneq r_X+r_Y$. The number of sweeps is not bounded a priori, and acceptance is certified a posteriori. A certified, possibly looser, upper bound in $(vi)$ is also available without any sweep;
\item with $p_k,\, q_k$ from $(v)$, a certified upper bound $L_{\mathrm{up}}\geq L_\star$ is obtained from the values of $\norm{\mathbf{p}(x)}_2^2$ and $\norm{\mathbf{q}(y)}_2^2$ on a grid of $M=\mathcal{O}(N)$ points, at a cost of $\mathcal{O}(N^2)$ arithmetic operations per point;
\item each first-order iteration for the problem in $(ii)$ costs $\mathcal{O}\!\left( I_{\rm{obj}}\big[(r_X+r_Y)N^2+N^3\big] \right)$
arithmetic operations and $\mathcal{O}\!\left((r_X+r_Y)N+N^2\right)$ memory, where $I_{\rm{obj}}$ is the number of objective evaluations in a line search. The number of iterations is bounded a priori by $\mathcal{O}(m\log m/\varepsilon_{\rm rel}^{3})$, with $m=r_X+r_Y$, for a relative accuracy $\varepsilon_{\rm rel}$ (App.~\ref{sec:dual-nuclear-complexity}), and is typically far smaller in practice.
\end{enumerate}
\end{lemma}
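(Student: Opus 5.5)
The plan is to treat the seven items in order, reusing the machinery of Lemma~\ref{lem:discrete-schur-haagerup}, Proposition~\ref{prop:sip-proof} and Corollary~\ref{cor:finite-section}.

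For (i), observe that $\mathbf{\Phi}(x_i)^\top P\,\mathbf{\Phi}(x_i)^* = (E_\mathsf{X} P E_\mathsf{X}^\dagger)_{ii}$. Hence the sampled constraints are finitely many affine inequalities in $(L,P,Q)$. Together with the linear matrix inequality $\Gamma\succeq0$, this is an SDP.

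For (ii)--(iii), form the Lagrangian with the following multipliers:
\begin{itemize}
\item $u_i,v_j\ge0$ for the point constraints;
\item a PSD multiplier $\begin{bmatrix}Z_{11}&-B\\-B^\dagger&Z_{22}\end{bmatrix}$ for $\Gamma\succeq0$.
\end{itemize}
Stationarity in $L$ forces $\sum_iu_i+\sum_jv_j=1$. Stationarity in $P$ and $Q$ forces $Z_{11}=E_\mathsf{X}^\dagger D_uE_\mathsf{X}$ and $Z_{22}=E_\mathsf{Y}^\dagger D_vE_\mathsf{Y}$. The dual objective is then $2\,\mathrm{Re}\,\mathrm{tr}(B^\dagger G)$.

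Exactly as in the proof of Lemma~\ref{lem:discrete-schur-haagerup}, the Schur-complement characterization of the dual block constraint (restricted to the supports of $\mathbf{u},\mathbf{v}$) gives $B=E_\mathsf{X}^\dagger D_u^{1/2}UD_v^{1/2}E_\mathsf{Y}$ with $\norm{U}\le1$. Optimizing over $U$ by operator/trace-norm duality turns the objective into $2\norm{D_u^{1/2}E_\mathsf{X}GE_\mathsf{Y}^\dagger D_v^{1/2}}_*$. Rescaling $(\mathbf{u},\mathbf{v})\to2(\mathbf{u},\mathbf{v})$ absorbs the factor $2$ into the normalization $\sum u+\sum v=2$, which yields Eq.~\eqref{eq:dual-nuclear}. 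No semidefinite constraint survives.

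Strong duality in (iii) follows from Slater's condition. Take $P=Q=\lambda\mathds 1$ with $\lambda>\norm{G}$, which makes $\Gamma\succ0$, and take $L$ larger than $\lambda\max\norm{\mathbf{\Phi}}_2^2$ over $\mathsf{X}\cup\mathsf{Y}$. This point is strictly feasible.

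Item (iv) is exactly Corollary~\ref{cor:finite-section}(a): every dual-feasible value is at most $\norm{K}_h$, which equals $L_\star$ by Theorem~\ref{thm:OL-haagerup}.

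For (v), attainment follows from Proposition~\ref{prop:sip-proof}(ii) with $\mathcal{D}_1=\mathsf{X}$ and $\mathcal{D}_2=\mathsf{Y}$. The full-column-rank hypothesis is precisely the linear independence that proposition requires. For recovery, I would use complementary slackness. Given $(\mathbf{u},\mathbf{v})$, the maximizing $U$ is obtained from the polar factor of $D_u^{1/2}E_\mathsf{X}GE_\mathsf{Y}^\dagger D_v^{1/2}$, which determines the optimal multiplier and hence the range of an optimal $\Gamma$. I would then run a fixed-point or alternating-rescaling sweep on $(P,Q)$, for example $P\leftarrow(GQ^{-1}G^\dagger)$-type updates reweighted so that the active diagonal values $(E_\mathsf{X}PE_\mathsf{X}^\dagger)_{ii}$ equalize. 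Each sweep needs only $N\times N$ eigendecompositions, costing $\mathcal{O}(N^3)$, and products with $E_\mathsf{X},E_\mathsf{Y}$, costing $\mathcal{O}(rN^2)$; this gives the stated cost. Factoring $\Gamma=WW^\dagger$ then gives $p_k,q_k$. Without any sweep, the trivial feasible point $P=Q=\norm{G}\mathds 1$, or a rescaled SVD factorization of $G$, already gives a certified looser bound.

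For (vi), the squared norms $\norm{\mathbf{p}(x)}_2^2=\mathbf{\Phi}(x)^\top P\mathbf{\Phi}(x)^*$ are polynomials of degree at most $2N-2$ (trigonometric of degree $N-1$ on $\mathbb{T}$). Their maximum over $\mathcal{D}$ can therefore be certified from samples on $M=\mathcal{O}(N)$ Chebyshev/Fej\'er points. Concretely, I would bound the sup of a polynomial by its grid maximum times a Markov/Bernstein-type factor (the content of Lemma~\ref{lem:certified-max}). Each evaluation costs $\mathcal{O}(N^2)$. The resulting $L_{\rm up}=\sqrt{\bar a\bar b}$ is $\ge L_\star$ by the upper-bound proposition of App.~\ref{app:proof-main}.

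For (vii), the objective can be evaluated without forming any $r_X\times r_Y$ matrix. Write $M=D_u^{1/2}E_\mathsf{X}$ and $M'=D_v^{1/2}E_\mathsf{Y}$. Then
\[
\norm{MGM'^\dagger}_*=\mathrm{tr}\sqrt{(M'^\dagger M')^{1/2}G^\dagger (M^\dagger M)\,G\,(M'^\dagger M')^{1/2}},
\]
which involves only $N\times N$ Gram matrices costing $\mathcal{O}(rN^2)$ plus $\mathcal{O}(N^3)$ matrix functions. The gradient in $u_i,v_j$ is read off from the same factorization. The iteration bound $\mathcal{O}(m\log m/\varepsilon_{\rm rel}^3)$ would come from a smoothed or mirror-descent method on the simplex, where the $\log m$ is the entropy diameter of the simplex. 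The factor $\varepsilon_{\rm rel}^{-3}$ reflects that $\sqrt{\mathbf{u}}$ is non-Lipschitz near the boundary, which I would handle by a smoothing argument.

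I expect the main obstacles to be two. The first is the degenerate-support case in (ii), where zero weights kill rows and columns and the Schur-complement step needs care. The second, and harder, is the a priori iteration bound in (vii), which requires controlling this non-smoothness.
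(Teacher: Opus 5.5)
Your items (i)–(iv), (vi) and (vii) follow the paper's proof in substance. Three small remarks:
\begin{itemize}
\item In (ii), the diagonal blocks $E_\mathsf{X}^\dagger D_uE_\mathsf{X}$ and $E_\mathsf{Y}^\dagger D_vE_\mathsf{Y}$ are not diagonal. So the degenerate case is handled by restricting to their \emph{ranges}, equivalently via the polar decomposition $D_u^{1/2}E_\mathsf{X}=V_u(E_\mathsf{X}^\dagger D_uE_\mathsf{X})^{1/2}$. Restricting to the supports of $\mathbf u,\mathbf v$ only works in Lemma~\ref{lem:discrete-schur-haagerup} because there the blocks are diagonal. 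Your formula $B=E_\mathsf{X}^\dagger D_u^{1/2}UD_v^{1/2}E_\mathsf{Y}$ is nevertheless correct.
\item In (vi), the grid factor has to come from Bernstein's inequality for $\theta\mapsto g(\cos\theta)$. A Markov bound in $x$ would force $M=\mathcal{O}(N^2)$.
\item Your ``smoothing'' in (vii) is the paper's safeguard. Evaluate $F$ at $(1-\delta)\mathbf{w}+\delta\mathbf{1}/m$; then $\norm{\mathbf g}_\infty\le C\sqrt{m/\delta}$ with $C=\max_{i,j}\vert K_f(x_i,y_j)\vert\le L_{\mathrm{low}}$. Concavity gives $\max F_\delta\ge(1-\delta)L_{\mathrm{low}}$. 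Entropic mirror ascent with $\delta=\varepsilon_{\rm rel}/2$ then gives $\lceil 16m\log m/\varepsilon_{\rm rel}^3\rceil$ iterations.
\end{itemize}

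The genuine gap is the recovery iteration in (v). The only concrete update you give, $P\leftarrow GQ^{-1}G^\dagger$, stalls after one step. When $G$ is invertible (e.g.\ the Chebyshev basis at exact degree $N$), the symmetric update $Q\leftarrow G^\dagger P^{-1}G$ returns the same $Q$, so every pair $(GQ^{-1}G^\dagger,Q)$ is a fixed point. All the work is then left to an unspecified ``reweighting that equalizes the active diagonal values.'' Nothing guarantees that the sampled values ever come down to $L_{\mathrm{low}}$, so acceptance may never occur. Moreover, $\begin{bmatrix}GQ^{-1}G^\dagger&G\\G^\dagger&Q\end{bmatrix}$ is exactly singular, so a floating-point Cholesky certificate would need a shift anyway.

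The missing idea is to recast recovery as the fixed-level \emph{convex feasibility} problem $\mathcal{F}_{\mathsf{X},\mathsf{Y}}(s)$ of Eq.~\eqref{eq:fixed-level-gram-feasibility}, with $s$ slightly above the dual value.
\begin{itemize}
\item Strong duality plus primal attainment (from full column rank) give an optimal $\Gamma_\star$. Then $\Gamma_\star+\delta\mathds{1}$ is strictly feasible at that level, since adding $\delta\mathds{1}$ leaves the cross block unchanged.
\item Cyclic projections onto three closed convex sets therefore converge to the intersection: the affine cross-block set, the sampled halfspaces (rank-one updates, $\mathcal{O}(rN^2)$ per sweep), and the PSD cone ($\mathcal{O}(N^3)$).
\item Each iterate is certified by resetting the cross block to $G$, shifting by $\eta\ge-\lambda_{\min}$, and running a Cholesky test. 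It is accepted once its sampled maximum lies within the reserved slack.
\end{itemize}
This is what yields a sweep of the stated cost that eventually passes an a posteriori test.

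Your complementary-slackness idea corresponds to the paper's KKT candidate, and your no-sweep fallback ($\norm{G}\mathds{1}$ or the SVD of $G$) corresponds to its spectral candidate. The paper uses the KKT candidate only as an initializer, since it determines $P,Q$ only when $E_\mathsf{X}^\dagger D_uE_\mathsf{X}$ and $E_\mathsf{Y}^\dagger D_vE_\mathsf{Y}$ are nonsingular.
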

\subsection{Proof of $(i)$}

Let $\mathsf{X}=(x_i)_{i=1}^{r_X}\in\mathcal{D}^{r_X}$, and $\mathsf{Y}=(y_j)_{j=1}^{r_Y}\in\mathcal{D}^{r_Y}$ be the current active sets of sampling points at iteration $t$ with cardinalities $\vert \mathsf{X} \vert = r_X$, and $\vert \mathsf{Y} \vert = r_Y$, respectively. Let us define the evaluation matrices
\[
E_\mathsf{X} \coloneq 
\begin{bmatrix}
\mathbf{\Phi}( x_1)^\top \\
\vdots \\
\mathbf{\Phi}( x_{r})^\top
\end{bmatrix}
\in \mathbb{C}^{r_{\mathsf{X}} \times N},
\]
and similarly for $E_\mathsf{Y}$.
Therefore, from $K_f(x,y)=\mathbf{\Phi}(x)^\top G\,\mathbf{\Phi}(y)^*$ (Lemma~\ref{lem:haagerup_poly}), the kernel evaluated on this set gives
\begin{align}\label{eq-kernel-coeff-finite}
K_{\mathsf{X},\mathsf{Y}} = E_\mathsf{X} \cdot G \cdot E_\mathsf{Y}^\dagger \quad \in\mathbb{C}^{r_X\times r_Y}.     
\end{align}
Now, the Gram block for the kernel evaluated on this set is the image of the coefficient-space Gram block in Eq.~\eqref{eq:G-sip} under the congruence
\begin{align}\label{eq:G-discrete}
\begin{bmatrix}
    E_\mathsf{X} & 0 \\
    0 & E_\mathsf{Y}
\end{bmatrix}
\begin{bmatrix}
    P & G \\
    G^\dagger & Q
\end{bmatrix}
\begin{bmatrix}
    E_\mathsf{X} & 0 \\
    0 & E_\mathsf{Y}
\end{bmatrix}^\dagger.
\end{align}
Since congruence preserves positive semidefiniteness, we only need to ensure that
\[
\begin{bmatrix}
    P & G \\
    G^\dagger & Q
\end{bmatrix} \succeq 0.
\]
Moreover, restricting the infinitely many inequalities in Eq.~\eqref{eq:Lstar-sip} to the active lists gives pointwise bounds and the following SDP:
\begin{gather}\label{eq:sdp}
    \begin{aligned}
        L_{\mathsf{X},\mathsf{Y}} \coloneq \min_{L,P,Q} & \quad \quad L \\
        \text{s.t}. & \begin{cases}
            \begin{bmatrix}
                P & G \\
                G^\dagger & Q
            \end{bmatrix} \succeq 0 \\
            \mathbf{\Phi}( x_i)^\top P\,\mathbf{\Phi}( x_i)^* \leq L, \quad \,\, i \in [r_X] \\
            \mathbf{\Phi}( y_j)^\top Q\,\mathbf{\Phi}( y_j)^* \leq L, \quad \,j \in [r_Y] \\
            L \geq 0,\\
            P, Q \in \mathrm{Herm}(N).
        \end{cases} 
    \end{aligned}
\end{gather}

\subsection{Proof of $(ii)$}\label{sec:dual-formulation}
\begin{theorem}\label{thm:dual}
    Let $N, G, E_{\mathsf{X}}, E_{\mathsf{Y}}$ be as in \eqref{eq:sdp}. The dual problem of the SDP in Eq. \eqref{eq:sdp} reads 
    \begin{gather}\label{eq:dual-form}
    \begin{aligned}
        \max_{\mathbf{u},\mathbf{v},X} & \quad   \mathrm{Re}\left[ \trace\left( X^\dagger G \right) \right] \\
        \text{s.t}. & \quad \begin{cases}
            \frac{1}{2}\begin{bmatrix}
                E_{\mathsf{X}}^\dagger \!\cdot\! \mathrm{diag}(u) \!\cdot\!E_{\mathsf{X}} & -X \\
                -X^\dagger & E_{\mathsf{Y}}^\dagger \!\cdot\! \mathrm{diag}(v) \!\cdot\! E_{\mathsf{Y}}
            \end{bmatrix} \succeq 0 \\
            \sum_i (u_i + v_i )=2 \\
            \mathbf{u}\in\mathbb{R}_+^{r_X},\,\mathbf{v}\in\mathbb{R}_+^{r_Y} \\
            X \in \mathbb{C}^{N \times N}
        \end{cases} .
    \end{aligned}
\end{gather}
\end{theorem}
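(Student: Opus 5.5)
We derive the dual of the SDP in Eq.~\eqref{eq:sdp} by a standard Lagrangian computation. We introduce a Hermitian multiplier for the PSD constraint,
\begin{align*}
Z=\frac{1}{2}\begin{bmatrix} Z_{11} & -X\\ -X^\dagger & Z_{22}\end{bmatrix}\succeq 0 .
\end{align*}
We also introduce nonnegative multipliers $u_i$ for $\mathbf{\Phi}(x_i)^\top P\,\mathbf{\Phi}(x_i)^*\le L$, $v_j$ for the $y_j$ constraints, and $\mu\ge 0$ for $L\ge0$. The Lagrangian is
\begin{align*}
\mathcal{L}={}&L-\trace\!\Big(Z\begin{bmatrix}P&G\\G^\dagger&Q\end{bmatrix}\Big)
+\sum_i u_i\big(\mathbf{\Phi}(x_i)^\top P\,\mathbf{\Phi}(x_i)^*-L\big)\\
&+\sum_j v_j\big(\mathbf{\Phi}(y_j)^\top Q\,\mathbf{\Phi}(y_j)^*-L\big)-\mu L .
\end{align*}
We rewrite the sampled quadratic forms as traces, $\mathbf{\Phi}(x_i)^\top P\,\mathbf{\Phi}(x_i)^*=\trace\big(P\,\mathbf{\Phi}(x_i)^*\mathbf{\Phi}(x_i)^\top\big)$. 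Summing over $i$ then gives $\trace\big(P\,E_{\mathsf{X}}^\dagger\mathrm{diag}(\mathbf{u})E_{\mathsf{X}}\big)$, up to the choice of conjugation convention, which I will fix so that it matches the statement.

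Next we minimize over the free primal variables. Since $L$, $P$ and $Q$ are unconstrained, the infimum is finite only when their coefficients vanish. The coefficient of $L$ gives $1-\sum_iu_i-\sum_jv_j-\mu=0$. The coefficients of $P$ and $Q$ give $Z_{11}=2\,\cdot\tfrac12\cdot$ (an appropriately scaled copy of) $E_{\mathsf{X}}^\dagger\mathrm{diag}(\mathbf{u})E_{\mathsf{X}}$, and likewise for $Z_{22}$. What remains is the constant term $-\trace(Z\,\mathrm{offdiag})=\mathrm{Re}\,\trace(X^\dagger G)$, up to scaling. The normalization $\sum_i u_i+\sum_j v_j=2$ in the statement, together with the factor $\tfrac12$ on the block, indicates that the multipliers were rescaled by $2$. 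I will track this by substituting $(\mathbf{u},\mathbf{v},X)\to 2(\mathbf{u},\mathbf{v},X)$, or equivalently by writing the PSD multiplier with the $\tfrac12$ prefactor from the start. The off-diagonal pairing contributes $\tfrac12\big(\trace(X^\dagger G)+\trace(XG^\dagger)\big)\cdot 2$, which gives $\mathrm{Re}\,\trace(X^\dagger G)$ once the scaling is fixed.

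The main obstacle is the multiplier $\mu$ for $L\ge0$. It makes the normalization an inequality, $\sum_i u_i+\sum_j v_j\le 2$, instead of the stated equality. To remove it I will argue that the dual objective is positively homogeneous: scaling $(\mathbf{u},\mathbf{v},X)$ by $c\ge1$ preserves feasibility, and for nonnegative objective values it does not decrease the objective. Since $X=0$ is feasible with objective $0$, the optimum is nonnegative, and any optimum with slack can be rescaled onto the equality. Alternatively, the constraint $L\ge0$ is redundant because $P\succeq0$ already forces it, so it can be dropped before dualizing.

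Finally, I will check the sign conventions of the congruence $E_{\mathsf{X}}^\dagger\mathrm{diag}(\mathbf{u})E_{\mathsf{X}}$ against the real case $\mathbf{\Phi}^*=\mathbf{\Phi}$. In the Chebyshev setting the two conventions coincide, and for $\mathbb{T}$ the argument carries over by conjugating $\mathbf{\Phi}$ consistently.
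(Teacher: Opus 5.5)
Your proposal is correct and follows the same Lagrangian route as the paper. The paper likewise uses a block multiplier for $\Gamma\succeq0$ and nonnegative multipliers for the sampled constraints, written as $u_i/2$ and $v_j/2$, which is the factor-of-two rescaling you anticipate. It then imposes stationarity in the free variables $L,P,Q$, leaving $\mathrm{Re}\,\trace(X^\dagger G)$ as the dual objective.

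The only difference is that the paper omits a multiplier for $L\ge0$ without comment (the constraint is redundant since $P\succeq0$). You handle it explicitly, either by dropping it or by rescaling onto the equality $\sum_i u_i+\sum_j v_j=2$, and both are valid. One bookkeeping point: with your $\tfrac12$-prefactored $Z$ and undivided $u_i,v_j$, only $(\mathbf{u},\mathbf{v})$ should be doubled, not $X$.
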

\begin{proof}
Let us write the Lagrangian of the primal SDP 
\begin{gather*}
    \begin{aligned}
        L_{\mathsf{X},\mathsf{Y}} \coloneq \min_{L,P,Q} & \quad \quad L \\
        \text{s.t}. & \begin{cases}
            \Gamma = \begin{bmatrix}
                P & G \\
                G^\dagger & Q
            \end{bmatrix} \succeq 0 \\
            \mathbf{\Phi}( x_i)^\top P\,\mathbf{\Phi}( x_i)^* \leq L, \quad \,\, i \in [r_X] \\
            \mathbf{\Phi}( y_j)^\top Q\,\mathbf{\Phi}( y_j)^* \leq L, \quad \,j \in [r_Y] \\
            L \geq 0,\\
            P, Q \in \mathrm{Herm}(N).
        \end{cases} 
    \end{aligned}
\end{gather*}
First, let us write the pointwise bounds as
\begin{gather*}
\begin{aligned}
\mathbf{\Phi}( x_i)^\top P\,\mathbf{\Phi}( x_i)^*=\left[ E_\mathsf{X} \!\cdot\! P \!\cdot\! E_\mathsf{X}^\dagger\right]_{ii} \leq L,
\\
\mathbf{\Phi}( y_j)^\top Q\,\mathbf{\Phi}( y_j)^*=\left[ E_\mathsf{Y} \!\cdot\! Q \!\cdot\! E_\mathsf{Y}^\dagger\right]_{jj} \leq L.
\end{aligned}
\end{gather*}
Then, the Lagrangian reads
\begin{gather}\label{eq:lagrangian}
   \begin{aligned}
\mathcal{L}(P,Q,&L,Y,\mathbf{u},\mathbf{v}) = L - \trace(Y\, G) + \\
&+ \sum_{i=1}^{r_X} \frac{u_i}{2}(\left[ E_\mathsf{X} \!\cdot\! P \!\cdot\! E_\mathsf{X}^\dagger\right]_{ii}-L)+\\ 
&+\sum_{j=1}^{r_Y} \frac{v_j}{2}(\left[ E_\mathsf{Y} \!\cdot\! Q \!\cdot\! E_\mathsf{Y}^\dagger\right]_{jj}-L),
\end{aligned} 
\end{gather}
with the dual multipliers $Y \succeq 0$, $\mathbf{u}\in\mathbb{R}_+^{r_X},\,\mathbf{v}\in\mathbb{R}_+^{r_Y}$, associated with the constraints in Eq.~\eqref{eq:sdp} and signs according to the orientation of the inequalities. Since $\Gamma$ is a block matrix, $Y$ can also be seen as a block matrix
\[
Y =-\frac{1}{2}\begin{bmatrix}
        X_{11} & X_{12} \\
        X_{12}^\dagger & X_{22}
    \end{bmatrix}, 
\]
with $X_{12} \in \mathbb{C}^{N \times N}$, and $X_{11}, X_{22} \in \mathrm{Herm}(N)$. Thus, we can write $\trace(Y\, G)$ as
\begin{gather}
    \begin{aligned}
       \trace(Y\Gamma)=-\frac{1}{2}[ &\trace (X_{11}P)+\trace (X_{12} G^\dagger)\\
    &+\trace (X_{12}^\dagger G)+\trace (X_{22}Q)]. 
    \end{aligned}
\end{gather}

Collecting the primal variables in $\mathcal{L}$, we have
\begin{align*}
    P: & \frac{1}{2}\trace (X_{11}P) + \sum_{i=1}^{r_X} \frac{u_i}{2} \left[ E_\mathsf{X} \!\cdot\! P \!\cdot\! E_\mathsf{X}^\dagger\right]_{ii}  =  \\
    & \qquad\qquad \trace \left[\frac{1}{2} \left( X_{11}+ E_\mathsf{X}^\dagger \!\cdot\! \mathrm{diag}(\mathbf{u}) \!\cdot\! E_\mathsf{X} \right) P \right]  \\
    Q: & \frac{1}{2}\trace (X_{22}Q) + \sum_{j=1}^{r_Y} \frac{v_j}{2} \left[ E_\mathsf{Y} \!\cdot\! Q \!\cdot\! E_\mathsf{Y}^\dagger\right]_{jj}  =  \\
    & \qquad\qquad \trace \left[\frac{1}{2} \left( X_{22}+ E_\mathsf{Y}^\dagger \!\cdot\! \mathrm{diag}(\mathbf{v}) \!\cdot\! E_\mathsf{Y} \right) Q \right]  \\
    s: & \, s - \sum_{i=1}^{r_X} \frac{u_i}{2} s - \sum_{j=1}^{r_Y} \frac{v_j}{2} s  = s\left[1-\frac{1}{2}\sum_{i=1}^{r_X} u_i - \frac{1}{2}\sum_{j=1}^{r_Y}v_j \right],
\end{align*}
and constant terms $\frac{1}{2}\trace (X_{12}G^\dagger)+\frac{1}{2}\trace (X_{12}^\dagger G)$. A dual function $g(Y,\mathbf{u},\mathbf{v})$ is the infimum of $\mathcal{L}$ over $P,Q,L$. For the infimum to be achievable, we need to satisfy stationary conditions \cite{ben2013}, \emph{i.e.},  differentiate the above equations with respect to the primal variables and set them to zero,
\begin{gather*}
\begin{aligned}
-X_{11}= E_\mathsf{X}^\dagger \!\cdot\! \mathrm{diag}(\mathbf{u}) \!\cdot\! E_\mathsf{X} \\
-X_{22}= E_\mathsf{Y}^\dagger \!\cdot\! \mathrm{diag}(\mathbf{v}) \!\cdot\! E_\mathsf{Y}\\
  \sum_i (u_i + v_i)=2,
\end{aligned}
\end{gather*}
for feasible $Y\succeq 0$, $\mathbf{u}\in\mathbb{R}_+^{r_X},\,\mathbf{v}\in\mathbb{R}_+^{r_Y}$. Thus, under stationary conditions, the Lagrangian reduces to
\[
g(Y,\mathbf{u},\mathbf{v}) = \frac{1}{2}\trace (X_{12}G^\dagger)+\frac{1}{2}\trace (X_{12}^\dagger G) = \, \mathrm{Re}\left[ \trace\left( X^\dagger G \right) \right],
\]
and the dual problem reads
\begin{gather*}
    \begin{aligned}
        \max_{\mathbf{u},\mathbf{v},X} & \quad   \mathrm{Re}\left[ \trace\left( X^\dagger G \right) \right] \\
        \text{s.t}. & \quad \begin{cases}
            \frac{1}{2}\begin{bmatrix}
                E_{\mathsf{X}}^\dagger \!\cdot\! \mathrm{diag}(u) \!\cdot\!E_{\mathsf{X}} & -X \\
                -X^\dagger & E_{\mathsf{Y}}^\dagger \!\cdot\! \mathrm{diag}(v) \!\cdot\! E_{\mathsf{Y}}
            \end{bmatrix} \succeq 0 \\
            \sum_i  u_i+\sum_j v_j=2 \\
            \mathbf{u}\in\mathbb{R}_+^{r_X},\,\mathbf{v}\in\mathbb{R}_+^{r_Y} \\
            X \in \mathbb{C}^{N \times N}
        \end{cases} .
    \end{aligned}
\end{gather*}
\end{proof}

As the primal formulation in Eq.~\eqref{eq:sdp}, it is possible to obtain a solution to this semidefinite problem using off-the-shelf solvers, but the memory cost of solving it using standard interior-point methods might still be prohibitive. 
Next, we show that the dual problem can be rewritten without the semidefinite constraints. 

\begin{theorem}[Nuclear-norm form of the dual]\label{thm:dual-nuclear}
Let $G$, $E_{\mathsf{X}}$, and $E_{\mathsf{Y}}$ be as in
Theorem~\ref{thm:dual}. For $u\in\mathbb{R}_+^{r_X}$ and $v\in\mathbb{R}_+^{r_Y}$, define
\begin{align}\label{eq:weighted-coefficient-gram}
D_u &\coloneq \mathrm{diag}(\mathbf{u}),
&
D_v &\coloneq \mathrm{diag}(\mathbf{v}),
\\
A_u &\coloneq E_{\mathsf{X}}^\dagger D_u E_{\mathsf{X}},
&
B_v &\coloneq E_{\mathsf{Y}}^\dagger D_v E_{\mathsf{Y}}.
\end{align}
Then the dual problem in Eq.~\eqref{eq:dual-form} is equivalent to
\begin{align}\label{eq:dual-nuclear-sample-space}
L_{\mathrm{low}} =
\max_{\substack{
\mathbf{u}\in\mathbb{R}_+^{r_X},\, \mathbf{v}\in\mathbb{R}_+^{r_Y}\\
\sum_i \! u_i+\sum_j \! v_j=2}}
\norm{A_u^{1/2} G B_v^{1/2}}_* ,
\end{align}
where $\norm{\cdot}_*$ denotes the nuclear norm and
$A_u^{1/2}$ and $B_v^{1/2}$ are the positive-semidefinite square roots
of $A_u$ and $B_v$, respectively.
\end{theorem}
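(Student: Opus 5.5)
The plan is to eliminate $X$ from the dual of Theorem~\ref{thm:dual}. For fixed feasible $(\mathbf{u},\mathbf{v})$, maximize $\mathrm{Re}\,\trace(X^\dagger G)$ over all $X$ satisfying the block constraint. The first step characterizes that constraint, which up to the harmless factor $\tfrac12$ reads
\begin{align*}
\begin{bmatrix} A_u & -X\\ -X^\dagger & B_v\end{bmatrix}\succeq 0 .
\end{align*}
This is the standard generalized Schur-complement statement: it holds if and only if $X = A_u^{1/2} C B_v^{1/2}$ for some contraction $C$ with $\norm{C}\le 1$. The sign of $X$ is irrelevant, since $C\mapsto -C$ preserves the contraction set.

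\emph{Sufficiency.} Write
\begin{align*}
\begin{bmatrix} A_u & -X\\ -X^\dagger & B_v\end{bmatrix}
=\mathrm{diag}\big(A_u^{1/2},B_v^{1/2}\big)
\begin{bmatrix}\mathds{1} & -C\\ -C^\dagger&\mathds{1}\end{bmatrix}
\mathrm{diag}\big(A_u^{1/2},B_v^{1/2}\big).
\end{align*}
The middle factor is positive semidefinite exactly when $\norm{C}\le1$, so the congruence is positive semidefinite.

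\emph{Necessity.} Positivity gives $\vert \mathbf{a}^\dagger X\mathbf{b}\vert^2\le (\mathbf{a}^\dagger A_u\mathbf{a})(\mathbf{b}^\dagger B_v\mathbf{b})$ for all $\mathbf{a},\mathbf{b}$. Hence $X$ annihilates $\ker B_v$ and its range lies in $\mathrm{ran}\,A_u$. Then $C\coloneq (A_u^{1/2})^{+}X(B_v^{1/2})^{+}$ is a contraction with $X=A_u^{1/2}CB_v^{1/2}$, where $^{+}$ denotes the Moore--Penrose pseudoinverse. I expect this to be the main obstacle, because $A_u$ and $B_v$ may be singular when $E_\mathsf{X}$ or $E_\mathsf{Y}$ lack full column rank, or when weights vanish. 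The remedy is to work with the pseudoinverse and the range inclusions, or to perturb $A_u\to A_u+\eta\mathds{1}$ and let $\eta\to0$ by continuity.

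With this parametrization, the objective becomes
\begin{align*}
\mathrm{Re}\,\trace\big(B_v^{1/2}C^\dagger A_u^{1/2}G\big)=\mathrm{Re}\,\trace\big(C^\dagger A_u^{1/2}GB_v^{1/2}\big).
\end{align*}
By operator/trace-norm duality, already used in the proof of Lemma~\ref{lem:discrete-schur-haagerup}, the supremum over $\norm{C}\le1$ equals $\norm{A_u^{1/2}GB_v^{1/2}}_*$. It is attained at the unitary polar factor of $A_u^{1/2}GB_v^{1/2}$. Maximizing over the remaining constraints $\mathbf{u},\mathbf{v}\ge0$ with $\sum_iu_i+\sum_jv_j=2$ then yields Eq.~\eqref{eq:dual-nuclear-sample-space}. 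The maximum exists because the objective is continuous on a compact simplex.

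Finally, to recover the sample-space form \eqref{eq:dual-nuclear}, note that $A_u=(D_u^{1/2}E_\mathsf{X})^\dagger(D_u^{1/2}E_\mathsf{X})$. Hence $A_u^{1/2}=V_u D_u^{1/2}E_\mathsf{X}$ for a partial isometry $V_u$ (the polar decomposition), isometric on the range of $D_u^{1/2}E_\mathsf{X}$, and similarly for $B_v$. Since the nuclear norm is invariant under such partial isometries acting on the relevant ranges,
\begin{align*}
\norm{A_u^{1/2}GB_v^{1/2}}_*=\norm{D_u^{1/2}E_\mathsf{X}GE_\mathsf{Y}^\dagger D_v^{1/2}}_*,
\end{align*}
which matches Lemma~\ref{lem:sdp}(ii).
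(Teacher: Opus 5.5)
Your proposal is correct and follows essentially the same route as the paper: parametrize the feasible $X$ as $A_u^{1/2}CB_v^{1/2}$ with $\norm{C}\le 1$, then use operator/trace-norm duality to obtain $\norm{A_u^{1/2}GB_v^{1/2}}_*$. Two of your steps are more thorough than the paper's. Your treatment of singular $A_u,B_v$ (Cauchy--Schwarz from positivity, range/kernel inclusions, Moore--Penrose pseudoinverses) is more careful than the paper's brief support argument. Your polar-decomposition step also justifies the identification with the sample-space form Eq.~\eqref{eq:dual-nuclear}, which the paper's proof only states in its last line.
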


\begin{proof}
For a feasible $\mathbf{u}$ and $\mathbf{v}$, the semidefinite constraint in
Eq.~\eqref{eq:dual-form} becomes
\begin{align}\label{eq:fixed-weight-block}
\begin{bmatrix}
A_u & -X\\
-X^\dagger & B_v
\end{bmatrix}\succeq0.
\end{align}
Suppose first that $A_u,B_v \succ 0$. Applying a congruence
transformation with
$\mathrm{diag}(A_u^{-1/2},B_v^{-1/2})$, we have
\begin{align}\label{eq:normalized-block-constraint}
\begin{bmatrix}
A_u & -X\\
-X^\dagger & B_v
\end{bmatrix}\succeq0
\quad\Longleftrightarrow\quad
\begin{bmatrix}
\mathds{1} & -R\\
-R^\dagger & \mathds{1}
\end{bmatrix}\succeq 0,
\end{align}
where $R\coloneq A_u^{-1/2}XB_v^{-1/2}$.
Taking the Schur complement of either identity block in
Eq.~\eqref{eq:normalized-block-constraint}, we obtain
\begin{align*}
\mathds{1}-R^\dagger R\succeq 0 \quad \Longleftrightarrow \quad \norm{R}\leq 1.
\end{align*}
Hence the feasible matrices $X$ are those of the form
\begin{align}\label{eq:proof-X-factorization}
X=A_u^{1/2}RB_v^{1/2}, \quad \norm{R} \leq 1.
\end{align}

The same remains valid when $A_u$ or $B_v$ is
singular. Indeed, positivity of the block matrix in
Eq.~\eqref{eq:fixed-weight-block} implies that $X$ has support only
between the supports of $A_u$ and $B_v$. Restricting the block matrix to
these supports makes $A_u$ and $B_v$ invertible, so the argument above
applies there. Extending the contraction by zero on the
corresponding kernels also yields an $X$ as in Eq.~\eqref{eq:proof-X-factorization}.

Substituting Eq.~\eqref{eq:proof-X-factorization} into the dual objective,
\begin{align*}
\mathrm{Re}\left[ \trace(X^\dagger G) \right]
&=
\mathrm{Re}\left[ \trace\left( R^\dagger A_u^{1/2} G B_v^{1/2} \right) \right].
\end{align*}
Therefore, for fixed $\mathbf{u},\mathbf{v}$,
\begin{align}\label{eq:proof-coefficient-nuclear}
\max_X \mathrm{Re}\left[\trace(X^\dagger G) \right]
&= \max_{\norm{R} \leq 1} \mathrm{Re}\left[ \trace\left( R^\dagger A_u^{1/2}G B_v^{1/2} \right) \right] \notag\\
&=\norm{A_u^{1/2}G B_v^{1/2}}_* ,
\end{align}
where the second equality follows from the duality between the operator
and nuclear norms. Eliminating $X$ and maximizing over the nonnegative weights satisfying $\sum_i u_i+\sum_j v_j=2$ yields Eq.~\eqref{eq:dual-nuclear}.

\end{proof}

\subsection{Proof of $(iii)$}
We now establish strong duality between 
Eq.~\eqref{eq:sdp} and its dual formulation. By Slater's theorem, it
suffices to exhibit a strictly feasible primal point \cite{ben2013}.
\begin{theorem}[Strong duality]\label{thm:strong-duality}
Let $\varepsilon,\delta>0$, and take
\begin{align}
\gamma &= \norm{G}+\varepsilon, \\
P &= Q = \gamma \mathds{1}, \\
L &= \gamma N+\delta .
\end{align}
Then $(P,Q,L)$ is a strictly feasible point of the primal SDP in
Eq.~\eqref{eq:sdp}. Consequently, Slater's condition holds and the
primal and dual optimal values coincide,
\begin{align}
L_{\mathrm{low}} = L_{\mathsf{X},\mathsf{Y}}.
\end{align}
Moreover, the dual optimum is attained.
\end{theorem}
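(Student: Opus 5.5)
We check strict feasibility of the proposed point in the primal SDP of Eq.~\eqref{eq:sdp} directly, constraint by constraint, then invoke Slater's theorem for conic programs.

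\emph{PSD block.} With $P=Q=\gamma\mathds{1}$ and $\gamma=\norm{G}+\varepsilon>\norm{G}$, the Schur complement with respect to the upper-left block is $\gamma\mathds{1}-\gamma^{-1}G^\dagger G\succeq(\gamma-\norm{G}^2/\gamma)\mathds{1}$. This is strictly positive because $\gamma^2>\norm{G}^2$. Hence $\Gamma\succ0$, i.e. $\Gamma$ lies in the interior of the PSD cone. Equivalently, for $(\mathbf{u},\mathbf{v})\neq0$,
\[
\gamma\norm{\mathbf{u}}^2+\gamma\norm{\mathbf{v}}^2+2\mathrm{Re}(\mathbf{u}^\dagger G\mathbf{v})\ge\varepsilon(\norm{\mathbf{u}}^2+\norm{\mathbf{v}}^2)>0,
\]
as in the attainment part of Proposition~\ref{prop:sip-proof}.

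\emph{Pointwise inequalities.} We need $\mathbf{\Phi}(x_i)^\top P\mathbf{\Phi}(x_i)^*=\gamma\norm{\mathbf{\Phi}(x_i)}_2^2<L$. For the Chebyshev basis on $[-1,1]$ we have $|T_k|\le1$. For the Laurent monomials on $\mathbb{T}$ we have $|z^{k-N/2}|=1$. In both cases $\norm{\mathbf{\Phi}(x)}_2^2\le N$, so $\gamma\norm{\mathbf{\Phi}(x_i)}_2^2\le\gamma N<\gamma N+\delta=L$, strictly. The same holds for $Q$ at the points $y_j$, and $L>0$ makes $L\ge0$ strict.

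For a general basis one would replace $N$ by $\max_{x\in\mathsf{X}\cup\mathsf{Y}}\norm{\mathbf{\Phi}(x)}_2^2$. This is the one place where the specific normalization matters, and I expect it to be the only subtle point. I would state the result for the two bases of the main text and remark on the generalization.

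\emph{Conclusion.} The SDP is convex, with a linear objective, affine scalar inequalities and a linear matrix inequality. Its optimal value is finite, since $L\ge0$ bounds it below. By Slater's theorem (e.g. \cite{ben2013}), strict feasibility then implies zero duality gap and attainment of the dual optimum. The dual here is Eq.~\eqref{eq:dual-form}, which Theorem~\ref{thm:dual-nuclear} shows equals the nuclear-norm program, so $L_{\mathrm{low}}=L_{\mathsf{X},\mathsf{Y}}$.

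A minor care point is the Lagrangian convention. Theorem~\ref{thm:dual} absorbed the constraint $L\ge0$ into stationarity. Since $L\ge0$ is redundant given $P\succeq0$, dropping or keeping it does not affect Slater or the dual value. Attainment of the dual can also be seen directly: the maximization of a continuous function over the compact simplex $\{\sum u_i+\sum v_j=2\}$ is attained.
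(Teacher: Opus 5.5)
Your proposal is correct and follows essentially the same route as the paper: you bound $\gamma\norm{\mathbf{\Phi}(x)}_2^2\le\gamma N<L$ using $|T_k|\le 1$ on $[-1,1]$ (or $|z^{k-N/2}|=1$ on $\mathbb{T}$), show $\Gamma\succ0$ from $\gamma>\norm{G}$, and then invoke Slater's theorem. Your Schur-complement check of $\Gamma\succ0$, your explicit remark that the primal value is finite, and your direct compactness argument for dual attainment are minor additions to the paper's quadratic-form argument, not a different approach.
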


\begin{proof}
We first verify the scalar constraints for the evaluation of the kernel on the sets $\mathsf{X}$ and $\mathsf{Y}$. For every $x_i \in \mathsf{X}$,
\begin{align*}
\mathbf{\Phi}( x_i)^\top P\,\mathbf{\Phi}( x_i)^*
&= \gamma \norm{\mathbf{\Phi}(x_i)}_2^2 \\
&= \gamma \sum_{k=0}^{N-1}\vert T_k(x_i) \vert^2 \\
&\leq \gamma N < L,
\end{align*}
where we used $\vert T_k(x_i) \vert \leq 1$ for $x_i\in[-1,1]$ (on $\mathbb{T}$, $\vert x_i^{k-N/2}\vert=1$ gives the same bound), and likewise for every $y_j\in\mathsf{Y}$.
Thus all scalar inequality constraints are satisfied strictly.

It remains to prove that the block constraint is strictly positive
definite. With $P=Q=\gamma\mathds{1}$,
\begin{align*}
\Gamma= \begin{bmatrix}
\gamma\mathds{1} & G\\
G^\dagger & \gamma\mathds{1}
\end{bmatrix}.
\end{align*}
For arbitrary $\mathbf{a},\mathbf{b} \in \mathbb{C}^N$, we have
\begin{align}\label{eq:slater-quadratic-form}
\begin{bmatrix}
\mathbf{a}\\ \mathbf{b}
\end{bmatrix}^{\!\dagger}
\Gamma
\begin{bmatrix}
\mathbf{a} \\ \mathbf{b}
\end{bmatrix}
&= \gamma\norm{\mathbf{a}}_2^2
+ \gamma\norm{\mathbf{b}}_2^2 + 2\,\mathrm{Re}(\mathbf{a}^\dagger G\mathbf{b}).
\end{align}
Using
\begin{align*}
\mathrm{Re}(\mathbf{a}^\dagger G \mathbf{b}) \geq -|\mathbf{a}^\dagger G \mathbf{b}| \geq
-\norm{G}\,\norm{a}_2\norm{b}_2
\end{align*}
and
\begin{align*}
2\norm{a}_2\norm{b}_2 \leq \norm{a}_2^2+\norm{b}_2^2,
\end{align*}
Eq.~\eqref{eq:slater-quadratic-form} gives
\begin{align}
\begin{bmatrix}
\mathbf{a}\\ \mathbf{b}
\end{bmatrix}^{\!\dagger}
\Gamma
\begin{bmatrix}
\mathbf{a} \\ \mathbf{b}
\end{bmatrix}
&\geq \gamma \left( \norm{\mathbf{a}}_2^2+\norm{\mathbf{b}}_2^2 \right)
- 2\norm{G}\,\norm{\mathbf{a}}_2\norm{\mathbf{b}}_2 \\
&\geq \left(\gamma-\norm{G}\right) \left( \norm{\mathbf{a}}_2^2+\norm{\mathbf{b}}_2^2 \right) \\
&= \varepsilon \left( \norm{\mathbf{a}}_2^2+\norm{\mathbf{b}}_2^2 \right).
\end{align}
Since $\varepsilon>0$, the right-hand side is strictly positive for
every $(\mathbf{a},\mathbf{b})\neq(\mathbf{0},\mathbf{0})$, and therefore, $\Gamma\succ0$.

Hence $(P,Q,L)$ satisfies all primal inequality constraints strictly,
so the primal SDP satisfies Slater's condition. Strong duality follows:
the dual optimum is attained and equals the primal optimum,
\begin{align}
L_{\mathrm{low}} = L_{\mathsf{X},\mathsf{Y}}.
\end{align}
\end{proof}

\subsection{Proof of $(iv)$}\label{sec:dual-lb}
$(iv)$\label{sec:dual-lb}
Since Eq.~\eqref{eq:sdp} is obtained by restricting the constraints of Eq.~\eqref{eq:Lstar-sip}, every feasible point of the semi-infinite problem is feasible for the finite problem. Hence, since the optimal value of Eq.~\eqref{eq:Lstar-sip} equals $L_\star$ by Lemma~\ref{lem:semiinfinite_reformulation} (proved in App.~\ref{app:proof-sip}),
\[
    L_{\mathsf{X},\mathsf{Y}}\le L_\star .
\]
Independently of Lemma~\ref{lem:semiinfinite_reformulation} and of strong duality, Corollary~\ref{cor:finite-section}(a) shows directly that every feasible value of the dual problem in Eq.~\eqref{eq:dual-nuclear}, and therefore its optimum $L_{\mathrm{low}}$, is bounded above by $L_\star$ for arbitrary finite $\mathsf{X},\mathsf{Y}$. Strong duality from $(iii)$ is consistent with these two bounds and gives
\begin{align}\label{eq:sampled-lower-bound}
    L_{\mathrm{low}} = L_{\mathsf{X},\mathsf{Y}} \leq L_\star .
\end{align}

\subsection{Proof of $(v)$}\label{sec:recov-haagerup}

Let $L_{\mathrm{low}}$ be the optimal value of the finite dual in Eq.~\eqref{eq:dual-nuclear}.  By strong duality and, since $E_{\mathsf{X}}$ and $E_{\mathsf{Y}}$ have full column rank, attainment of the finite primal (Corollary~\ref{cor:finite-section}(b) and Proposition~\ref{prop:sip-proof}(ii)),
\begin{align}
L_{\mathrm{low}} = L_{\mathsf{X},\mathsf{Y}},
\end{align}
and there exist Hermitian matrices $P_\star,Q_\star$ such that
\begin{align}\label{eq:recovery-optimal-gram}
\Gamma_\star
\coloneq
\begin{bmatrix}
P_\star & G\\
G^\dagger & Q_\star
\end{bmatrix}
\succeq0
\end{align}
and
\begin{align}\label{eq:recovery-optimal-x-y}
\begin{aligned}
\mathbf{\Phi}(x_i)^\top
P_\star\mathbf{\Phi}(x_i)^*
&\leq L_{\mathrm{low}},
&&i\in[r_X],\\
\mathbf{\Phi}(y_j)^\top
Q_\star\mathbf{\Phi}(y_j)^*
&\leq L_{\mathrm{low}},
&&j\in[r_Y].
\end{aligned}
\end{align}
Consequently, once the finite dual value has been determined, recovery of the factor functions reduces to the fixed-level feasibility problem
\begin{align}\label{eq:fixed-level-gram-feasibility}
\mathcal{F}_{\mathsf{X},\mathsf{Y}}(s) \coloneq
\left\{
\begin{bmatrix}
P&G\\
G^\dagger & Q
\end{bmatrix}\succeq0:
\begin{array}{l}
\mathbf{\Phi}(x_i)^\top P\mathbf{\Phi}(x_i)^*\leq s,\\
\mathbf{\Phi}(y_j)^\top Q\mathbf{\Phi}(y_j)^*\leq s
\end{array}
\right\},
\end{align}
Equations~\eqref{eq:recovery-optimal-gram} and
\eqref{eq:recovery-optimal-x-y} imply
\begin{align}
\mathcal{F}_{\mathsf{X},\mathsf{Y}}
\!\left(L_{\mathrm{low}}\right)
\neq\emptyset.
\end{align}
Conversely, every element of
$\mathcal{F}_{\mathsf{X},\mathsf{Y}}
(L_{\mathrm{low}})$
is feasible for the finite primal at its optimal objective and is therefore primal optimal. Thus the recovery stage does not require a second objective optimization.

Importantly, this construction imposes no rank condition on the weighted evaluation matrices associated with the dual solution.  In particular, the dual weights may have sparse support: all constraints in Eq.~\eqref{eq:fixed-level-gram-feasibility} are imposed independently of whether the corresponding dual weight vanishes.

\paragraph{Normalized numerical recovery.}
Let $d_L$ be the objective value of a feasible point of the finite dual, so that $d_L\leq L_{\mathsf{X},\mathsf{Y}}$. The recovery inputs are the current active sets $\mathsf{X},\mathsf{Y}$, the coefficient matrix $G$, the corresponding dual weights $(\mathbf{u},\mathbf{v})$, a recovery tolerance $\varepsilon_{\rm{rec}}>0$, a maximum number of projection sweeps $I_{\max}$, a conditioning threshold $\kappa_{\max}$, and, when available, the previously certified Gram matrix $\widehat{\Gamma}^{(t-1)}$.

The case $G=0$ is trivial, with $P=Q=0$, so assume $G \neq 0$ and define
\begin{align}\label{eq:recovery-normalization}
\gamma \coloneq \max\!\left\{\norm{G}_2,|d_L|\right\}, \quad
\widetilde{G} \coloneq G/\gamma, \quad
\widetilde{d} \coloneq d_L/\gamma.
\end{align}
In floating point we round $\gamma$ up to the nearest power of two; barring underflow, the scalings by $1/\gamma$ here and by $\gamma$ in Eq.~\eqref{eq:recovery-rescaling} are then exact, so the off-diagonal block of the final Gram matrix equals the stored $G$ exactly.
We further set
\begin{align}\label{eq:recovery-target-level}
\Delta_{\rm{rec}} \coloneq \varepsilon_{\rm{rec}} \max\!\left\{1,|d_L|\right\}, \quad
\widetilde{s}_{\rm{targ}} \coloneq \widetilde{d}+\frac{\Delta_{\rm{rec}}}{2\gamma}.
\end{align}
The second half of $\Delta_{\rm{rec}}$ leaves room for incomplete projection convergence, positive-semidefinite repair, and floating-point error.

\paragraph{Recovery candidates.}
We first construct inexpensive candidates for the normalized Gram matrix. When a certified Gram matrix from the preceding active-set iteration is available, define
\begin{align}\label{eq:recovery-warm-candidate}
\Gamma_{\rm{warm}} \coloneq \frac{\widehat{\Gamma}^{(t-1)}}{\gamma}.
\end{align}
Its diagonal blocks are Hermitized and its off-diagonal block is reset exactly to $\widetilde{G}$. The preceding certificate is not inherited: $\Gamma_{\rm{warm}}$ is tested against every constraint in the current active sets.

A second candidate is available when the current weighted coefficient Gramians are nonsingular.  Define
\begin{align}
A_{\mathbf{u}}\coloneq E_{\mathsf{X}}^\dagger
\mathrm{diag}(\mathbf{u}) E_{\mathsf{X}}, \quad
B_{\mathbf{v}} \coloneq E_{\mathsf{Y}}^\dagger \mathrm{diag}(\mathbf{v}) E_{\mathsf{Y}}.
\end{align}
Since $\mathbf{u},\mathbf{v}\geq 0$, both matrices are positive semidefinite. Take spectral decompositions restricted to their positive eigenspaces,
\begin{align}
A_{\mathbf{u}} = U_A\Lambda_AU_A^\dagger,\quad
B_{\mathbf{v}} = U_B\Lambda_BU_B^\dagger,
\end{align}
where $\Lambda_A$ and $\Lambda_B$ contain only the strictly positive eigenvalues, and define the possibly rectangular factors
\begin{align}
L_A \coloneq U_A\Lambda_A^{1/2},\quad
L_B \coloneq U_B\Lambda_B^{1/2}.
\end{align}
Then $A_{\mathbf{u}}=L_A L_A^\dagger$, and $B_{\mathbf{v}}=L_B L_B^\dagger$.
For feasible optimal dual weights, complementary slackness gives
\begin{align}\label{eq:recovery-kkt-linear-systems}
A_{\mathbf{u}}\widetilde{P}_{\rm{fr}} =
X\widetilde{G}^\dagger,\quad
B_{\mathbf{v}}\widetilde{Q}_{\rm{fr}} = X^\dagger\widetilde{G}.
\end{align}
These equations determine the  $\widetilde{P}_\star,\widetilde{Q}_\star$ uniquely only when $A_{\mathbf{u}}$ and $B_{\mathbf{v}}$ are nonsingular.  We therefore use them as a direct recovery only when
\begin{align}\label{eq:recovery-conditioning-gate}
\frac{\lambda_{\min}(A_{\mathbf{u}})}
{\lambda_{\max}(A_{\mathbf{u}})}
&\geq\tau_{\rm rank},
&
\frac{\lambda_{\min}(B_{\mathbf{v}})}
{\lambda_{\max}(B_{\mathbf{v}})}
&\geq\tau_{\rm rank},
\end{align}
for a prescribed numerical threshold $\tau_{\rm rank}>0$. When the threshold holds, the system of equations in~\eqref{eq:recovery-kkt-linear-systems} are solved without forming matrix inverses, and we construct
\begin{align}\label{eq:recovery-full-rank-candidate}
\Gamma_{\rm{fr}}
\coloneq
\begin{bmatrix}
\mathrm{Herm}(\widetilde{P}_{\rm{fr}})&
\widetilde{G}\\
\widetilde{G}^\dagger&
\mathrm{Herm}(\widetilde{Q}_{\rm{fr}})
\end{bmatrix}.
\end{align}
For numerically inexact dual weights, $\Gamma_{\rm{fr}}$ is regarded only as an initialization candidate.

If either weighted Gramian is rank deficient or fails Eq.~\eqref{eq:recovery-conditioning-gate}, the direct recovery is skipped.  In the rank-deficient case Eq.~\eqref{eq:recovery-kkt-linear-systems} fixes only the components seen on the supports of $A_{\mathbf{u}}$ and $B_{\mathbf{v}}$, while the generic fixed-level Gram recovery described below supplies the missing null-space completion.  In all cases, $\Gamma_{\rm{fr}}$, when constructed, is accepted only after passing the same a posteriori positive-semidefinite and sampled-constraint certification as the other recovery candidates.

Finally, a canonical candidate is obtained from a singular-value decomposition
\begin{align}
\widetilde{G}=U_G\Sigma_GV_G^\dagger.
\end{align}
Define
\begin{align}\label{eq:recovery-spectral-candidate}
\Gamma_{\rm{spec}}
\coloneq
\begin{bmatrix}
U_G\Sigma_GU_G^\dagger&\widetilde{G}\\
\widetilde{G}^\dagger&V_G\Sigma_GV_G^\dagger
\end{bmatrix}.
\end{align}
Indeed,
\begin{align}
\Gamma_{\rm{spec}}
=
\begin{bmatrix}
U_G\Sigma_G^{1/2}\\
V_G\Sigma_G^{1/2}
\end{bmatrix}
\begin{bmatrix}
U_G\Sigma_G^{1/2}\\
V_G\Sigma_G^{1/2}
\end{bmatrix}^{\!\dagger}
\succeq 0.
\end{align}
The spectral candidate depends neither on the dual weights nor on the active sets. After the certification step below, which restores the cross block exactly and certifies positive semidefiniteness, it defines a valid Haagerup factorization of $K_f$ on all of $\mathcal{D}$. Lemma~\ref{lem:certified-max} therefore turns it into a certified upper bound $L_{\mathrm{up}}\geq L_\star$ without any projection sweep. The projection iterations below serve only to tighten this bound toward the finite optimum, not to make it valid.

\paragraph{Certification.}
Every candidate needs to be certified.  Given diagonal blocks $P,Q$, first restore the exact normalized cross block and Hermitize them,
\begin{align}
Y
\coloneq
\begin{bmatrix}
\mathrm{Herm}(P)&\widetilde{G}\\
\widetilde{G}^\dagger&\mathrm{Herm}(Q)
\end{bmatrix}.
\end{align}
Choose a trial shift $\eta\geq\max\{0,-\widehat{\lambda}_{\min}(Y)\}$, where $\widehat{\lambda}_{\min}(Y)$ is the smallest eigenvalue returned by any symmetric eigensolver. Let $S$ be the stored matrix $Y+\eta\mathds{1}_{2N}$, so the positive semidefiniteness of $S$ is certified a posteriori by a Cholesky test~\cite{rump2006verification}. Then set
\begin{align}\label{eq:recovery-repaired-gram-normalized}
\widehat{\Gamma}_{\rm{n}} \coloneq S = Y+\eta \mathds{1}_{2N} =
\begin{bmatrix}
\widehat{P}_{\rm{n}}&\widetilde{G}\\
\widetilde{G}^\dagger&\widehat{Q}_{\rm{n}}
\end{bmatrix}
\succeq0.
\end{align}
Its normalized finite-primal objective is
\begin{gather}\label{eq:recovery-certified-upper-normalized}
  \begin{aligned}
\widehat{s}_{\rm{n}}& \coloneq \\
& \max\!\left\{
\max_i \mathbf{\Phi}(x_i)^\top \widehat{P}_{\rm{n}}\mathbf{\Phi}(x_i)^*, \max_j \mathbf{\Phi}(y_j)^\top \widehat{Q}_{\rm{n}}\mathbf{\Phi}(y_j)^*
\right\}.
\end{aligned}
\end{gather}
Consequently, $\gamma\widehat{\Gamma}_{\rm{n}}$ is a certified feasible point of
the current finite primal at objective $\gamma\widehat{s}_{\rm{n}}$. Positive semidefiniteness holds exactly for the stored matrix; the sampled objective $\widehat{s}_{\rm{n}}$ is evaluated in floating point and is used only for the finite-primal bracket in Eq.~\eqref{eq:recovery-finite-bracket}, not for the bound $L_\star\leq L_{\mathrm{up}}$, which is certified separately in $(vi)$.

The candidates are evaluated sequentially, beginning with
$\Gamma_{\rm{warm}}$ when available, followed by $\Gamma_{\rm{fr}}$ when
Eq.~\eqref{eq:recovery-conditioning-gate} holds.  Any candidate satisfying
\begin{align}\label{eq:recovery-initial-gate}
\gamma \left( \widehat{s}_{\rm{n}}-\widetilde{d} \right) \leq \Delta_{\rm{rec}}
\end{align}
may be accepted immediately.  Otherwise $\Gamma_{\rm{spec}}$ is also certified, and the candidate with the smallest $\widehat{s}_{\rm{n}}$ is retained as the initial point for the projection iteration.  They can be generated and discarded sequentially, so this selection requires only $\mathcal{O}(N^2)$ matrix storage in addition to the stored sample evaluations.

\paragraph{Projection fallback.}
If Eq.~\eqref{eq:recovery-initial-gate} is not satisfied, cyclic projections are applied to the cross-block affine set, the sampled halfspaces, and the positive-semidefinite cone.  Write the current Hermitian iterate as
\begin{align}
\Gamma=
\begin{bmatrix}
P&\widetilde{G}\\
\widetilde{G}^\dagger & Q
\end{bmatrix}.
\end{align}
At the beginning of each iteration, Hemitize,
\begin{align}\label{eq:recovery-cross-projection}
\Gamma \leftarrow
\begin{bmatrix}
\mathrm{Herm}(P)&\widetilde{G}\\
\widetilde{G}^\dagger&\mathrm{Herm}(Q)
\end{bmatrix}.
\end{align}
For $x_i\in\mathsf{X}$, let $ \mathbf\phi_i \coloneq \mathbf{\Phi}(x_i)^*$. Whenever
\begin{align}
\mathbf\phi_i^\dagger
P\mathbf\phi_i
>
\widetilde{s}_{\rm{targ}},
\end{align}
the orthogonal projection onto the halfspace is
\begin{align}\label{eq:recovery-halfspace-projection-x}
P
\leftarrow
P-
\frac{
\mathbf\phi_i^\dagger P\mathbf\phi_i
-
\widetilde{s}_{\rm{targ}}
}{
\norm{\mathbf\phi_i}_2^4
}
\mathbf\phi_i\mathbf\phi_i^\dagger.
\end{align}
The analogous update is applied to $Q$ for every $y_j\in\mathsf{Y}$.
When the iteration is warm started, newly added active points are processed
first, followed by the remaining points in their stored order. After the halfspace projections, take
\begin{align}
\Gamma=
U_\Gamma\mathrm{diag}(\lambda_k)U_\Gamma^\dagger
\end{align}
and project onto the positive-semidefinite cone,
\begin{align}\label{eq:recovery-psd-projection}
\Gamma \leftarrow U_\Gamma \mathrm{diag}\!\left((\lambda_k)_+\right) U_\Gamma^\dagger,
\qquad
(\lambda)_+\coloneq\max\{\lambda,0\}.
\end{align}
Because this projection need not preserve the cross block exactly, the raw iterate is not itself used as a certificate.  Instead, after every iteration we reapply the a-posteriori certification in Eqs.~\eqref{eq:recovery-repaired-gram-normalized} and \eqref{eq:recovery-certified-upper-normalized}, retain the smallest certified $\widehat{s}_{\rm{n}}$ encountered, and stop when Eq.~\eqref{eq:recovery-initial-gate} holds.

For an exact dual optimum, $d_L=L_{\mathrm{low}}$, the projection problem at $\widetilde{s}_{\rm{targ}}$ has a nonempty intersection.  Indeed, $\Gamma_\star/\gamma$ is feasible at level $L_{\mathrm{low}}/\gamma$, and for every
\begin{align}
0<\delta < \widetilde{s}_{\rm{targ}}-\frac{L_{\mathrm{low}}}{\gamma},
\end{align}
adding a small multiple of the identity to $\Gamma_\star/\gamma$ preserves the cross block and produces a strictly positive-definite Gram matrix satisfying all sampled inequalities strictly below $\widetilde{s}_{\rm{targ}}$.  Hence the cyclic projection is a genuine
fixed-level feasibility procedure.

For a feasible dual value $d_L$, no assumption that $\widetilde{s}_{\rm{targ}}$ already lies above the finite optimum is required.  Correctness is instead a posteriori: recovery is accepted only when Eq.~\eqref{eq:recovery-initial-gate} is satisfied by a certified primal matrix.  If this does not occur within $I_{\max}$ sweeps, the active-set iteration is not advanced; the finite dual is solved to tighter feasibility/optimality tolerances and/or the recovery iteration budget is increased.

\paragraph{Recovered coefficient factors.}
For an accepted normalized matrix, define
\begin{align}\label{eq:recovery-rescaling}
\widehat{\Gamma} \coloneq \gamma\widehat{\Gamma}_{\rm{n}}, \quad
\widehat{s}_{\rm{rec}} \coloneq \gamma\widehat{s}_{\rm{n}}.
\end{align}
Then
\begin{align}
\widehat{\Gamma} =
\begin{bmatrix}
\widehat{P}&G\\
G^\dagger&\widehat{Q}
\end{bmatrix}
\succeq 0,
\qquad
\widehat{\Gamma} \in \mathcal{F}_{\mathsf{X},\mathsf{Y}} \!\left(\widehat{s}_{\rm{rec}}\right),
\end{align}
and therefore
\begin{align}\label{eq:recovery-finite-bracket}
d_L \leq L_{\mathsf{X},\mathsf{Y}} \leq \widehat{s}_{\rm{rec}}.
\end{align}
Take an eigendecomposition
\begin{align}
\widehat{\Gamma} = U_{\widehat{\Gamma}} \Lambda_{\widehat{\Gamma}} U_{\widehat{\Gamma}}^\dagger
\end{align}
and define
\begin{align}
W \coloneq U_{\widehat{\Gamma}} \Lambda_{\widehat{\Gamma}}^{1/2} =
\begin{bmatrix}
C\\D
\end{bmatrix}.
\end{align}
Then
\begin{align}\label{eq:recovery-factor-blocks}
CC^\dagger = \widehat{P}, \quad
DD^\dagger = \widehat{Q}, \quad
CD^\dagger = G.
\end{align}
Writing the columns of $C,D$ as $\mathbf{h}_{p_k},\mathbf{h}_{q_k}$ and defining
\begin{align}
p_k(x) \coloneq \mathbf{h}_{p_k}^{\,\top}\mathbf{\Phi}(x), \quad
q_k(y) \coloneq \mathbf{h}_{q_k}^{\,\top}\mathbf{\Phi}(y),
\end{align}
we obtain, for every $x,y\in\mathcal{D}$,
\begin{align}\label{eq:recovery-full-domain-factorization}
K_f(x,y)
&=
\mathbf{\Phi}(x)^\top G \mathbf{\Phi}(y)^*
\notag\\
&= \mathbf{\Phi}(x)^\top CD^\dagger \mathbf{\Phi}(y)^*
\notag\\
&= \sum_k p_k(x)q_k(y)^*.
\end{align}
Moreover,
\begin{align}
\norm{\boldsymbol{p}(x_i)}_2^2 = \mathbf{\Phi}(x_i)^\top \widehat{P} \mathbf{\Phi}(x_i)^* \leq \widehat{s}_{\rm{rec}}, \\
\norm{\boldsymbol{q}(y_j)}_2^2 = \mathbf{\Phi}(y_j)^\top \widehat{Q} \mathbf{\Phi}(y_j)^* \leq \widehat{s}_{\rm{rec}}
\end{align}
for every current active point.  Thus the recovered functions give both the full-domain factorization of the kernel and a certified finite-primal factorization on the active sets.  Lemma~\ref{lem:certified-max} in $(vi)$ then supplies the separate global certificate $L_\star \leq L_{\mathrm{up}}$; it depends only on $\widehat{P}$ and $\widehat{Q}$, not on the numerically computed factors $C,D$.

\paragraph{Complexity.}
Let $r=r_X+r_Y$.  A constant number of matrices of order $N$ or $2N$ is required at any time, while storing all sample evaluation vectors costs
$\mathcal{O}(rN)$,
\begin{align}\label{eq:recovery-complexity}
\mathrm{memory}= \mathcal{O}(N^2+rN),
\\
T_{\rm{iter}} = \mathcal{O}(N^3+rN^2).
\end{align}
The $\mathcal{O}(N^3)$ term is due to the positive-semidefinite projection and certification, while evaluating all sampled quadratic forms costs $\mathcal{O}(rN^2)$ in a direct implementation.  The spectral candidate and, when used, the guarded KKT candidate have the same or lower asymptotic cost per active-set iteration.

For $r=\mathcal{O}(N)$, the persistent memory is $\mathcal{O}(N^2)$ and both initialization/certification and each projection sweep cost $\mathcal{O}(N^3)$.  If $I$ projection iterations are required, the total recovery cost is
\begin{align}
\mathcal{O}\!\left((I+1)N^3\right),
\end{align}
and no degree-independent bound on $I$ is assumed.  If the active sets become much larger than $N$, the evaluation vectors can instead be generated from the stored nodes on demand, reducing persistent storage to $\mathcal{O}(N^2+r)$.

\subsection{Proof of $(vi)$}\label{sec:lhat-proof}
After solving the dual problem in Thm.~\ref{thm:dual-nuclear} and retrieving the functions $p_k,\, q_k$ from $(v)$, let
\begin{gather}\label{eq:gP-gQ}
    \begin{aligned}
g_P( x)\coloneq \norm{\mathbf{p}( x)}_2^2 = \sum_k \vert p_k( x) \vert^2,
\\
g_Q( y)\coloneq \norm{ \mathbf{q}( y)}_2^2 = \sum_k \vert q_k( y)\vert^2.
\end{aligned}
\end{gather}
By $(v)$, the recovered functions form a valid factorization of $K_f$ on all of $\mathcal{D}$, so Eq.~\eqref{eq:Lstar} gives $L_\star\leq\sqrt{\max_{\mathcal{D}}g_P\cdot\max_{\mathcal{D}}g_Q}$. In exact arithmetic the two maxima are attained at endpoints or at real critical points of $g_P,g_Q$, which Algorithm~\ref{alg:active-set} also uses to refine the active sets. A numerical root finder can, however, miss a maximizer, and the resulting value would then not be an upper bound. We therefore certify the maxima with the following elementary bound, which only requires samples on a fixed grid.

\begin{lemma}[Certified maximum from samples]\label{lem:certified-max}
Let $g$ be a real-valued function on $\mathcal{D}$.
\begin{enumerate}[label=(\alph*)]
\item If $\mathcal{D}=[-1,1]$ and $g$ is a polynomial of degree at most $d\geq0$, let $M>\pi d/2$ and $x_k\coloneq\cos(k\pi/M)$, $k=0,\ldots,M$. Then
\begin{align}\label{eq:certified-max-cheb}
\max_{x\in[-1,1]}\vert g(x)\vert\leq\frac{\max_{k}\vert g(x_k)\vert}{1-\pi d/(2M)}.
\end{align}
\item If $\mathcal{D}=\mathbb{T}$ and $\theta\mapsto g(e^{\ii\theta})$ is a trigonometric polynomial of degree at most $d\geq0$, let $M>\pi d$ and $z_k\coloneq e^{2\pi\ii k/M}$, $k=0,\ldots,M-1$. Then
\begin{align}\label{eq:certified-max-torus}
\max_{z\in\mathbb{T}}\vert g(z)\vert\leq\frac{\max_{k}\vert g(z_k)\vert}{1-\pi d/M}.
\end{align}
\end{enumerate}
For $g=g_P$ with $P\succeq0$ we have $g\geq0$, so the left-hand sides equal $\max_{\mathcal{D}}g_P$; the same holds for $g_Q$. For the $N$-dimensional Chebyshev basis of Eq.~\eqref{eq:Hcheb}, $g_P$ has degree at most $d=2(N-1)$; for the Laurent basis $\Phi_k(z)=z^{k-N/2}$, $k=0,\ldots,N-1$, the function $\theta\mapsto g_P(e^{\ii\theta})$ has degree at most $d=N-1$. No structure of $P$ beyond $P\succeq0$ is assumed.
\end{lemma}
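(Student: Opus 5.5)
The plan is to derive both parts from Bernstein's inequality for the derivative, together with a mean-value argument around the grid node nearest to a maximizer. For part (b) we write $h(\theta)\coloneq g(e^{\ii\theta})$, which is a trigonometric polynomial of degree at most $d$. Bernstein's inequality then gives $\max_\theta\vert h'(\theta)\vert\le d\,\norm{h}_\infty$.

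Let $\theta_\star$ be a maximizer of $\vert h\vert$. The nodes $2\pi k/M$ are spaced by $2\pi/M$, so some node $\theta_k$ satisfies $\vert\theta_\star-\theta_k\vert\le\pi/M$. Integrating $h'$ along the segment between them yields
\begin{align*}
\norm{h}_\infty=\vert h(\theta_\star)\vert\le\vert h(\theta_k)\vert+\frac{\pi d}{M}\norm{h}_\infty .
\end{align*}
Since $M>\pi d$, the coefficient $1-\pi d/M$ is positive. Rearranging gives Eq.~\eqref{eq:certified-max-torus}.

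For part (a) we reduce to the same argument via $x=\cos\theta$. The function $h(\theta)\coloneq g(\cos\theta)$ is an even trigonometric (cosine) polynomial of degree at most $d$, and its sup over $\theta\in[0,\pi]$ equals $\max_{[-1,1]}\vert g\vert$. The Chebyshev--Lobatto nodes $x_k=\cos(k\pi/M)$ correspond to $\theta_k=k\pi/M$, spaced by $\pi/M$ and covering $[0,\pi]$ including the endpoints. Every $\theta_\star\in[0,\pi]$ is therefore within $\pi/(2M)$ of some $\theta_k$. The same Bernstein bound $\vert h'\vert\le d\norm{h}_\infty$ gives
\begin{align*}
\norm{h}_\infty\le\max_k\vert g(x_k)\vert+\frac{\pi d}{2M}\norm{h}_\infty ,
\end{align*}
and $M>\pi d/2$ yields Eq.~\eqref{eq:certified-max-cheb}.

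The main point needing care is that Bernstein's inequality is invoked for the trigonometric polynomial in $\theta$, not for the algebraic polynomial in $x$. The algebraic Markov bound would cost $d^2$, while the substitution keeps the factor linear in $d$. This is also why the mean-value step is done in $\theta$: the grid is uniform there, and half-spacing coverage holds at the endpoints $0$ and $\pi$ because those are themselves nodes.

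The remaining claims are routine. If $P\succeq0$, then $g_P(x)=\mathbf{\Phi}(x)^\top P\,\mathbf{\Phi}(x)^*\ge0$, so $\vert g_P\vert=g_P$. In the Chebyshev basis $g_P$ is a sum of products $T_kT_{k'}$ with $k,k'\le N-1$, giving degree at most $2(N-1)$. In the Laurent basis $\Phi_k(z)=z^{k-N/2}$ the products $z^{k-N/2}\bar z^{k'-N/2}=e^{\ii(k-k')\theta}$ have frequencies $\vert k-k'\vert\le N-1$. The same holds for $g_Q$.
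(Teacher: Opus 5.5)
Your proposal is correct and follows the paper's proof essentially step for step. Both apply Bernstein's inequality $\norm{h'}_\infty\le d\norm{h}_\infty$ to $h(\theta)=g(\cos\theta)$ or $h(\theta)=g(e^{\ii\theta})$, cover with nearest nodes at distance at most $\pi/(2M)$ resp.\ $\pi/M$, rearrange using $M>\pi d/2$ resp.\ $M>\pi d$, and use the same degree count for $g_P$ in the Chebyshev and Laurent bases.
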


\begin{proof}
(a) The function $h(\theta)\coloneq g(\cos\theta)$ is a real trigonometric polynomial of degree at most $d$, even and $2\pi$-periodic, with $\norm{h}_\infty=\max_{[-1,1]}\vert g\vert$, where the sup-norm is over $[0,2\pi]$. Bernstein's inequality gives $\norm{h'}_\infty\leq d\,\norm{h}_\infty$. Every $\theta\in[0,\pi]$ lies within $\pi/(2M)$ of a node $\theta_k=k\pi/M$, so
\begin{align*}
\vert h(\theta)\vert\leq\vert h(\theta_k)\vert+\frac{\pi}{2M}\norm{h'}_\infty\leq\max_k\vert g(x_k)\vert+\frac{\pi d}{2M}\norm{h}_\infty.
\end{align*}
By evenness and periodicity, the supremum over $\theta\in[0,\pi]$ equals $\norm{h}_\infty$. Rearranging, with $1-\pi d/(2M)>0$, gives Eq.~\eqref{eq:certified-max-cheb}.

(b) Apply the same argument to $h(\theta)\coloneq g(e^{\ii\theta})$ on $[0,2\pi]$, where every $\theta$ lies within $\pi/M$ of a node $2\pi k/M$.

For the degrees: with real $T_k$, $g_P(\cos\theta)=\sum_{k,l}P_{kl}\cos(k\theta)\cos(l\theta)$ and $\cos(k\theta)\cos(l\theta)=\tfrac12[\cos((k+l)\theta)+\cos((k-l)\theta)]$ with $k+l\leq2(N-1)$; on $\mathbb{T}$, $g_P(e^{\ii\theta})=\sum_{k,l}P_{kl}e^{\ii(k-l)\theta}$ with $\vert k-l\vert\leq N-1$.
\end{proof}

Consequently, with $\bar{a}\geq\max_{\mathcal{D}}g_P$ and $\bar{b}\geq\max_{\mathcal{D}}g_Q$ obtained from Lemma~\ref{lem:certified-max},
\begin{align}\label{eq:continuous-certified-bound}
L_\star\leq\sqrt{\max_{x\in\mathcal{D}} g_P(x)}\cdot\sqrt{\max_{y\in\mathcal{D}} g_Q(y)}\leq L_{\mathrm{up}}\coloneq\sqrt{\bar{a}\,\bar{b}}.
\end{align}
For a prescribed $\delta\in(0,1)$, the grid sizes $M=\max\{1,\lceil\pi d/(2\delta)\rceil\}$ on $[-1,1]$ and $M=\max\{1,\lceil\pi d/\delta\rceil\}$ on $\mathbb{T}$ make the factor in Lemma~\ref{lem:certified-max} at most $1/(1-\delta)$, so that $L_{\mathrm{up}}\leq\sqrt{\max g_P\max g_Q}/(1-\delta)$. Since $d=\mathcal{O}(N)$, this requires $M=\mathcal{O}(N/\delta)$ evaluations of $g_P$ and $g_Q$, each costing $\mathcal{O}(N^2)$ arithmetic operations.

Notice that $L_{\mathrm{up}}$ is generally larger than the dual optimum $L_{\mathrm{low}}$, since the maxima of $g_P$ and $g_Q$ may occur away from the active sets $\mathsf{X}$ and $\mathsf{Y}$.

\subsection{Proof of $(vii)$: solution of the finite nuclear-norm dual}\label{sec:dual-nuclear-complexity}

Consider the finite dual for fixed sets $\mathsf{X}^{(t)}=\{x_i\}_{i=1}^{r_X}$ and
$\mathsf{Y}^{(t)}=\{y_j\}_{j=1}^{r_Y}$ at iteration $t$ of the active-set.  Rescale the weights in Eq.~\eqref{eq:dual-nuclear} according to
$\mathbf{u}=2\bar{\mathbf{u}}$ and
$\mathbf{v}=2\bar{\mathbf{v}}$.  Then
\begin{align}\label{eq:dual-unit-simplex}
L_{\mathrm{low}}=\max_{\substack{\bar{\mathbf{u}},\bar{\mathbf{v}}\geq 0 \\ \mathbf{1}^\top\bar{\mathbf{u}} +\mathbf{1}^\top\bar{\mathbf{v}}=1}}
2\norm{D_{\bar{u}}^{1/2}E_{\mathsf{X}} G E_{\mathsf{Y}}^{\dagger}D_{\bar{v}}^{1/2}}_* .
\end{align}
For fixed weights, take reduced QR factorizations
\begin{align}\label{eq:dual-weighted-qr}
D_{\bar{u}}^{1/2}E_{\mathsf{X}}=Q_u R_u, \quad D_{\bar{v}}^{1/2}E_{\mathsf{Y}}=Q_vR_v .
\end{align}
Since $Q_u$ and $Q_v$ have orthonormal columns,
\begin{align}\label{eq:dual-qr-reduction}
\norm{D_{\bar{u}}^{1/2}E_{\mathsf{X}} G E_{\mathsf{Y}}^{\dagger}D_{\bar{v}}^{1/2}}_* =\norm{R_u G R_v^\dagger}_* .
\end{align}
Hence, after taking a thin SVD,
\begin{align}\label{eq:dual-core-svd}
R_u G R_v^\dagger = U\Sigma V^\dagger ,
\end{align}
the dual objective simply
\begin{align}\label{eq:dual-inner-objective}
F(\bar{\mathbf{u}},\bar{\mathbf{v}})= 2\trace(\Sigma).
\end{align}

The same QR/SVD factors give a supergradient $\mathbf{g}=(\mathbf{g}_u,\mathbf{g}_v)$ of $F$. At differentiable points this is simply the gradient, and they have the asymptotic cost as Eqs.~\eqref{eq:dual-weighted-qr}--\eqref{eq:dual-core-svd}.
Since $F$ is concave on the unit simplex, for
\begin{align}
\mathbf{w} \coloneq
\begin{bmatrix}
\bar{\mathbf{u}}\\
\bar{\mathbf{v}}
\end{bmatrix},
\quad
\mathbf{1}^\top\mathbf{w}=1,
\end{align}
any valid supergradient satisfies
\begin{align}\label{eq:dual-stationarity-gap}
0 \leq L_{\mathrm{low}}-F(\mathbf{w}) \leq
\gamma(\mathbf{w}) \coloneq \max_i g_i-\mathbf{w}^\top\mathbf{g} .
\end{align}
Hence, $\gamma(\mathbf{w})\leq\varepsilon_{\rm{dual}}$ certifies an
additive $\varepsilon_{\rm{dual}}$ solution of the finite dual for a $\varepsilon_{\rm{dual}}>0$.

When we are solving this problem inside Algorithm~\ref{alg:active-set}, the previous dual weights provide a natural warm start.  Suppose
\begin{align}
\mathsf{X}^{(t-1)} \subseteq\mathsf{X}^{(t)},
\quad
\mathsf{Y}^{(t-1)}\subseteq\mathsf{Y}^{(t)} .
\end{align}
Embed $(\bar{\mathbf{u}}^{(t-1)}, \bar{\mathbf{v}}^{(t-1)})$ in a enlarged simplex by assigning zero weight to each newly added point.  Denoting the embedded weights by $(\widetilde{\mathbf{u}},\widetilde{\mathbf{v}})$, we have exactly
\begin{align}\label{eq:dual-warm-start-value}
F_t\!\left(\widetilde{\mathbf{u}},\widetilde{\mathbf{v}}\right)
=F_{t-1}\!\left(\bar{\mathbf{u}}^{(t-1)},\bar{\mathbf{v}}^{(t-1)}\right).
\end{align}
Thus the previous solution is a feasible boundary point of the
enlarged dual problem with the same previous objective value.

The first-order update is not well defined at exactly zero weights, so we perturb the zero-padded warm start slightly toward the interior of the simplex before beginning the numerical iteration.  If $\widetilde{\mathbf{w}}$ denotes the zero-padded vector, we initialize
\begin{align}\label{eq:dual-interior-warm-start}
\mathbf{w} = (1-\rho)\widetilde{\mathbf{w}} + \frac{\rho}{r_X^{(t)}+r_Y^{(t)}}\mathbf{1},
\quad 0<\rho\ll 1 .
\end{align}
Eq.~\eqref{eq:dual-warm-start-value} may still be retained as the initial best dual value, the perturbation in Eq.~\eqref{eq:dual-interior-warm-start} affects only the starting point of the numerical iteration.

The complete inner solve is summarized in Algorithm~\ref{alg:dual-nuclear-inner} below.

\begin{algo}\label{alg:dual-nuclear-inner}
\quad\\
\indent Input:
$G$, $E_{\mathsf{X}}$, $E_{\mathsf{Y}}$,
$\varepsilon_{\rm{dual}}$, and optional weights
$(\mathbf{u}^{(t-1)},
  \mathbf{v}^{(t-1)}$) from the Algorithm~\ref{alg:active-set}. \\
\indent Output:
$\bar{\mathbf{u}}_\star$,
$\bar{\mathbf{v}}_\star$, and a finite-dual value
$L_{\mathrm{low}}$.
\begin{enumerate}[leftmargin=*]

\item If previous weights are available, embed them by zero-padding
the newly added coordinates and initialize $\mathbf{w}^{(0)}$ by
Eq.~\eqref{eq:dual-interior-warm-start}, otherwise initialize at the
uniform interior point of the combined simplex.

\item At iteration $i$, form $D_{\bar{u}^{(k)}}^{1/2}E_{\mathsf{X}}$ and $D_{\bar{v}^{(k)}}^{1/2}E_{\mathsf{Y}}$, compute the reduced QR factorizations in Eq.~\eqref{eq:dual-weighted-qr}, and take the SVD
\begin{align*}
R_u G R_v^\dagger=U\Sigma V^\dagger .
\end{align*}

\item Set $F^{(k)}=2\trace(\Sigma)$, compute a supergradient $\mathbf{g}^{(k)}$ from the QR/SVD factors, and evaluate
\begin{align*}
\gamma_k = \max_i g_i^{(k)} - (\mathbf{w}^{(k)})^\top\mathbf{g}^{(k)} .
\end{align*}
If $\gamma_k\leq\varepsilon_{\rm{dual}}$, return the current weights
and objective value.

\item Otherwise define the exponentiated-gradient path
\begin{align*}
w_i^{(k)}(\eta) \propto \max\{w_i^{(k)},w_{\min}\}
\exp\!\left( \eta\, \frac{g_i^{(k)}}{\max\{1,\|\mathbf g^{(k)}\|_\infty\}} \right),
\end{align*}
with normalization to the unit simplex.  Use a golden-section search on the prescribed interval
$[\eta_{\min},\eta_{\max}]$ to maximize $F(\mathbf{w}^{(k)}(\eta))$, set
\begin{align*}
\eta_k \approx \arg\max_{\eta\in[\eta_{\min},\eta_{\max}]} F(\mathbf{w}^{(k)}(\eta)),
\end{align*}
and take
$\mathbf{w}^{(k+1)}=\mathbf{w}^{(k)}(\eta_k)$.
Return to Step~2.
\end{enumerate}
\end{algo}

We finally count the resources of one iteration. The line search in in Step~$4$ changes only the step length along the exponentiated-gradient path.  Each trial value of $\eta$ requires an objective evaluation through Eqs.~\eqref{eq:dual-weighted-qr}--\eqref{eq:dual-inner-objective}, but not a new supergradient. Row scaling requires $\mathcal{O}((r_X+r_Y)N)$ operations.  For $r_X,r_Y\geq N$, the two reduced QR factorizations require
\begin{align}
\mathcal{O}\!\left((r_X+r_Y)N^2\right)
\end{align}
operations, while forming $R_uGR_v^\dagger$ and computing its $N\times N$ SVD require $\mathcal{O}(N^3)$.  The supergradient contractions have the same combined asymptotic cost.
Thus, if $I_{\rm{obj}}$ objective evaluations are used by the golden-section search, one first-order iteration therefore costs
\begin{align}\label{eq:dual-inner-line-search-time}
\mathcal{O}\!\left( I_{\rm{obj}}\big[(r_X+r_Y)N^2+N^3\big]\right)
\end{align}
arithmetic operations.

The largest stored objects are the evaluation and thin-$Q$ matrices
of sizes $r_X\times N$ and $r_Y\times N$, together with a constant
number of $N\times N$ coefficient matrices.  Hence
\begin{align}\label{eq:dual-inner-memory}
\mathcal{O}\!\left(
(r_X+r_Y)N+N^2
\right)
\end{align}
working memory is sufficient.

If $I_{fo}$ first-order iterations are required and each golden-section line search uses at most $I_{\rm{gs}}$ objective evaluations, the complete finite-dual solve costs
\begin{align}\label{eq:dual-inner-total-time}
\mathcal{O}\!\left( I_{fo}I_{\rm{gs}}\big[(r_X+r_Y)N^2+N^3\big] \right)
\end{align}
time and
\begin{align}\label{eq:dual-inner-memory-total}
\mathcal{O}\!\left(
(r_X+r_Y)N+N^2
\right)
\end{align}
working memory. Warm starting can reduce $I_{fo}$, whereas the golden-section search controls the number $I_{\rm{gs}}$ of coefficient-space objective evaluations used to choose each step. 

An a-priori iteration bound is also available for a safeguarded variant. Let $m\coloneq r_X+r_Y$ and, for $\delta\in(0,1)$, apply exponentiated-gradient ascent with a fixed step to the concave function $F_\delta(\mathbf{w})\coloneq F\big((1-\delta)\mathbf{w}+\delta\mathbf{1}/m\big)$. Since every evaluated point has entries at least $\delta/m$, the supergradients obey $\norm{\mathbf{g}}_\infty\le C\sqrt{m/\delta}$ with $C\coloneq\max_{i,j}\vert K_f(x_i,y_j)\vert\le L_{\mathrm{low}}$. Moreover, $\max F_\delta\ge(1-\delta)L_{\mathrm{low}}$ by concavity. The standard entropic mirror-ascent bound with $\delta=\varepsilon_{\rm rel}/2$ then shows that the averaged iterate reaches $F\ge(1-\varepsilon_{\rm rel})L_{\mathrm{low}}$ within $\lceil16\,m\log m/\varepsilon_{\rm rel}^3\rceil$ iterations, each with the cost stated below. This worst-case bound is pessimistic, and the line-search iteration of Algorithm~\ref{alg:dual-nuclear-inner}, certified by Eq.~\eqref{eq:dual-stationarity-gap}, typically terminates in far fewer iterations.

$I_{fo}$ is bounded a priori only for the safeguarded variant above. An $\varepsilon_{\rm{dual}}$-optimal value is instead certified a posteriori by Eq.~\eqref{eq:dual-stationarity-gap}, and, by Corollary~\ref{cor:finite-section}(a), every iterate already provides a valid lower bound on $L_\star$.


\section{Proof of Algorithm~\ref{alg:active-set} convergence}\label{sec:proof-alg-conv}
\begin{theorem}[Convergence of the iterative scheme]\label{thm:iterative-convergence}
Assume $t_{\max}=\infty$, $\varepsilon_{\mathrm{tol}}=0$, exact solution of the
finite dual at every iteration, exact recovery at $\widehat{s}_{\rm{rec}}^{(t)}=L_{\mathrm{low}}^{(t)}$ from Sec.~\ref{sec:recov-haagerup}, and exact global maximization in Step~3 of Algorithm~\ref{alg:active-set}, so that $\bar{a}_t=\max_{\mathcal{D}}g_P^{(t)}$ and $\bar{b}_t=\max_{\mathcal{D}}g_Q^{(t)}$.  Assume that the initial evaluation matrices $E_{\mathsf{X}}^{(0)}$ and $E_{\mathsf{Y}}^{(0)}$ have full column rank, and that the refinement rule in
Eq.~\eqref{eq:update-active-sets} inserts every minimizer of
\begin{align}
h_P^{(t)}(x)\coloneq L_{\mathrm{low}}^{(t)}-g_P^{(t)}(x), \quad
h_Q^{(t)}(y)\coloneq L_{\mathrm{low}}^{(t)}-g_Q^{(t)}(y)
\end{align}
at which the corresponding violation is negative.  Then:
\begin{enumerate}[leftmargin=*,label=(\roman*)]
    \item $\{L_{\mathrm{low}}^{(t)}\}_{t\ge0}$ is monotone nondecreasing and bounded above by $L_\star$;

    \item if the algorithm terminates at some iteration $t$, then
    \begin{align}
    L_{\mathrm{low}}^{(t)}=L_{\mathrm{up}}^{\mathrm{best},(t)}=L_\star;
    \end{align}
    \item if the algorithm does not terminate, then
    \begin{align}
    L_{\mathrm{low}}^{(t)}\rightarrow L_\star,
    \quad
    L_{\mathrm{up}}^{(t)}\rightarrow L_\star,
    \end{align}
    and therefore
    \begin{align}
    L_{\mathrm{up}}^{(t)}-L_{\mathrm{low}}^{(t)}\rightarrow 0.
    \end{align}
\end{enumerate}
\end{theorem}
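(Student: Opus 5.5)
The plan is the standard exchange-method convergence argument for convex semi-infinite programs, specialized to the compact feasible sets available here.

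For (i), monotonicity is immediate. The active sets are nested, $\mathsf{X}^{(t)}\subseteq\mathsf{X}^{(t+1)}$ and $\mathsf{Y}^{(t)}\subseteq\mathsf{Y}^{(t+1)}$. Hence the feasible set of the finite primal in Eq.~\eqref{eq:sdp} shrinks, so $L_{\mathsf{X}^{(t)},\mathsf{Y}^{(t)}}$ is nondecreasing. By strong duality (Lemma~\ref{lem:sdp}(iii)) this value equals $L_{\mathrm{low}}^{(t)}$, and Lemma~\ref{lem:sdp}(iv) bounds it above by $L_\star$.

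For (ii), termination with $\varepsilon_{\mathrm{tol}}=0$ means either $L^{\mathrm{best},(t)}_{\mathrm{up}}=L^{(t)}_{\mathrm{low}}$, in which case the sandwich gives the claim, or $\widehat{\mathsf{X}}^{(t)}=\widehat{\mathsf{Y}}^{(t)}=\emptyset$. In the second case the insertion rule covers every global minimizer of $h_P^{(t)},h_Q^{(t)}$ with negative violation. An empty insertion set therefore means $\max_{\mathcal{D}} g_P^{(t)}\le L_{\mathrm{low}}^{(t)}$ and likewise for $g_Q^{(t)}$. Since the exactly recovered $(P^{(t)},Q^{(t)})$ is a full-domain factorization (Lemma~\ref{lem:sdp}(v)), the triple $(L_{\mathrm{low}}^{(t)},P^{(t)},Q^{(t)})$ is feasible for Eq.~\eqref{eq:Lstar-sip}. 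Lemma~\ref{lem:semiinfinite_reformulation} then gives $L_\star\le L_{\mathrm{low}}^{(t)}$, and exact maximization gives $L_{\mathrm{up}}^{(t)}=\sqrt{\bar a_t\bar b_t}\le L_{\mathrm{low}}^{(t)}$. Combined with (i) and $L_\star\le L_{\mathrm{up}}$, all three quantities coincide.

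For (iii), I will use compactness. Let $\bar L=\lim_t L^{(t)}_{\mathrm{low}}\le L_\star$, which exists by (i). The recovered $(P^{(t)},Q^{(t)})$ are uniformly bounded by the same argument as in Proposition~\ref{prop:sip-proof}(ii). The key inputs are that $E_{\mathsf{X}^{(0)}}$ and $E_{\mathsf{Y}^{(0)}}$ are invertible on $N$ nodes contained in every active set, and that $\mathrm{tr}(E_1 P E_1^\dagger)\le N L_\star$. So I will extract a subsequence $(P^{(t_k)},Q^{(t_k)})\to(\bar P,\bar Q)$. The PSD block condition with the fixed off-diagonal block $G$ is closed, so it passes to the limit.

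The main obstacle is showing $\max_{\mathcal{D}} g_{\bar P}\le\bar L$. Fix $x\in\mathcal{D}$ and let $x_k$ be a global maximizer of $g_P^{(t_k)}$. If $g_P^{(t_k)}(x_k)>L_{\mathrm{low}}^{(t_k)}$, the point $x_k$ is inserted, so $x_k\in\mathsf{X}^{(s)}$ for all $s>t_k$. In particular $g_P^{(t_{k+1})}(x_k)\le L^{(t_{k+1})}_{\mathrm{low}}\le\bar L$. The functions $g_P^{(t)}$ are polynomials (or trigonometric polynomials) of bounded degree with bounded coefficients, hence equi-Lipschitz on $\mathcal{D}$ with some constant $C$. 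Passing to a further subsequence with $x_k\to\bar x$, I get
\begin{align*}
g_P^{(t_k)}(x)\le g_P^{(t_k)}(x_k)\le g_P^{(t_{k+1})}(x_k)+\|P^{(t_k)}-P^{(t_{k+1})}\|\sup_{\mathcal{D}}\|\mathbf{\Phi}\|_2^2.
\end{align*}
The second term tends to zero along the convergent subsequence. Taking $k\to\infty$ yields $g_{\bar P}(x)\le\bar L$, and the same argument gives $g_{\bar Q}(y)\le\bar L$. Hence $(\bar L,\bar P,\bar Q)$ is feasible for Eq.~\eqref{eq:Lstar-sip}, so $L_\star\le\bar L$, and the whole monotone sequence $L_{\mathrm{low}}^{(t)}$ converges to $L_\star$.

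For the upper bounds, the same estimate shows $\limsup\max g_P^{(t_k)}\le\bar L$, so $L_{\mathrm{up}}^{(t_k)}\to L_\star$ along the subsequence. To upgrade this to the full sequence, I will argue by contradiction. If some subsequence of $L_{\mathrm{up}}^{(t)}$ stayed above $L_\star+\epsilon$, I would apply the same extraction to that subsequence instead and obtain a contradiction. This gives $L_{\mathrm{up}}^{(t)}\to L_\star$, and therefore $L_{\mathrm{up}}^{(t)}-L_{\mathrm{low}}^{(t)}\to0$.
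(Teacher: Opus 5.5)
Your proof is correct and is essentially the paper's argument. Both rest on the same four ingredients:
\begin{enumerate}
\item the recovered Gram blocks are bounded because of the full-rank initial grid;
\item compactness lets you extract convergent subsequences;
\item inserted maximizers stay satisfied at all later iterations;
\item $L_\star\le L_{\mathrm{up}}^{(t)}$ holds because each exact recovery is a full-domain factorization.
\end{enumerate}

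The only difference is bookkeeping. The paper shows that the maximal violation $\rho_t\to0$, using a double limit along convergent maximizers $x_\nu\to x^\sharp$, whereas you compare consecutive subsequence terms at the same point $x_k$. The paper then squeezes $L_\star\le L_{\mathrm{up}}^{(t)}\le L_{\mathrm{low}}^{(t)}+\rho_t$, which yields both limits at once. This also shows that your separate step proving the cluster point $(\bar L,\bar P,\bar Q)$ feasible for Eq.~\eqref{eq:Lstar-sip} is not needed: your own estimate $\limsup a_{t_k},\,b_{t_k}\le\bar L$, together with $L_\star\le L_{\mathrm{up}}^{(t_k)}=\sqrt{a_{t_k}b_{t_k}}$, already gives $L_\star\le\bar L$.
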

\begin{proof}
We divide the proof into three steps, starting with $(i)$. Since the active sets are nested by construction,
\[
\mathsf{X}^{(t)}\subseteq \mathsf{X}^{(t+1)},
\quad
\mathsf{Y}^{(t)}\subseteq \mathsf{Y}^{(t+1)},
\]
the feasible set of the problem at iteration $t+1$ is contained in the feasible set at iteration $t$.
Strong duality therefore gives
\begin{align}
L_{\mathrm{low}}^{(t)} \leq L_{\mathrm{low}}^{(t+1)} \leq L_\star.
\end{align}
Hence
\begin{align}
d_\infty\coloneq \lim_{t\to\infty}L_{\mathrm{low}}^{(t)}
\end{align}
exists and satisfies $d_\infty\leq L_\star$, proving $(i)$.

For $(ii)$, exact recovery gives a full-domain factorization at every iteration $s\leq t$, hence
$L_\star\leq L_{\mathrm{up}}^{(s)}$ and $L_\star\leq L_{\mathrm{up}}^{\mathrm{best},(t)}$, whereas $(i)$ gives
$L_{\mathrm{low}}^{(t)}\leq L_\star$.  With
$\varepsilon_{\mathrm{tol}}=0$, the algorithm terminates either because $L_{\mathrm{up}}^{\mathrm{best},(t)}\leq L_{\mathrm{low}}^{(t)}$, or because both sets in Eq.~\eqref{eq:active-violations} are empty. In the latter case the maxima of $g_P^{(t)}$ and $g_Q^{(t)}$, which are attained at points of $\mathsf{X}^{(t)}_{\mathrm c}$ and $\mathsf{Y}^{(t)}_{\mathrm c}$, are at most $L_{\mathrm{low}}^{(t)}$, so $L_{\mathrm{up}}^{\mathrm{best},(t)}\leq L_{\mathrm{up}}^{(t)}\leq L_{\mathrm{low}}^{(t)}$. In both cases
\begin{align}
L_{\mathrm{low}}^{(t)}
=L_{\mathrm{up}}^{\mathrm{best},(t)}=L_\star.
\end{align}

It remains to prove $(iii)$.  We first establish the compactness needed for
the non-terminating sequence.  Let
\begin{align}
\Gamma^{(t)}=
\begin{bmatrix}
P^{(t)}&G\\
G^\dagger&Q^{(t)}
\end{bmatrix}
\succeq 0
\end{align}
be the exactly recovered Gram matrix at iteration $t$, and define
\begin{align*}
\alpha_X \coloneq \lambda_{\min}\!\left( E_{\mathsf{X}}^{(0)\dagger}E_{\mathsf{X}}^{(0)} \right)>0,
\\
\alpha_Y \coloneq \lambda_{\min}\!\left(E_{\mathsf{Y}}^{(0)\dagger}E_{\mathsf{Y}}^{(0)} \right)>0,
\end{align*}
where $\lambda_{\min}(X)$ is the minimum eigenvalue of $X$. Since the initial points remain in every later active set, exact finite feasibility gives
\begin{gather}\label{eq:convergence-gram-bounds}
\begin{aligned}
\alpha_X\trace(P^{(t)}) \leq r_X^{(0)}L_{\mathrm{low}}^{(t)} \leq r_X^{(0)}L_\star,
\\
\alpha_Y\trace( Q^{(t)}) \leq r_Y^{(0)}L_{\mathrm{low}}^{(t)} \leq r_Y^{(0)}L_\star.
\end{aligned}
\end{gather}
Hence $\{P^{(t)}\}$ and $\{Q^{(t)}\}$ are bounded.  This is where the
full-column-rank assumption on the initial evaluation matrices is used; the
$2N+1$ distinct points of Step~1 of Algorithm~\ref{alg:active-set} satisfy it. Now define
\begin{align}
a_t \coloneq \max_{x\in\mathcal{D}}g_P^{(t)}(x),
\quad
b_t \coloneq \max_{y\in\mathcal{D}}g_Q^{(t)}(y),
\end{align}
so that
\begin{align}
L_{\mathrm{up}}^{(t)}=\sqrt{a_t b_t},
\end{align}
and let
\begin{align}\label{eq:convergence-rho}
\rho_t \coloneq \max\!\left\{0,\, a_t-L_{\mathrm{low}}^{(t)},\,b_t-L_{\mathrm{low}}^{(t)} \right\}.
\end{align}
We claim that $\rho_t\to 0$.  Otherwise, for some $\varepsilon>0$ there is a subsequence $t_\nu$ on which one of the two violations is at least $\varepsilon$.  Without loss of generality,
\begin{align}\label{eq:convergence-positive-violation}
a_{t_\nu} \geq L_{\mathrm{low}}^{(t_\nu)}+\varepsilon.
\end{align}
Let $x_\nu$ maximize $g_P^{(t_\nu)}$.  By compactness of $\mathcal{D}$ and the bound in Eq.~\eqref{eq:convergence-gram-bounds}, after passing to a further subsequence,
\begin{align}
x_\nu\to x^\sharp, \quad P^{(t_\nu)}\to P^\sharp.
\end{align}
Since $\mathbf{\Phi}$ is continuous on the compact domain,
\begin{align}
g_P^{(t_\nu)}(x) = \mathbf{\Phi}(x)^\top P^{(t_\nu)} \mathbf{\Phi}(x)^*
\end{align}
converges uniformly to
\begin{align}
g_P^\sharp(x) \coloneq \mathbf{\Phi}(x)^\top P^\sharp\mathbf{\Phi}(x)^*.
\end{align}
Together with $L_{\mathrm{low}}^{(t_\nu)}\to d_\infty$ in (i), Eq.~\eqref{eq:convergence-positive-violation} therefore yields
\begin{align}\label{eq:convergence-limit-violation}
g_P^\sharp(x^\sharp)\geq d_\infty+\varepsilon.
\end{align}

Notwithstanding, $x_\nu$ is a global minimizer of $h_P^{(t_\nu)}$ with negative value and is therefore inserted into the next active set.  Since the active sets are nested, for every fixed $\nu$ and all sufficiently large $\mu$,
\begin{align}
g_P^{(t_\mu)}(x_\nu) \leq L_{\mathrm{low}}^{(t_\mu)}.
\end{align}
Taking $\mu\to\infty$ along the same convergent subsequence and then $\nu\to\infty$ gives
\begin{align}
g_P^\sharp(x^\sharp)\leq d_\infty,
\end{align}
contradicting Eq.~\eqref{eq:convergence-limit-violation}.  The same argument
applies to the $Q$ side, so $\rho_t \to 0$.

Finally,
\begin{align}
a_t,b_t \leq L_{\mathrm{low}}^{(t)}+\rho_t,
\end{align}
and the full-domain factorization gives
\begin{align}
L_\star \leq L_{\mathrm{up}}^{(t)} = \sqrt{a_t b_t} \leq L_{\mathrm{low}}^{(t)}+\rho_t.
\end{align}
Taking $t\to\infty$ and using $(i)$,
\begin{align}
L_\star \leq \liminf_tL_{\mathrm{up}}^{(t)} \leq \limsup_tL_{\mathrm{up}}^{(t)} \leq d_\infty \leq L_\star.
\end{align}
Hence
\begin{align}
L_{\mathrm{low}}^{(t)}
\rightarrow L_\star,
\quad
L_{\mathrm{up}}^{(t)} \rightarrow L_\star,
\end{align}
and therefore
\begin{align}
L_{\mathrm{up}}^{(t)}- L_{\mathrm{low}}^{(t)} \rightarrow 0.
\end{align}
This proves $(iii)$.
\end{proof}

For finite numerical recovery, the same argument applies with
$\widehat{s}_{\rm{rec}}^{(t)}$ in place of
$L_{\mathrm{low}}^{(t)}$ in the sampled feasibility inequalities and
in the violation polynomials, provided
\begin{align}
\widehat{s}_{\rm{rec}}^{(t)}-L_{\mathrm{low}}^{(t)}\longrightarrow0.
\end{align}
Thus a vanishing recovery tolerance preserves convergence of both certificate
endpoints.  A fixed positive recovery tolerance retains the finite lower--upper
certificates but does not imply exact asymptotic collapse.

\begin{corollary}[Running best upper bound and finite grids]\label{cor:running-best}
Let $L_{\mathrm{up}}^{\mathrm{best},(t)}$ be defined by Eq.~\eqref{eq:Lup-best}.
\begin{enumerate}[label=(\alph*),leftmargin=*]
\item In every run of Algorithm~\ref{alg:active-set} in which each $L_{\mathrm{up}}^{(s)}$ is a valid upper bound on $L_\star$, $L_\star\leq L_{\mathrm{up}}^{\mathrm{best},(t+1)}\leq L_{\mathrm{up}}^{\mathrm{best},(t)}$ for all $t$.
\item Under the hypotheses of Theorem~\ref{thm:iterative-convergence}, if the algorithm does not terminate, then $L_{\mathrm{up}}^{\mathrm{best},(t)}\to L_\star$.
\item Suppose, in exact arithmetic, that the exact maxima in Step~3 are replaced by the bounds of Lemma~\ref{lem:certified-max} with grids of size $M_t$, and let $\omega_t\coloneq\pi d/(2M_t)$ on $[-1,1]$ ($\omega_t\coloneq\pi d/M_t$ on $\mathbb{T}$), with $\omega_t<1$. Assume all other hypotheses of Theorem~\ref{thm:iterative-convergence}. If the algorithm terminates at iteration $t$, then $L_{\mathrm{low}}^{(t)}=L_\star\leq L_{\mathrm{up}}^{\mathrm{best},(t)}\leq L_\star/(1-\omega_t)$. If it does not terminate and $\omega_t\to0$, then $L_{\mathrm{low}}^{(t)}\to L_\star$ and $L_{\mathrm{up}}^{\mathrm{best},(t)}\to L_\star$.
\end{enumerate}
\end{corollary}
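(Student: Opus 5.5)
Part (a) needs no new argument. By hypothesis each $L_{\mathrm{up}}^{(s)}\ge L_\star$, so the running minimum in Eq.~\eqref{eq:Lup-best} is also $\ge L_\star$. Since $L_{\mathrm{up}}^{\mathrm{best},(t+1)}=\min\{L_{\mathrm{up}}^{\mathrm{best},(t)},L_{\mathrm{up}}^{(t+1)}\}$, the sequence is nonincreasing.

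For (b), I would sandwich $L_\star\le L_{\mathrm{up}}^{\mathrm{best},(t)}\le L_{\mathrm{up}}^{(t)}$. The left inequality is part (a), which applies because exact recovery gives valid full-domain factorizations, so each $L_{\mathrm{up}}^{(t)}$ is valid by Lemma~\ref{lem:sdp}(vi). Theorem~\ref{thm:iterative-convergence}(iii) gives $L_{\mathrm{up}}^{(t)}\to L_\star$, and squeezing gives the claim.

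For (c), I would rerun the proof of Theorem~\ref{thm:iterative-convergence} with $\bar a_t,\bar b_t$ from Lemma~\ref{lem:certified-max} in place of the exact maxima $a_t,b_t$. The key observation is that only the upper-bound endpoint changes. The active-set refinement still uses critical points, and the lower bounds $L_{\mathrm{low}}^{(t)}$ do not depend on how the maxima are certified. Hence Theorem~\ref{thm:iterative-convergence}(i) carries over unchanged. Positivity $g_P,g_Q\ge0$ lets Lemma~\ref{lem:certified-max} give
\[
a_t\le \bar a_t\le \frac{\max_k g_P(x_k)}{1-\omega_t}\le\frac{a_t}{1-\omega_t},
\]
and similarly for $b_t$. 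Therefore
\[
L_\star\le\sqrt{a_tb_t}\le L_{\mathrm{up}}^{(t)}\le\frac{\sqrt{a_tb_t}}{1-\omega_t}.
\]

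In the terminating case with the critical-point criterion, the argument of Theorem~\ref{thm:iterative-convergence}(ii) still shows $a_t,b_t\le L_{\mathrm{low}}^{(t)}\le L_\star$, hence $L_{\mathrm{low}}^{(t)}=L_\star$. It also gives $L_{\mathrm{up}}^{\mathrm{best},(t)}\le L_{\mathrm{up}}^{(t)}\le L_\star/(1-\omega_t)$. The other termination trigger, $L_{\mathrm{up}}^{\mathrm{best},(t)}\le L_{\mathrm{low}}^{(t)}$, immediately forces equality of all three quantities, which is consistent with the stated bound.

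In the nonterminating case, the compactness argument of Theorem~\ref{thm:iterative-convergence}(iii) gives $\rho_t\to0$ verbatim, because it only uses the exact $a_t$ and the inserted maximizers. Then
\[
L_{\mathrm{up}}^{(t)}\le\frac{L_{\mathrm{low}}^{(t)}+\rho_t}{1-\omega_t}\to L_\star,
\]
since $\omega_t\to0$. Squeezing as in (b) finishes the proof.

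The main obstacle is this last step. I must confirm that the proof of $\rho_t\to0$ nowhere used $\bar a_t$, so that replacing exact maxima by certified grid bounds affects only the final inequality. This holds because the refinement rule in the hypotheses still inserts the true minimizers of $h_P^{(t)},h_Q^{(t)}$.
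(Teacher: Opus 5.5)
Your proposal is correct and follows the paper's proof essentially step for step. Parts (a) and (b) go through the running-minimum identity and the squeeze with Theorem~\ref{thm:iterative-convergence}(iii). Part (c) uses the sandwich $L_\star\le\sqrt{a_tb_t}\le L_{\mathrm{up}}^{(t)}\le\sqrt{a_tb_t}/(1-\omega_t)$, treats both termination triggers, and rests on the observation that the convergence argument uses only the unchanged refinement rule.

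The one point you leave implicit is that $L_{\mathrm{low}}^{(t)}\to L_\star$ in the nonterminating case of (c). It follows from the same rerun chain $L_\star\le\sqrt{a_tb_t}\le L_{\mathrm{low}}^{(t)}+\rho_t$, as the paper states explicitly.
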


\begin{proof}
(a) $L_{\mathrm{up}}^{\mathrm{best},(t+1)}=\min\{L_{\mathrm{up}}^{\mathrm{best},(t)},L_{\mathrm{up}}^{(t+1)}\}$, and every term is at least $L_\star$.
(b) $L_\star\leq L_{\mathrm{up}}^{\mathrm{best},(t)}\leq L_{\mathrm{up}}^{(t)}\to L_\star$ by Theorem~\ref{thm:iterative-convergence}(iii).
(c) Let $a_t,b_t$ be the exact maxima of $g_P^{(t)},g_Q^{(t)}$. In exact arithmetic Lemma~\ref{lem:certified-max} gives $a_t\leq\bar{a}_t\leq a_t/(1-\omega_t)$ and $b_t\leq\bar{b}_t\leq b_t/(1-\omega_t)$, hence $L_\star\leq\sqrt{a_tb_t}\leq L_{\mathrm{up}}^{(t)}\leq\sqrt{a_tb_t}/(1-\omega_t)$. The proof of Theorem~\ref{thm:iterative-convergence} uses only the refinement rule, which is unchanged, and not the way the upper bound is evaluated; it therefore still gives $L_{\mathrm{low}}^{(t)}\to L_\star$ and $\sqrt{a_tb_t}\to L_\star$ in the non-terminating case, and the claim follows. If the algorithm terminates by the gap test, then $L_\star\leq L_{\mathrm{up}}^{\mathrm{best},(t)}\leq L_{\mathrm{low}}^{(t)}\leq L_\star$. If it terminates because both violation sets are empty, then, as in the proof of Theorem~\ref{thm:iterative-convergence}(ii), $\sqrt{a_tb_t}\leq L_{\mathrm{low}}^{(t)}\leq L_\star\leq\sqrt{a_tb_t}$, and $L_{\mathrm{up}}^{\mathrm{best},(t)}\leq L_{\mathrm{up}}^{(t)}\leq L_\star/(1-\omega_t)$.
\end{proof}

Theorem~\ref{thm:iterative-convergence} and Corollary~\ref{cor:running-best} are exact-arithmetic statements. In floating point, Algorithm~\ref{alg:active-set} retains the validity of the returned bracket, but not the convergence guarantee. The lower bound is valid by $(iv)$ and Corollary~\ref{cor:finite-section}(a), which bound every feasible dual value, and the upper bound by Lemma~\ref{lem:certified-max} with finite precision corrections.

\section{Resource model}\label{sec:lcu-res-model}

Here we specify the fault-tolerant resource function $R$ used in Sec.~\ref{sec:resource-bounds}. We take the resource metric to be the total number of $T$ gates and use the LCU access model of Sec.~\ref{sec:end-to-end-budget}, with the \textsc{prepare} construction of Ref.~\cite{Low_2024}. 
The error parameters $\varepsilon_{\rm trunc}$, $\varepsilon_{\rm fp}$, and $\varepsilon_{\rm synth}$ have the meanings introduced there, and their admissible values are constrained by Eq.~\eqref{eq:end-to-end-budget}.

Let the linear combination of unitaries contain $M$ terms. For a truncation tolerance $\varepsilon_{\rm{trunc}}$, define $m(\varepsilon_{\rm{trunc}})$ as the smallest number of retained terms whose discarded coefficient weight satisfies
$\norm{A-A_{\rm{trunc}}}\leq
\alpha^{-1}\sum_{j\notin S}\vert a_j\vert \coloneq \varepsilon_{\rm{trunc}}$. After ordering the coefficients such that $\vert a_{(1)}\vert \geq \cdots \geq \vert a_{(M)} \vert$, this is
\begin{align}
m(\varepsilon_{\rm{trunc}}) \coloneq
\min\left\{
m:\, \frac{1}{\alpha} \sum_{j=m+1}^{M} \vert a_{(j)} \vert \leq \varepsilon_{\rm{trunc}} \right\}.
\end{align}
Following Ref.~\cite{Low_2024}, we have to round up for a power-of-two, so we use
\begin{align}\label{eq:retained-lcu-terms}
\widetilde m(\varepsilon_{\rm{trunc}})
\coloneq 2^{\left\lceil \log_2 m(\varepsilon_{\rm{trunc}}) \right\rceil},
\end{align}
padding the coefficient list with zeros when necessary. Using these $\widetilde{m}$ terms, the \textsc{prepare} oracle has $T$-gate count
\begin{gather}\label{eq:ft-prepare-cost}
\begin{aligned}
&T_{\text{\scshape prep}}(\widetilde{m},\varepsilon_{\rm{fp}},b) \\
&\quad = \frac{4\widetilde{m}}{2^b} + 4\,2^b
\left[ \left\lceil \log_2\!\left(\frac{1}{\varepsilon_{\rm{fp}}}\right) \right\rceil
+ \log_2(\widetilde{m}) \right] \\
&\qquad + 10 \left\lceil \log_2\!\left(\frac{1}{\varepsilon_{\rm{fp}}}\right) \right\rceil -11,
\end{aligned} 
\end{gather}
where $b\in \mathbb{N}_+$ defines the tradeoff between T gates and qubits. For the \textsc{select} oracle, we have
\begin{align}\label{eq:ft-select-cost}
T_{\text{\scshape sel}}(\widetilde{m}) = 4\widetilde{m}.
\end{align}
Therefore, since one block-encoding query contains one \textsc{select} and both \textsc{prepare} and $\textsc{prepare}^{\dagger}$, its cost is
\begin{gather}\label{eq:ft-block-encoding-cost}
\begin{aligned}
T_{\rm{be}}(\widetilde{m},\varepsilon_{\rm{fp}},b)  =
T_{\text{\scshape sel}}(\widetilde{m})+2T_{\text{\scshape prep}}(\widetilde{m},\varepsilon_{\rm{fp}},b) 
\end{aligned}
\end{gather}
For fixed $\widetilde{m}$ and $\varepsilon_{\rm{fp}}$, we choose the
integer $b$ that minimizes this cost subject to the available number of qubits,
\begin{align}\label{eq:ft-b-opt}
b_\star(\widetilde{m},\varepsilon_{\rm{fp}}) \coloneq \arg\min_{b\in\mathcal{B}_{\rm{hw}}} T_{\rm{be}} (\widetilde{m},\varepsilon_{\rm{fp}},b),
\end{align}
where $\mathcal{B}_{\rm{hw}}$ denotes the set of values allowed by the
hardware. Without any additional restriction on $b$, the
continuous relaxation of Eq.~\eqref{eq:ft-b-opt} has a minimum near
\begin{align}\label{eq:ft-b-continuous}
2^{b_\star} \approx \sqrt{\frac{\widetilde{m}}{\left\lceil\log_2(1/\varepsilon_{\rm{fp}})\right\rceil+\log_2(\widetilde{m})}}.
\end{align}

The QSP/QSVT sequence associated with a polynomial $f$ requires $S[f]$ rotations. We assume that its single-qubit rotations are synthesized to error $\varepsilon_{\rm{synth}}$ using
\begin{align}\label{eq:ft-rz-cost}
\# T(\varepsilon_{\rm{synth}})= 3\log_2 \left(\frac{1}{\varepsilon_{\rm{synth}}}\right)
\end{align}
$T$ gates \cite{ross2016}. Thus their total contribution is
\begin{align}\label{eq:ft-qsp-rotation-cost}
T_{\phi}(f,\varepsilon_{\rm{synth}})=3S[f]\log_2\left(\frac{1}{\varepsilon_{\rm{synth}}} \right).
\end{align}

Combining the oracle and phase-synthesis costs gives the resource
function
\begin{gather}\label{eq:ft-resource-explicit}
\begin{aligned}
&R\left(f,\varepsilon_{\rm{trunc}},\varepsilon_{\rm{fp}},\varepsilon_{\rm{synth}}\right) \\
&=N\left[4\widetilde{m}+\frac{8\widetilde{m}}{2^{b_\star}} +8\,2^{b_\star}\left(\left\lceil \log_2\!\frac{1}{\varepsilon_{\rm{fp}}} \right\rceil + \log_2\widetilde{m} \right) \right.  \\
&\qquad\left.+20\left\lceil \log_2\!\frac{1}{\varepsilon_{\rm{fp}}} \right\rceil -22 \right]+3S[f]\log_2\!\frac{1}{\varepsilon_{\rm{synth}}},
\end{aligned}
\end{gather}
with $\widetilde{m} = \widetilde{m}(\varepsilon_{\rm{trunc}}), \quad b_\star=b_\star\left(\widetilde{m}(\varepsilon_{\rm{trunc}}), \varepsilon_{\rm{fp}} \right)$.

\section{Details of the fixed-degree coefficient search}
\label{sec:searchN-details}

Here we describe the realization of $\mathsf{Search}_N$ used in Sec.~\ref{sec:res-aware-poly-sel}. The purpose of the search is to modify the coefficients of a fixed-degree polynomial while preserving the approximation accuracy and preferentially moving toward polynomials with smaller operator-Lipschitz sensitivity. The complete fault-tolerant resource cost is evaluated for polynomials produced by this search, as described in Algorithm~\ref{alg:res-aware-poly-design}.

\paragraph{Approximation-feasible coefficient space.}
For the fixed degree $N$, we restrict the search to
\begin{align}\label{eq:search-design-domain}
    \mathcal{A}_{N,\varepsilon}^{\rm{design}}
    \coloneq
    \left\{\mathbf{a}\in\mathcal{A}_{N,\varepsilon}:\varepsilon_{\rm{app}}[f_{\mathbf{a}}] \leq\varepsilon_{\rm{design}} \right\},
\end{align}
where $\varepsilon_{\rm{design}}$ is fixed before the search begins. The initial polynomial must already belong to this set. In our implementation, the certified minimax error of a Remez polynomial \cite{remez1934} provides a reference for choosing $\varepsilon_{\rm{design}}$.

It is convenient to write the free coefficients as
\begin{align}\label{eq:search-affine-coefficients}
    \mathbf{a}(\mathbf{z}) = \mathbf{a}_{\rm{fixed}}+A\mathbf{z},
\end{align}
so that parity and other linear admissibility conditions are enforced by construction. The search then acts only on the free coordinates $\mathbf{z}$.

Consider the coefficient matrix $G(\mathbf{a})$ associated with
$f_{\mathbf{a}}$ and a feasible dual point $(\mathbf{u},\mathbf{v},X)$ obtained during an operator-Lipschitz calculation. If these dual variables are held fixed, then
\begin{align}\label{eq:search-linear-lower-bound}
    \ell_{u,v,X}(\mathbf{a}) \coloneq \mathrm{Re}
    \trace\! \left[X^\dagger G(\mathbf{a})\right]
\end{align}
is affine in $\mathbf{a}$ and satisfies
\begin{align}\label{eq:search-linear-lower-bound-valid}
    \ell_{u,v,X}(\mathbf{a}) \leq L_\star[f_{\mathbf{a}}].
\end{align}
Thus, every operator-Lipschitz evaluation leaves behind a simple linear lower bound that can be reused when considering other polynomials. We retain a finite collection of these bounds, denoted by $\ell_1,\ldots,\ell_R$, and
\begin{align}\label{eq:search-lower-model}
    L_R(\mathbf{a}) \coloneq \max\left\{0,\ell_1(\mathbf{a}),\ldots,\ell_R(\mathbf{a})\right\}.
\end{align}
This quantity is inexpensive to evaluate and gives the search a direction in coefficient space without requiring a new full operator-Lipschitz calculation at every trial point.

\paragraph{Proposing a new polynomial.}
Let $\mathbf{z}_r$ denote the current accepted coefficient vector. As in the Remez algorith~\cite{remez1934}, we maintain a finite set $\mathsf{E}_r\subset\mathcal{D}$ of points at which the approximation constraint is explicitly imposed. Starting from Chebyshev--Lobatto points on every connected component of $\mathcal{D}$, the set is enlarged whenever the continuum approximation check identifies a new violating point.

The next coefficient proposal is obtained from the linear program (LP)
\begin{align}
    \min_{\mathbf{z},t}\quad & t  \label{eq:search-master} \\
    \mathrm{s.t.}\quad & \left\vert f_{\mathbf{a}(\mathbf{z})}(x)-f_{\rm T}(x) \right\vert \leq \varepsilon_{\rm{design}},
    \qquad x\in\mathsf{E}_r, \label{eq:search-master-approx} \\
    & \ell_j(\mathbf{a}(\mathbf{z})) \leq t, \qquad \qquad \qquad j=1,\ldots,R \label{eq:search-master-op} \\
    & \vert z_k-z_{r,k}\vert \leq \Delta_r s_k, \qquad k=1,\ldots,m_N, \label{eq:search-master-trust} \\ 
    &t \geq 0 . \nonumber
\end{align}
Here $\Delta_r$ limits the size of one coefficient update and the positive factors $s_k$ account for different coefficient scales. The objective therefore asks for a nearby polynomial for which all currently available operator lower bounds are as small as possible, while the sampled approximation constraints remain satisfied. The trust region $\Delta_r s_k$ is important because the linear bounds accumulated from previous operator calculations are most informative near the part of coefficient space that has already been explored. It also prevents a single LP step from producing an unnecessarily large change in the polynomial.

\paragraph{Continuum approximation check.}
Satisfying Eq.~\eqref{eq:search-master-approx} on the finite set $\mathsf{E}_r$ does not guarantee $\varepsilon_{\rm{app}}\leq\varepsilon_{\rm{design}}$ on the full domain. We therefore do not accept the LP solution directly.

Instead, let $\mathbf{a}_r$ be the current feasible polynomial and $\mathbf{a}_{\rm LP}$ the polynomial proposed by Eq.~\eqref{eq:search-master}. We consider the segment
\begin{align*}
    \mathbf{a}(\alpha) = (1-\alpha)\mathbf{a}_r + \alpha\mathbf{a}_{\rm LP}, \qquad 0 \leq \alpha \leq 1,
\end{align*}
and determine how far along this segment the approximation budget remains valid. The trial polynomial is chosen on the feasible side of this boundary. Any newly detected approximation extrema are added to $\mathsf{E}_r$ and are therefore enforced explicitly in subsequent LPs.

\paragraph{Deciding when to run the full operator calculation.}
The candidate polynomial then goes through Algorithm~\ref{alg:active-set} to obtain
\begin{align*}
    L_{\mathrm{low}}[f_{\mathbf{a}}] \leq L_\star[f_{\mathbf{a}}] \leq L_{\mathrm{up}}[f_{\mathbf{a}}].
\end{align*}
The resulting dual information supplies additional linear lower bounds of the form Eq.~\eqref{eq:search-linear-lower-bound}, which are added to the model for subsequent iterations.

\paragraph{Updating the search.}
Let $L_{{\mathrm{up}}_r}$ be the certified upper bound of the current accepted polynomial. After going through Algorithm~\ref{alg:active-set}, define
\begin{align}
    \Delta_{\rm{act}} &\coloneq L_{{\mathrm{up}}_r}-L_{\mathrm{up}}[f_{\mathbf{a}}], \\
    \Delta_{\rm{pred}}&\coloneq L_{{\mathrm{up}}_r}-L_R(\mathbf{a}).
\end{align}
The first quantity is the improvement established by the full certificate, while the second is the improvement predicted by the inexpensive linear model. The candidate replaces the current polynomial when it produces a genuine decrease of the certified upper bound and the ratio
\begin{align*}
\frac{\Delta_{\rm{act}}}{\Delta_{\rm{pred}}}
\end{align*}
shows sufficient agreement between prediction and certification. When the prediction is reliable the allowed coefficient step can be increased, when it is unsuccessful the allowed step is reduced.

A rejected polynomial is not necessarily wasted. The operator calculation can still contribute new lower bounds to Eq.~\eqref{eq:search-lower-model}, thereby improving the next LP.

The process is repeated until the search ceases to produce useful updates.

\paragraph{Resource-cost selection.}
The preceding steps are used only to explore coefficient space efficiently. They do not replace the resource objective of Sec.~\ref{sec:res-aware-poly-sel}. Let $\mathcal{P}_N$ contain the initial and every polynomial encountered during the search for which both the approximation error and the operator-Lipschitz bounds are available. For each $f\in\mathcal{P}_N$, we independently solve the implementation-error allocation problem and evaluate
\begin{align*}
    \mathcal C_N[f,L_{\mathrm{up}}[f],\varepsilon].
\end{align*}
The fixed-degree output is then the polynomial in $\mathcal{P}_N$ with the smallest certified resource cost, as in Eq.~\eqref{eq:opt-poly}.

Thus the internal search uses the operator-Lipschitz quantity to find promising regions of coefficient space, while the final selection among the polynomials it actually certifies is made with the full fault-tolerant resource model. No global optimality over
$\mathcal{A}_{N,\varepsilon}$ is implied.

\end{document}